\theoremstyle{plain}
\newtheorem{myclaim}{Claim}
\let\oldproof\proof
\renewcommand{\proof}{\setcounter{myclaim}{0}\oldproof}
\newcommand*{\claimproofname}{Proof}
\newenvironment{claimproof}[1][\claimproofname]%
{\begin{oldproof}[#1]}{\end{oldproof}}
\def\expandafter\UrlBreaks\expandafter{\UrlBreaks%  save the current one
  \do\-}
  \tikzstyle{color1}=[color=red]
  \tikzstyle{color1light}=[color=red!60]
  \tikzstyle{color1verylight}=[color=red!30]
  \tikzstyle{color2}=[color=green]
  \tikzstyle{color2light}=[color=green!60]
  \tikzstyle{color2verylight}=[color=green!30]
  \tikzstyle{color3}=[color=blue]
  \tikzstyle{color3light}=[color=blue!60]
  \tikzstyle{color3verylight}=[color=blue!30]
  \tikzstyle{color4}=[color=orange]
  \tikzstyle{color4light}=[color=orange!60]
  \tikzstyle{color4verylight}=[color=orange!30]
  \tikzstyle{color5}=[color=violet]
  \tikzstyle{color5light}=[color=violet!60]
  \tikzstyle{color5verylight}=[color=violet!30]
  \tikzstyle{fill1}=[fill=red]
  \tikzstyle{fill1light}=[fill=red!60]
  \tikzstyle{fill2}=[fill=green]
  \tikzstyle{fill2light}=[fill=green!60]
  \tikzstyle{fill3}=[fill=blue]
  \tikzstyle{fill3light}=[fill=blue!60]
  \tikzstyle{fill4}=[fill=orange]
  \tikzstyle{fill4light}=[fill=orange!60]
  \tikzstyle{fill5}=[fill=violet]
  \tikzstyle{fill5light}=[fill=violet!60]
\def\fpath{Figures}
\def\calc_figscale#1{
   \magproz=#1
    \vorne=\magproz
    \divide\vorne by 100
    \hinten=\magproz
    \zwischen=\vorne
    \multiply\zwischen by 100
    \advance\hinten by -\zwischen
    \def\figscale{\the\vorne.\the\hinten}
}
\def% in einem figure environment mit caption
\newcommand{\sG}{G^\star}
\newcommand{\sgn}{\operatorname{sgn}}
\newcommand{\cG}{\mathcal{G}}
\newcommand{\skel}{G_{\operatorname{skel}}}
\newcommand{\ro}{\operatorname{r-out}}
\newcommand{\lo}{\operatorname{l-out}}
\title{Pentagon contact representations%
\footnote{This paper has appeared in the Electronic Journal of Combinatorics (E-JC 25.3 (2018), P3.39).
  The conference version of this paper has appeared in the proceedings of Eurocomb'17 (ENDM 61C, pp. 421--427) under the same title.}}
\newcommand*\samethanks[1][\value{footnote}]{\footnotemark[#1]}
\author{Stefan Felsner\thanks{Partially supported by DFG grant FE-340/11-1.} \qquad Hendrik Schrezenmaier\samethanks \qquad Raphael Steiner\\
\small Institut f\"ur Mathematik\\[-0.8ex]
\small Technische Universit\"at Berlin\\[-0.8ex] 
\small Germany\\
\small\tt \{felsner,schrezen,steiner\}@math.tu-berlin.de}
\begin{document}

\maketitle

\begin{abstract}
  Representations of planar triangulations as contact graphs of a set
  of internally disjoint homothetic triangles or of a set
  of internally disjoint homothetic squares have received quite some
  attention in recent years.  In this paper we investigate
  representations of planar triangulations as contact graphs of a set
  of internally disjoint homothetic pentagons.  Surprisingly such a
  representation exists for every triangulation whose outer face is a
  $5$-gon.  We relate these representations to \emph{five color
    forests}.  These combinatorial structures resemble Schnyder woods
  and transversal structures, respectively.  In particular there is a
  bijection to certain $\alpha$-orientations and consequently a
  lattice structure on the set of five color forests of a given graph.
  This lattice structure plays a role in an algorithm that is supposed
  to compute a contact representation with pentagons for a given
  graph. Based on a five color forest the algorithm builds a system
  of linear equations and solves it, if the solution is non-negative,
  it encodes distances between corners of a pentagon
  representation. In this case the representation is constructed and
  the algorithm terminates.  Otherwise negative variables guide a
  change of the five color forest and the procedure is restarted with
  the new five color forest. Similar algorithms have been proposed for
  contact representations with homothetic triangles and with squares.
\end{abstract}

\section{Introduction}

A \emph{pentagon contact system}~$\mathcal{S}$ is a finite system of
convex pentagons in the plane such that any two pentagons intersect in
at most one point.  If all pentagons of~$\mathcal{S}$ are regular
pentagons with a horizontal side at the bottom, we
call~$\mathcal{S}$ a \emph{regular pentagon contact representation}.
Note that in this case any two pentagons of~$\mathcal{S}$ are
homothetic.  The contact system is \emph{non-degenerate} if every
contact involves exactly one corner of a pentagon.  The \emph{contact
  graph}~$\cG(\mathcal{S})$ of~$\mathcal{S}$ is the graph that has
a vertex for every pentagon and an edge for every contact of two
pentagons in~$\mathcal{S}$.  Note that~$\cG(\mathcal{S})$ inherits
a crossing-free embedding into the plane from~$\mathcal{S}$.  For a
given plane graph~$G$ and a pentagon contact system~$\mathcal{S}$
with~$\cG(\mathcal{S})=G$ we say that~$\mathcal{S}$ is a
\emph{pentagon contact representation} of~$G$.

We will only consider the case that~$G$ is an \emph{inner
  triangulation of a $5$-gon}, i.e., the outer face of~$G$ is a
$5$-cycle with vertices $a_1,\dotsc,a_5$ in clockwise order, all inner
faces are triangles, there are no loops nor multiple edges, and the
only edges between the vertices $a_1,\dotsc,a_5$ are the five edges of
the outer face.  Our interest lies in a variant of regular pentagon contact
representations of~$G$ with the property that
$a_1,\dotsc,a_5$ are not represented by regular pentagons, but by line segments $s_1,\dotsc,s_5$
which together form a pentagon with all internal angles equal
to~$(3/5)\pi$.  The line segment~$s_1$ is always horizontal and at the
top, and~$s_1,\dotsc,s_5$ is the clockwise order of the segments of
the pentagon. Since this variant of regular pentagon contact representations
is the only kind of contact representations we deal with in this paper, we refer to
these also as regular pentagon contact representations. Figure~\ref{fig:regular_pcr} shows an example.

%%%%%%%%%%%%%%%%%%%%%%%%%%%%%%%%%%%%%%%%%%%%%%%%%%%%%%%%%%%%%%%%%%%%%
%%%%%%%%%%%%%%%%%%%%%%%%%%%%%%%%%%%%%%%%%%%%%%%%%%%%%%%%%%%%%%%%%%%%%
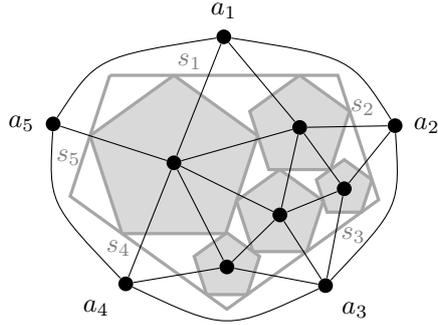
\begin{figure}

\centering

\tikzstyle{normal vertex}=[circle,fill,scale=0.5]
\tikzstyle{edge}=[]

\tikzstyle{pcr}=[very thick,color=gray!70]
\tikzstyle{pcr fill}=[fill=gray!30]

\tikzstyle{segment label}=[color=gray]

\begin{tikzpicture}[scale=3]

\def\A{0.4546294640647149}
\def\B{0.2707279827787506}
\def\C{0.1507367088399636}
\def\D{0.2356059251438577}
\def\E{0.1775675733633562}
\def\aa{0.2809764610791428}
\def\ab{0.4546294640647149}
\def\ba{0.2809764610791428}
\def\bb{0.4546294640647149}
\def\ca{0.1097427956383953}
\def\cb{0.1775675733633562}
\def\da{0.09316040941539433}
\def\db{0.1507367088399636}
\def\ea{0.167319095062964}
\def\eb{0.2707279827787506}
\def\fa{0.1136573660161787}
\def\fb{0.2707279827787506}
\def\fc{0.4546294640647149}
\def\fd{0.5517044438578933}
\def\ga{0.04853748989658916}
\def\gb{0.1570706167625718}
\def\gc{0.2356059251438577}
\def\gd{0.3026830864523393}
\def\ha{0.1775675733633562}
\def\hb{0.171233665440748}
\def\hc{0.4585440344424989}
\def\hd{0.4546294640647149}
\def\ia{0.006333907922608149}
\def\ib{0.1416978993119834}
\def\ic{0.1519463776123757}
\def\id{0.2356059251438577}
\def\ja{0.07415868564756978}
\def\jb{0.1507367088399636}
\def\jc{0.2707279827787506}
\def\jd{0.3180558039029276}
\def\ka{0.09828464856559038}
\def\kb{0.07657802319239386}
\def\kc{0.2356059251438577}
\def\kd{0.2221904928821613}
\def\la{0.1507367088399636}
\def\lb{0.05245206027437295}
\def\lc{0.296349178529731}
\def\ld{0.2356059251438577}
\def\ma{0.03586967405137269}
\def\mb{0.1775675733633562}
\def\mc{0.2356059251438577}
\def\md{0.3231800430531234}

\draw[pcr] (0,0) -- ++(0:   \aa+\fd+\ea)
            -- ++(-72: \ea+\jd+\da)
            -- ++(-144:\da+\lc+\md+\ca)
            -- ++(-216:\ca+\hc+\ba)
            -- ++(-288:\ba+\aa);
            
\coordinate (At) at (0: \aa);
            
\draw[pcr,pcr fill] (At)
        -- ++(-144:\A)
        -- ++(-72:\A)
        -- ++(0:\A)
        -- ++(72:\A)
        -- ++(144:\A);
        
\coordinate (Bt) at (0: \aa+\fd);       
        
\draw[pcr,pcr fill] (Bt)
        -- ++(-144:\B)
        -- ++(-72:\B)
        -- ++(0:\B)
        -- ++(72:\B)
        -- ++(144:\B);
        
\coordinate (Ctr) at ($(0: \aa+\fd+\ea)+(-72:\ea+\jd)$);
        
\draw[pcr,pcr fill] (Ctr)
        -- ++(-216:\C)
        -- ++(-144:\C)
        -- ++(-72:\C)
        -- ++(0:\C)
        -- ++(72:\C);
        
\coordinate (Dbr) at ($(0: \aa+\fd+\ea)+(-72:\ea+\jd+\da)+(-144:\da+\lc)$);
        
\draw[pcr,pcr fill] (Dbr)
        -- ++(-288:\D)
        -- ++(-216:\D)
        -- ++(-144:\D)
        -- ++(-72:\D)
        -- ++(0:\D);
        
\coordinate (Ebl) at ($(-108: \aa+\ba)+(-36:\ba+\hc)$);
        
\draw[pcr,pcr fill] (Ebl)
        -- ++(0:\E)
        -- ++(-288:\E)
        -- ++(-216:\E)
        -- ++(-144:\E)
        -- ++(-72:\E);
        
\node[normal vertex] (Ac) at ($(At)+(-90:0.8506*\A)$) {};
\node[normal vertex] (Bc) at ($(Bt)+(-90:0.8506*\B)$) {};
\node[normal vertex] (Cc) at ($(Ctr)+(-162:0.8506*\C)$) {};
\node[normal vertex] (Dc) at ($(Dbr)+(-234:0.8506*\D)$) {};
\node[normal vertex] (Ec) at ($(Ebl)+(-306:0.8506*\E)$) {};

\coordinate (c1) at (0,0);
\coordinate (c2) at (0:\aa+\fd+\ea);
\coordinate (c3) at ($(c2)+(-72:\ea+\jd+\da)$);
\coordinate (c4) at ($(c3)+(-144:\da+\lc+\md+\ca)$);
\coordinate (c5) at ($(c4)+(-216:\ca+\hc+\ba)$);

\node[normal vertex] (a1) at ($($(c1)!0.5!(c2)$)+(90:0.17)$) [label={above}: \footnotesize $a_1$] {};
\node[normal vertex] (a2) at ($($(c2)!0.5!(c3)$)+(18:0.17)$) [label={right}: \footnotesize $a_2$] {};
\node[normal vertex] (a3) at ($($(c3)!0.5!(c4)$)+(-54:0.17)$) [label={below right}: \footnotesize $a_3$] {};
\node[normal vertex] (a4) at ($($(c4)!0.5!(c5)$)+(-126:0.17)$) [label={below left}: \footnotesize $a_4$] {};
\node[normal vertex] (a5) at ($($(c5)!0.5!(c1)$)+(-198:0.17)$) [label={left}: \footnotesize $a_5$] {};

\node[segment label] at ($($(c1)!0.5!(c2)$)+(90:0.06)+(180:0.15)$) {\footnotesize $s_1$};
\node[segment label] at ($($(c2)!0.5!(c3)$)+(18:0.06)+(108:0.12)$) {\footnotesize $s_2$};
\node[segment label] at ($($(c3)!0.5!(c4)$)+(-54:0.06)+(36:0.23)$) {\footnotesize $s_3$};
\node[segment label] at ($($(c4)!0.5!(c5)$)+(-126:0.06)+(-36:-0.12)$) {\footnotesize $s_4$};
\node[segment label] at ($($(c5)!0.5!(c1)$)+(-198:0.06)+(-108:0.12)$) {\footnotesize $s_5$};

\draw[edge] (a1) .. controls ($(c2)+(54:.1)$) .. (a2);
\draw[edge] (a2) .. controls ($(c3)+(-18:.1)$) .. (a3);
\draw[edge] (a3) .. controls ($(c4)+(-90:.1)$) .. (a4);
\draw[edge] (a4) .. controls ($(c5)+(-162:.1)$) .. (a5);
\draw[edge] (a5) .. controls ($(c1)+(-234:.1)$) .. (a1);

\draw[edge] (Ac) -- (a1)
  (Ac) -- (a5)
  (Ac) -- (a4)
  (Ac) -- (Ec)
  (Ac) -- (Bc)
  (Bc) -- (a1)
  (Bc) -- (a2)
  (Bc) -- (Cc)
  (Cc) -- (a2)
  (Cc) -- (a3)
  (Dc) -- (a3)
  (Dc) -- (Ac)
  (Dc) -- (Bc)
  (Dc) -- (Cc)
  (Dc) -- (Ec)
  (Ec) -- (a3)
  (Ec) -- (a4);
        
\end{tikzpicture}

\caption{A regular pentagon contact representation of the graph shown in black.}
\label{fig:regular_pcr}

\end{figure}
%%%%%%%%%%%%%%%%%%%%%%%%%%%%%%%%%%%%%%%%%%%%%%%%%%%%%%%%%%%%%%%%%%%%%
%%%%%%%%%%%%%%%%%%%%%%%%%%%%%%%%%%%%%%%%%%%%%%%%%%%%%%%%%%%%%%%%%%%%%

Triangle contact representations have been introduced by de Fraysseix
et al.~\cite{de1994triangle}.  They observed that Schnyder woods can
be considered as combinatorial encodings of triangle contact
representations of triangulations and essentially showed that any
Schnyder wood can be used to construct a corresponding triangle
contact system.  They also showed that the triangles can be requested
to be isosceles with a horizontal basis.  Representations with
homothetic triangles can degenerate in the presence of separating
triangles.  Gon\c{c}alves et al.~\cite{gonccalves2011triangle} showed
that 4-connected triangulations admit contact representations with
homothetic triangles.  The proof is an application of Schramm's
\emph{Convex Packing Theorem}, a strong theorem which is based on his
\emph{Monster Packing Theorem}.  A more combinatorial approach to
homothetic triangle contact representations which aims at computing
the representation as the solution of a system of linear equations
related to a Schnyder wood was described by
Felsner~\cite{felsner2009triangle}.  On the basis of this approach
Schrezenmaier reproved the existence of homothetic triangle
representations in his Master's thesis~\cite{schrezenmaier2016zur}.

Representations of graphs using squares or more precisely graphs as a
tool to model packings of squares already appear in classical work of
Brooks et al.~\cite{BSST-40} from 1940.
Schramm~\cite{schramm1993square} proved that every $5$-connected inner
triangulation of a $4$-gon admits a square contact representation.
Again there is a combinatorial approach to this result which aims at
computing the representation as the solution of a system of linear
equations, see Felsner~\cite{felsner2013rectangle}.  In this instance
the role of Schnyder woods is taken by \emph{transversal structures}.
As in the case of homothetic triangles this approach comes with an
algorithm which works well in practice, however, the proof that the
algorithm terminates with a solution is still missing.  On the basis
of the non-algorithmic aspects of this approach
Schrezenmaier~\cite{schrezenmaier2016zur} reproved Schramm's Squaring
Theorem.

In this paper we investigate representations of planar triangulations
as contact graphs of a set of internally disjoint homothetic
pentagons.  From Schramm's \emph{Convex Packing Theorem} it easily
follows that such a representation exists for every triangulation
whose outer face is a $5$-gon.  We relate such representations to
\emph{five color forests}.  The main part of the paper is devoted to
the study of this combinatorial structure. It will become 
clear that five color forests are close relatives of Schnyder woods
and transversal structures. We note in passing that Bernardi and
Fusy~\cite{bernardi-fusy2012} also studied some relatives of Schnyder
woods using 5 colors. Their ``five color trees'', however, only live on
duals of 5-regular planar graphs.  

At the end of the paper we propose an algorithm for computing
homothetic pentagon representations on the basis of systems of
equations and local changes in the corresponding five color
forests. We conjecture that the algorithm always terminates. A proof
of this conjecture would imply a proof for the existence of pentagon
contact representations which is independent of Schramm's Monster
Packing. The idea of looking for pentagon contact representations and
a substantial part of the work originate in the Bachelor's Thesis of
Steiner~\cite{steiner2016existenz}.

%%%%%%%%%%%%%%%%%%%%%%%%%%%%%%%%%%%%%%%%%%%%%%%%%%%%%%%%%%%%%%%%%%%%%
\subsection{The existence of pentagon contact representations}

The existence of regular pentagon contact representations for every
inner triangulation of a $5$-gon can be shown using the following
general result about contact representations by Schramm.

\begin{theorem}[Convex Packing Theorem \cite{schramm2007combinatorically}]
  Let~$G$ be an inner triangulation of the triangle~$abc$.  Further
  let~$C$ be a simple closed curve in the plane partitioned into three
  arcs~$\mathcal{P}_a,\mathcal{P}_b,\mathcal{P}_c$, and for each
  inner vertex~$v$ of~$G$ let~$\mathcal{P}_v$ be a convex set in
  the plane containing more than one point.  Then there exists a
  contact representation of a supergraph of~$G$ (on the same vertex
  set, but possibly with more edges) where each inner vertex~$v$ is
  represented by a single point or a homothetic copy of its
  prototype~$\mathcal{P}_v$ and each outer vertex~$w$ by the
  arc~$\mathcal{P}_w$.
\end{theorem}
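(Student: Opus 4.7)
The plan is to follow the topological/continuity strategy of Schramm's original argument, since a purely combinatorial proof at this level of generality is not known (and developing one in a restricted setting is essentially the point of the present paper). The idea is to parametrize admissible configurations by a vector of scale factors $r=(r_v)_{v\in V_{\text{in}}}$ with $r_v\geq 0$ controlling the size of the homothetic copy of $\mathcal{P}_v$, and then to locate a scaling that simultaneously realizes all the contacts prescribed by~$G$.

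First, I would fix a canonical position map: given $r$, place the copies $r_v\mathcal{P}_v$ inside the Jordan region bounded by~$C$ by a greedy packing rule compatible with the rotational/combinatorial order inherited from the planar embedding of~$G$, with the arcs $\mathcal{P}_a,\mathcal{P}_b,\mathcal{P}_c$ held fixed on~$C$. Second, I would define a continuous defect map $\Phi(r)=(\Phi_e(r))_{e\in E(G)}$, where $\Phi_e(r)$ measures the remaining gap (or enforced overlap) across the edge $e$, and seek $r$ with $\Phi(r)\leq 0$, which realizes $G$ as a subgraph of the contact graph (extra edges correspond to coincidental contacts producing the supergraph). Third, restricting to the compact simplex $\{r:\sum_v r_v = 1\}$ and analyzing the boundary behavior (where some $r_v\to 0$, i.e., $v$ degenerates to a point, which is explicitly allowed by the theorem), I would invoke a Brouwer-type fixed-point or KKM-intersection argument to produce a suitable $r^*$.

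The main obstacle, and the reason this theorem is genuinely deep, is the rigorous topological step: showing that the defect map is continuous and has the correct degree/boundary behavior, and that the associated combinatorial contact pattern persists in the limit. Degenerations, non-rigid motions of clusters of touching shapes, and the possibility that contacts are created only in the limit all have to be handled carefully. Schramm encapsulates exactly this difficulty in his \emph{Monster Packing Theorem}, which acts as a black box providing the required compactness and incompressibility; reproving it would be the real work. Once this topological input is available, the deduction of the theorem as stated (any prototype family, any tripartition of~$C$) is a straightforward application, and the pentagon case follows by taking all prototypes $\mathcal{P}_v$ to be the fixed regular pentagon.
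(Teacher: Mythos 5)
This theorem is not proved in the paper at all: it is imported verbatim from Schramm's thesis via the citation, and the paper uses it strictly as a black box (its only role is in the short deduction of the existence of regular pentagon contact representations that follows it). So there is no proof in the paper to compare yours against, and the honest answer is that you have not supplied one either. Your last paragraph concedes this explicitly --- the ``topological input'' you need is exactly Schramm's Monster Packing Theorem, invoked unproved --- which means your proposal reduces the statement to an unproved result of comparable depth rather than establishing it.

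Beyond that structural issue, the intermediate steps you sketch are not actually available. You never define the ``canonical position map'' given by a ``greedy packing rule,'' and it is far from clear that any such rule produces placements depending continuously on the scale vector $r$ while respecting the embedding of $G$; clusters of mutually touching convex bodies admit non-rigid deformations, so positions are not determined by sizes, and this non-uniqueness is one of the genuine obstructions Schramm's argument has to overcome. The ``defect map'' $\Phi$ is therefore not well defined as stated, and the Brouwer/KKM step is asserted without specifying the map whose fixed point or the cover whose intersection is being used, or verifying the boundary conditions on the simplex. Finally, the claim that contacts ``persist in the limit'' so that the limit configuration realizes a supergraph of $G$ is precisely the incompressibility statement that constitutes the theorem. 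If your intent is to treat the Convex Packing Theorem as a cited external result --- as the paper does --- you should simply say so; presented as a proof, the proposal has a gap at every step that matters.
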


\begin{theorem}
  Let~$G$ be an inner triangulation of the $5$-gon $a_1,\dotsc,a_5$.
  Then there exists a regular pentagon contact representation of~$G$.
\end{theorem}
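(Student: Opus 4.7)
The plan is to apply Schramm's Convex Packing Theorem. While the statement in the excerpt is for inner triangulations of a triangle, Schramm's proof in~\cite{schramm2007combinatorically} extends to inner triangulations of any $n$-gon, with the curve $C$ partitioned into $n$ arcs (one per outer vertex) and convex prototypes for the inner vertices. I would invoke this $n$-gon version for $n=5$.

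Concretely, I would fix a pentagon $P$ in the plane with horizontal top side and interior angles $3\pi/5$, denote its sides in clockwise order by $s_1,\dotsc,s_5$, take $C:=\partial P$, and set the outer arcs to be $\mathcal{P}_{a_i}:=s_i$ for $i=1,\dotsc,5$. For each inner vertex $v$ of $G$, I would let $\mathcal{P}_v$ be a regular pentagon with horizontal top side (of any positive size). Schramm's theorem then produces a contact representation of some supergraph $G^+$ of $G$ in which each outer vertex $a_i$ is the segment $s_i$ and every inner vertex is a homothetic copy of the regular pentagon prototype, or possibly a single point.

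The remaining step is to argue that $G^+=G$, so that the representation obtained is actually a regular pentagon contact representation of $G$ itself. Since every inner face of $G$ is already a triangle, no edge can be added in the interior of the 5-cycle without destroying planarity; and no edge between an inner vertex and an outer vertex can be inserted through the already-triangulated interior either. The only conceivable additional edges are chords $a_ia_j$ between non-adjacent outer vertices, drawn in the exterior of the 5-cycle. But such a chord would require the non-adjacent sides $s_i,s_j$ of the convex pentagon $P$ to share a point, which is impossible. Hence $G^+=G$ and we obtain the desired representation.

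The main obstacle is justifying the polygonal version of Schramm's theorem from the triangle statement quoted above. If one insists on using only the triangle version, one can first augment $G$ to an inner triangulation of a triangle by adding two non-crossing chords of the outer 5-cycle in its exterior, and then apply Schramm with line-segment prototypes (in the directions of the corresponding sides $s_i$) for the two outer vertices that become inner in the augmentation; arguing that these degenerate representations settle exactly onto the desired boundary segments is then the delicate step, and this is why the direct polygonal application is cleaner.
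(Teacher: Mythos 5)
There is a genuine gap. The Convex Packing Theorem explicitly allows each inner vertex to be represented ``by a single point or a homothetic copy of its prototype,'' and you never rule out the single-point alternative. Your argument that $G^+=G$ only addresses extra \emph{edges}; it says nothing about inner vertices whose pentagons collapse to points, and a collapsed pentagon does not yield a pentagon contact representation of~$G$. This is precisely where the paper invests most of its effort: it considers a maximal connected component~$H$ of vertices represented by points, observes that all pentagons and segments representing the vertices of the cycle~$C$ bounding~$H$ must meet at one common point, notes that at most two internally disjoint convex bodies can do so, and concludes $C$ would be a $2$-cycle, contradicting the absence of multiple edges. Without some such argument your proof is incomplete.

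A secondary issue is that your primary route invokes an $n$-gon version of Schramm's theorem that is not the statement quoted (and proved from) in this setting; you correctly flag this as the main obstacle, but your fallback --- adding two exterior chords to reduce to a triangle and giving the two demoted outer vertices line-segment prototypes --- is exactly the paper's construction (it adds $a_1a_3$ and $a_1a_4$, takes $\mathcal{P}_{a_1},\mathcal{P}_{a_3},\mathcal{P}_{a_4}$ to be extensions of three pentagon sides forming a triangle, and gives $a_2,a_5$ segment prototypes parallel to the remaining two sides), and you leave it unexecuted. The step you call delicate (the segments settling onto the boundary) is in fact easy: the segment for $a_2$ has prescribed direction and must touch both $\mathcal{P}_{a_1}$ and $\mathcal{P}_{a_3}$, which pins it down. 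The genuinely delicate step is the non-degeneracy argument above.
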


\begin{proof}
  By adding the edges~$a_1a_3$ and~$a_1a_4$ in the outer face of~$G$,
  it becomes a triangulation~$G'$ with outer face~$a_1a_3a_4$.  We
  define the
  arcs~$\mathcal{P}_{a_1},\mathcal{P}_{a_3},\mathcal{P}_{a_4}$ to be
  extensions of the upper, lower left and lower right edge
  of a regular pentagon~$A$ with a horizontal edge at the top, respectively, such
  that~${\mathcal{P}_{a_1}\cup\mathcal{P}_{a_3}\cup\mathcal{P}_{a_4}}$
  forms a triangle and therefore a simple closed curve.  We define the
  convex sets~$\mathcal{P}_{a_2}$ and~$\mathcal{P}_{a_5}$ to be line
  segments parallel to the upper right and upper left edge of the
  pentagon~$A$ (see Fig.~\ref{fig:schramm_constr}).  Finally, for each
  inner vertex~$v$ of~$G$ let~$\mathcal{P}_v$ be a regular pentagon
  with a horizontal edge at the bottom.

%%%%%%%%%%%%%%%%%%%%%%%%%%%%%%%%%%%%%%%%%%%%%%%%%%%%%%%%%%%%%%%%%%%%%
%%%%%%%%%%%%%%%%%%%%%%%%%%%%%%%%%%%%%%%%%%%%%%%%%%%%%%%%%%%%%%%%%%%%%
\begin{figure}

\centering

\tikzstyle{edge}=[thick]

\tikzstyle{pcr}=[color=gray!70]

\begin{tikzpicture}[scale=2.5]

\coordinate (A) at (0,0);
\coordinate (B) at (144:2);
\coordinate (C) at (36:2);

\draw[edge] (A) -- (B) -- (C) -- (A);

\coordinate (L1) at ($(A)!0.35!(B)$);
\coordinate (L2) at ($(L1)+(72:3)$);
\coordinate (L3) at (intersection of B--C and L1--L2);
\coordinate (L4) at ($(L1)!0.1!(L3)$);
\coordinate (L5) at ($(L1)!0.9!(L3)$);

\coordinate (R1) at ($(A)!0.55!(C)$);
\coordinate (R2) at ($(R1)+(108:3)$);
\coordinate (R3) at (intersection of B--C and R1--R2);
\coordinate (R4) at ($(R1)!0.1!(R3)$);
\coordinate (R5) at ($(R1)!0.9!(R3)$);

\draw[edge] (L4) -- (L5);
\draw[edge] (R4) -- (R5);

\node at ($(B)!0.5!(C)$) [label={above}:\footnotesize $\mathcal{P}_{a_1}$] {};
\node at ($(A)!0.5!(C)$) [label={below}:\footnotesize $\mathcal{P}_{a_3}$] {};
\node at ($(A)!0.5!(B)$) [label={left}:\footnotesize $\mathcal{P}_{a_4}$] {};
\node at ($(R4)!0.5!(R5)$) [label={right}:\footnotesize $\mathcal{P}_{a_2}$] {};
\node at ($(L4)!0.5!(L5)$) [label={left}:\footnotesize $\mathcal{P}_{a_5}$] {};
        
\end{tikzpicture}

\caption{Prototypes for the five outer vertices of $G$.}

\label{fig:schramm_constr}

\end{figure}
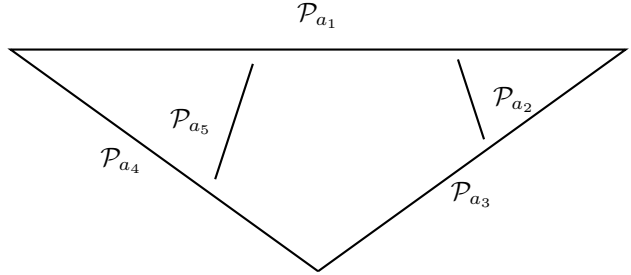
%%%%%%%%%%%%%%%%%%%%%%%%%%%%%%%%%%%%%%%%%%%%%%%%%%%%%%%%%%%%%%%%%%%%%
%%%%%%%%%%%%%%%%%%%%%%%%%%%%%%%%%%%%%%%%%%%%%%%%%%%%%%%%%%%%%%%%%%%%%

Now we can apply the Convex Packing Theorem. The result is a contact
representation of a supergraph of~$G'$ where~$a_1,a_3,a_4$ are
represented by~$\mathcal{P}_{a_1},\mathcal{P}_{a_3},\mathcal{P}_{a_4}$
and every inner vertex~$v$ by a homothetic copy of~$\mathcal{P}_v$ or
by a single point.

We claim that in this contact representation of~$G'$ none of the
homothetic copies of the prototypes is degenerate to a single point.
Assume there is a degenerate copy in the contact representation.
Let~$H$ be a maximal connected component of the subgraph of~$G'$
induced by the vertices whose pentagons are degenerate to a single
point. Since the line segments corresponding to the three outer
vertices are not degenerate,~$H$ has to be bounded by a cycle~$C$ of
vertices whose pentagons or line segments are not
degenerate.  In the contact representation all vertices of~$H$ are
represented by the same point and therefore all pentagons and
line segments representing the vertices of~$C$ have a contact with
this point, i.e., they meet at the point.
But for geometric reasons at most two of these can meet in a single
point.  Thus~$C$ is a $2$-cycle, in contradiction to our definition of
inner triangulations that does not allow multiple edges.

After cutting the segments~$\mathcal{P}_{a_1}$,~$\mathcal{P}_{a_3}$
and~$\mathcal{P}_{a_4}$, the vertices~$a_1,\dotsc,a_5$ are represented
by a pentagon of the required form and we obtain a regular pentagon
contact representation of~$G$.
\end{proof}

%%%%%%%%%%%%%%%%%%%%%%%%%%%%%%%%%%%%%%%%%%%%%%%%%%%%%%%%%%%%%%%%%%%%%
\section{Five Color Forests}

In this section $G$ will always be an inner triangulation with outer face
$a_1,\dotsc,a_5$ in clockwise order.  The set~$1,\dotsc,5$ of colors
is to be understood as representatives modulo~$5$, e.g., $-1$ and~$4$
denote the same color.

\begin{definition} A \emph{five color forest} of~$G$ is an orientation and
coloring of the inner edges of~$G$ in the colors~$1,\dotsc,5$ with the
following properties (see Fig.~\ref{fig:def_fcf} for an illustration):
%%%%%%%%%%%%%%%%%%%%%%%%%%%%%%%%%%%%%%%%%%%%%%%%%%%%%%%%%%%%%%%%%%%%%
%%%%%%%%%%%%%%%%%%%%%%%%%%%%%%%%%%%%%%%%%%%%%%%%%%%%%%%%%%%%%%%%%%%%%
\begin{figure}

\centering

\tikzstyle{vertex}=[circle,fill,scale=0.5]
\tikzstyle{out edge}=[-latex',thick]
\tikzstyle{in edge}=[latex'-]

\begin{tikzpicture}[scale=1.1]

\node[vertex] (v) at (0,0) {};

\node[color1] (v1) at (90:1.2) {$1$};
\node[color2] (v2) at (18:1.2) {$2$};
\node[color3] (v3) at (306:1.2) {$3$};
\node[color4] (v4) at (234:1.2) {$4$};
\node[color5] (v5) at (162:1.2) {$5$};

\node[color1light] at (270:.9) {\footnotesize $1$};
\node[color2light] at (198:.9) {\footnotesize $2$};
\node[color3light] at (126:.9) {\footnotesize $3$};
\node[color4light] at (54:.9) {\footnotesize $4$};
\node[color5light] at (342:.9) {\footnotesize $5$};

\draw[out edge,color1] (v) -- (v1);
\draw[out edge,color2] (v) -- (v2);
\draw[out edge,color3] (v) -- (v3);
\draw[out edge,color4] (v) -- (v4);
\draw[out edge,color5] (v) -- (v5);

\draw[in edge,color3light] (v) -- (114:.7);
\draw[in edge,color3light] (v) -- (138:.7);

\draw[in edge,color2light] (v) -- (186:.7);
\draw[in edge,color2light] (v) -- (210:.7);

\draw[in edge,color1light] (v) -- (258:.7);
\draw[in edge,color1light] (v) -- (282:.7);

\draw[in edge,color5light] (v) -- (330:.7);
\draw[in edge,color5light] (v) -- (354:.7);

\draw[in edge,color4light] (v) -- (42:.7);
\draw[in edge,color4light] (v) -- (66:.7);

\end{tikzpicture}
\qquad
\begin{tikzpicture}

\node[vertex,color1] (a1) at (90:1.2) [label={above}:$a_1$] {};
\node[vertex,color2] (a2) at (18:1.2) [label={right}:$a_2$] {};
\node[vertex,color3] (a3) at (306:1.2) [label={below right}:$a_3$] {};
\node[vertex,color4] (a4) at (234:1.2) [label={below left}:$a_4$] {};
\node[vertex,color5] (a5) at (162:1.2) [label={left}:$a_5$] {};

\draw (a1) -- (a2) -- (a3) -- (a4) -- (a5) -- (a1);

\draw[in edge,color1light] (a1) -- +(260:.5);
\draw[in edge,color1light] (a1) -- +(280:.5);

\draw[in edge,color2light] (a2) -- +(188:.5);
\draw[in edge,color2light] (a2) -- +(208:.5);

\draw[in edge,color3light] (a3) -- +(116:.5);
\draw[in edge,color3light] (a3) -- +(136:.5);

\draw[in edge,color4light] (a4) -- +(44:.5);
\draw[in edge,color4light] (a4) -- +(64:.5);

\draw[in edge,color5light] (a5) -- +(332:.5);
\draw[in edge,color5light] (a5) -- +(352:.5);

\node[color1light] at ($(a1)+(270:.7)$) {\footnotesize $1$};
\node[color2light] at ($(a2)+(198:.7)$) {\footnotesize $2$};
\node[color3light] at ($(a3)+(126:.7)$) {\footnotesize $3$};
\node[color4light] at ($(a4)+(54:.7)$) {\footnotesize $4$};
\node[color5light] at ($(a5)+(342:.7)$) {\footnotesize $5$};

\end{tikzpicture}

\caption{The local conditions of a five color forest}

\label{fig:def_fcf}

\end{figure}
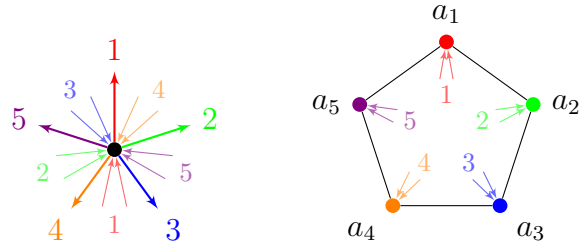
%%%%%%%%%%%%%%%%%%%%%%%%%%%%%%%%%%%%%%%%%%%%%%%%%%%%%%%%%%%%%%%%%%%%%
%%%%%%%%%%%%%%%%%%%%%%%%%%%%%%%%%%%%%%%%%%%%%%%%%%%%%%%%%%%%%%%%%%%%%
\begin{enumerate}[(F1)]
\item \label{item:outer_edges} All edges incident to~$a_i$ are
  oriented towards~$a_i$ and colored in the color~$i$.
\item \label{item:inner_vertex_blocks} For each inner vertex~$v$, the
  incoming edges build five (possibly empty)
  blocks $B_i$, ${i=1,\dotsc,5}$, of edges of color~$i$ and the
  clockwise order of these blocks is~$B_1,\dotsc,B_5$.  Moreover~$v$
  has at most one outgoing edge of color~$i$ and such an edge has to
  be located between the blocks~$B_{i+2}$ and~$B_{i-2}$.
\item \label{item:no_three_empty} For every inner vertex and
  for~${i=1,\dotsc,5}$ the block~$B_i$ is nonempty or one of the
  outgoing edges of colors~$i-2$ and~$i+2$ exists.
\end{enumerate}
\end{definition}

The following theorem shows the key correspondence between five color
forests and pentagon contact representations.

\begin{theorem} \label{thm:regular_induce_fcf} Every regular pentagon
contact representation induces a five color forest on its contact
graph.
\end{theorem}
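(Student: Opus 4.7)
My plan is to read the orientation and coloring directly off the geometric representation and then verify (F1)--(F3) in turn. All inner pentagons are homothetic regular pentagons with horizontal bottom, so their five corners come with canonical labels $1,\dots,5$ (clockwise from the top) and the outward direction at corner $i$ is a fixed angle independent of the pentagon. A direct angular computation will show (i) that corner $i$ of an inner pentagon has the same outward direction as the segment $s_i$, and (ii) that if corner $j$ of one inner pentagon meets a side of another, that side must be the one between corners $j+2$ and $j-2$, since its outward normal is forced to be antipodal to the corner's outward direction. By non-degeneracy every contact involves exactly one pentagon corner; I orient the corresponding edge of $G$ away from the pentagon owning that corner and color it with that corner's index.

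Condition (F1) is then immediate from (i): every edge incident to $a_i$ comes from a contact with $s_i$, which must use an inner pentagon's corner $i$. For (F2), fix an inner vertex $v$. Outgoing edges at $v$ are in bijection with the corners of $P_v$ that are in contact, so at most one per color; by (ii) the incoming edges of color $j$ are precisely the corner-$j$ contacts lying on the single side of $P_v$ between corners $j+2$ and $j-2$, so they form a contiguous block $B_j$. Reading the boundary of $P_v$ clockwise starting at corner $1$, the blocks appear in the order $B_4,B_5,B_1,B_2,B_3$ (i.e.\ in the required cyclic order), and the possible outgoing edge of color $i$, located at corner $i$, sits exactly between $B_{i+2}$ and $B_{i-2}$.

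The hard step is (F3), which I would prove by contradiction. Assume $B_i$ is empty while neither corner $i+2$ nor corner $i-2$ of $P_v$ has an outgoing edge. Then the last edge of $B_{i-1}$ and the first edge of $B_{i+1}$ are consecutive in the cyclic order around $v$ and, together with $v$, bound an inner (hence triangular) face $vuw$ of $G$; therefore $uw$ is an edge of $G$, realized by a corner-side contact between $P_u$ and $P_w$. By construction $P_u$ is pinned in the closed half-plane exterior to $P_v$'s $B_{i-1}$-side and $P_w$ in the closed half-plane exterior to its $B_{i+1}$-side, while the $P_v$-arc of this face must sweep the entire missing $B_i$-side together with both free corners $i+2$ and $i-2$ of $P_v$ (each contributing a rigid $72^\circ$ exterior turn). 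I would use this angular information to show that the $P_u$- and $P_w$-arcs bounding the face leave their $v$-contact endpoints in directions which pin the putative $uw$-contact point to a location incompatible with a corner-side touch between two homothetic regular pentagons confined to those respective half-planes; this contradicts the existence of the edge $uw$. The principal technical difficulty is to make the angle bookkeeping around the two free corners of $P_v$ and the two corners of $P_u, P_w$ at the $v$-contacts precise, and to check it uniformly over the five choices of $i$.
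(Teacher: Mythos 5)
Your construction and the verification of (F1)--(F2) are exactly the paper's: color the corners $1,\dotsc,5$ clockwise from the top corner, orient each contact edge from the pentagon contributing the corner to the one contributing the side, and use the fact that a corner of one homothetic regular pentagon can only touch the opposite side of another. Your (F3) argument is the contrapositive of the paper's: the paper argues directly that, since the pentagons of consecutive neighbors of $v$ must touch, at least one neighbor's pentagon has to intersect the region beyond the side of $P_v$ opposite corner $i$, which forces an incoming edge of color $i$ or an outgoing edge of color $i+2$ or $i-2$; you instead derive a contradiction from the two neighbors flanking the gap being unable to touch. Both versions leave the final angular case analysis informal, so that is not a complaint. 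The one genuine omission is your appeal to non-degeneracy: the theorem, and the paper's proof, also cover corner--corner contacts, which the paper handles by interpreting such a contact as a corner--side contact in one of the two possible ways (whence the induced five color forest of a degenerate representation is not unique). A second, easily repaired slip: when you identify the flanking edges as the last edge of $B_{i-1}$ and the first edge of $B_{i+1}$ you tacitly assume those blocks are nonempty; take instead the two edges of $G$ immediately flanking the gap in the rotation at $v$, which exist since $v$ has degree at least $3$.
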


%%%%%%%%%%%%%%%%%%%%%%%%%%%%%%%%%%%%%%%%%%%%%%%%%%%%%%%%%%%%%%%%%%%%%
%%%%%%%%%%%%%%%%%%%%%%%%%%%%%%%%%%%%%%%%%%%%%%%%%%%%%%%%%%%%%%%%%%%%%
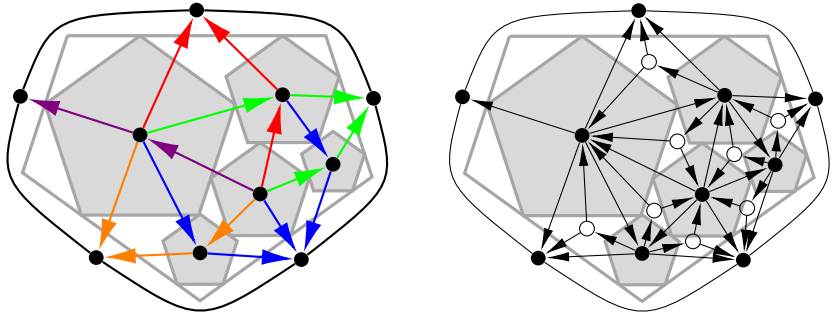
\begin{figure}

\centering

\tikzstyle{normal vertex}=[circle,fill,scale=0.5]
\tikzstyle{stack vertex}=[circle,draw,scale=0.5]
\tikzstyle{undirected edge}=[]
\tikzstyle{colored undirected edge}=[thick]
\tikzstyle{edge}=[-{Latex[scale length=1.8,scale width=1]}]
\tikzstyle{colored edge}=[thick,-{Latex[scale length=1.8,scale width=1.2]}]

\tikzstyle{pcr}=[very thick,color=gray!70]
\tikzstyle{pcr fill}=[fill=gray!30]

\begin{tikzpicture}[scale=3.4]

\def\A{0.4546294640647149}
\def\B{0.2707279827787506}
\def\C{0.1507367088399636}
\def\D{0.2356059251438577}
\def\E{0.1775675733633562}
\def\aa{0.2809764610791428}
\def\ab{0.4546294640647149}
\def\ba{0.2809764610791428}
\def\bb{0.4546294640647149}
\def\ca{0.1097427956383953}
\def\cb{0.1775675733633562}
\def\da{0.09316040941539433}
\def\db{0.1507367088399636}
\def\ea{0.167319095062964}
\def\eb{0.2707279827787506}
\def\fa{0.1136573660161787}
\def\fb{0.2707279827787506}
\def\fc{0.4546294640647149}
\def\fd{0.5517044438578933}
\def\ga{0.04853748989658916}
\def\gb{0.1570706167625718}
\def\gc{0.2356059251438577}
\def\gd{0.3026830864523393}
\def\ha{0.1775675733633562}
\def\hb{0.171233665440748}
\def\hc{0.4585440344424989}
\def\hd{0.4546294640647149}
\def\ia{0.006333907922608149}
\def\ib{0.1416978993119834}
\def\ic{0.1519463776123757}
\def\id{0.2356059251438577}
\def\ja{0.07415868564756978}
\def\jb{0.1507367088399636}
\def\jc{0.2707279827787506}
\def\jd{0.3180558039029276}
\def\ka{0.09828464856559038}
\def\kb{0.07657802319239386}
\def\kc{0.2356059251438577}
\def\kd{0.2221904928821613}
\def\la{0.1507367088399636}
\def\lb{0.05245206027437295}
\def\lc{0.296349178529731}
\def\ld{0.2356059251438577}
\def\ma{0.03586967405137269}
\def\mb{0.1775675733633562}
\def\mc{0.2356059251438577}
\def\md{0.3231800430531234}

\draw[pcr] (0,0) -- ++(0:   \aa+\fd+\ea)
            -- ++(-72: \ea+\jd+\da)
            -- ++(-144:\da+\lc+\md+\ca)
            -- ++(-216:\ca+\hc+\ba)
            -- ++(-288:\ba+\aa);
            
\coordinate (At) at (0: \aa);
            
\draw[pcr,pcr fill] (At)
        -- ++(-144:\A)
        -- ++(-72:\A)
        -- ++(0:\A)
        -- ++(72:\A)
        -- ++(144:\A);
        
\coordinate (Bt) at (0: \aa+\fd);       
        
\draw[pcr,pcr fill] (Bt)
        -- ++(-144:\B)
        -- ++(-72:\B)
        -- ++(0:\B)
        -- ++(72:\B)
        -- ++(144:\B);
        
\coordinate (Ctr) at ($(0: \aa+\fd+\ea)+(-72:\ea+\jd)$);
        
\draw[pcr,pcr fill] (Ctr)
        -- ++(-216:\C)
        -- ++(-144:\C)
        -- ++(-72:\C)
        -- ++(0:\C)
        -- ++(72:\C);
        
\coordinate (Dbr) at ($(0: \aa+\fd+\ea)+(-72:\ea+\jd+\da)+(-144:\da+\lc)$);
        
\draw[pcr,pcr fill] (Dbr)
        -- ++(-288:\D)
        -- ++(-216:\D)
        -- ++(-144:\D)
        -- ++(-72:\D)
        -- ++(0:\D);
        
\coordinate (Ebl) at ($(-108: \aa+\ba)+(-36:\ba+\hc)$);
        
\draw[pcr,pcr fill] (Ebl)
        -- ++(0:\E)
        -- ++(-288:\E)
        -- ++(-216:\E)
        -- ++(-144:\E)
        -- ++(-72:\E);
        
\node[normal vertex] (Ac) at ($(At)+(-90:0.8506*\A)$) {};
\node[normal vertex] (Bc) at ($(Bt)+(-90:0.8506*\B)$) {};
\node[normal vertex] (Cc) at ($(Ctr)+(-162:0.8506*\C)$) {};
\node[normal vertex] (Dc) at ($(Dbr)+(-234:0.8506*\D)$) {};
\node[normal vertex] (Ec) at ($(Ebl)+(-306:0.8506*\E)$) {};

\coordinate (c1) at (0,0);
\coordinate (c2) at (0:\aa+\fd+\ea);
\coordinate (c3) at ($(c2)+(-72:\ea+\jd+\da)$);
\coordinate (c4) at ($(c3)+(-144:\da+\lc+\md+\ca)$);
\coordinate (c5) at ($(c4)+(-216:\ca+\hc+\ba)$);

\node[normal vertex] (a1) at ($($(c1)!0.5!(c2)$)+(90:0.1)$) {};
\node[normal vertex] (a2) at ($($(c2)!0.5!(c3)$)+(18:0.1)$) {};
\node[normal vertex] (a3) at ($($(c3)!0.5!(c4)$)+(-54:0.1)$) {};
\node[normal vertex] (a4) at ($($(c4)!0.5!(c5)$)+(-126:0.1)$) {};
\node[normal vertex] (a5) at ($($(c5)!0.5!(c1)$)+(-198:0.1)$) {};

\draw[colored undirected edge] (a1) .. controls ($(c2)+(54:.1)$) .. (a2);
\draw[colored undirected edge] (a2) .. controls ($(c3)+(-18:.1)$) .. (a3);
\draw[colored undirected edge] (a3) .. controls ($(c4)+(-90:.1)$) .. (a4);
\draw[colored undirected edge] (a4) .. controls ($(c5)+(-162:.1)$) .. (a5);
\draw[colored undirected edge] (a5) .. controls ($(c1)+(-234:.1)$) .. (a1);

\draw[colored edge,color1] (Ac) -- (a1);
\draw[colored edge,color5] (Ac) -- (a5);
\draw[colored edge,color4] (Ac) -- (a4);
\draw[colored edge,color3] (Ac) -- (Ec);
\draw[colored edge,color2] (Ac) -- (Bc);
\draw[colored edge,color1] (Bc) -- (a1);
\draw[colored edge,color2] (Bc) -- (a2);
\draw[colored edge,color3] (Bc) -- (Cc);
\draw[colored edge,color2] (Cc) -- (a2);
\draw[colored edge,color3] (Cc) -- (a3);
\draw[colored edge,color3] (Dc) -- (a3);
\draw[colored edge,color5] (Dc) -- (Ac);
\draw[colored edge,color1] (Dc) -- (Bc);
\draw[colored edge,color2] (Dc) -- (Cc);
\draw[colored edge,color4] (Dc) -- (Ec);
\draw[colored edge,color3] (Ec) -- (a3);
\draw[colored edge,color4] (Ec) -- (a4);
        
\end{tikzpicture}
\quad
\begin{tikzpicture}[scale=3.4]

\def\A{0.4546294640647149}
\def\B{0.2707279827787506}
\def\C{0.1507367088399636}
\def\D{0.2356059251438577}
\def\E{0.1775675733633562}
\def\aa{0.2809764610791428}
\def\ab{0.4546294640647149}
\def\ba{0.2809764610791428}
\def\bb{0.4546294640647149}
\def\ca{0.1097427956383953}
\def\cb{0.1775675733633562}
\def\da{0.09316040941539433}
\def\db{0.1507367088399636}
\def\ea{0.167319095062964}
\def\eb{0.2707279827787506}
\def\fa{0.1136573660161787}
\def\fb{0.2707279827787506}
\def\fc{0.4546294640647149}
\def\fd{0.5517044438578933}
\def\ga{0.04853748989658916}
\def\gb{0.1570706167625718}
\def\gc{0.2356059251438577}
\def\gd{0.3026830864523393}
\def\ha{0.1775675733633562}
\def\hb{0.171233665440748}
\def\hc{0.4585440344424989}
\def\hd{0.4546294640647149}
\def\ia{0.006333907922608149}
\def\ib{0.1416978993119834}
\def\ic{0.1519463776123757}
\def\id{0.2356059251438577}
\def\ja{0.07415868564756978}
\def\jb{0.1507367088399636}
\def\jc{0.2707279827787506}
\def\jd{0.3180558039029276}
\def\ka{0.09828464856559038}
\def\kb{0.07657802319239386}
\def\kc{0.2356059251438577}
\def\kd{0.2221904928821613}
\def\la{0.1507367088399636}
\def\lb{0.05245206027437295}
\def\lc{0.296349178529731}
\def\ld{0.2356059251438577}
\def\ma{0.03586967405137269}
\def\mb{0.1775675733633562}
\def\mc{0.2356059251438577}
\def\md{0.3231800430531234}

\draw[pcr] (0,0) -- ++(0:   \aa+\fd+\ea)
            -- ++(-72: \ea+\jd+\da)
            -- ++(-144:\da+\lc+\md+\ca)
            -- ++(-216:\ca+\hc+\ba)
            -- ++(-288:\ba+\aa);
            
\coordinate (At) at (0: \aa);
            
\draw[pcr,pcr fill] (At)
        -- ++(-144:\A)
        -- ++(-72:\A)
        -- ++(0:\A)
        -- ++(72:\A)
        -- ++(144:\A);
        
\coordinate (Bt) at (0: \aa+\fd);       
        
\draw[pcr,pcr fill] (Bt)
        -- ++(-144:\B)
        -- ++(-72:\B)
        -- ++(0:\B)
        -- ++(72:\B)
        -- ++(144:\B);
        
\coordinate (Ctr) at ($(0: \aa+\fd+\ea)+(-72:\ea+\jd)$);
        
\draw[pcr,pcr fill] (Ctr)
        -- ++(-216:\C)
        -- ++(-144:\C)
        -- ++(-72:\C)
        -- ++(0:\C)
        -- ++(72:\C);
        
\coordinate (Dbr) at ($(0: \aa+\fd+\ea)+(-72:\ea+\jd+\da)+(-144:\da+\lc)$);
        
\draw[pcr,pcr fill] (Dbr)
        -- ++(-288:\D)
        -- ++(-216:\D)
        -- ++(-144:\D)
        -- ++(-72:\D)
        -- ++(0:\D);
        
\coordinate (Ebl) at ($(-108: \aa+\ba)+(-36:\ba+\hc)$);
        
\draw[pcr,pcr fill] (Ebl)
        -- ++(0:\E)
        -- ++(-288:\E)
        -- ++(-216:\E)
        -- ++(-144:\E)
        -- ++(-72:\E);
        
\node[normal vertex] (Ac) at ($(At)+(-90:0.8506*\A)$) {};
\node[normal vertex] (Bc) at ($(Bt)+(-90:0.8506*\B)$) {};
\node[normal vertex] (Cc) at ($(Ctr)+(-162:0.8506*\C)$) {};
\node[normal vertex] (Dc) at ($(Dbr)+(-234:0.8506*\D)$) {};
\node[normal vertex] (Ec) at ($(Ebl)+(-306:0.8506*\E)$) {};

\coordinate (c1) at (0,0);
\coordinate (c2) at (0:\aa+\fd+\ea);
\coordinate (c3) at ($(c2)+(-72:\ea+\jd+\da)$);
\coordinate (c4) at ($(c3)+(-144:\da+\lc+\md+\ca)$);
\coordinate (c5) at ($(c4)+(-216:\ca+\hc+\ba)$);

\node[normal vertex] (a1) at ($($(c1)!0.5!(c2)$)+(90:0.1)$) {};
\node[normal vertex] (a2) at ($($(c2)!0.5!(c3)$)+(18:0.1)$) {};
\node[normal vertex] (a3) at ($($(c3)!0.5!(c4)$)+(-54:0.1)$) {};
\node[normal vertex] (a4) at ($($(c4)!0.5!(c5)$)+(-126:0.1)$) {};
\node[normal vertex] (a5) at ($($(c5)!0.5!(c1)$)+(-198:0.1)$) {};

\node[stack vertex] (sf) at (.54,-.1) {};
\node[stack vertex] (sg) at (.65,-.41) {};
\node[stack vertex] (sh) at (.3,-.75) {};
\node[stack vertex] (si) at (.56,-.68) {};
\node[stack vertex] (sj) at (1.04,-.33) {};
\node[stack vertex] (sk) at (.87,-.46) {};
\node[stack vertex] (sl) at (.92,-.67) {};
\node[stack vertex] (sm) at (.71,-.8) {};

\draw[undirected edge] (a1) .. controls ($(c2)+(54:.1)$) .. (a2);
\draw[undirected edge] (a2) .. controls ($(c3)+(-18:.1)$) .. (a3);
\draw[undirected edge] (a3) .. controls ($(c4)+(-90:.1)$) .. (a4);
\draw[undirected edge] (a4) .. controls ($(c5)+(-162:.1)$) .. (a5);
\draw[undirected edge] (a5) .. controls ($(c1)+(-234:.1)$) .. (a1);

\draw[edge] (Ac) -- (a1);
\draw[edge] (Ac) -- (a5);
\draw[edge] (Ac) -- (a4);
\draw[edge] (Ac) -- (Ec);
\draw[edge] (Ac) -- (Bc);
\draw[edge] (Bc) -- (a1);
\draw[edge] (Bc) -- (a2);
\draw[edge] (Bc) -- (Cc);
\draw[edge] (Cc) -- (a2);
\draw[edge] (Cc) -- (a3);
\draw[edge] (Dc) -- (a3);
\draw[edge] (Dc) -- (Ac);
\draw[edge] (Dc) -- (Bc);
\draw[edge] (Dc) -- (Cc);
\draw[edge] (Dc) -- (Ec);
\draw[edge] (Ec) -- (a3);
\draw[edge] (Ec) -- (a4);

\draw[edge] (Bc) -- (sf);
\draw[edge] (sf) -- (Ac);
\draw[edge] (sf) -- (a1);

\draw[edge] (Bc) -- (sg);
\draw[edge] (sg) -- (Ac);
\draw[edge] (sg) -- (Dc);

\draw[edge] (Ec) -- (sh);
\draw[edge] (sh) -- (Ac);
\draw[edge] (sh) -- (a4);

\draw[edge] (Ec) -- (si);
\draw[edge] (si) -- (Dc);
\draw[edge] (si) -- (Ac);

\draw[edge] (Cc) -- (sj);
\draw[edge] (sj) -- (Bc);
\draw[edge] (sj) -- (a2);

\draw[edge] (Cc) -- (sk);
\draw[edge] (sk) -- (Bc);
\draw[edge] (sk) -- (Dc);

\draw[edge] (Cc) -- (sl);
\draw[edge] (sl) -- (Dc);
\draw[edge] (sl) -- (a3);

\draw[edge] (Ec) -- (sm);
\draw[edge] (sm) -- (Dc);
\draw[edge] (sm) -- (a3);
        
\end{tikzpicture}

\caption{The induced five color forest and $\alpha$-orientation of a pentagon contact representation.}

\label{fig:induced_fcf_ao}

\end{figure}
%%%%%%%%%%%%%%%%%%%%%%%%%%%%%%%%%%%%%%%%%%%%%%%%%%%%%%%%%%%%%%%%%%%%%
%%%%%%%%%%%%%%%%%%%%%%%%%%%%%%%%%%%%%%%%%%%%%%%%%%%%%%%%%%%%%%%%%%%%%

\begin{proof} 
  Let~$\mathcal{S}$ be a regular pentagon contact representation
  of~${G=G^\ast(\mathcal{S})}$.  We color the corners of all pentagons
  of~$\mathcal{S}$ with the colors~$1,\dotsc,5$ in clockwise order,
  starting with color~$1$ at the corner opposite to the horizontal
  segment.  Let~$e$ be an inner edge of~$G$.  If~$e$ corresponds to
  the contact of a corner of a pentagon~$A$ and a side of a
  pentagon~$B$ in~$\mathcal{S}$, then we orient the edge~$e$ from the
  vertex corresponding to~$A$ to the vertex corresponding to~$B$ and
  color it in the color of the corner of~$A$ involved in the contact (see Fig.~\ref{fig:induced_fcf_ao} (left)).
  A contact of two pentagon corners can be interpreted in two ways as
  a corner-side contact with infinitesimal distance to the other
  corner.  We choose one of these interpretations and proceed as
  before.  Hence, the five color forest induced by a degenerate
  pentagon contact representation is not unique.
  Figure~\ref{fig:induced_fcf_degen} shows an example.

%%%%%%%%%%%%%%%%%%%%%%%%%%%%%%%%%%%%%%%%%%%%%%%%%%%%%%%%%%%%%%%%%%%%%
%%%%%%%%%%%%%%%%%%%%%%%%%%%%%%%%%%%%%%%%%%%%%%%%%%%%%%%%%%%%%%%%%%%%%
\begin{figure}

\centering

\tikzstyle{normal vertex}=[circle,fill,scale=0.5]
\tikzstyle{stack vertex}=[circle,draw,scale=0.5]
\tikzstyle{undirected edge}=[]
\tikzstyle{colored undirected edge}=[thick]
\tikzstyle{edge}=[-{Latex[scale length=1.8,scale width=1]}]
\tikzstyle{colored edge}=[thick,-{Latex[scale length=1.8,scale width=1.2]}]

\tikzstyle{pcr}=[very thick,color=gray!70]
\tikzstyle{pcr fill}=[fill=gray!30]

\begin{tikzpicture}[scale=3.4]

% outer pentagon
\def\ax{0.1154318675}
\def\ay{0}
\def\bx{0}
\def\by{-0.35526275827}
\def\cx{0.6977954475}
\def\cy{-0.862240826728}
\def\dx{1}
\def\dy{-0.642676367179}
\def\ex{0.79118179}
\def\ey{0}

% inner pentagons
\def\Aax{0.6977954475}
\def\Aay{-0.287413608909}
\def\Abx{0.45330682875}
\def\Aby{-0.465044988044}
\def\Acx{0.54669317125}
\def\Acy{-0.752458596954}
\def\Adx{0.84889772375}
\def\Ady{-0.752458596954}
\def\Aex{0.94228406625}
\def\Aey{-0.465044988044}

\def\Bax{0.3022045525}
\def\Bay{0}
\def\Bbx{0.05771593375}
\def\Bby{-0.177631379135}
\def\Bcx{0.15110227625}
\def\Bcy{-0.465044988044}
\def\Bdx{0.45330682875}
\def\Bdy{-0.465044988044}
\def\Bex{0.54669317125}
\def\Bey{-0.177631379135}

\def\Cax{0.6757499225}
\def\Cay{0}
\def\Cbx{0.52464764625}
\def\Cby{-0.109782229774}
\def\Ccx{0.58236358}
\def\Ccy{-0.287413608909}
\def\Cdx{0.769136265}
\def\Cdy{-0.287413608909}
\def\Cex{0.82685219875}
\def\Cey{-0.109782229774}

\draw[pcr] (\ax,\ay) -- (\bx,\by) -- (\cx,\cy) -- (\dx,\dy) -- (\ex,\ey) -- cycle;

\draw[pcr, pcr fill] (\Aax,\Aay) -- (\Abx,\Aby) -- (\Acx,\Acy) -- (\Adx,\Ady) -- (\Aex,\Aey) -- cycle;
\draw[pcr, pcr fill] (\Bax,\Bay) -- (\Bbx,\Bby) -- (\Bcx,\Bcy) -- (\Bdx,\Bdy) -- (\Bex,\Bey) -- cycle;
\draw[pcr, pcr fill] (\Cax,\Cay) -- (\Cbx,\Cby) -- (\Ccx,\Ccy) -- (\Cdx,\Cdy) -- (\Cex,\Cey) -- cycle;
            
\node[normal vertex] (Ac) at ($.2*(\Aax+\Abx+\Acx+\Adx+\Aex, \Aay+\Aby+\Acy+\Ady+\Aey)$) {};
\node[normal vertex] (Bc) at ($.2*(\Bax+\Bbx+\Bcx+\Bdx+\Bex, \Bay+\Bby+\Bcy+\Bdy+\Bey)$) {};
\node[normal vertex] (Cc) at ($.2*(\Cax+\Cbx+\Ccx+\Cdx+\Cex, \Cay+\Cby+\Ccy+\Cdy+\Cey)$) {};

\coordinate (c1) at (\ax,\ay);
\coordinate (c2) at (\bx,\by);
\coordinate (c3) at (\cx,\cy);
\coordinate (c4) at (\dx,\dy);
\coordinate (c5) at (\ex,\ey);

\node[normal vertex] (a5) at ($($(c1)!0.5!(c2)$)+(-198:0.1)$) {};
\node[normal vertex] (a4) at ($($(c2)!0.5!(c3)$)+(-126:0.1)$) {};
\node[normal vertex] (a3) at ($($(c3)!0.5!(c4)$)+(-54:0.1)$) {};
\node[normal vertex] (a2) at ($($(c4)!0.5!(c5)$)+(18:0.1)$) {};
\node[normal vertex] (a1) at ($($(c5)!0.5!(c1)$)+(90:0.1)$) {};

\draw[colored undirected edge] (a1) .. controls ($(c5)+(54:.1)$) .. (a2);
\draw[colored undirected edge] (a2) .. controls ($(c4)+(-18:.1)$) .. (a3);
\draw[colored undirected edge] (a3) .. controls ($(c3)+(-90:.1)$) .. (a4);
\draw[colored undirected edge] (a4) .. controls ($(c2)+(-162:.1)$) .. (a5);
\draw[colored undirected edge] (a5) .. controls ($(c1)+(-234:.1)$) .. (a1);

\draw[colored edge,color1] (Bc) -- (a1);
\draw[colored edge,color1] (Cc) -- (a1);
\draw[colored edge,color1] (Ac) -- (Cc);
\draw[colored edge,color2] (Cc) -- (a2);
\draw[colored edge,color2] (Ac) -- (a2);
\draw[colored edge,color2] (Bc) -- (Cc);
\draw[colored edge,color3] (Ac) -- (a3);
\draw[colored edge,color4] (Ac) -- (a4);
\draw[colored edge,color4] (Bc) -- (a4);
\draw[colored edge,color5] (Bc) -- (a5);

\draw[colored edge,color3] (Bc) -- (Ac);
        
\end{tikzpicture}
\qquad
\begin{tikzpicture}[scale=3.4]

% outer pentagon
\def\ax{0.1154318675}
\def\ay{0}
\def\bx{0}
\def\by{-0.35526275827}
\def\cx{0.6977954475}
\def\cy{-0.862240826728}
\def\dx{1}
\def\dy{-0.642676367179}
\def\ex{0.79118179}
\def\ey{0}

% inner pentagons
\def\Aax{0.6977954475}
\def\Aay{-0.287413608909}
\def\Abx{0.45330682875}
\def\Aby{-0.465044988044}
\def\Acx{0.54669317125}
\def\Acy{-0.752458596954}
\def\Adx{0.84889772375}
\def\Ady{-0.752458596954}
\def\Aex{0.94228406625}
\def\Aey{-0.465044988044}

\def\Bax{0.3022045525}
\def\Bay{0}
\def\Bbx{0.05771593375}
\def\Bby{-0.177631379135}
\def\Bcx{0.15110227625}
\def\Bcy{-0.465044988044}
\def\Bdx{0.45330682875}
\def\Bdy{-0.465044988044}
\def\Bex{0.54669317125}
\def\Bey{-0.177631379135}

\def\Cax{0.6757499225}
\def\Cay{0}
\def\Cbx{0.52464764625}
\def\Cby{-0.109782229774}
\def\Ccx{0.58236358}
\def\Ccy{-0.287413608909}
\def\Cdx{0.769136265}
\def\Cdy{-0.287413608909}
\def\Cex{0.82685219875}
\def\Cey{-0.109782229774}

\draw[pcr] (\ax,\ay) -- (\bx,\by) -- (\cx,\cy) -- (\dx,\dy) -- (\ex,\ey) -- cycle;

\draw[pcr, pcr fill] (\Aax,\Aay) -- (\Abx,\Aby) -- (\Acx,\Acy) -- (\Adx,\Ady) -- (\Aex,\Aey) -- cycle;
\draw[pcr, pcr fill] (\Bax,\Bay) -- (\Bbx,\Bby) -- (\Bcx,\Bcy) -- (\Bdx,\Bdy) -- (\Bex,\Bey) -- cycle;
\draw[pcr, pcr fill] (\Cax,\Cay) -- (\Cbx,\Cby) -- (\Ccx,\Ccy) -- (\Cdx,\Cdy) -- (\Cex,\Cey) -- cycle;
            
\node[normal vertex] (Ac) at ($.2*(\Aax+\Abx+\Acx+\Adx+\Aex, \Aay+\Aby+\Acy+\Ady+\Aey)$) {};
\node[normal vertex] (Bc) at ($.2*(\Bax+\Bbx+\Bcx+\Bdx+\Bex, \Bay+\Bby+\Bcy+\Bdy+\Bey)$) {};
\node[normal vertex] (Cc) at ($.2*(\Cax+\Cbx+\Ccx+\Cdx+\Cex, \Cay+\Cby+\Ccy+\Cdy+\Cey)$) {};

\coordinate (c1) at (\ax,\ay);
\coordinate (c2) at (\bx,\by);
\coordinate (c3) at (\cx,\cy);
\coordinate (c4) at (\dx,\dy);
\coordinate (c5) at (\ex,\ey);

\node[normal vertex] (a5) at ($($(c1)!0.5!(c2)$)+(-198:0.1)$) {};
\node[normal vertex] (a4) at ($($(c2)!0.5!(c3)$)+(-126:0.1)$) {};
\node[normal vertex] (a3) at ($($(c3)!0.5!(c4)$)+(-54:0.1)$) {};
\node[normal vertex] (a2) at ($($(c4)!0.5!(c5)$)+(18:0.1)$) {};
\node[normal vertex] (a1) at ($($(c5)!0.5!(c1)$)+(90:0.1)$) {};

\draw[colored undirected edge] (a1) .. controls ($(c5)+(54:.1)$) .. (a2);
\draw[colored undirected edge] (a2) .. controls ($(c4)+(-18:.1)$) .. (a3);
\draw[colored undirected edge] (a3) .. controls ($(c3)+(-90:.1)$) .. (a4);
\draw[colored undirected edge] (a4) .. controls ($(c2)+(-162:.1)$) .. (a5);
\draw[colored undirected edge] (a5) .. controls ($(c1)+(-234:.1)$) .. (a1);

\draw[colored edge,color1] (Bc) -- (a1);
\draw[colored edge,color1] (Cc) -- (a1);
\draw[colored edge,color1] (Ac) -- (Cc);
\draw[colored edge,color2] (Cc) -- (a2);
\draw[colored edge,color2] (Ac) -- (a2);
\draw[colored edge,color2] (Bc) -- (Cc);
\draw[colored edge,color3] (Ac) -- (a3);
\draw[colored edge,color4] (Ac) -- (a4);
\draw[colored edge,color4] (Bc) -- (a4);
\draw[colored edge,color5] (Bc) -- (a5);

\draw[colored edge,color5] (Ac) -- (Bc);
        
\end{tikzpicture}

\caption{The two induced five color forests of a pentagon contact representation with an exceptional touching.}

\label{fig:induced_fcf_degen}

\end{figure}
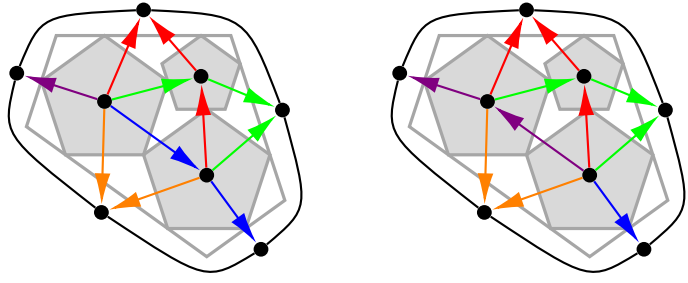
%%%%%%%%%%%%%%%%%%%%%%%%%%%%%%%%%%%%%%%%%%%%%%%%%%%%%%%%%%%%%%%%%%%%%
%%%%%%%%%%%%%%%%%%%%%%%%%%%%%%%%%%%%%%%%%%%%%%%%%%%%%%%%%%%%%%%%%%%%%

We claim that this coloring and orientation of~$G$ fulfills the
properties of a five color forest.  Property~(F\ref{item:outer_edges})
immediately follows from the construction.  Now consider
property~(F\ref{item:inner_vertex_blocks}).  It is clear that every
inner vertex has at most one outgoing edge of every color.  That the
incoming edges lie in the right interval, follows from the fact that a
corner-side contact between two homothetic regular pentagons is only
possible if a corner of the first pentagon touches the opposite side
of the second pentagon.

Finally we check property~(F\ref{item:no_three_empty}) for the
case~$i=1$.  The other cases are symmetric.  Let~$v$ be an inner
vertex of~$G$ and~$A$ the corresponding pentagon of~$\mathcal{S}$.
Since~$G$ is a triangulation, the pentagons corresponding to any two
consecutive neighbors of~$v$ have to touch. Therefore at least one of these pentagons,
we call it~$B$, has to intersect the area below the horizontal side
of~$A$, and that is only possible if the contact of~$A$ and~$B$
corresponds to an incoming edge of~$v$ of color~$1$ or an outgoing
edge of color~$3$ or~$4$. All other possibilities can be excluded in the following way:
If the contact of~$A$ and~$B$ corresponds to an incoming edge of color~$2$,
then the entire pentagon~$B$ lies on the left of the contact point of~$A$ and~$B$ and
therefore also left of the horizontal side of~$A$. If the contact corresponds to an
outgoing edge of color~$5$, each point of~$B$ lies above the contact point or
on the left of the contact point and therefore above the horizontal side of~$A$
or on its left. The other cases can be excluded with similar arguments.
\end{proof}

Schnyder \cite{schnyder1990embedding} introduced a similar structure
for inner triangulations of a triangle:

\begin{definition}
  A \emph{Schnyder wood} of an inner triangulation $T$ of the
  triangle~$b_1,b_2,b_3$ is an orientation and coloring of the inner
  edges of $T$ in the colors $1,2,3$ with the following properties:
\begin{enumerate}[(S1)]
\item All edges incident to $b_i$ are oriented towards $b_i$ and
  colored in the color $i$.
\item Each inner vertex has in clockwise order exactly one outgoing
  edge of color~$1$, one outgoing edge of color~$2$ and one outgoing
  edge of color~$3$, and in the interval between two outgoing edges
  there are only incoming edges in the third color.
\end{enumerate}
\end{definition}

Schnyder proved that every inner triangulation of a triangle admits a
Schnyder wood, and using this result, we will show that every inner
triangulation of a pentagon admits a five color forest.

\begin{theorem}[\cite{schnyder1990embedding}] \label{thm:schnyder_wood_existence}
Let~$T$ be an inner triangulation of a triangle.
Then there exists a Schnyder wood of~$T$.
\end{theorem}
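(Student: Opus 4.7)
The plan is to argue by induction on the number $n$ of vertices of~$T$. For the base case $n=3$, the graph $T$ consists only of the outer triangle $b_1 b_2 b_3$: there are no inner edges to orient or color, so (S1)--(S2) are vacuous.

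For the inductive step, I would exhibit a reducible inner vertex to pass to a smaller triangulation. Concretely, the plan is to find an inner vertex $v$ adjacent to $b_1$ whose remaining neighbors, in the cyclic order given by the embedding, form a chord-free path $u_1, u_2, \ldots, u_k$ with $u_1$ and $u_k$ also adjacent to $b_1$. Such a $v$ always exists in an inner triangulation with at least one inner vertex; a standard argument picks $v$ so that no ``separating triangle through $b_1$'' lies strictly inside the region bounded by $v$'s neighborhood, which forces the chord-freeness. After deleting $v$ and its incident edges, the resulting $(k+1)$-face is re-triangulated by the fan $b_1 u_2, b_1 u_3, \ldots, b_1 u_{k-1}$, yielding an inner triangulation $T'$ of $b_1 b_2 b_3$ with one fewer vertex.

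By induction, $T'$ admits a Schnyder wood $S'$. To lift $S'$ to $T$, I would delete the added fan chords (each oriented into $b_1$ with color~$1$ by (S1)), reinsert $v$, and orient the new edges at $v$ as follows: $v b_1$ outgoing in color~$1$; $v u_1$ outgoing in color~$2$; $v u_k$ outgoing in color~$3$; and $v u_i$ incoming in color~$1$ for $2 \le i \le k-1$. Property (S1) is preserved, and (S2) at $v$ is immediate by construction. The hard part will be verifying (S2) at each $u_i$, where the newly incoming color-$1$ edge from $v$ must fall into the correct cyclic block between the outgoing edges of colors~$2$ and~$3$. I would argue this using the planar embedding: the deleted chord $b_1 u_i$ in $T'$ and the reinserted edge $v u_i$ in $T$ occupy the same angular sector at $u_i$, and since both carry color~$1$ and are incoming at $u_i$, the block structure required by (S2) is preserved. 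The principal technical obstacles are thus the structural lemma guaranteeing a reducible $v$ and this last local verification of (S2) after the surgery.
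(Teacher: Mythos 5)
The paper does not prove this statement: Theorem~\ref{thm:schnyder_wood_existence} is imported verbatim from Schnyder \cite{schnyder1990embedding} and used as a black box in the proof of Theorem~\ref{thm:fcf_existence}. So the only meaningful comparison is with the standard literature proof, and your plan is essentially that proof: deleting $v$ and fanning from $b_1$ is the inverse of contracting the edge $vb_1$, and the hypothesis you need on $v$ is precisely that $vb_1$ is contractible, i.e.\ that $v$ and $b_1$ have no common neighbour other than $u_1$ and $u_k$. Be aware that this is the condition you must enforce (an edge $b_1u_i$ with $2\le i\le k-1$ would make the fan create a double edge), not chord-freeness of the path $u_1\cdots u_k$, which is irrelevant to the construction; the separating-triangle minimality argument you invoke delivers exactly the former. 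You should also pin down the convention determining which of the two common neighbours receives colour~$2$ and which colour~$3$, since (S2) prescribes the clockwise order $1,2,3$ and choosing the wrong side would break the verification below.

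The only real slip is in your closing verification. For $2\le i\le k-1$ the edge between $u_i$ and $v$ is, by your own construction, incoming at $v$, hence \emph{outgoing} at $u_i$; it simply replaces the deleted fan edge $u_i\to b_1$ as the unique outgoing colour-$1$ edge of $u_i$, in the same angular sector, so (S2) at these vertices is immediate, and there is no ``incoming colour-$1$ edge'' to place between the outgoing edges of colours $2$ and $3$. The checks that genuinely require the planarity argument are at $u_1$ and $u_k$: these keep their outgoing colour-$1$ edges to $b_1$ and acquire new \emph{incoming} edges of colours $2$ and $3$ from $v$, and one must verify that each lands in the correct block of (S2). This holds because each new edge is adjacent in the rotation at $u_1$ (resp.\ $u_k$) to the outgoing colour-$1$ edge $u_1b_1$ (resp.\ $u_kb_1$), on the side where the block of incoming colour-$2$ (resp.\ colour-$3$) edges abuts it. With these points repaired your argument is the standard, correct one.
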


\begin{theorem} \label{thm:fcf_existence}
Let~$G$ be an inner triangulation of the pentagon~$a_1,\dotsc,a_5$.
Then there exists a five color forest of~$G$.
\end{theorem}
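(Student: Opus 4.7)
The plan is to reduce to Schnyder's theorem by the same augmentation used in the proof of the pentagon contact representation existence. Specifically, add the chords $a_1 a_3$ and $a_1 a_4$ in the outer face of $G$ to obtain an inner triangulation $G'$ of the triangle $a_1 a_3 a_4$, and apply Theorem~\ref{thm:schnyder_wood_existence} to get a Schnyder wood on $G'$ with outer vertices $a_1$, $a_3$, $a_4$ taking the roles of $b_1$, $b_2$, $b_3$.

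From this Schnyder wood I would build a candidate orientation and coloring of the inner edges of $G$ as follows: relabel the Schnyder colors $1,2,3$ as the five color forest colors $1,3,4$; retain the Schnyder orientation and relabeled color on every inner edge of $G$ whose endpoints are in $\{a_1,a_3,a_4\}$ or are inner vertices of $G$ and not $a_2$ or $a_5$; and for each edge $u a_2$ (respectively $u a_5$) with $u$ an inner vertex of $G$, orient the edge as $u \to a_2$ (respectively $u \to a_5$) and color it $2$ (respectively $5$).

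Property (F\ref{item:outer_edges}) is immediate from the construction. At each inner vertex $v$ of $G$ that is not adjacent to $a_2$ or $a_5$, the clockwise Schnyder pattern at $v$ (outgoing $1'$, block of incoming $3'$, outgoing $2'$, block of incoming $1'$, outgoing $3'$, block of incoming $2'$) translates under the relabeling into exactly the arrangement required by (F\ref{item:inner_vertex_blocks}) with $B_2 = B_5 = \emptyset$; property (F\ref{item:no_three_empty}) then holds automatically, since outgoing edges of colors $3$ and $4$ are both present and these unlock the empty blocks $B_2$ and $B_5$.

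The hard part, which I expect to be the main obstacle, is verifying (F\ref{item:inner_vertex_blocks}) at inner vertices $v$ adjacent to $a_2$ or $a_5$: after the forced re-coloring, the new outgoing color $2$ edge at $v$ sits in the clockwise slot where $v a_2$ already appeared in the Schnyder rotation at $v$, and this slot need not be the position between $B_4$ and $B_5$ demanded by (F\ref{item:inner_vertex_blocks}); the analogous issue occurs at vertices adjacent to $a_5$. My plan for handling this is to treat colors $2$ and $5$ as new trees to be extracted from the Schnyder data: starting from each edge directly incident to $a_2$ (respectively $a_5$) and walking outward along the appropriate Schnyder monochromatic path, one re-colors (and possibly re-orients) a chain of edges of Schnyder color $1$ or $3$ to five-color-forest color $2$ (respectively $5$), so that the resulting color-$2$ tree is rooted at $a_2$ rather than embedded inside a Schnyder tree rooted at one of $a_1,a_3,a_4$, and at every affected inner vertex the new outgoing edge of color~$2$ or~$5$ lands in the required clockwise slot. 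Once this rebuilding is carried out, (F\ref{item:no_three_empty}) at the modified vertices follows from the presence of these new outgoing edges together with the inherited Schnyder incoming blocks.
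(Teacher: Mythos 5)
Your choice of reduction --- adding the chords $a_1a_3$ and $a_1a_4$ --- is exactly where the argument breaks, and although you correctly locate the breaking point you do not repair it. Because $a_2$ and $a_5$ become \emph{inner} vertices of $G'$, the Schnyder wood places an edge $va_2$ at an essentially arbitrary position in the rotation at a neighbour $v$: it may be the outgoing edge of any of the three Schnyder colours at $v$, or an incoming edge sitting anywhere inside one of the three incoming blocks. After the forced recolouring to an outgoing edge of colour $2$ it will in general not occupy the unique slot between $B_4$ and $B_5$ required by (F\ref{item:inner_vertex_blocks}); worse, if $va_2$ was the outgoing colour-$1$ edge of $v$, then $v$ loses its outgoing edge of colour $1$ and (F\ref{item:no_three_empty}) can fail for $B_3$. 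Your proposed repair --- re-colouring ``chains'' of edges along Schnyder monochromatic paths so that the colour-$2$ and colour-$5$ trees become rooted at $a_2$ and $a_5$ --- is not specified precisely enough to check: you do not say which edges are re-coloured, why the procedure is consistent at vertices reached by several such chains, or why the re-coloured edges land in the correct rotational slots. You also never treat the vertices adjacent to both $a_2$ and $a_5$, which after any such construction have no outgoing edge of colour $3$ or $4$, so that (F\ref{item:no_three_empty}) for $B_1$ needs a separate argument (the paper supplies one by exhibiting an incoming colour-$1$ edge from inside the $5$-gon $a_2a_3a_4a_5v$).

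The paper avoids all of this with a different reduction: it \emph{contracts} the outer edges $a_2a_3$ and $a_4a_5$ (after removing the interiors of the maximal triangles $c_5a_2a_3$ and $c_2a_4a_5$, which later receive their own auxiliary Schnyder woods in colours $2,3,5$ and $2,4,5$). After contraction, every edge from an inner vertex $v$ to $a_2$ becomes an edge to the outer vertex $b_3$ of the contracted triangulation and is therefore forced by (S1) to be the outgoing colour-$3$ edge of $v$; re-labelling it to colour $2$ moves it only across the empty block $B_5$, so (F\ref{item:inner_vertex_blocks}) survives automatically, and symmetrically for $a_5$. If you want to keep the chord-based reduction you must supply the missing re-colouring procedure in full detail; as it stands, the proposal is a plan with the central step unproved.
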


%%%%%%%%%%%%%%%%%%%%%%%%%%%%%%%%%%%%%%%%%%%%%%%%%%%%%%%%%%%%%%%%%%%%%
%%%%%%%%%%%%%%%%%%%%%%%%%%%%%%%%%%%%%%%%%%%%%%%%%%%%%%%%%%%%%%%%%%%%%
\begin{figure}

\centering

\tikzstyle{vertex}=[circle,fill,scale=0.3]
\tikzstyle{out edge}=[-latex',thick]
\tikzstyle{in edge}=[latex'-]
\tikzstyle{edge}=[]
\tikzstyle{contract}=[color=red]

\begin{tikzpicture}[scale=1.75]

\node[vertex] (a1) at (90:1.2) [label={above}: \footnotesize $a_1$] {};
\node[vertex] (a2) at (18:1.2) [label={right}: \footnotesize $a_2$] {};
\node[vertex] (a3) at (306:1.2) [label={below right}: \footnotesize $a_3$] {};
\node[vertex] (a4) at (234:1.2) [label={below left}: \footnotesize $a_4$] {};
\node[vertex] (a5) at (162:1.2) [label={left}: \footnotesize $a_5$] {};

\node[vertex] (c5) at ($($(a2)!.5!(a3)$)+(162:.8)$) [label={below,xshift=-.1cm}: \footnotesize $c_5$] {};
\node[vertex] (c2) at ($($(a4)!.5!(a5)$)+(18:.8)$) [label={below,xshift=.1cm}: \footnotesize $c_2$] {};

\draw[edge] (a1) -- (a2) -- (a3) -- (a4) -- (a5) -- (a1);
\draw[edge] (a2) -- (c5) -- (a3);
\draw[edge] (a4) -- (c2) -- (a5);

\draw[rounded corners=5pt,contract] ($(a4)+(-117:.15)$) -- ($(a4)+(-27:.15)$) -- ($(a5)+(63:.15)$) -- ($(a5)+(153:.15)$) -- cycle;
\draw[rounded corners=5pt,contract] ($(a2)+(117:.15)$) -- ($(a2)+(27:.15)$) -- ($(a3)+(-63:.15)$) -- ($(a3)+(-153:.15)$) -- cycle;
\draw[contract] (a1) circle (.09);

\node at (a1) [label={right,contract}: \footnotesize $b_1$] {};
\node at ($(a2)!.5!(a3)$) [label={right,contract}: \footnotesize $b_3$] {};
\node at ($(a4)!.5!(a5)$) [label={left,contract}: \footnotesize $b_4$] {};

\end{tikzpicture}
%\quad
\begin{tikzpicture}[scale=1.75]
\node[vertex] (b1) at (90:1.2) [label={above}: \footnotesize $b_1$] {};
\node[vertex] (b3) at (-30:1.2) [label={below right}: \footnotesize $b_3$] {};
\node[vertex] (b4) at (-150:1.2) [label={below left}: \footnotesize $b_4$] {};

\node[vertex] (c5) at ($(b3)+(150:.7)$) [label={left}: \footnotesize $c_5$] {};
\node[vertex] (c2) at ($(b4)+(30:.7)$) [label={right}: \footnotesize $c_2$] {};

\draw[edge] (b1) -- (b3) -- (b4) -- (b1);

\draw[color1,in edge] (b1) -- ($(b1)+(260:.4)$);
\draw[color1,in edge] (b1) -- ($(b1)+(280:.4)$);
\draw[color3,in edge] (b3) -- ($(b3)+(135:.4)$);
\draw[color3,in edge] (b3) -- (c5);
\draw[color3,in edge] (b3) -- ($(b3)+(165:.4)$);
\draw[color4,in edge] (b4) -- ($(b4)+(15:.4)$);
\draw[color4,in edge] (b4) -- (c2);
\draw[color4,in edge] (b4) -- ($(b4)+(45:.4)$);

\node[vertex] (v) at (90:.3) {};
\draw[color1,out edge] (v) -- +(90:.3);
\draw[color3,out edge] (v) -- +(-30:.3);
\draw[color4,out edge] (v) -- +(-150:.3);
\draw[color1,in edge] (v) -- +(-70:.2);
\draw[color1,in edge] (v) -- +(-110:.2);
\draw[color3,in edge] (v) -- +(-190:.2);
\draw[color3,in edge] (v) -- +(-230:.2);
\draw[color4,in edge] (v) -- +(50:.2);
\draw[color4,in edge] (v) -- +(10:.2);
\end{tikzpicture}
%\quad
\begin{tikzpicture}[scale=1.75]

\node[vertex] (a1) at (90:1.2) [label={above}: \footnotesize $a_1$] {};
\node[vertex] (a2) at (18:1.2) [label={right}: \footnotesize $a_2$] {};
\node[vertex] (a3) at (306:1.2) [label={below right}: \footnotesize $a_3$] {};
\node[vertex] (a4) at (234:1.2) [label={below left}: \footnotesize $a_4$] {};
\node[vertex] (a5) at (162:1.2) [label={left}: \footnotesize $a_5$] {};

\node[vertex] (c5) at ($($(a2)!.5!(a3)$)+(162:.8)$) [label={below,xshift=-.1cm}: \footnotesize $c_5$] {};
\node[vertex] (c2) at ($($(a4)!.5!(a5)$)+(18:.8)$) [label={below,xshift=.1cm}: \footnotesize $c_2$] {};

\draw[edge] (a1) -- (a2) -- (a3) -- (a4) -- (a5) -- (a1);

\draw[color2,in edge] (a2) -- (c5);
\draw[color3,in edge] (a3) -- (c5);
\draw[color4,in edge] (a4) -- (c2);
\draw[color5,in edge] (a5) -- (c2);

\draw[color1,in edge] (a1) -- +(-110:.4);
\draw[color1,in edge] (a1) -- +(-70:.4);

\draw[color2,in edge] (a2) -- +(170:.4);
\draw[color5,in edge] (a5) -- +(10:.4);

\draw[color3,in edge] (a3) -- +(150:.4);
\draw[color4,in edge] (a4) -- +(30:.4);

\draw[color2,in edge] (a2) -- +(230:.4);
\draw[color5,in edge] (a5) -- +(310:.4);

\draw[color3,in edge] (a3) -- +(100:.4);
\draw[color4,in edge] (a4) -- +(80:.4);

\draw[color5,in edge] (c5) -- +(-5:.3);
\draw[color5,in edge] (c5) -- +(-25:.3);

\draw[color2,in edge] (c2) -- +(-175:.3);
\draw[color2,in edge] (c2) -- +(-155:.3);

\node[vertex] (v) at (90:.4) {};
\draw[color1,out edge] (v) -- +(90:.3);
\draw[color3,out edge] (v) -- +(-30:.3);
\draw[color4,out edge] (v) -- +(-150:.3);
\draw[color1,in edge] (v) -- +(-70:.2);
\draw[color1,in edge] (v) -- +(-110:.2);
\draw[color3,in edge] (v) -- +(-190:.2);
\draw[color3,in edge] (v) -- +(-230:.2);
\draw[color4,in edge] (v) -- +(50:.2);
\draw[color4,in edge] (v) -- +(10:.2);

\node[vertex] (vr) at ($(c5)+(-18:.5)$) {};
\draw[color2,out edge] (vr) -- +(90:.25);
\draw[color3,out edge] (vr) -- +(-30:.25);
\draw[color5,out edge] (vr) -- +(-150:.25);
\draw[color2,in edge] (vr) -- +(-70:.18);
\draw[color2,in edge] (vr) -- +(-110:.18);
\draw[color3,in edge] (vr) -- +(-190:.18);
\draw[color3,in edge] (vr) -- +(-230:.18);
\draw[color5,in edge] (vr) -- +(50:.18);
\draw[color5,in edge] (vr) -- +(10:.18);

\node[vertex] (vl) at ($(c2)+(-162:.5)$) {};
\draw[color5,out edge] (vl) -- +(90:.25);
\draw[color2,out edge] (vl) -- +(-30:.25);
\draw[color4,out edge] (vl) -- +(-150:.25);
\draw[color5,in edge] (vl) -- +(-70:.18);
\draw[color5,in edge] (vl) -- +(-110:.18);
\draw[color2,in edge] (vl) -- +(-190:.18);
\draw[color2,in edge] (vl) -- +(-230:.18);
\draw[color4,in edge] (vl) -- +(50:.18);
\draw[color4,in edge] (vl) -- +(10:.18);

\end{tikzpicture}

\caption{A construction of a five color forest of a given graph using Schnyder woods.}

\label{fig:fcf_from_sw}

\end{figure}
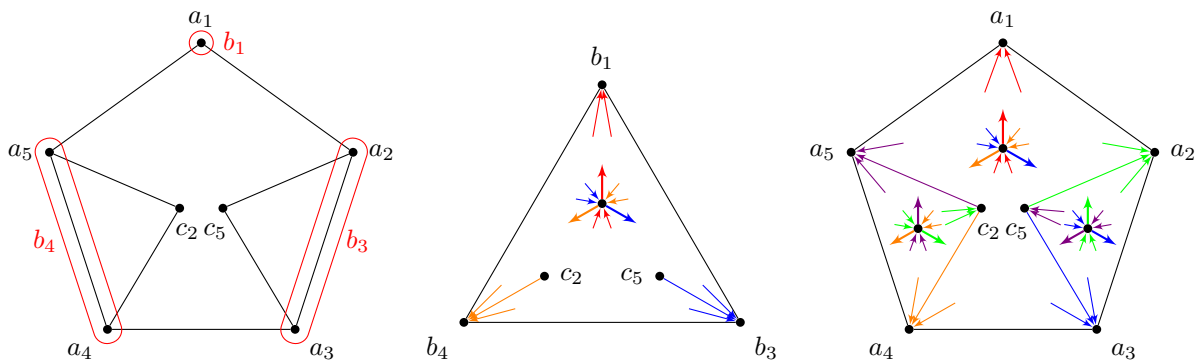
%%%%%%%%%%%%%%%%%%%%%%%%%%%%%%%%%%%%%%%%%%%%%%%%%%%%%%%%%%%%%%%%%%%%%
%%%%%%%%%%%%%%%%%%%%%%%%%%%%%%%%%%%%%%%%%%%%%%%%%%%%%%%%%%%%%%%%%%%%%

\begin{proof}
  The following construction is illustrated by Fig.~\ref{fig:fcf_from_sw}.
  Contract the edge~$a_2 a_3$ to a vertex~$b_3$ and the edge~$a_4 a_5$
  to a vertex~$b_4$.  In these contraction steps the maximal
  triangles~$c_5a_2a_3$ and~$c_2a_4a_5$ incident to~$a_2,a_3$ and
  to~$a_4,a_5$ are contracted to a single edge, in particular vertices
  inside these triangles are removed.  Further let~$b_1:=a_1$.
  This results in an inner triangulation~$T$ of the triangle~$b_1 ,
  b_3 , b_4$.  Due to Theorem~\ref{thm:schnyder_wood_existence} there exists
  a Schnyder Wood~$S$ of~$T$ (we use the colors~$1,3,4$ instead
  of~$1,2,3$).

  Take the colors and orientations of all inner edges not
  inside of~$c_5a_2a_3$ or~$c_2a_4a_5$ and not incident to~$a_2$ or~$a_5$
  from~$T$ to~$G$.  Now color all inner edges incident to~$a_2$
  and not inside~$c_5a_2a_3$ in color~$2$ and orient them
  towards~$a_2$, and color all inner edges incident to~$a_5$ and not
  inside~$c_2a_4a_5$ in color~$5$ and orient them towards~$a_5$.
  For the edges inside~$c_5a_2a_3$ construct another Schnyder
  wood where~$c_5$ has incoming edges in color~$5$,~$a_3$ has incoming
  edges in color~$3$, and~$a_2$ has incoming edges in color~$2$.
  Analogously, construct a Schnyder wood on the edges
  inside~$c_2a_4a_5$ in the colors~$2,4,5$.

  It can easily be verified that this coloring and orientation of the
  inner edges of~$G$ fulfills the properties~(F\ref{item:outer_edges}) and~(F\ref{item:inner_vertex_blocks}) of a five color forest.
  To see that property~(F\ref{item:no_three_empty}) is also fulfilled
  we distinguish several cases. If a vertex is not inside~$c_5a_2a_3$
  or~$c_2a_4a_5$ and not adjacent to~$a_2$ or~$a_5$, it has outgoing
  edges in colors~$1,3,4$. If a vertex is not inside~$c_5a_2a_3$
  or~$c_2a_4a_5$ and either adjacent to~$a_2$ or to~$a_5$, it has outgoing
  edges in colors~$1,2,4$ or~$1,3,5$, respectively. If a vertex is inside~$c_5a_2a_3$
  or~$c_2a_4a_5$, it has outgoing edges in colors~$2,3,5$ or~$2,4,5$, respectively.
  Therefore in all of these cases property~(F\ref{item:no_three_empty})
  is fulfilled. The only remaining case is that a vertex~$v$ is adjacent
  to~$a_2$ and~$a_5$. If~$v=c_2=c_5$, then~$v$ has outgoing edges
  in all five colors and fulfills property~(F\ref{item:no_three_empty}).
  Otherwise~$c_2$ and~$c_5$ lie inside the $5$-gon~$a_2 a_3 a_4 a_5 v$.
  Thus this $5$-gon is not empty and~$v$ has a neighbor~$w$ inside this~$5$-gon
  (since~$G$ has no chords,~$a_2 a_5 v$ cannot be a face of~$G$).
  Note that the edge between~$w$ and~$v$ is oriented from~$w$ to~$v$
  and has color~$1$. Therefore, ~$v$ has outgoing edges in colors~$1,2,5$ and
  at least one incoming edge in color~$1$. Hence,~$v$ fulfills property~(F\ref{item:no_three_empty}).
\end{proof}

%%%%%%%%%%%%%%%%%%%%%%%%%%%%%%%%%%%%%%%%%%%%%%%%%%%%%%%%%%%%%%%%%%%%%
\subsection{\texorpdfstring{$\alpha$}{alpha}-orientations}
Our goal is to connect the setting of five color forests with the well
studied orientations of planar graphs
with prescribed outdegrees.

\begin{definition}
  Let $H$ be an undirected graph and $\alpha:V(H)\rightarrow
  \mathbb{N}$.  Then an orientation $H'$ of $H$ is called an
  \emph{$\alpha$-orientation} if $\operatorname{outdeg}(v)=\alpha(v)$
  for all vertices $v\in V(H')$.
\end{definition}

In a five color forest every inner vertex has outdegree at most $5$.
The following lemma allows us to add vertices and edges so that the outdegree of
every inner vertex becomes exactly $5$.
The statement of the lemma corresponds to the geometric fact 
that in a regular pentagon contact representation of a triangulation $G$
the area between three pentagons corresponding to a face of~$G$
is a quadrilateral with exactly one concave corner.

\begin{lemma} \label{lem:missing_edge} 
  Let $G$ be endowed with a five color forest and let $f$ be a face of
  $G$ that is incident to at most one outer vertex.  Then in exactly one of the
  three inner angles of $f$ an outgoing edge is missing in the
  cyclic order of the respective vertex.
\end{lemma}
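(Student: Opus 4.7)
The plan is to analyze each face angle of $f$ combinatorially using the cyclic edge structure (F2) at the respective vertex, and to show that the total number of empty outgoing-slots strictly inside the face angles at the inner vertices of $f$ is exactly $1$. Concretely, at each inner vertex $v_k$ of $f$, I would label the two face-edges $e_{k-1}, e_k$ (the edges to the previous and next vertex of $f$, respectively) in clockwise cyclic order at $v_k$. Each face-edge sits in a definite slot of the cyclic pattern of (F2): an incoming edge of color $c$ lies in block $B_c$, an outgoing edge of color $c$ lies in the slot $o_c$, and these ten slots are arranged clockwise as $o_1,B_4,o_2,B_5,o_3,B_1,o_4,B_2,o_5,B_3$. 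A short position calculation in this pattern will show that the number $m(v_k)$ of empty $o$-slots strictly inside the face angle at $v_k$ satisfies
\[
  m(v_k) \;\equiv\; (c_k - c_{k-1}) + 2\,u(v_k) \pmod{5},
\]
where $c_{k-1}, c_k$ are the colors of $e_{k-1}, e_k$ and $u(v_k) \in \{0,1,2\}$ is the number of these two face-edges that are outgoing from $v_k$.

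Then I would sum this formula over the inner vertices of $f$. The color differences telescope cyclically around the boundary of $f$; at an outer vertex (if one is present) property (F1) forces both incident face-edges to carry the same color, so the telescoping sum closes to $0 \pmod 5$ even when restricted to the inner vertices. For the orientation term, each of the three edges of $f$ is outgoing from exactly one of its two endpoints, and by (F1) outer vertices have no outgoing incident edges, so summing over the inner vertices yields $\sum u(v_k) = 3$. Combining these gives
\[
  \sum_{v_k \text{ inner}} m(v_k) \;\equiv\; 6 \;\equiv\; 1 \pmod{5}.
\]

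Finally I would invoke property (F3) to bound $m(v_k) \le 1$ at every inner vertex: if two outgoing slots $o_j$ and $o_{j+1}$ consecutive in the outgoing cyclic order at $v_k$ were both empty and both strictly inside the face angle, then the block $B_{j-2}$ between them would also lie in the angle and be empty, contradicting (F3) for the color $i = j-2$ (which would require $B_{j-2}$ nonempty, or $o_{j+1}$ or $o_j$ to exist). Since $f$ has at most three inner vertices, $\sum m(v_k) \le 3$, so the congruence above pins this sum down to exactly $1$, which is the assertion of the lemma. The main obstacle is the position bookkeeping behind the local formula in the first step, where the four orientation configurations at a vertex (in/in, in/out, out/in, out/out) have to be tracked consistently to verify the shift $2u(v_k)$; once this is in hand, the telescoping identity together with the elementary edge count $\sum u(v_k) = 3$ make the global argument very clean.
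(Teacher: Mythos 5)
Your argument is correct, and it is a genuinely different proof from the one in the paper. The paper reduces to a single directed path of length two on the facial cycle (WLOG first edge of color~$1$), uses (F\ref{item:no_three_empty}) to restrict the second edge to colors~$2$ or~$3$, and then exhaustively checks the six resulting configurations for the third edge in a figure. You instead prove a local congruence and let arithmetic do the enumeration. I checked the position calculation you defer: with the clockwise slot pattern $o_1,B_4,o_2,B_5,o_3,B_1,o_4,B_2,o_5,B_3$ (placing $o_c$ at even position $2c-2$ and $B_c$ at odd position $2c+3$ modulo $10$), the four in/out configurations at a vertex indeed all yield $m(v_k)\equiv (c_k-c_{k-1})+2u(v_k)\pmod 5$, your labeling convention corresponds to a consistent (counterclockwise) traversal of the face boundary so the color terms telescope, the outer vertex (if present) contributes $0$ to both sums by (F\ref{item:outer_edges}), and the bound $m(v_k)\le 1$ via (F\ref{item:no_three_empty}) applied to the block $B_{j-2}$ sitting between two consecutive empty slots $o_j,o_{j+1}$ is exactly right. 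So $\sum m \equiv 6\equiv 1 \pmod 5$ together with $0\le\sum m\le 3$ pins the total to $1$. What your approach buys is a conceptual reason for the answer ($2\cdot 3\bmod 5$) and a proof that would adapt to other polygon/color counts without redoing a case tree; what the paper's case analysis buys is that its figure simultaneously records, for each configuration, at which corner the missing edge sits and which color it must receive, which is then reused for Observation~\ref{obs:extend_coloring} and for the geometric picture of the facial quadrilaterals. Your formula also determines the color of the missing slot, so nothing essential is lost.
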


\begin{proof}
  Since the facial cycle of~$f$ has length three, it contains an
  oriented path of length two.  For symmetry reasons we can assume
  that this path is oriented clockwise and the first edge of this path
  has color~$1$.  Then because of
  property~(F\ref{item:no_three_empty}) the second edge of this path
  can only have the colors~$2$ and~$3$.  Figure~\ref{fig:face_cases} shows
  all possible cases for the orientation and coloring of the third
  edge of the cycle and verifies the statement for all these cases.
\end{proof}

%%%%%%%%%%%%%%%%%%%%%%%%%%%%%%%%%%%%%%%%%%%%%%%%%%%%%%%%%%%%%%%%%%%%%
%%%%%%%%%%%%%%%%%%%%%%%%%%%%%%%%%%%%%%%%%%%%%%%%%%%%%%%%%%%%%%%%%%%%%
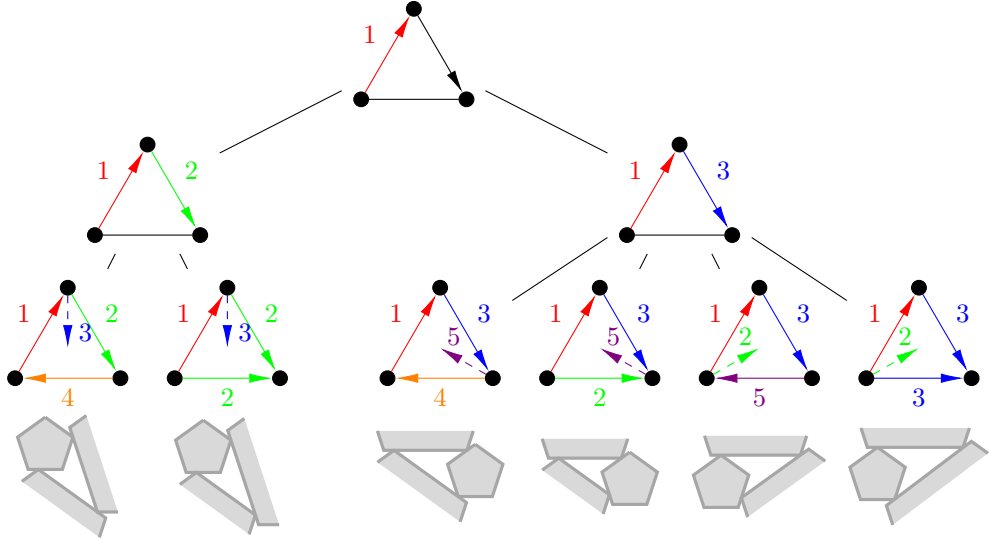
\begin{figure}
\centering
\begin{tikzpicture}[
  level 1/.style={sibling distance=7cm,level distance=1.8cm},
  level 2/.style={sibling distance=2.1cm,level distance=2.1cm},
  level 3/.style={level distance=1.7cm,edge from parent/.style={}}]
\tikzstyle{corner}=[circle,draw,fill,scale=0.5];
\tikzstyle{edge}=[];
\tikzstyle{arrow}=[-{Latex[scale length=1.8,scale width=1]},edge];
\tikzstyle{extra edge color}=[color=red];
\tikzstyle{extra edge}=[arrow,dashed];

\tikzstyle{triangle scale}=[scale=0.8];
\tikzstyle{visualization scale}=[scale=0.45];

\tikzstyle{pentagon}=[very thick,draw=gray!70,fill=gray!30];

\node {
  \begin{tikzpicture}[triangle scale]
    \node[corner] (l) at (210:1) {};
    \node[corner] (r) at (330:1) {};
    \node[corner] (o) at (90:1) {};
    \draw[arrow,color1] (l) -- (o) node[pos=0.5,above left] {\footnotesize $1$};
    \draw[arrow] (o) -- (r);
    \draw[edge] (l) -- (r);
  \end{tikzpicture}
 }
 child {
   node {
     \begin{tikzpicture}[triangle scale]
       \node[corner] (l) at (210:1) {};
       \node[corner] (r) at (330:1) {};
       \node[corner] (o) at (90:1) {};
       \draw[arrow,color1] (l) -- (o) node[pos=0.5,above left] {\footnotesize $1$};
       \draw[arrow,color2] (o) -- (r) node[pos=0.5,above right] {\footnotesize $2$};
       \draw[edge] (l) -- (r);
     \end{tikzpicture}
   }
   child {
     node {
       \begin{tikzpicture}[triangle scale]
         \node[corner] (l) at (210:1) {};
         \node[corner] (r) at (330:1) {};
         \node[corner] (o) at (90:1) {};
         \draw[arrow,color1] (l) -- (o) node[pos=0.5,above left] {\footnotesize $1$};
         \draw[arrow,color2] (o) -- (r) node[pos=0.5,above right] {\footnotesize $2$};
         \draw[arrow,color4] (r) -- (l) node[pos=0.5,below] {\footnotesize $4$};
         \draw[extra edge,color3] (o) -- (0,0) node[pos=0.7,right] {\footnotesize $3$};
       \end{tikzpicture}
     }
     child {
       node {
         \begin{tikzpicture}[visualization scale]
           \coordinate (A1) at (0,0);
           \coordinate (A2) at (324:1);
           \coordinate (A3) at ($(A2)+(252:1)$);
           \coordinate (A4) at ($(A3)+(180:1)$);
           \coordinate (A5) at ($(A4)+(108:1)$);
           \draw[pentagon] (A1) -- (A2) -- (A3) -- (A4) -- (A5) -- (A1);
           \coordinate (B1) at ($(A4)!0.3!(A3)$);
           \coordinate (Bh) at ($(B1)+(324:4)$);
           \coordinate (Ch) at ($(A2)+(288:4)$);
           \coordinate (C4) at (intersection of A2--Ch and B1--Bh);
           \coordinate (B2) at ($(C4)+(324:0.3)$);
           \coordinate (B3) at ($(B2)+(252:0.6)$);
           \coordinate (B5) at ($(B1)+(216:0.6)$);
           \coordinate (C5) at ($(A2)+(108:0.3)$);
           \coordinate (C3) at ($(C4)+(0:0.6)$);
           \coordinate (C1) at ($(C5)+(36:0.6)$);
           \draw[pentagon] (B5) -- (B1) -- (B2) -- (B3);
           \draw[pentagon] (C3) -- (C4) -- (C5) -- (C1);
         \end{tikzpicture}
       }
     }
   }
   child {
     node {
       \begin{tikzpicture}[triangle scale]
         \node[corner] (l) at (210:1) {};
         \node[corner] (r) at (330:1) {};
         \node[corner] (o) at (90:1) {};
         \draw[arrow,color1] (l) -- (o) node[pos=0.5,above left] {\footnotesize $1$};
         \draw[arrow,color2] (o) -- (r) node[pos=0.5,above right] {\footnotesize $2$};
         \draw[arrow,color2] (l) -- (r) node[pos=0.5,below] {\footnotesize $2$};
         \draw[extra edge,color3] (o) -- (0,0) node[pos=0.7,right] {\footnotesize $3$};
       \end{tikzpicture}
     }
     child {
       node {
         \begin{tikzpicture}[visualization scale]
           \coordinate (A1) at (0,0);
           \coordinate (A2) at (324:1);
           \coordinate (A3) at ($(A2)+(252:1)$);
           \coordinate (A4) at ($(A3)+(180:1)$);
           \coordinate (A5) at ($(A4)+(108:1)$);
           \draw[pentagon] (A1) -- (A2) -- (A3) -- (A4) -- (A5) -- (A1);
           \coordinate (B1) at ($(A4)!0.3!(A3)$);
           \coordinate (Bh) at ($(B1)+(324:4)$);
           \coordinate (Ch) at ($(A2)+(288:4)$);
           \coordinate (B2) at (intersection of A2--Ch and B1--Bh);
           \coordinate (C4) at ($(B2)+(288:0.3)$);
           \coordinate (B3) at ($(B2)+(252:0.6)$);
           \coordinate (B5) at ($(B1)+(216:0.6)$);
           \coordinate (C5) at ($(A2)+(108:0.3)$);
           \coordinate (C3) at ($(C4)+(0:0.6)$);
           \coordinate (C1) at ($(C5)+(36:0.6)$);
           \draw[pentagon] (B5) -- (B1) -- (B2) -- (B3);
           \draw[pentagon] (C3) -- (C4) -- (C5) -- (C1);
         \end{tikzpicture}
       }
     }
   }
 }
 child {
   node {
     \begin{tikzpicture}[triangle scale]
       \node[corner] (l) at (210:1) {};
       \node[corner] (r) at (330:1) {};
       \node[corner] (o) at (90:1) {};
       \draw[arrow,color1] (l) -- (o) node[pos=0.5,above left] {\footnotesize $1$};
       \draw[arrow,color3] (o) -- (r) node[pos=0.5,above right] {\footnotesize $3$};
       \draw[edge] (l) -- (r);
     \end{tikzpicture}
   }
   child {
     node {
       \begin{tikzpicture}[triangle scale]
         \node[corner] (l) at (210:1) {};
         \node[corner] (r) at (330:1) {};
         \node[corner] (o) at (90:1) {};
         \draw[arrow,color1] (l) -- (o) node[pos=0.5,above left] {\footnotesize $1$};
         \draw[arrow,color3] (o) -- (r) node[pos=0.5,above right] {\footnotesize $3$};
         \draw[arrow,color4] (r) -- (l) node[pos=0.5,below] {\footnotesize $4$};
         \draw[extra edge,color5] (r) -- (0,0) node[pos=0.7,above] {\footnotesize $5$};
       \end{tikzpicture}
     }
     child {
       node {
         \begin{tikzpicture}[visualization scale]
           \coordinate (A1) at (0,0);
           \coordinate (A2) at (324:1);
           \coordinate (A3) at ($(A2)+(252:1)$);
           \coordinate (A4) at ($(A3)+(180:1)$);
           \coordinate (A5) at ($(A4)+(108:1)$);
           \draw[pentagon] (A1) -- (A2) -- (A3) -- (A4) -- (A5) -- (A1);
           \coordinate (B3) at ($(A1)!0.3!(A5)$);
           \coordinate (Bh) at ($(B3)+(180:4)$);
           \coordinate (Ch) at ($(A4)+(144:4)$);
           \coordinate (C1) at (intersection of A4--Ch and B3--Bh);
           \coordinate (B4) at ($(C1)+(180:0.3)$);
           \coordinate (B2) at ($(B3)+(72:0.6)$);
           \coordinate (B5) at ($(B4)+(108:0.6)$);
           \coordinate (C2) at ($(A4)+(324:0.3)$);
           \coordinate (C3) at ($(C2)+(252:0.6)$);
           \coordinate (C5) at ($(C1)+(216:0.6)$);
           \draw[pentagon] (B2) -- (B3) -- (B4) -- (B5);
           \draw[pentagon] (C5) -- (C1) -- (C2) -- (C3);
         \end{tikzpicture}
       }
     }
   }
   child {
     node {
       \begin{tikzpicture}[triangle scale]
         \node[corner] (l) at (210:1) {};
         \node[corner] (r) at (330:1) {};
         \node[corner] (o) at (90:1) {};
         \draw[arrow,color1] (l) -- (o) node[pos=0.5,above left] {\footnotesize $1$};
         \draw[arrow,color3] (o) -- (r) node[pos=0.5,above right] {\footnotesize $3$};
         \draw[arrow,color2] (l) -- (r) node[pos=0.5,below] {\footnotesize $2$};
         \draw[extra edge,color5] (r) -- (0,0) node[pos=0.7,above] {\footnotesize $5$};
       \end{tikzpicture}
     }
     child {
       node {
         \begin{tikzpicture}[visualization scale]
           \coordinate (A1) at (0,0);
           \coordinate (A2) at (324:1);
           \coordinate (A3) at ($(A2)+(252:1)$);
           \coordinate (A4) at ($(A3)+(180:1)$);
           \coordinate (A5) at ($(A4)+(108:1)$);
           \draw[pentagon] (A1) -- (A2) -- (A3) -- (A4) -- (A5) -- (A1);
           \coordinate (B3) at ($(A1)!0.3!(A5)$);
           \coordinate (Bh) at ($(B3)+(180:4)$);
           \coordinate (C2) at ($(A4)!0.3!(A5)$);
           \coordinate (Ch) at ($(C2)+(144:4)$);
           \coordinate (C1) at (intersection of C2--Ch and B3--Bh);
           \coordinate (B4) at ($(C1)+(180:0.3)$);
           \coordinate (B2) at ($(B3)+(72:0.6)$);
           \coordinate (B5) at ($(B4)+(108:0.6)$);
           \coordinate (C3) at ($(C2)+(252:0.6)$);
           \coordinate (C5) at ($(C1)+(216:0.6)$);
           \draw[pentagon] (B2) -- (B3) -- (B4) -- (B5);
           \draw[pentagon] (C5) -- (C1) -- (C2) -- (C3);
         \end{tikzpicture}
       }
     }
   }
   child {
     node {
       \begin{tikzpicture}[triangle scale]
         \node[corner] (l) at (210:1) {};
         \node[corner] (r) at (330:1) {};
         \node[corner] (o) at (90:1) {};
         \draw[arrow,color1] (l) -- (o) node[pos=0.5,above left] {\footnotesize $1$};
         \draw[arrow,color3] (o) -- (r) node[pos=0.5,above right] {\footnotesize $3$};
         \draw[arrow,color5] (r) -- (l) node[pos=0.5,below] {\footnotesize $5$};
         \draw[extra edge,color2] (l) -- (0,0) node[pos=0.7,above] {\footnotesize $2$};
       \end{tikzpicture}
     }
     child {
       node {
         \begin{tikzpicture}[visualization scale]
           \coordinate (A1) at (0,0);
           \coordinate (A2) at (324:1);
           \coordinate (A3) at ($(A2)+(252:1)$);
           \coordinate (A4) at ($(A3)+(180:1)$);
           \coordinate (A5) at ($(A4)+(108:1)$);
           \draw[pentagon] (A1) -- (A2) -- (A3) -- (A4) -- (A5) -- (A1);
           \coordinate (Bh) at ($(A1)+(0:4)$);
           \coordinate (C5) at ($(A3)!0.3!(A2)$);
           \coordinate (Ch) at ($(C5)+(36:4)$);
           \coordinate (B3) at (intersection of A1--Bh and C5--Ch);
           \coordinate (B4) at ($(A1)+(180:0.3)$);
           \coordinate (B2) at ($(B3)+(72:0.6)$);
           \coordinate (B5) at ($(B4)+(108:0.6)$);
           \coordinate (C1) at ($(B3)+(36:0.3)$);
           \coordinate (C2) at ($(C1)+(324:0.6)$);
           \coordinate (C4) at ($(C5)+(288:0.6)$);
           \draw[pentagon] (B2) -- (B3) -- (B4) -- (B5);
           \draw[pentagon] (C4) -- (C5) -- (C1) -- (C2);
         \end{tikzpicture}
       }
     }
   }
   child {
     node {
       \begin{tikzpicture}[triangle scale]
         \node[corner] (l) at (210:1) {};
         \node[corner] (r) at (330:1) {};
         \node[corner] (o) at (90:1) {};
         \draw[arrow,color1] (l) -- (o) node[pos=0.5,above left] {\footnotesize $1$};
         \draw[arrow,color3] (o) -- (r) node[pos=0.5,above right] {\footnotesize $3$};
         \draw[arrow,color3] (l) -- (r) node[pos=0.5,below] {\footnotesize $3$};
         \draw[extra edge,color2] (l) -- (0,0) node[pos=0.7,above] {\footnotesize $2$};
       \end{tikzpicture}
     }
     child {
       node {
         \begin{tikzpicture}[visualization scale]
           \coordinate (A1) at (0,0);
           \coordinate (A2) at (324:1);
           \coordinate (A3) at ($(A2)+(252:1)$);
           \coordinate (A4) at ($(A3)+(180:1)$);
           \coordinate (A5) at ($(A4)+(108:1)$);
           \draw[pentagon] (A1) -- (A2) -- (A3) -- (A4) -- (A5) -- (A1);
           \coordinate (Bh) at ($(A1)+(0:4)$);
           \coordinate (Ch) at ($(A3)+(36:4)$);
           \coordinate (B3) at (intersection of A1--Bh and C5--Ch);
           \coordinate (B4) at ($(A1)+(180:0.3)$);
           \coordinate (B2) at ($(B3)+(72:0.6)$);
           \coordinate (B5) at ($(B4)+(108:0.6)$);
           \coordinate (C5) at ($(A3)+(216:0.3)$);
           \coordinate (C1) at ($(B3)+(36:0.3)$);
           \coordinate (C2) at ($(C1)+(324:0.6)$);
           \coordinate (C4) at ($(C5)+(288:0.6)$);
           \draw[pentagon] (B2) -- (B3) -- (B4) -- (B5);
           \draw[pentagon] (C4) -- (C5) -- (C1) -- (C2);
         \end{tikzpicture}
       }
     }
   }
 }
;
\end{tikzpicture}
\caption{Full case distinction for Lemma~\ref{lem:missing_edge}}
\label{fig:face_cases}
\end{figure}
%%%%%%%%%%%%%%%%%%%%%%%%%%%%%%%%%%%%%%%%%%%%%%%%%%%%%%%%%%%%%%%%%%%%%
%%%%%%%%%%%%%%%%%%%%%%%%%%%%%%%%%%%%%%%%%%%%%%%%%%%%%%%%%%%%%%%%%%%%%

Now we define an extension of $G$ and a function $\alpha_5$
such that every five color forest of $G$ can be extended to an
$\alpha_5$-orientation of this extension.

\begin{definition}
  The \emph{stack extension} $\sG$ of $G$ is the extension of $G$ that
  contains an extra vertex in every inner face that is incident to at
  most one of the outer vertices. These new vertices are connected to
  all three vertices of the respective face (see Fig.~\ref{fig:induced_fcf_ao} (right)).  We call the new vertices
  \emph{stack vertices} and the vertices of $G$ \emph{normal
    vertices}.
\end{definition}

\begin{definition}
An orientation of the inner edges of $\sG$ is a 
$\alpha_5$-orientation if the outdegrees of the vertices 
correspond to the following values:
\[
\alpha_5(v)= \begin{cases}
  2 & \text{if $v$ is a stack vertex,} \\
  5 & \text{if $v$ is an inner normal vertex,} \\
  0 & \text{if $v$ is an outer normal vertex.}
\end{cases}
\]
\end{definition}

A five color forest of $G$ induces an $\alpha_5$-orientation of $\sG$
in a canonical way by keeping the orientation of the edges of $G$ and
defining the missing edge of Lemma~\ref{lem:missing_edge} as the unique
incoming edge for every stack vertex.

\begin{observation} \label{obs:extend_coloring} The coloring of the inner
  edges of a five color forest can be extended to a coloring of the
  inner edges of the induced~$\alpha_5$-orientation that fulfills the
  properties of a five color forest at all normal vertices.
\end{observation}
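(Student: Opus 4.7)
The plan is to assign colors to the new edges of $\sG$ (those incident to stack vertices) face by face, and then verify (F1)--(F3) locally at every normal vertex.

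Concretely, for each inner face $f$ of $G$ incident to at most one outer vertex, Lemma~\ref{lem:missing_edge} singles out a unique inner angle at some normal vertex $v_f$ where an outgoing edge is missing in the cyclic order, together with the color $c_f$ of that missing edge (it is the unique color whose position between the blocks $B_{c_f+2}$ and $B_{c_f-2}$ at $v_f$ is vacant). In the induced $\alpha_5$-orientation on $\sG$, the unique incoming edge at the stack vertex $s_f$ is precisely $v_f\to s_f$, and I color it $c_f$. For each of the two remaining normal vertices $u$ of $f$, the new edge $s_f\to u$ is incoming at $u$ and lies in the cyclic order at $u$ between the two edges of $f$ incident to $u$. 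I color it with the unique color $c$ that inserts it into a block $B_c$ of $u$ consistent with those two neighbouring edges; the finite case analysis displayed in Figure~\ref{fig:face_cases} (together with rotations of the colors modulo~$5$) shows that exactly one such color exists in every case.

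It then remains to check properties (F1), (F2), (F3) at every normal vertex. Property (F3) is automatic, since the modifications at a normal vertex consist only of adding incoming edges into existing blocks and, at $v=v_f$, an extra outgoing edge; neither operation can falsify the disjunction ``$B_i$ is nonempty or one of out-$(i-2)$, out-$(i+2)$ exists''. Property (F1) at an outer vertex $a_k$ is immediate: by the original~(F1) every edge already incident to $a_k$ is incoming of color $k$, so $B_k$ is the only nonempty block at $a_k$, and any newly added incoming edge at $a_k$ is thus forced into $B_k$ and receives color $k$ by construction. For (F2) at an inner normal vertex $v$, one distinguishes whether $v$ plays the role of $v_f$ (in which case the new outgoing edge of color $c_f$ lies at the prescribed gap between $B_{c_f+2}$ and $B_{c_f-2}$ by the very definition of $c_f$) or the role of one of the two vertices $u$ (in which case the block the edge joins is precisely the one dictated by the case analysis of Figure~\ref{fig:face_cases}).

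The main obstacle is the (F2) verification, since one must certify that the color picked locally at $u$ really does fit the cyclic block order. However, each new edge is incident to exactly one stack vertex and therefore belongs to a single face, so the local choices at different faces are independent and no global consistency condition arises; the whole check thus reduces to the finite case distinction already carried out in the proof of Lemma~\ref{lem:missing_edge}, applied now from the perspective of the two non-distinguished vertices of each face.
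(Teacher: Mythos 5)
Your proposal is correct and matches the argument the paper intends: the paper states this as an observation without a written proof, the justification being exactly the case analysis of Lemma~\ref{lem:missing_edge} (Figure~\ref{fig:face_cases}), which determines the color of the edge into the stack vertex and forces a unique block for each edge out of the stack vertex at the two non-distinguished corners. Your additional remarks --- that uniqueness of the color at a non-distinguished vertex follows from the uniqueness part of Lemma~\ref{lem:missing_edge}, that the choices for different faces are independent, and that (F\ref{item:no_three_empty}) is monotone under adding edges --- are all sound.
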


%%%%%%%%%%%%%%%%%%%%%%%%%%%%%%%%%%%%%%%%%%%%%%%%%%%%%%%%%%%%%%%%%%%%%
\subsection{Bijection between five color forests and \texorpdfstring{$\alpha_5$}{alpha5}-orientations}

Now we want to prove that the canonical mapping from five color
forests to $\alpha_5$-orientations is a bijection.  For this purpose we
need to reconstruct the colors of the inner edges of~$G$ if we are
given an $\alpha_5$-orientation.  The idea of this construction will be
to start with an inner edge~$e$ of~$G$ and follow a properly defined
path until it reaches one of the five outer vertices.  Then
the color of this outer vertex will be the color of~$e$.  This
approach is similar to the proof of the bijection of Schnyder Woods
and $3$-orientations in \cite{de2001topological}.

\begin{lemma}\label{lem:number_edges_into_cycle}
  Let~$C$ be a simple cycle of length~$\ell$ in~$\sG$ and let all
  vertices of~$C$ be normal vertices, i.e., vertices of~$G$.  Then
  there are exactly~$2 \ell - 5$ edges pointing from~$C$ into the
  interior of~$C$.
\end{lemma}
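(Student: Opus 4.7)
The plan is to apply Euler's formula twice---once to the subgraph of $\sG$ enclosed by $C$ and once to the corresponding subgraph of $G$---and combine the results with the prescribed outdegrees of the $\alpha_5$-orientation.

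First I would consider the subgraph $H$ of $\sG$ consisting of $C$ together with everything in its bounded interior. Let $V_I$ denote the set of strictly interior vertices and let $n_I^{\mathrm{normal}}$, $n_I^{\mathrm{stack}}$ count its normal and stack vertices. Every inner face of $\sG$ is a triangle (each inner face of $G$ with at most one outer vertex is subdivided into three triangles by its stack vertex, and the other $G$-faces are already triangles), so every inner face of $H$ is a triangle while its outer face is the $\ell$-gon bounded by $C$. Combining Euler's formula with the face-handshake identity $3T+\ell=2\lvert E(H)\rvert$ yields $\lvert E(H)\rvert=2\ell+3(n_I^{\mathrm{normal}}+n_I^{\mathrm{stack}})-3$. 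Splitting $E(H)$ into the $\ell$ edges of $C$, the $e_{\mathrm{in}}$ edges strictly inside, and the cross edges (the $p$ edges oriented from $C$ into the interior plus the $q$ edges oriented from the interior to $C$) gives
\[p+q+e_{\mathrm{in}}=\ell+3(n_I^{\mathrm{normal}}+n_I^{\mathrm{stack}})-3.\]

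Next I would argue that every interior edge and every cross edge is an inner edge of $\sG$---hence oriented---because the interior of $C$ sits inside the outer $5$-gon and therefore contains no outer edge of $G$. The outdegree prescription for interior vertices then gives $e_{\mathrm{in}}+q=\sum_{v\in V_I}\alpha_5(v)=5\,n_I^{\mathrm{normal}}+2\,n_I^{\mathrm{stack}}$, and subtracting from the edge count above yields $p=\ell-3-2\,n_I^{\mathrm{normal}}+n_I^{\mathrm{stack}}$. To eliminate $n_I^{\mathrm{stack}}$, I would apply Euler's formula a second time, now to the subgraph of $G$ bounded by $C$: since all its inner faces are triangles of $G$, their number equals $\ell+2\,n_I^{\mathrm{normal}}-2$, and since each is incident to at most one outer vertex of $G$ it receives a stack vertex, giving $n_I^{\mathrm{stack}}=\ell+2\,n_I^{\mathrm{normal}}-2$. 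Substituting produces $p=2\ell-5$; the value $-5$ arises as $-3$ from Euler in $\sG$ plus $-2$ from Euler in $G$.

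The main obstacle is the second application of Euler's formula: the identity $n_I^{\mathrm{stack}}=\ell+2\,n_I^{\mathrm{normal}}-2$ depends on every inner $G$-face in the interior of $C$ carrying a stack vertex, i.e., on no such face being incident to two outer vertices of $G$, equivalently on no outer edge of $G$ lying in the closed region bounded by $C$. Establishing this carefully, together with the companion claim that cross and interior edges of $H$ are all oriented, is the delicate point; both are automatic in the intended regime, where $C$ is a cycle strictly inside the pentagon.
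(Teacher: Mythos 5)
Your proof is correct and follows essentially the same route as the paper's: both combine Euler's formula for the triangulated region bounded by~$C$ (counted in $G$ and in $\sG$) with the outdegree prescription of the $\alpha_5$-orientation, differing only in how the bookkeeping is organized. The ``delicate point'' you flag---that every $G$-face inside~$C$ carries a stack vertex, equivalently that $C$ contains no outer edge of~$G$---is exactly the implicit assumption the paper's proof also makes (without comment), and it does hold in all of the lemma's applications.
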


\begin{proof}
  First we view~$C$ as a cycle in~$G$.  Let~$k$ be the number of
  vertices strictly inside~$C$.  Since~$G$ is an inner triangulation there are
  exactly~$2k+\ell-2$ faces and~$3k+\ell-3$ edges strictly inside~$C$ by Euler's formula.

  Now we view~$C$ as a cycle in~$\sG$.  In addition to the $3k+\ell-3$
  normal edges there are $3$ stack edges in each face, hence, the
  number of edges in $C$ is $9k+4\ell -9$. At each stack vertex we see
  2 starting edges and at every normal vertex 5. Therefore there are
  $2 (2k+\ell-2) + 5k = 9k +2\ell -4$ edges starting at a vertex
  inside~$C$. Taking the difference we find that there are $2\ell -5$ edges
  pointing from a vertex of~$C$ into the interior.
\end{proof}

Next we will show some properties of oriented cycles in five color
forests.  By~$T_i$ we denote the forest consisting of all edges of
color~$i$, and by~$T_i^{-1}$ we denote the forest~$T_i$ with all edges
reversed.

\begin{lemma} \label{lem:acyclic_orientation}
The orientation 
$T:=T_i + T_{i-1} + T_{i+1} + T_{i-2}^{-1} + T_{i+2}^{-1}$ 
of $G$ is acyclic.
\end{lemma}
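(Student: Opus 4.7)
The plan is by contradiction. Assume $T$ has a directed cycle; I will derive a contradiction. By the cyclic symmetry of the definition of a five color forest, I may fix $i=1$, so that $T$ keeps the orientation of edges of colors $1,2,5$ and reverses those of colors $3,4$. Property~(F1) then forces $a_1,a_2,a_5$ to be sinks of $T$ and $a_3,a_4$ to be sources, so every directed $T$-cycle is contained in the set of inner vertices.

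The structural fact I will rely on is that (F2) implies, at every inner vertex $v$, that in the clockwise cyclic order of the inner edges at $v$ the $T$-outgoing edges form a single contiguous arc $O_v=(P_5,B_3,P_1,B_4,P_2)$, where $P_j$ denotes the outgoing color-$j$ edge (if present) and $B_j$ the block of incoming color-$j$ edges; the $T$-incoming edges form the complementary arc $I_v=(B_5,P_3,B_1,P_4,B_2)$. By (F3) both arcs are nonempty.

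I will let $C$ be a directed $T$-cycle whose interior contains the minimum possible number of inner faces of $G$, and reduce to the case that $C$ bounds a single triangular face. First, $C$ has no $T$-chord in its interior, since such a chord together with an arc of $C$ would form a shorter directed cycle with strictly smaller interior. Second, $C$ has no interior vertex: the restriction of $T$ to the interior vertices is acyclic by minimality, so starting from any interior vertex one may follow outgoing $T$-edges forward to reach a vertex $t$ all of whose outgoing $T$-edges leave the interior (hence go to $C$), and incoming $T$-edges backward to reach a vertex $s$ all of whose incoming $T$-edges come from $C$; since $s$ and $t$ lie in the same weakly-connected component of the interior DAG, they are joined there by a directed path. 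Prepending an edge from $C$ to $s$, appending an edge from $t$ to $C$, and closing via the appropriate $C$-arc yields a directed $T$-cycle whose interior is a strict subregion of $C$'s, contradicting minimality.

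Thus $C=u\to v\to w\to u$ bounds a single triangular face $f$ of $G$. At each of $u,v,w$ the two $C$-edges are cyclically adjacent and separated by $f$, so $e^+$ and $e^-$ must straddle one of the two $O\to I$ transitions in the cyclic order — either the one between $P_2$ and $B_5$ or the one between $B_2$ and $P_5$. The choice is the same at all three vertices (the face $f$ is consistently on one side of $C$), and this pins down which of $P_j, B_j$ must be nonempty and the resulting color/orientation of each $C$-edge at both of its endpoints. A direct case analysis then shows no consistent triple $(uv,vw,wu)$ exists; Lemma~\ref{lem:missing_edge} applied to $f$ organizes this casework uniformly, since the ``missing edge'' at the unique missing-vertex conflicts with the required transition positions of the three $C$-edges. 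The main obstacle I anticipate is the interior-elimination step: I must verify that the directed path constructed through the interior genuinely gives rise to a cycle with strictly smaller interior, which requires care with the planar sub-regions cut off by the path.
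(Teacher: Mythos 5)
Your proposal is correct, but it takes a genuinely different route from the paper. The paper proves acyclicity for \emph{all} cycles at once by a global counting argument: it passes to the stack extension $\sG$ and its $\alpha_5$-orientation, introduces an auxiliary potential $\pi$ on the colors, and shows that a directed $T$-cycle of length $\ell$ would have exactly $2\ell$ edges pointing into its interior, contradicting the Euler-count of Lemma~\ref{lem:number_edges_into_cycle}, which gives $2\ell-5$. Your proof instead stays entirely inside $G$ and uses only (F1)--(F3): a minimal-counterexample reduction (no $T$-chords, then elimination of interior vertices via source/sink vertices of the interior DAG) brings everything down to a directed facial triangle, which is then killed by local casework at the unique out$\to$in and in$\to$out transitions in the cyclic edge order. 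The paper's argument is shorter once the $\sG$-machinery is in place and that machinery is reused heavily later (Proposition~\ref{prop:cyle_properties_in_fcf}, the path lemmas); yours is more elementary and self-contained, at the price of the planarity bookkeeping in the reduction step, which you rightly flag.

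Two small repairs. First, the justification ``$s$ and $t$ lie in the same weakly-connected component, hence are joined by a directed path'' is wrong as stated (weak connectivity gives no directed path); but your own construction already supplies the path, since $s$ is reached from $v_0$ by walking backwards and $t$ by walking forwards, so $s\to\cdots\to v_0\to\cdots\to t$ is a directed walk in a DAG and hence a path. Second, Lemma~\ref{lem:missing_edge} is not really the right organizing tool for the final casework; what closes it is (F3) applied at the two endpoints of each cycle edge. Concretely, for a clockwise directed facial $T$-triangle the out-arc at each vertex reads $(P_5,B_3,P_1,B_4,P_2)$ clockwise, and the cycle edge leaving a vertex must be the clockwise-\emph{last} present element of that arc while the cycle edge entering the next vertex must be the clockwise-\emph{first} present element of the in-arc $(B_5,P_3,B_1,P_4,B_2)$. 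Applying (F3) with $i=4$ at the tail and $i=5$, $i=1$ at the head rules out all but the colors $1$ and $4$ for each cycle edge and forces $P_2$ to be \emph{absent} at the tail of every cycle edge but \emph{present} at its head; since each vertex of the triangle is both a head and a tail, this is a contradiction, and the counterclockwise case follows by the reflection symmetry of the axioms. So the casework does close, just not via Lemma~\ref{lem:missing_edge}.
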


\begin{proof}
  Assume there is a simple oriented cycle~$C$ of length~$\ell$ in~$T$.
  Because of the symmetry of the colors in the definition of a five
  color forest, it suffices to consider the case that~$C$ is oriented
  clockwise and~$i=1$.

%%%%%%%%%%%%%%%%%%%%%%%%%%%%%%%%%%%%%%%%%%%%%%%%%%%%%%%%%%%%%%%%%%%%%
%%%%%%%%%%%%%%%%%%%%%%%%%%%%%%%%%%%%%%%%%%%%%%%%%%%%%%%%%%%%%%%%%%%%%
\begin{figure}

\centering

\tikzstyle{vertex}=[circle,fill,scale=0.5]
\tikzstyle{out edge}=[-latex',very thick]
\tikzstyle{in edge}=[latex'-]
\tikzstyle{pi value}=[color=red]

\begin{tikzpicture}[scale=2]

\node[vertex] (v) at (0,0) {};

\node[] (v1) at (90:1.2) {$1$};
\node[] (v2) at (18:1.2) {$2$};
\node[] (v3) at (306:1.2) {$3$};
\node[] (v4) at (234:1.2) {$4$};
\node[] (v5) at (162:1.2) {$5$};

\node[] at (270:.9) {\footnotesize $1$};
\node[] at (198:.9) {\footnotesize $2$};
\node[] at (126:.9) {\footnotesize $3$};
\node[] at (54:.9) {\footnotesize $4$};
\node[] at (342:.9) {\footnotesize $5$};

\node[pi value] at (90:1.5) {$+1$};
\node[pi value] at (18:1.5) {$0$};
\node[pi value] at (306:1.5) {$0$};
\node[pi value] at (234:1.5) {$+1$};
\node[pi value] at (162:1.5) {$+2$};

\node[pi value] at (270:1.2) {$+1$};
\node[pi value] at (198:1.2) {$+2$};
\node[pi value] at (126:1.2) {$+2$};
\node[pi value] at (54:1.2) {$+1$};
\node[pi value] at (342:1.2) {$0$};

\draw[out edge] (v) -- (v1);
\draw[out edge] (v) -- (v2);
\draw[out edge] (v3) -- (v);
\draw[out edge] (v4) -- (v);
\draw[out edge] (v) -- (v5);

\draw[in edge] (114:.7) -- (v);
\draw[in edge] (138:.7) -- (v);

\draw[in edge] (v) -- (186:.7);
\draw[in edge] (v) -- (210:.7);

\draw[in edge] (v) -- (258:.7);
\draw[in edge] (v) -- (282:.7);

\draw[in edge] (v) -- (330:.7);
\draw[in edge] (v) -- (354:.7);

\draw[in edge] (42:.7) -- (v);
\draw[in edge] (66:.7) -- (v);

\draw[dotted] (180:1.5) -- (0:1.5);

\end{tikzpicture}

\caption{The possible incident edges of a vertex~$v$ in~$T$.
For an incoming edge of color~$c$ below the dotted line the red value is~$2-\pi(c)$ and counts the number of thick edges in the counterclockwise angle from this edge to the dotted line.
For an outgoing edge of color~$c'$ above the dotted line the red value is~$\pi(c')$ and counts the number of thick edges in the clockwise angle from this edge to the dotted line.
The thick edges are exactly those that are outgoing in~$v$ in the $\alpha$-orientation.}

\label{fig:reorientation_acyclic}

\end{figure}
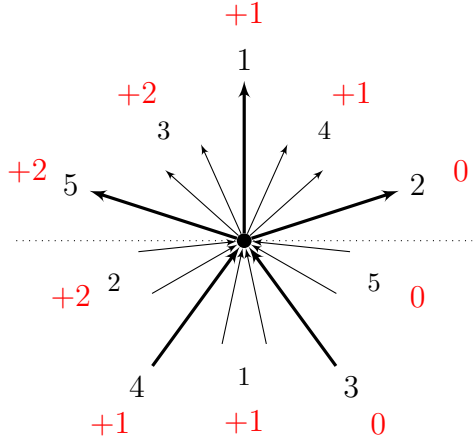
%%%%%%%%%%%%%%%%%%%%%%%%%%%%%%%%%%%%%%%%%%%%%%%%%%%%%%%%%%%%%%%%%%%%%
%%%%%%%%%%%%%%%%%%%%%%%%%%%%%%%%%%%%%%%%%%%%%%%%%%%%%%%%%%%%%%%%%%%%%

We define the following auxiliary function for the five colors:
\[ \pi(1) := 1 \enspace , \enspace \pi(2) := 0 \enspace , \enspace
\pi(3) :=2 \enspace , \enspace \pi(4) :=1 \enspace , \enspace \pi(5)
:=2 \enspace . \]

Let~$e$ be an edge of color~$c$ ending at vertex~$v$ and~$e'$ an edge
of color~$c'$ starting at~$v$ in the orientation~$T$.  From
Fig.~\ref{fig:reorientation_acyclic} we can read off the following:
In the counterclockwise angle of~$v$ between~$e$ and the dotted
line there are $2 - \pi(c)$ edges which are outgoing in the
$\alpha_5$-orientation of~$\sG$. In the clockwise angle of~$v$
between~$e'$ and the dotted line there are~$\pi(c')$ outgoing
edges. Hence, there are exactly~${\pi(c') + (2 - \pi(c))}$ edges
pointing away from~$v$ in the counterclockwise angle between~$e$ and~$e'$ (in the $\alpha_5$-orientation of~$\sG$).

Now let~$e_1,\dotsc,e_\ell$ be the edges of the cycle~$C$ of $T$ and
let~$c_i$ be the color of edge~$e_i$.  Then the number of edges
pointing from~$C$ into the interior (in the $\alpha_5$-orientation
of~$\sG$) is
\[ \sum_{i=1}^{\ell - 1} (\pi(c_{i+1}) - \pi(c_i) + 2) + (\pi(c_1) 
  - \pi(c_{\ell}) + 2) = 2 \ell \enspace . \]
This is in contradiction to Lemma~\ref{lem:number_edges_into_cycle}.
\end{proof}

\begin{proposition} \label{prop:cyle_properties_in_fcf}
Let $C$ be an oriented cycle in~$G$ where~$G$ is oriented with every~$T_i$. Then:
\begin{enumerate}[(i)]
\item $C$ uses at least $3$ different colors.
\item If $C$ uses exactly $3$ different colors, these colors are not
  consecutive in the cyclic order.
\item $C$ has two consecutive edges whose colors have distance at most
  one in the cyclic order.
\end{enumerate}
\end{proposition}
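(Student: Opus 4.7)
The plan is to prove (i) and (ii) as direct corollaries of Lemma~\ref{lem:acyclic_orientation}, and to reduce (iii) to part~(i) by a local case analysis built on the same acyclicity.

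For (i) and (ii) the key observation is that every color set of size at most two, and every set of three cyclically consecutive colors, is contained in some window $\{i-1, i, i+1\}$. If every edge of $C$ has color in such a window, then every edge keeps its natural direction in the reorientation $T^{(i)} := T_i + T_{i-1} + T_{i+1} + T_{i-2}^{-1} + T_{i+2}^{-1}$, so $C$ itself becomes an oriented cycle in $T^{(i)}$, contradicting Lemma~\ref{lem:acyclic_orientation}. This yields (i) and (ii) simultaneously.

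For (iii) I would argue by contradiction: assume every pair of consecutive edges of $C$ has cyclic color distance exactly two. Writing $c_j$ for the color of the $j$th edge, there exist signs $\delta_j \in \{+1,-1\}$ with $c_j \equiv c_{j-1} + 2\delta_j \pmod 5$. A short check shows that at vertex $v_j$ the unique $i$ for which both incident cycle colors $c_{j-1}, c_j$ lie in $\{i-1, i, i+1\}$ is $i_j = c_j - \delta_j$, and a modular computation gives $i_{j+1} - i_j \equiv \delta_j + \delta_{j+1} \pmod 5$. Hence all $i_j$ coincide exactly when the $\delta_j$ alternate around $C$. But alternation forces the two-step recurrence $c_{j+1} = c_{j-1}$, so $C$ would use only two distinct colors, contradicting part~(i); and for the common $i$ the cycle $C$ would be a $T^{(i)}$-cycle, which again contradicts Lemma~\ref{lem:acyclic_orientation}. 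Thus there must be some index $j$ with $\delta_{j-1} = \delta_j$.

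To close the remaining subcase (two consecutive equal signs), one needs an ingredient beyond Lemma~\ref{lem:acyclic_orientation}. My plan is to take $C$ of minimum length among cycles violating (iii) and exploit the three consecutive edges of colors $c, c+2, c+4 \equiv c-1 \pmod 5$ produced by a run of equal signs; note that the outer pair of this triple already has cyclic distance one. If the chord $v_{j-1}v_{j+2}$ is present in $G$, a case analysis on the chord's color yields either a strictly shorter violating cycle (contradicting minimality) or two consecutive cycle edges at cyclic distance at most one (the desired contradiction); if the chord is absent, the quadrilateral $v_{j-1}v_jv_{j+1}v_{j+2}$ must enclose interior vertices, and one can use Lemma~\ref{lem:missing_edge} together with~(F\ref{item:no_three_empty}) to locate a shorter violating cycle passing through the interior. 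The main obstacle — and the technically most subtle step — is verifying that in every local configuration the shorter cycle produced still satisfies the all-distance-two hypothesis, so that minimality of $C$ can be invoked to conclude.
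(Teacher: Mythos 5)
Your treatment of (i) and (ii) is exactly the paper's argument and is correct: any color set of size at most two, and any set of three cyclically consecutive colors, fits into a window $\{i-1,i,i+1\}$, so $C$ survives as an oriented cycle in $T_i+T_{i-1}+T_{i+1}+T_{i-2}^{-1}+T_{i+2}^{-1}$, contradicting Lemma~\ref{lem:acyclic_orientation}. Your reduction of (iii) to the existence of two consecutive equal signs (equivalently, three consecutive edges colored $c$, $c+2$, $c+4$) is also sound, since the alternating case collapses either to a two-colored cycle or to a cycle living in a single window.

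The closing step of (iii), however, is a genuine gap, and you flag it yourself. The minimal-counterexample/chord argument is not carried out: the chord $v_{j-1}v_{j+2}$ need not exist, its color and orientation are unconstrained when it does, and when you reroute through interior vertices there is no reason the resulting shorter oriented cycle again has all consecutive color distances equal to two --- yet that is precisely what is needed to invoke minimality. Since the whole contradiction hangs on this unverified claim, the proof of (iii) is incomplete as written. The paper closes (iii) without any induction or chord analysis, by a parity count: pass to the $\alpha_5$-orientation of $\sG$ and count, at each vertex $v$ of the (say clockwise) cycle $C$, the edges pointing from $v$ into the interior of $C$. If the color jumps by $+2$ at $v$ there are $0$ such edges, and if it jumps by $-2$ there are exactly $4$ (the same local bookkeeping as in the proof of Lemma~\ref{lem:acyclic_orientation}). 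The total is therefore even, contradicting the odd value $2\ell-5$ supplied by Lemma~\ref{lem:number_edges_into_cycle}. I recommend replacing your final subcase by this count; the counting lemma is already at your disposal and makes the whole of (iii) a three-line argument.
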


\begin{proof}
  For the first two statements we denote by~$J$ the set of colors used by~$C$.
  Assume that~${|J| \leq 2}$ or $|J|=\{ j,j+1,j+2 \}$ for a color~$j$.
  In both cases there is a color~$i$ such that~$J \subseteq \{ i, i-1, i+1 \}$ (in the second case we choose~$i=j+1$).
  Thus~$C$ is an oriented cycle in the orientation~$T_i + T_{i-1} + T_{i+1} + T_{i-2}^{-1} + T_{i+2}^{-1}$, in contradiction to Lemma~\ref{lem:acyclic_orientation}.

  For the third statement assume that there is an oriented simple
  cycle~$C$ such that the distance of the colors of any two
  consecutive edges on~$C$ is exactly~$2$.  Because of the symmetry it
  suffices to consider the case that~$C$ is oriented clockwise.
  Let~$e=uv$ and~$e'=vw$ be two consecutive edges on~$C$ and let~$i$
  be the color of~$e$.  If the color of~$e'$ is~$i+2$, there is no
  edge pointing from~$v$ into the interior of~$C$, i.e., there is no outgoing edge of~$v$ in the
  interval between the edges~$e'$ and~$e$ in the clockwise cyclic order of the incident edges of~$v$.  If the color of~$e'$
  is~$i-2$, there are exactly~$4$ edges pointing from~$v$ into the interior
  of~$C$.  Therefore the total number of edges pointing into
  the interior of~$C$ is even, in contradiction to
  Lemma~\ref{lem:number_edges_into_cycle}.
\end{proof}

Now we will define the paths starting with an inner edge $e$ and
ending at an outer vertex that allow us to define the color of $e$.
The idea is to always continue with the opposite outgoing edge, but if
we run into a stack vertex, we need to be careful.  The paths we will
define are not unique, but we will see that all paths starting with
the same edge $e$ end at the same outer vertex.

\begin{definition} \label{def:paths}
Let~$e=uv$ be an inner edge such that~$u$ is a normal vertex.  We will
recursively define a set~$\mathcal{P}(e)$ of walks starting with~$e$
by distinguishing several cases concerning~$v$.
\begin{itemize}
\item If~$v$ is an outer vertex, i.e.,~$v=a_i$ for some~$i$, the
  set~$\mathcal{P}(e)$ contains only one path, the path only
  consisting of the edge~$e$.
\item If~$v$ is an inner normal vertex, let~$e'$ be the opposite
  outgoing edge of~$e$ at~$v$, i.e., the third outgoing edge in
  clockwise or counterclockwise direction, and we
  define~$\mathcal{P}(e) := \{ e + P : P \in \mathcal{P}(e') \}$.
\item If~$v$ is a stack vertex, let~$e_1'=v v_1'$ and~$e_2'= v v_2'$
  be the left and right outgoing edge of~$v$.  Further let~$e_1''$ be
  the second outgoing edge of~$v_1'$ after~$e_1'$ in counterclockwise
  direction and~$e_2''$ the second outgoing edge of~$v_2'$
  after~$e_2'$ in clockwise direction.  Note that~$e_i''$ is well
  defined if~$v_i'$ is not an outer vertex, and that not both
  of~$v_1'$ and~$v_2'$ can be outer vertices (there are no stack vertices
  in faces of~$G$ that are incident to two outer vertices).  If both of~$e_1''$
  and~$e_2''$ are well defined, we define~$\mathcal{P}(e) := \{ e +
  e_1' + P : P \in \mathcal{P}(e_1'') \} \cup \{ e + e_2' + P : P \in
  \mathcal{P}(e_2'') \} $.  If only~$e_i''$ is well defined, we
  define~$\mathcal{P}(e) := \{ e + e_i' + P : P \in \mathcal{P}(e_i'')
  \}$. See Fig.~\ref{fig:path_def} (left) for an example.
\end{itemize}
\end{definition}

%%%%%%%%%%%%%%%%%%%%%%%%%%%%%%%%%%%%%%%%%%%%%%%%%%%%%%%%%%%%%%%%%%%%%
%%%%%%%%%%%%%%%%%%%%%%%%%%%%%%%%%%%%%%%%%%%%%%%%%%%%%%%%%%%%%%%%%%%%%
\begin{figure}

\centering

\tikzstyle{normal vertex}=[circle,fill,scale=0.5]
\tikzstyle{stack vertex}=[circle,draw,scale=0.5]
\tikzstyle{undirected edge}=[]
\tikzstyle{colored undirected edge}=[thick]
\tikzstyle{edge}=[-{Latex[scale length=1.8,scale width=1]}]
\tikzstyle{colored edge}=[thick,-{Latex[scale length=1.8,scale width=1.2]}]

\tikzstyle{pcr}=[very thick,color=gray!70]
\tikzstyle{pcr fill}=[fill=gray!30]

\begin{tikzpicture}[scale=1.8]

\node[stack vertex] (v) at (0,0) [label={[label distance=-0.1cm]95:\footnotesize $v$}] {};
\node[normal vertex] (u) at (180:1) [label={[label distance=-0.1cm]95:\footnotesize $u$}] {};
\node[normal vertex] (v1') at (60:1) {};
\node[normal vertex] (v2') at (-60:1) {};
\node (v1'') at ($(v1')+(1,0)$) {};
\node (v2'') at ($(v2')+(1,0)$) {};

\draw[edge] (u) -- (v) node[midway,above] {\footnotesize $e$};
\draw[edge] (v) -- (v1') node[midway,left] {\footnotesize $e_1'$};
\draw[edge] (v) -- (v2') node[midway,left] {\footnotesize $e_2'$};
\draw[edge,dashed] (v2') -- (v1');
\draw[edge,dashed] (v2') -- (u);
\draw[edge,dashed] (u) -- (v1');
\draw[edge] (v1') -- (v1'') node[midway,below] {\footnotesize $e_1''$};
\draw[edge] (v2') -- (v2'') node[midway,above] {\footnotesize $e_2''$};

\draw[edge,color=gray!20,dashed] (v1') --+(160:.4);
\draw[edge,color=gray!20,dashed] (v1') --+(85:.4);
\draw[edge,color=gray!20,dashed] (v1') --+(40:.4);
\draw[edge,color=gray!20,dashed] (v1') --+(-60:.4);
\draw[edge,color=gray!20,dashed] (v2') --+(-70:.4);
\draw[edge,color=gray!20,dashed] (v2') --+(-120:.4);

\node[normal vertex] at (60:1) [label={[label distance=-0.1cm]80:\footnotesize $v_1'$}] {};
\node[normal vertex] at (-60:1) [label={[label distance=-0.1cm]-80:\footnotesize $v_2'$}] {};
        
\end{tikzpicture}
\qquad
\begin{tikzpicture}[scale=1.8]

\node[stack vertex] (v) at (0,0) [label={[label distance=-0.1cm]95:\footnotesize $v$}] {};
\node[normal vertex] (u) at (180:1) [label={[label distance=-0.1cm]95:\footnotesize $u$}] {};
\node[normal vertex] (v1') at (60:1) {};
\node[normal vertex] (v2') at (-60:1) [label={[label distance=-0.1cm]-80:\footnotesize $w$}] {};
\node[normal vertex] (v2'') at ($(v2')+(1,0)$) [label={[label distance=-0.1cm]-80:\footnotesize $w'$}] {};
\node[normal vertex] (w) at ($(u)+(-1,0)$) [label={[label distance=-0.1cm]95:\footnotesize $u'$}] {};

\draw[edge, color=green] (w) -- (u);
\draw[edge, color=green, dashed] (u) -- (v);
\draw[edge] (v) -- (v1');
\draw[edge, color=green, dashed] (v) -- (v2');
\draw[edge] (v2') -- (v1');
\draw[edge, color=green] (v2') -- (u);
\draw[edge] (u) -- (v1');
\draw[edge, color=green] (v2') -- (v2'');
        
\end{tikzpicture}

\caption{Left: An example for the third case in the definition of the set $\mathcal{P}(e)$ of walks.
         Right: An example for the construction of the shortcut walk: The stack edges~$uv$ and~$vw$ of the walk~$(u',u,v,w,w')$
           are replaced by the edge~$uw$ in the shortcut walk~$(u',u,w,w')$. In the $\alpha_5$-orientation this edge is oriented
           from~$w$ to~$u$.}

\label{fig:path_def}

\end{figure}
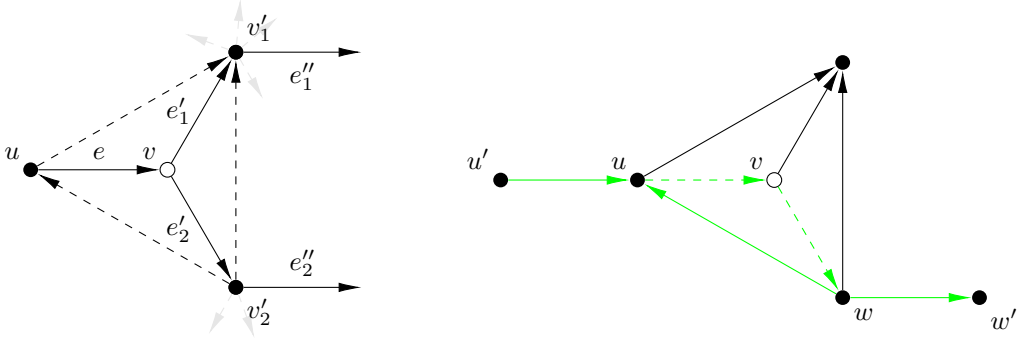
%%%%%%%%%%%%%%%%%%%%%%%%%%%%%%%%%%%%%%%%%%%%%%%%%%%%%%%%%%%%%%%%%%%%%
%%%%%%%%%%%%%%%%%%%%%%%%%%%%%%%%%%%%%%%%%%%%%%%%%%%%%%%%%%%%%%%%%%%%%

At the moment it is not clear that these walks are finite.  If they
are finite, they have to end in an outer vertex.  But we have to prove
that they do not cycle.

\begin{lemma} \label{lemma:path_properties}
\begin{enumerate}[(i)]
\item \label{item:no_cycle} The walks~$P\in \mathcal{P}(e)$ are paths,
  i.e., there are no vertex repetitions in~$P$.
\item \label{item:same_end_vertex} Let~$P_1,P_2 \in \mathcal{P}(e)$ be
  two paths starting with the same edge~$e$.  Then~$P_1$ and~$P_2$ end
  in the same outer vertex.
\item \label{item:no_intersection} Let~$v$ be a normal vertex and
  let~$e_1=v v_1, e_2 = v v_2$ be two different outgoing edges at~$v$.
  Further let~$P_1 \in \mathcal{P}(e_1)$ and~$P_2 \in
  \mathcal{P}(e_2)$ be two paths.  Then~$P_1$ and~$P_2$ do not cross
  and they end in different outer vertices.
\end{enumerate}
\end{lemma}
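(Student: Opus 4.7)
The plan is to prove all three parts simultaneously, with Lemma~\ref{lem:acyclic_orientation} as the central tool. Write $c\in\{1,\dots,5\}$ for the color of the starting edge $e$ in the coloring of $\sG$ obtained from the five color forest via Observation~\ref{obs:extend_coloring}. To a walk $P\in\mathcal{P}(e)$ in $\sG$, I associate its \emph{shortcut walk} $P^\star$ in $G$, obtained by replacing each pair of consecutive stack edges $(u,v,w)$ of $P$ (with $v$ a stack vertex) by the unique $G$-edge $uw$ of the triangular face surrounding $v$, exactly as illustrated in the right part of Fig.~\ref{fig:path_def}. The walk $P^\star$ then lies entirely in $G$, but it may traverse some edges against the $\alpha_5$-orientation.

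The combinatorial core of the argument is the following \emph{color tracking} claim: every normal vertex visited by $P$ is entered and left along an edge of color $c$, and every edge of the shortcut $P^\star$ is either a forward $G$-edge of color in $\{c-1,c,c+1\}$ or a backward $G$-edge of color in $\{c-2,c+2\}$. At a normal inner vertex the first assertion is immediate from (F\ref{item:inner_vertex_blocks}), since the opposite outgoing edge of an incoming edge of color $c$ is, by definition of a five color forest, the outgoing edge of color $c$. At a stack detour $(u,v,w)$ the verification requires a case analysis over the face orientation types classified in the proof of Lemma~\ref{lem:missing_edge} (Fig.~\ref{fig:face_cases}); using Observation~\ref{obs:extend_coloring} to read off the colors of the three stack edges, one checks for each face type and each of the two choices of outgoing edge at $v$ that the shortcut edge $uw$ falls into one of the five allowed (color, direction) pairs, and that the walk exits $w$ along an edge of color $c$. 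This case analysis is the main technical obstacle of the proof.

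Granted the color tracking claim, the shortcut $P^\star$ is a walk in the mixed orientation $T_c+T_{c-1}+T_{c+1}+T_{c-2}^{-1}+T_{c+2}^{-1}$ of $G$, which is acyclic by Lemma~\ref{lem:acyclic_orientation}. Hence $P^\star$ visits no vertex twice, and since each stack vertex of $P$ sits between two normal vertices of $P$, the walk $P$ also has no vertex repetition. This proves (i). Because $\sG$ is finite and the recursion in Definition~\ref{def:paths} terminates only at outer vertices, $P$ ends at some $a_j$. Its final edge is a normal $G$-edge oriented towards $a_j$, and so has color $j$ by (F\ref{item:outer_edges}) and color $c$ by the color tracking claim; hence $j=c$, proving (ii).

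For (iii), the two outgoing edges $e_1,e_2$ at a normal vertex $v$ have distinct colors $c_1\neq c_2$ by (F\ref{item:inner_vertex_blocks}), so (ii) yields that the paths $P_1$ and $P_2$ terminate at distinct vertices $a_{c_1}\neq a_{c_2}$. For non-crossing, suppose $P_1$ and $P_2$ share some vertex $w\neq v$. If $w$ is a normal vertex, the continuation of a walk from $w$ is uniquely determined, so $P_1$ and $P_2$ would coincide from $w$ onwards and terminate at the same outer vertex, contradicting $c_1\neq c_2$. If $w$ is a stack vertex, then $P_1$ and $P_2$ both entered $w$ through its unique normal in-neighbor $u$, which is thus also shared, and the previous case applies at $u$. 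In either case we reach a contradiction. Combined with the planarity of $\sG$, this absence of shared vertices other than $v$ rules out any crossing of $P_1$ and $P_2$.
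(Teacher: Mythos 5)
Your approach has a circularity that defeats the purpose of the lemma. You color the edges of~$\sG$ ``via Observation~\ref{obs:extend_coloring}'' and then invoke Lemma~\ref{lem:acyclic_orientation}; both presuppose that the given $\alpha_5$-orientation is induced by a five color forest. But Definition~\ref{def:paths} and Lemma~\ref{lemma:path_properties} concern an \emph{arbitrary} $\alpha_5$-orientation: they are exactly the tool used in Theorem~\ref{thm:bij_fcf_alphaorien} to define the map $\psi\colon\mathcal{A}\to\mathcal{F}$ on all of~$\mathcal{A}$ and thereby prove that every $\alpha_5$-orientation does come from a five color forest. If the lemma is only established for orientations in the image of~$\chi$, the well-definedness of~$\psi$ on all of~$\mathcal{A}$ (hence the surjectivity of~$\chi$) cannot be concluded. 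This is why the paper's proof never touches colors: it works purely with the orientation, replacing your acyclicity lemma by the counting statement of Lemma~\ref{lem:number_edges_into_cycle} together with the counts of left/right outgoing edges along shortcut walks (Lemmas~\ref{lem:shortcut_outgoing_edges} and~\ref{lem:general_shortcut_outgoing_edges}). Your color-tracking claim is plausible for induced orientations (and its case analysis is in any event only asserted, not carried out), but the restricted setting it lives in is not the one the lemma is applied to.

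Part~(\ref{item:no_intersection}) has a second, independent gap. You argue that if $P_1$ and $P_2$ meet at a normal vertex $w\neq v$ then ``the continuation of a walk from $w$ is uniquely determined.'' It is not: by Definition~\ref{def:paths} the continuation at a normal vertex is the outgoing edge \emph{opposite to the edge along which the walk arrived}, so two walks arriving at~$w$ along different edges continue along different outgoing edges, and sharing~$w$ does not force the walks to coincide afterwards. Worse, by property~(F\ref{item:inner_vertex_blocks}) the entering and leaving edges of a walk at~$w$ are roughly diametrically opposite, so two walks of different colors passing through the same~$w$ would generically \emph{cross} there; excluding precisely this possibility is the substantive content of~(\ref{item:no_intersection}), and it is what the paper's overcounting argument at~$v$ and~$w$ accomplishes. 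Your deduction that the two paths end at different outer vertices (from~(\ref{item:same_end_vertex}) and the distinctness of the colors of $e_1,e_2$) is fine in the colored setting, but again unavailable for a general $\alpha_5$-orientation.
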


Before we can prove Lemma~\ref{lemma:path_properties}, we have to introduce
some notations. The general approach for the proof will be to produce
contradictions to Lemma~\ref{lem:number_edges_into_cycle}. Since Lemma~\ref{lem:number_edges_into_cycle}
is a statement about cycles only consisting of normal vertices and
the walks considered in Lemma~\ref{lemma:path_properties} consist of normal
and stack vertices, we consider abbreviations of these walks only consisting
of normal vertices.

\begin{definition}
  For a given edge~$e$ let~$P$ be a finite subwalk of a walk in~$\mathcal{P}(e)$
  that starts and ends with a normal vertex.  Then the
  \emph{shortcut}~$P'$ of~$P$ is obtained from~$P$ by replacing every
  consecutive pair~$uv,vw$ of edges, where~$v$ is a stack vertex, by
  the edge~$uw$.
  We call~$uw$ a \emph{shortcut edge}.
  Note that this edge might be oriented from~$w$ to~$u$ in the $\alpha_5$-orientation,
  and thus shortcut walks in general are not oriented walks in the $\alpha_5$-orientation.
  See Fig.~\ref{fig:path_def} (right) for an example.
\end{definition}

Let~$P$ be a finite subwalk of a walk defined in Definition~\ref{def:paths}.
For our later argumentation we need to be able to extend such a walk~$P$
by a normal edge before its first
edge or after its last edge such that the extended walks remains a subwalk
of a walk defined in Definition~\ref{def:paths}. For sure,~$P$ has to start or to end with a
normal vertex, respectively, to allow such an extension.
Since this condition is not sufficient,
we have to change the graph~$G$ in some cases. Thus we can find the
contradictions to Lemma~\ref{lem:number_edges_into_cycle} in the new graph.

\begin{definition}
Let~$X$ be a fixed $\alpha_5$-orientation of~$\sG$. Let~$w$ be a stack
vertex and let~$v_1,v_2,v_3$ be its neighbors in~$\sG$ such that in~$X$
the edge~$v_1w$ is incoming at~$w$ and the edges~$wv_2,wv_3$
are outgoing at~$w$. Then we call the following change of~$G$,~$\sG$ and~$X$
\emph{stacking a normal vertex} at~$w$:
The vertex~$w$ becomes a normal vertex and we add stack vertices~$u_1,u_2,u_3$
in the three incident faces of~$w$. All edges keep their orientations,
the edges~$wu_i$ are incoming at the~$u_i$ and the edges~$u_iv_j$ are
outgoing at the~$u_i$. See Fig.~\ref{fig:stacking_normal} for an example.
\end{definition}

%%%%%%%%%%%%%%%%%%%%%%%%%%%%%%%%%%%%%%%%%%%%%%%%%%%%%%%%%%%%%%%%%%%%%
%%%%%%%%%%%%%%%%%%%%%%%%%%%%%%%%%%%%%%%%%%%%%%%%%%%%%%%%%%%%%%%%%%%%%
\begin{figure}

\centering

\tikzstyle{normal vertex}=[circle,fill,scale=0.5]
\tikzstyle{stack vertex}=[circle,draw,scale=0.5]
\tikzstyle{undirected edge}=[]
\tikzstyle{colored undirected edge}=[thick]
\tikzstyle{edge}=[-{Latex[scale length=1.8,scale width=1]}]
\tikzstyle{colored edge}=[thick,-{Latex[scale length=1.8,scale width=1.2]}]

\tikzstyle{pcr}=[very thick,color=gray!70]
\tikzstyle{pcr fill}=[fill=gray!30]

\begin{tikzpicture}[scale=3]

\node[stack vertex] (w) at (0,0) [label={[label distance=-0.cm]95:\footnotesize $w$}] {};
\node[normal vertex] (v1) at (90:1) [label={180:\footnotesize $v_1$}] {};
\node[normal vertex] (v3) at (210:1) [label={180:\footnotesize $v_3$}] {};
\node[normal vertex] (v2) at (330:1) [label={0:\footnotesize $v_2$}] {};

\draw[edge] (v1) -- (w);
\draw[edge] (w) -- (v2);
\draw[edge] (w) -- (v3);
\draw[edge] (v2) -- (v1);
\draw[edge] (v2) -- (v3);
\draw[edge] (v3) -- (v1);
        
\end{tikzpicture}
\qquad
\begin{tikzpicture}[scale=3]

\node[normal vertex] (w) at (0,0) [label={[label distance=-0.cm]95:\footnotesize $w$}] {};
\node[normal vertex] (v1) at (90:1) [label={180:\footnotesize $v_1$}] {};
\node[normal vertex] (v3) at (210:1) [label={180:\footnotesize $v_3$}] {};
\node[normal vertex] (v2) at (330:1) [label={0:\footnotesize $v_2$}] {};

\node[stack vertex] (u1) at ($1/3*(v1)+1/3*(v2)+1/3*(w)$) [label={[label distance=-0.06cm]0:\footnotesize $u_1$}] {};
\node[stack vertex] (u2) at ($1/3*(v2)+1/3*(v3)+1/3*(w)$) [label={[label distance=-0.06cm]270:\footnotesize $u_2$}] {};
\node[stack vertex] (u3) at ($1/3*(v3)+1/3*(v1)+1/3*(w)$) [label={[label distance=-0.13cm]180:\footnotesize $u_3$}] {};

\draw[edge] (v1) -- (w);
\draw[edge] (w) -- (v2);
\draw[edge] (w) -- (v3);
\draw[edge] (v2) -- (v1);
\draw[edge] (v2) -- (v3);
\draw[edge] (v3) -- (v1);

\draw[edge] (w) -- (u1);
\draw[edge] (w) -- (u2);
\draw[edge] (w) -- (u3);
\draw[edge] (u1) -- (v1);
\draw[edge] (u1) -- (v2);
\draw[edge] (u2) -- (v2);
\draw[edge] (u2) -- (v3);
\draw[edge] (u3) -- (v3);
\draw[edge] (u3) -- (v1);
        
\end{tikzpicture}

\caption{An example for stacking a normal vertex into a face of~$G$.}

\label{fig:stacking_normal}

\end{figure}
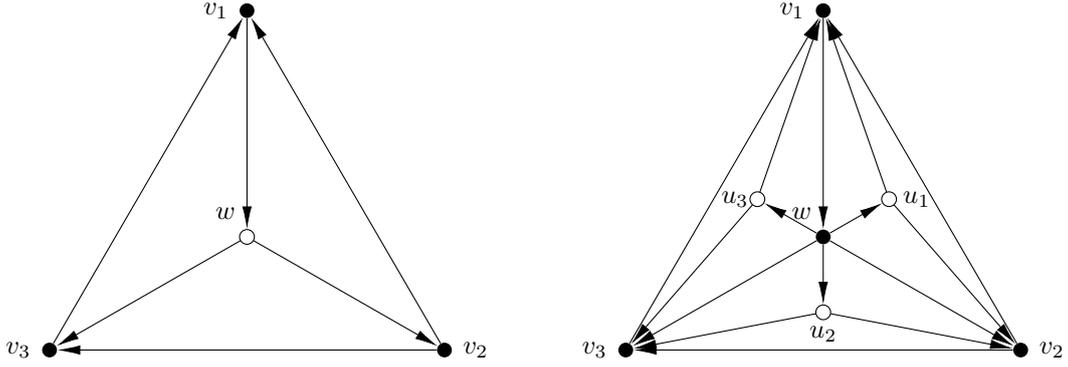
%%%%%%%%%%%%%%%%%%%%%%%%%%%%%%%%%%%%%%%%%%%%%%%%%%%%%%%%%%%%%%%%%%%%%
%%%%%%%%%%%%%%%%%%%%%%%%%%%%%%%%%%%%%%%%%%%%%%%%%%%%%%%%%%%%%%%%%%%%%

Note that after stacking a normal vertex at~$w$ the walks defined in Definition~\ref{def:paths}
change in the following way: After each occurrence of~$w$ the vertex~$u_2$ is
inserted. If a walk does not contain the vertex~$w$, it remains the same.

\begin{lemma}\label{lemma:extending_by_normal_edge}
Let~$P$ be a finite subwalk of a walk defined in Definition~\ref{def:paths} that starts (ends)
with an inner normal vertex. Then after stacking at most two normal vertices,~$P$ can be extended by a normal edge
before its first edge (after its last edge) in such a way that it remains a subwalk
of a walk defined in Definition~\ref{def:paths}.
\end{lemma}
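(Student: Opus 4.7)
The plan is to identify the normal edge $e^*$ that must be prepended to (resp.\ appended after) $P$ at the inner normal vertex $u$ in order for the extended walk to obey Definition~\ref{def:paths}, and to produce such an edge by stacking at most two vertices when necessary. Since $u$ is inner normal, the walk continuation rule at $u$ is ``opposite outgoing edge''. For the ``starts with'' case, letting $e_1 = uv_1$ be the first edge of $P$, the desired $e^*$ must be an incoming edge at $u$ lying in the unique sector $S$ (delimited by two consecutive outgoing edges in the cyclic order at $u$) whose diametrically opposite outgoing edge is $e_1$. For the ``ends with'' case, letting $e_n$ be the last edge of $P$, the edge $e^*$ is forced to be the unique opposite outgoing edge of $e_n$ at $u$ in $\sG$.

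I would split the ``starts with'' direction into three cases. \emph{Case A:} the sector $S$ already contains a normal incoming edge at $u$; use it with no stacking. \emph{Case B:} the sector $S$ contains no normal incoming edges but has a stack incoming edge $wu$; stack $w$ so that $wu$ becomes a normal edge, and set $e^* := wu$. \emph{Case C:} the sector $S$ is empty. Since edges at $u$ in $\sG$ alternate between $G$-edges and stack edges (one stack edge per incident face), the two outgoing edges bounding $S$ must be consecutive at $u$, and therefore one of them is a stack edge $uw_f$. Stacking $w_f$ subdivides the corresponding face into three subfaces; by the orientation rule of the stacking operation, a new stack vertex $u_1$ appears in the subface adjacent to $S$ and yields a stack edge $u_1u$ incoming at $u$ and lying inside $S$. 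A second stacking at $u_1$ then turns $u_1u$ into a normal edge, which we take as $e^*$. Thus at most two stackings suffice. The ``ends with'' direction admits only analogues of Cases~A and~B, so at most one stacking is needed there.

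In each case I would then verify (i) that $e^*$ is a normal edge in the updated $\sG$; (ii) that stacking preserves the sector structure at $u$---the outgoing edges at $u$ retain their orientations and cyclic positions---so $e_1$ is still the opposite outgoing edge of $e^*$ at $u$; and (iii) that the concatenated walk $e^* + P$ (resp.\ $P + e^*$) still obeys the rules of Definition~\ref{def:paths} in the updated $\sG$, hence is a subwalk of a walk defined there. The main obstacle is (iii): by the remark following the definition of stacking, stacking at a vertex $w$ changes walks by inserting $u_2$ after each occurrence of $w$. So one must argue that either $P$ does not visit the stacked vertex or $P$ remains a subwalk after the insertion is accounted for. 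Geometrically, the vertices we stack sit in a face incident to $u$ on the $S$-side, while $P$ leaves $u$ through $e_1$ on the opposite side; a careful inspection of which vertices $P$'s initial (resp.\ final) edges can reach confirms that the walk structure along $P$ is preserved.
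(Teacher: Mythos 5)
Your proposal follows the paper's proof essentially verbatim: your Cases A, B, C for the front extension correspond exactly to the paper's three cases (a normal incoming edge already lies in the opposite sector; only a stack incoming edge lies there, fixed by one stacking at its tail; the sector is empty, fixed by stacking first at the bounding outgoing stack neighbour $w$ and then at the stack vertex thereby created inside the sector), and the rear extension is handled identically via the uniquely determined successor edge, stacking once if it ends at a stack vertex. The only addition is your explicit concern about whether stacking perturbs $P$ itself, which the paper dispatches with the observation that stacking at $w$ merely inserts the new stack vertex $u_2$ after each occurrence of $w$ in a walk, leaving walks avoiding $w$ unchanged.
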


\begin{proof}
Let~$P=v_1,v_2,\dotsc,v_n$. Let us first consider the case that~$v_1$
is a normal vertex and that we want to extend~$P$ by a normal edge~$v_0v_1$.
We distinguish three cases. In the first case~$v_1$ has an
incoming normal edge~$v_0 v_1$ such that~$v_1v_2$ is the third outgoing
edge of~$v_1$ in clockwise (and counterclockwise) order after~$v_0v_1$.
Then we are done because due to Definition~\ref{def:paths} (second case)~$v_1v_2$
is the unique successor of~$v_0 v_1$. In the second case~$v_1$ has an
incoming stack edge~$v_0 v_1$ such that~$v_1v_2$ is the third outgoing
edge of~$v_1$ in clockwise (and counterclockwise) order after~$v_0v_1$.
Then we can stack a normal vertex at~$v_0$ and we are in the first case, again.
In the third case~$v_1$ has no incoming edge~$v_0 v_1$ such that~$v_1v_2$ is the third outgoing
edge of~$v_1$ in clockwise (and counterclockwise) order after~$v_0v_1$.
Then~$v_1$ has an outgoing stack edge~$v_1 w$ and an outgoing normal edge~$v_1 w'$
such that~$v_1 v_2$ is the second outgoing edge of~$v_1$
either in clockwise or in counterclockwise order after~$v_1 w$ and after~$v_1 w'$ since the neighbors of~$v_1$
alternate between normal vertices and stack vertices. If we stack a normal vertex at~$w$,
a stack vertex~$u$ is stacked into the face~$v_1 w w'$ of~$\sG$.
Since the edge~$u v_1$ is incoming at~$v_1$, we are in the second case, again.

Now let us consider the case that~$v_n$ is an inner normal vertex and that we
want to extend~$P$ by a normal edge~$v_n v_{n+1}$. Due to Definition~\ref{def:paths} (second
and third case) there is exactly one outgoing edge~$v_n v_{n+1}$ of~$v_n$ such that
the extended walk~$v_1,\dotsc,v_{n+1}$ is a subwalk of a walk defined in Definition~\ref{def:paths}.
If~$v_{n+1}$ is a normal vertex, we are done. Otherwise we stack a normal vertex at~$v_{n+1}$
and are also done.
\end{proof}

If~$P$ is a path with designated start vertex~$s$, then at an inner
vertex~$v$ we can distinguish edges on the right side of~$P$ and edges
on the left side of~$P$.  We define~$\ro_{P}(v)$ and $\lo_{P}(v)$ to
be the number of right and left outgoing edges from path~$P$
at vertex~$v$, respectively.

\begin{lemma} \label{lem:shortcut_outgoing_edges} Let $P'$ be a
  shortcut walk of length~$\ell$ that does not start and does not end
  with a shortcut edge.  Then the number of edges pointing from the
  inner vertices of~$P'$ to the right (left) of~$P'$
  is~$2(\ell-1)$.
\end{lemma}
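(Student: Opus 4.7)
My plan is to do a case analysis at each inner vertex of $P'$ and sum, showing the contributions balance to $2(\ell-1)$ on each side.

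First I would observe that $P'$ has $\ell-1$ inner vertices, all normal. For each inner vertex $v$ I want to count the outgoing edges of $v$ (in $\alpha_5$) lying strictly on the right arc of $P'$ at $v$, and symmetrically for the left arc. Each such $v$ falls into one of four local cases depending on whether its incoming and outgoing path edges of $P'$ are shortcut edges: (a) neither, (b) only incoming, (c) only outgoing, (d) both. The hypothesis that $P'$ does not start or end with a shortcut is exactly what guarantees that both endpoints of every shortcut edge are inner vertices of $P'$, so each shortcut edge contributes a parent and a child summand to the cancellation below.

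The key geometric fact I would use is that whenever a shortcut edge $xv$ arose from $x,s,v$ in the original walk, the positions of $xv$ and $sv$ at $v$ are adjacent in the cyclic order of edges at $v$: both lie in the angular sector of the face of $s$, with no other edge of $v$ between them. So the shortcut shifts the arc boundary at $v$ by exactly one position; the direction (CW or CCW) is determined by whether the walk used $s$'s left or right branch.

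I would then compute two ingredients. First, the base counts for the original walk $P$: at a Case~2 vertex of $P$, the incoming $e\in B_i$ and the outgoing $e'=o_i$ both have color $i$, and the extended cyclic order $B_1,o_4,B_2,o_5,\dots$ at $v$ puts exactly two outgoing edges in each of the two arcs between $e$ and $e'$; at a Case~3 vertex entered from a stack $s$, the exit $e_1''=o_{c_s+1}$ or $e_2''=o_{c_s-1}$ gives the asymmetric splits $(r,l)=(1,3)$ or $(3,1)$ according to branch. Second, the per-shift adjustments: each shift changes the count by $0$ or $\pm 1$ depending on whether the edge that crosses the boundary is outgoing at $v$ in $\alpha_5$; note in particular that stack edges are incoming at the child endpoint but outgoing at the parent endpoint, since the parent is the ``missing vertex'' of that face in the sense of Lemma~\ref{lem:missing_edge}.

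The main technical work, which I expect to be the crux, is the global cancellation. For each shortcut edge $e=v_iv_{i+1}$ with branch $b_e\in\{L,R\}$, I would show that the adjustments at its parent $v_i$ and its child $v_{i+1}$ together contribute $+1$ to $\sum_v r_v$ if $b_e=L$ and $-1$ if $b_e=R$; the key input is that the edge $v_iv_{i+1}$ is oriented exactly one way in $\alpha_5$, so the ``$v_i$ outgoing'' indicator at $v_i$ plus the ``$v_{i+1}$ outgoing'' indicator at $v_{i+1}$ sums to $1$. Combined with the base sum $2n_2+L+3R$, where $n_2=\ell-1-L-R$ counts the Case~2 inner vertices in $P$ and $L$, $R$ count the left- and right-branch shortcuts, this gives $\sum_v r_v = 2n_2+L+3R+(L-R) = 2(\ell-1-L-R)+2L+2R = 2(\ell-1)$, and the symmetric computation gives $\sum_v l_v=2(\ell-1)$.
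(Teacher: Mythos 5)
Your argument is correct, and the local facts you rely on are exactly the ones the paper uses: at a vertex entered by a normal edge the walk rule of Definition~\ref{def:paths} splits the five outgoing edges $(2,2)$; at a vertex entered through a stack vertex the split is $(1,3)$ or $(3,1)$ according to the branch; and replacing the two stack edges by the shortcut edge moves each sector boundary by exactly one position, costing $0$ or $\pm1$ according to the orientation of the edge that crosses the boundary. I checked your bookkeeping against the paper's: your parent/child adjustments are precisely the quantities $\sigma\delta$ and $2-\sigma\delta$ of the paper's Claim~1 (see Fig.~\ref{fig:shortcut}), your left-branch total $(-1)+(+1)=0$ and right-branch total $(+1)+(-1)=0$ relative to the all-$2$ baseline match, and the final sum $2n_2+L+3R+(L-R)=2(\ell-1)$ is right. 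One small presentational caveat: the ``exactly one endpoint is outgoing'' fact for the edge $v_iv_{i+1}$ only drives the $-1$ of the right branch; the $+1$ of the left branch comes instead from the stack edge entering the right sector at the parent (where it is outgoing) and at the child (where it is incoming) --- you state both facts, but the phrase ``the key input'' slightly misattributes which one does what.

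Where you differ from the paper is in the global organization. The paper proves the lemma by induction on $\ell$, peeling off the last edge; to apply the induction hypothesis to the truncated walk ending in a stack edge it must first convert the stack vertex into a normal one (Lemma~\ref{lemma:extending_by_normal_edge}), and then it accounts for the two affected vertices via the signs $\sigma,\delta$. You instead sum the same local contributions over all inner vertices at once and let the deviations from the baseline telescope shortcut by shortcut. Your version avoids the vertex-stacking device entirely within this proof and makes the arithmetic behind the constant $2(\ell-1)$ transparent; the paper's inductive formulation has the advantage that its Claim~1 is reused verbatim for the endpoint corrections in Lemma~\ref{lem:general_shortcut_outgoing_edges}, whereas with your organization those boundary cases would need a separate (though analogous) local analysis.
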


\begin{proof}
  Since right and left are symmetric, it is enough to prove the lemma
  for right outgoing edges.  The proof is by induction on the
  length~$\ell$ of the walk.
  If~$\ell=2$, the shortcut walk is equal to the original walk and
  consists of two normal edges since by assumption the shortcut walk does not start and does not
  end with a shortcut edge. Due to Definition~\ref{def:paths}~(second case)
  this path has~$2 = 2 (\ell - 1)$ outgoing edges on either side.
  Now let~$\ell \geq 3$ and let~$P'=v_0,e_1,v_1,\dotsc,e_{\ell},v_{\ell}$.  If~$e_{\ell-1}$ is
  not a shortcut edge, the statement follows by induction because the
  vertex between two consecutive normal edges of a walk
  in~$\mathcal{P}(e)$ has~$2$ outgoing edges on either side
  by Definition~\ref{def:paths}~(second case).

  Now assume that~$e_{\ell-1}$ is a shortcut edge and
  let~$e'=v_{\ell-2}w,e''=wv_{\ell-1}$ be the corresponding edges of
  the original path~$P$.  Let~$Q'$ be the subwalk of~$P'$ starting
  at~$v_0$ and ending at~$v_{\ell-2}$ extended by the edge
  $v_{\ell-2}w$.  Note that we can apply the induction hypothesis
  to~$Q'$ by stacking a normal vertex at~$w$.  Hence, we know
  that the number of edges pointing from an inner vertex of~$Q'$ to the right is~$2
  (\ell-2)$.

  The edges~$v_{\ell-2}w$ and~$v_{\ell-2}v_{\ell-1}$ are consecutive
  in the cyclic order of incident edges of~$v_{\ell-2}$.  Define a
  sign~$\sigma$ to be~$-1$ if~$v_{\ell-2}v_{\ell-1}$ is right
  of~$v_{\ell-2}w$ and~$+1$ otherwise.  Let~$\delta = 0$
  if~$\sigma=-1$ and~$v_{\ell-2}v_{\ell-1}$ is outgoing
  at~$v_{\ell-1}$, otherwise~$\delta=1$.  When comparing outgoing
  edges at~$v_{\ell-2}$ in~$P'$ and~$Q'$ the contribution of~$\delta$
  will account for the edge~$v_{\ell-2}v_{\ell-1}$ if~$\sigma=-1$
  (if~$\delta=1$ and~$\sigma=-1$, the edges pointing from~$v_{\ell-2}$ to the right of~$P'$
  are exactly the edges pointing from~$v_{\ell-2}$ to the right of~$Q'$ minus the edge~$v_{\ell-2}v_{\ell-1}$) and
  for the edge~$v_{\ell-2}w$ if~$\sigma=+1$ (if~$\delta=1$ and~$\sigma=+1$, the edges pointing from~$v_{\ell-2}$ to the right of~$P'$
  are exactly the edges pointing from~$v_{\ell-2}$ to the right of~$Q'$ plus the edge~$v_{\ell-2}w$).  It follows
  that~$\ro_{P'}(v_{\ell-2}) = \ro_{Q'}(v_{\ell-2}) + \sigma \delta$,
  see Fig.~\ref{fig:shortcut}.
%%%%%%%%%%%%%%%%%%%%%%%%%%%%%%%%%%%%%%%%%%%%%%%%%%%%%%%%%%%%%%%%%%%%%
%%%%%%%%%%%%%%%%%%%%%%%%%%%%%%%%%%%%%%%%%%%%%%%%%%%%%%%%%%%%%%%%%%%%%
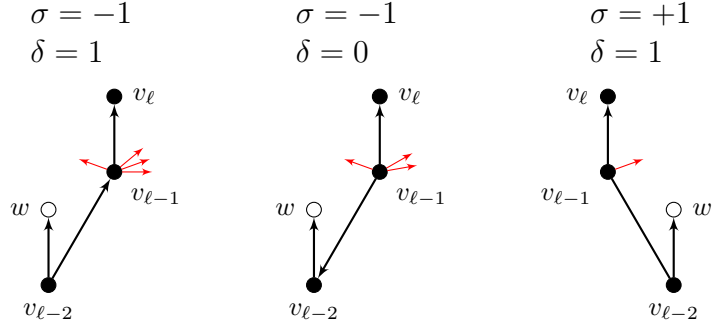
\begin{figure}

\centering

\tikzstyle{vertex}=[circle,scale=0.5]
\tikzstyle{normal vertex}=[vertex,draw,fill]
\tikzstyle{stack vertex}=[vertex,draw]
\tikzstyle{edge}=[-latex',thick]
\tikzstyle{undirected edge}=[thick]
\tikzstyle{other edges}=[-latex',color=red]

\begin{tikzpicture}[scale=1]

\node[normal vertex] (v2) at (0,0) [label={below}: \footnotesize $v_{\ell-2}$] {};
\node[stack vertex] (w) at (90:1) [label={left}: \footnotesize $w$] {};
\node[normal vertex] (v1) at ($(w)+(30:1)$) [label={below right}: \footnotesize $v_{\ell-1}$] {};
\node[normal vertex] (v0) at ($(v1)+(90:1)$) [label={right}: \footnotesize $v_{\ell}$] {};

\draw[edge] (v2) -- (w);
\draw[edge] (v2) -- (v1);
\draw[edge] (v1) -- (v0);

\node[align=left] at ($(90:3.1)+(30:.5)$) {$\sigma=-1$\\$\delta=1$};

\draw[other edges] (v1) -- +(160:.5);
\draw[other edges] (v1) -- +(0:.5);
\draw[other edges] (v1) -- +(20:.5);
\draw[other edges] (v1) -- +(40:.5);

\end{tikzpicture}
\qquad
\begin{tikzpicture}[scale=1]

\node[normal vertex] (v2) at (0,0) [label={below}: \footnotesize $v_{\ell-2}$] {};
\node[stack vertex] (w) at (90:1) [label={left}: \footnotesize $w$] {};
\node[normal vertex] (v1) at ($(w)+(30:1)$) [label={below right}: \footnotesize $v_{\ell-1}$] {};
\node[normal vertex] (v0) at ($(v1)+(90:1)$) [label={right}: \footnotesize $v_{\ell}$] {};

\draw[edge] (v2) -- (w);
\draw[edge] (v1) -- (v2);
\draw[edge] (v1) -- (v0);

\node[align=left] at ($(90:3.1)+(30:.5)$) {$\sigma=-1$\\$\delta=0$};

\draw[other edges] (v1) -- +(160:.5);
\draw[other edges] (v1) -- +(10:.5);
\draw[other edges] (v1) -- +(30:.5);

\end{tikzpicture}
\qquad
\begin{tikzpicture}[scale=1]

\node[normal vertex] (v2) at (0,0) [label={below}: \footnotesize $v_{\ell-2}$] {};
\node[stack vertex] (w) at (90:1) [label={right}: \footnotesize $w$] {};
\node[normal vertex] (v1) at ($(w)+(150:1)$) [label={below left}: \footnotesize $v_{\ell-1}$] {};
\node[normal vertex] (v0) at ($(v1)+(90:1)$) [label={left}: \footnotesize $v_{\ell}$] {};

\draw[edge] (v2) -- (w);
\draw[undirected edge] (v2) -- (v1);
\draw[edge] (v1) -- (v0);

\node[align=left] at ($(90:3.1)+(150:.5)$) {$\sigma=+1$\\$\delta=1$};

\draw[other edges] (v1) -- +(20:.5);

\end{tikzpicture}

\caption{Right outgoing edges at~$v_{\ell-2}$ and~$v_{\ell-1}$.}

\label{fig:shortcut}

\end{figure}
%%%%%%%%%%%%%%%%%%%%%%%%%%%%%%%%%%%%%%%%%%%%%%%%%%%%%%%%%%%%%%%%%%%%%
%%%%%%%%%%%%%%%%%%%%%%%%%%%%%%%%%%%%%%%%%%%%%%%%%%%%%%%%%%%%%%%%%%%%%

\begin{myclaim} \label{claim:right_out}
  $\ro_{P'}(v_{\ell-1})= 2 - \sigma \delta$. 
\end{myclaim}

\begin{claimproof}
  If~$\sigma=+1$, then we are in the~$v_1'$ case of the stack vertex part
  in Definition~\ref{def:paths}, i.e., between~$v_{\ell-2}v_{\ell-1}$
  and~$v_{\ell-1}v_\ell$ there is one outgoing edge on the right
  at~$v_{\ell-1}$.  The claim follows because in this case~$\delta=1$.

  If~$\sigma=-1$, then we are in the~$v_2'$ case of the stack vertex
  part in Definition~\ref{def:paths}, i.e., between~$v_{\ell-2}v_{\ell-1}$
  and~$v_{\ell-1}v_\ell$ there is one outgoing edge on the left
  at~$v_{\ell-1}$.  Now it depends on whether the
  edge~$v_{\ell-2}v_{\ell-1}$ is outgoing at~$v_{\ell-2}$ or
  at~$v_{\ell-1}$.  In the first case we have~$\delta=1$
  and~${\ro_{P'}(v_{\ell-1}) = 3}$, in the second case~$\delta=0$
  and~${\ro_{P'}(v_{\ell-1}) = 2}$.  In either case this is what the
  claim says.
\end{claimproof}

The number of right outgoing edges of~$P'$ is obtained as the sum of
those of~$Q'$, which is~$2 (\ell-2)$, with~$\ro_{P'}(v_{\ell-2}) -
\ro_{Q'}(v_{\ell-2})$, which is~$\sigma \delta$,
and~$\ro_{P'}(v_{\ell-1})$, which is~$2 - \sigma \delta$ by Claim~\ref{claim:right_out}.
Hence, the number of right outgoing edges of~$P'$ is~$2 (\ell-1)$.
\end{proof}

\begin{lemma} \label{lem:general_shortcut_outgoing_edges} Let $P'$ be a
  shortcut walk of length~$\ell$ (that might start or end
  with a shortcut edge).  Then the number of edges pointing from the
  inner vertices of~$P'$ to the right (left) of~$P'$
  is~$2(\ell-1)+\mu$ with~$-2 \leq \mu \leq 2$.
\end{lemma}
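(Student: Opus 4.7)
My plan is to reduce the statement to Lemma~\ref{lem:shortcut_outgoing_edges} by extending $P$ at each end where $P'$ begins or ends with a shortcut edge. Set $a \in \{0,1\}$ according to whether the first edge of $P'$ is a shortcut edge, and define $b \in \{0,1\}$ analogously for the last edge. If $a = b = 0$, Lemma~\ref{lem:shortcut_outgoing_edges} applies directly with $\mu = 0$, so I may assume $a + b \geq 1$.

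First I would observe that a shortcut edge at the start forces $v_0$ to be an inner normal vertex, since by Definition~\ref{def:paths} every walk in $\mathcal{P}(e)$ reaches its unique outer endpoint only via a normal edge; the same holds at the other end when $b = 1$. By Lemma~\ref{lemma:extending_by_normal_edge} (possibly after stacking normal vertices) I can then prepend a normal edge $v_{-1}v_0$ if $a = 1$ and append a normal edge $v_\ell v_{\ell+1}$ if $b = 1$, producing an extended walk $\tilde P$ whose shortcut $\tilde P'$ has length $\ell + a + b$ and, by construction, neither starts nor ends with a shortcut edge. Lemma~\ref{lem:shortcut_outgoing_edges} applied to $\tilde P'$ then yields $\ro_{\tilde P'} = 2(\ell+a+b-1)$.

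Since the inner vertices of $\tilde P'$ are exactly those of $P'$ augmented by $\{v_0\}$ (if $a = 1$) and $\{v_\ell\}$ (if $b = 1$), and the two walk edges and all five outgoing edges at every other inner vertex agree between $P'$ and $\tilde P'$, I obtain
\[
\mu \;=\; \ro_{P'} - 2(\ell-1) \;=\; a\bigl(2 - \ro_{\tilde P'}(v_0)\bigr) + b\bigl(2 - \ro_{\tilde P'}(v_\ell)\bigr).
\]

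The hard part will be to show $\ro_{\tilde P'}(v_0), \ro_{\tilde P'}(v_\ell) \in \{1,2,3\}$. At $v_\ell$ this is precisely the situation already analyzed at $v_{\ell-1}$ in the proof of Lemma~\ref{lem:shortcut_outgoing_edges} (walk enters via a shortcut edge, exits via a normal edge), and Claim~\ref{claim:right_out} there gives $\ro_{\tilde P'}(v_\ell) = 2 - \sigma\delta \in \{1,2,3\}$. At $v_0$ the roles of the incoming and outgoing walk edges are swapped, but the mirror argument applies: Lemma~\ref{lemma:extending_by_normal_edge} ensures that the stack edge $v_0 s$, which is the preimage in $\tilde P$ of the outgoing shortcut edge, is the opposite outgoing edge of $v_{-1}v_0$ at $v_0$; and the normal edge $v_0 v_1$ is cyclically adjacent to $v_0 s$ in $\sG$ at $v_0$ because both lie on the boundary of the triangular face of $G$ containing $s$. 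A case distinction on which of the two slots neighbouring $v_0 s$ is occupied by $v_0 v_1$, together with whether $v_0 v_1$ is outgoing in the $\alpha_5$-orientation, produces $\ro_{\tilde P'}(v_0) \in \{1,2,3\}$ as well. Each summand above therefore lies in $\{-1,0,1\}$, and so $|\mu| \le a + b \le 2$, as required.
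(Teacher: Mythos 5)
Your proposal is correct and follows essentially the same route as the paper: extend the walk by a normal edge at the problematic end(s) via Lemma~\ref{lemma:extending_by_normal_edge}, apply Lemma~\ref{lem:shortcut_outgoing_edges} to the extended shortcut, and then bound the endpoint contributions $\ro_{\tilde P'}(v_0),\ro_{\tilde P'}(v_\ell)\in\{1,2,3\}$ using the case analysis of Claim~\ref{claim:right_out} and its mirror image. The only cosmetic difference is that the paper always extends at both ends (absorbing the normal-edge case into the same case distinction), whereas you extend only where a shortcut edge actually occurs.
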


\begin{proof}
  Because of symmetry it is enough to proof the lemma for the right
  outgoing edges of~$P'$.
  Let~$P$ be the original walk. Further let~$Q$ be the walk~$P$
  extended by a normal edge at both ends (possibly after stacking normal vertices). We denote the shortcut of~$Q$
  by~$Q'=v_{-1},e_0,v_0,e_1,v_1,\dotsc,e_{\ell},v_{\ell},e_{\ell+1},v_{\ell+1}$.
  Due to Lemma~\ref{lem:shortcut_outgoing_edges} there are exactly~$2 (\ell + 1)$ edges
  pointing from the inner vertices of~$Q'$ to the right of~$Q'$.
  From Claim~\ref{claim:right_out} in the proof of Lemma~\ref{lem:shortcut_outgoing_edges}
  (see also Fig.~\ref{fig:shortcut}) it follows that there are~$\mu_1 \in \{ 1,2,3 \}$
  edges pointing from~$v_{\ell}$ to the right of~$Q'$. With a similar case distinction
  (see Fig.~\ref{fig:shortcut_start}) we can see that there are~$\mu_2 \in \{ 1,2,3 \}$ edges pointing from~$v_0$
  to the right of~$Q'$. Therefore there are exactly~$2 (\ell + 1) - \mu_1 - \mu_2 = 2 (\ell - 1) + (4-\mu_1-\mu_2)$
  edges pointing from the inner vertices of~$P'$ to the right of~$P'$. Since~$-2 \leq 4-\mu_1-\mu_2 \leq 2$,
  this completes the proof.
\end{proof}

%%%%%%%%%%%%%%%%%%%%%%%%%%%%%%%%%%%%%%%%%%%%%%%%%%%%%%%%%%%%%%%%%%%%%
%%%%%%%%%%%%%%%%%%%%%%%%%%%%%%%%%%%%%%%%%%%%%%%%%%%%%%%%%%%%%%%%%%%%%
\begin{figure}

\centering

\tikzstyle{vertex}=[circle,scale=0.5]
\tikzstyle{normal vertex}=[vertex,draw,fill]
\tikzstyle{stack vertex}=[vertex,draw]
\tikzstyle{edge}=[-latex',thick]
\tikzstyle{undirected edge}=[thick]
\tikzstyle{other edges}=[-latex',color=red]

\begin{tikzpicture}[scale=1]

\node[normal vertex] (v2) at (0,0) [label={below right}: \footnotesize $v_{0}$] {};
\node[stack vertex] (w) at (90:1) [label={left}: \footnotesize $w$] {};
\node[normal vertex] (v1) at ($(w)+(30:1)$) [label={below right}: \footnotesize $v_{1}$] {};
\node[normal vertex] (v0) at ($(v2)+(270:1)$) [label={right}: \footnotesize $v_{-1}$] {};

\draw[edge] (v2) -- (w);
\draw[edge] (w) -- (v1);
\draw[edge] (v2) -- (v1);
\draw[edge] (v0) -- (v2);

\draw[other edges] (v2) -- +(10:.5);

\end{tikzpicture}
\qquad
\begin{tikzpicture}[scale=1]

\node[normal vertex] (v2) at (0,0) [label={below right}: \footnotesize $v_{0}$] {};
\node[stack vertex] (w) at (90:1) [label={left}: \footnotesize $w$] {};
\node[normal vertex] (v1) at ($(w)+(30:1)$) [label={below right}: \footnotesize $v_{1}$] {};
\node[normal vertex] (v0) at ($(v2)+(270:1)$) [label={right}: \footnotesize $v_{-1}$] {};

\draw[edge] (v2) -- (w);
\draw[edge] (w) -- (v1);
\draw[edge] (v1) -- (v2);
\draw[edge] (v0) -- (v2);

\draw[other edges] (v2) -- +(0:.5);
\draw[other edges] (v2) -- +(20:.5);

\end{tikzpicture}
\qquad
\begin{tikzpicture}[scale=1]

\node[normal vertex] (v2) at (0,0) [label={below right}: \footnotesize $v_{0}$] {};
\node[stack vertex] (w) at (90:1) [label={right}: \footnotesize $w$] {};
\node[normal vertex] (v1) at ($(w)+(150:1)$) [label={below left}: \footnotesize $v_{1}$] {};
\node[normal vertex] (v0) at ($(v2)+(270:1)$) [label={left}: \footnotesize $v_{-1}$] {};

\draw[edge] (v2) -- (w);
\draw[edge] (w) -- (v1);
\draw[undirected edge] (v2) -- (v1);
\draw[edge] (v0) -- (v2);

\draw[other edges] (v2) -- +(10:.5);
\draw[other edges] (v2) -- +(30:.5);

\end{tikzpicture}
\qquad
\begin{tikzpicture}[scale=1]

\node[normal vertex] (v2) at (0,0) [label={below right}: \footnotesize $v_{0}$] {};
\node[normal vertex] (v1) at ($(v2)+(90:1)$) [label={below left}: \footnotesize $v_{1}$] {};
\node[normal vertex] (v0) at ($(v2)+(270:1)$) [label={left}: \footnotesize $v_{-1}$] {};

\draw[edge] (v2) -- (v1);
\draw[edge] (v0) -- (v2);

\draw[other edges] (v2) -- +(10:.5);
\draw[other edges] (v2) -- +(30:.5);

\end{tikzpicture}

\caption{Right outgoing edges at~$v_{0}$.}

\label{fig:shortcut_start}

\end{figure}
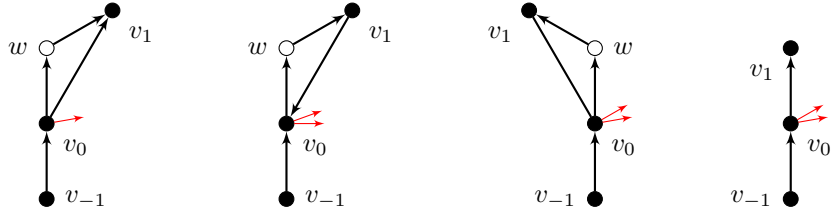
%%%%%%%%%%%%%%%%%%%%%%%%%%%%%%%%%%%%%%%%%%%%%%%%%%%%%%%%%%%%%%%%%%%%%
%%%%%%%%%%%%%%%%%%%%%%%%%%%%%%%%%%%%%%%%%%%%%%%%%%%%%%%%%%%%%%%%%%%%%

Now we are ready to prove Lemma~\ref{lemma:path_properties}.

\begin{proof}[Proof of Lemma~\ref{lemma:path_properties}]
  For~(\ref{item:no_cycle}) assume that~$P$ cycles. Let~$C$ be a
  simple cycle which appears as a subwalk of the shortcut of~$P$, and
  let~$\ell$ be its length.  According to
  Lemma~\ref{lem:general_shortcut_outgoing_edges}, there are at least~$2(\ell-1) - 2$
  edges pointing into the interior of~$C$.  This is in contradiction
  to Lemma~\ref{lem:number_edges_into_cycle} which states that there are
  only~$2\ell-5$ edges pointing into the interior of~$C$.

  \medskip For~(\ref{item:same_end_vertex}) assume that~$P_1$
  and~$P_2$ coincide up to a vertex~$v^*$, then~$P_1$ goes to the left
  and~$P_2$ to the right.  Note that~$v^*$ has to be a stack vertex
  and let~$v$ and~$v'$ be its predecessors in~$P_1$ and~$P_2$, i.e.,
  $v',v,v^*$ appear in this order on both paths.  If~$P_1$ and~$P_2$ start
  in~$v$, we can use a dummy edge~$v'v$ (possibly after stacking normal vertices) which makes sure
  that~$\ro_{P_1}(v)=\lo_{P_2}(v)= 2$.  For~$i=1,2$ let~$P_i'$ be the
  shortcut of~$P_i$.

  \begin{myclaim} \label{claim:count_at_v} $\ro_{P'_1}(v)+\lo_{P'_2}(v)
    = 6$.
\end{myclaim}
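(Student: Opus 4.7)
The plan is an inclusion--exclusion count over the five outgoing edges of $v$ in the $\alpha_5$-orientation. Since $v$ is an inner normal vertex, $\alpha_5(v)=5$, so obtaining the value $6$ forces exactly one outgoing edge to lie simultaneously in the right sector of $P'_1$ and the left sector of $P'_2$. My target is to show that this doubly counted edge is precisely $vv^*$.

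The first step is to pin down the local rotation around $v$. Because the stack vertex $v^*$ sits inside the face $vv_1'v_2'$ of $G$, the three edges $vv_1'$, $vv^*$, $vv_2'$ are consecutive in the cyclic order at $v$, with $vv^*$ in the middle. Combined with the left/right convention of Definition~\ref{def:paths} (so that $v_1'$ is on the walker's left and $v_2'$ on the walker's right when $v^*$ is entered from $v$), the clockwise cyclic order at $v$ takes the form
\[
v_{\mathrm{in}},\ E_R,\ vv_2',\ vv^*,\ vv_1',\ E_L,
\]
where $v_{\mathrm{in}}$ denotes the incoming edge common to $P'_1$ and $P'_2$ at $v$, and $E_R$, $E_L$ are the two remaining cyclic arcs. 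Note that $vv^*$ is directed from $v$ to $v^*$ in $\alpha_5$, since both walks traverse it in that direction.

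From this picture the two sectors read off immediately: the \emph{right of $P'_1$} region at $v$ is the clockwise arc from $v_{\mathrm{in}}$ to $vv_1'$ with endpoints excluded, namely $E_R\cup\{vv_2',vv^*\}$, while the \emph{left of $P'_2$} region is the counterclockwise arc from $v_{\mathrm{in}}$ to $vv_2'$ with endpoints excluded, namely $E_L\cup\{vv_1',vv^*\}$. Their union covers every edge incident to $v$ except $v_{\mathrm{in}}$ itself, and their intersection is exactly $\{vv^*\}$. Inclusion--exclusion applied to the set of outgoing edges of $v$ then gives
\[
\operatorname{r-out}_{P'_1}(v)+\operatorname{l-out}_{P'_2}(v)=5+1=6,
\]
provided all five outgoing edges of $v$ actually lie in the union, i.e.\ provided $v_{\mathrm{in}}$ is directed toward $v$ in $\alpha_5$.

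The only place I expect to need care is this last proviso. If $v'$ is a normal vertex of $G$, then $v_{\mathrm{in}}=v'v$ is a normal edge traversed in its $\alpha_5$-direction, so nothing is to be checked; in the dummy case the dummy $v'v$ is chosen precisely so that $vv^*$ becomes the opposite outgoing of $v_{\mathrm{in}}$ at $v$, and in particular $v_{\mathrm{in}}$ is incoming in $\alpha_5$. In the remaining case where $v'$ is a stack vertex, $v_{\mathrm{in}}$ would be a shortcut edge of a priori uncontrolled orientation, and one first invokes the technique of Lemma~\ref{lemma:extending_by_normal_edge} to stack a normal vertex and replace it by an honest incoming normal edge. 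With that preparation the inclusion--exclusion step above applies unchanged and closes the claim.
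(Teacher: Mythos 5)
Your argument is correct and is essentially the paper's own proof: both count the five outgoing edges of $v$, observe that the two sectors together cover every edge at $v$ except the common incoming edge while overlapping exactly in $vv^*$ (which is outgoing), and conclude $5+1=6$. You are in fact somewhat more careful than the paper, which dismisses the only delicate point with "except possibly the edge $vv'$"; your explicit verification that the entering edge is genuinely incoming (stacking a normal vertex when the predecessor is a stack vertex) is exactly the device the paper uses elsewhere for this purpose.
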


\begin{claimproof}
  Every outgoing edge of~$v$ except possibly the edge~$vv'$ is a right
  edge with respect to~$P_1'$ or a left edge with respect to~$P_2'$.
  The edge from~$v$ to~$v^*$ is both and therefore
  counted twice, see Fig.~\ref{fig:overcount}
  (left).
\end{claimproof}
%%%%%%%%%%%%%%%%%%%%%%%%%%%%%%%%%%%%%%%%%%%%%%%%%%%%%%%%%%%%%%%%%%
\begin{figure}
\centering
\includegraphics[]{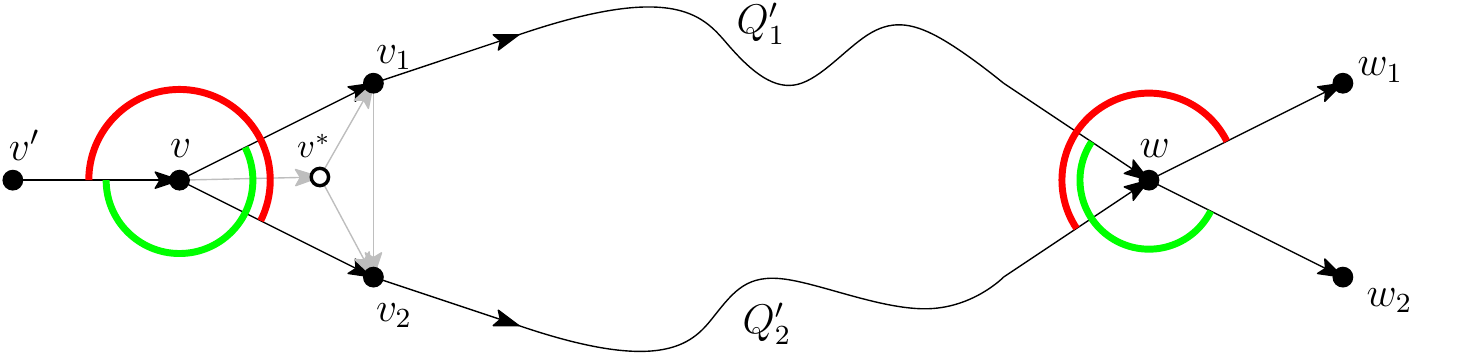}
\caption{Overcounts at the two ends of the paths~$Q_1'$ and~$Q_2'$.
Edges in the green area are counted as right edges of~$Q_1'$, 
those in the red area as left edges of~$Q_2'$.}
\label{fig:overcount}
\end{figure}
%%%%%%%%%%%%%%%%%%%%%%%%%%%%%%%%%%%%%%%%%%%%%%%%%%%%%%%%%%%%%%%%%%

\begin{myclaim}
  If after splitting at~$v$ the two shortcut paths~$P'_1$ and~$P'_2$
  meet at some vertex~$w$ which is not an outer vertex, then the first
  vertex~$w'$ after~$w$ is the same on~$P_1$ and~$P_2$.
\end{myclaim}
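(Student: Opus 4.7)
The plan is to argue by contradiction, assuming that the vertices $w_1'$ and $w_2'$ (respectively, the first vertex after $w$ on $P_1$ and on $P_2$) are distinct. Because $w$ is a normal non-outer vertex, the second case of Definition~\ref{def:paths} makes the continuation edge at $w$ a function of the entering edge; thus $w_1' \neq w_2'$ means $P_1$ and $P_2$ enter $w$ via different edges, and therefore the shortcut paths $P_1'$ and $P_2'$ also enter $w$ via different edges. After replacing $w$ by the first vertex of $P_1' \cap P_2'$ that comes after the most recent point at which the two paths last agreed (still call this the split vertex $v$), the portions $Q_1'$ and $Q_2'$ of the shortcut paths from $v$ to $w$ are internally vertex-disjoint and form a simple cycle $C$ of length $\ell = \ell_1 + \ell_2$ whose vertices are all normal.

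I would then apply Lemma~\ref{lem:number_edges_into_cycle} to $C$ and decompose the resulting count of $2\ell - 5$ edges pointing from $C$ into its interior as a sum of four contributions: from the inner vertices of $Q_1'$, from the inner vertices of $Q_2'$, from $v$, and from $w$. By Lemma~\ref{lem:general_shortcut_outgoing_edges}, the first two contributions are at least $2\ell_1 - 4$ and $2\ell_2 - 4$ (using that the interior of $C$ lies on the right of $Q_1'$ and on the left of $Q_2'$). At $v$, the reasoning behind Claim~\ref{claim:count_at_v} produces the overcount identity $\ro_{Q_1'}(v) + \lo_{Q_2'}(v) = 6$ with one edge double-counted, which means that all five outgoing edges of $v$ lie in the interior wedge of $C$ and contribute $5$ edges into the interior.

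The key step will be to perform the analogous overcount at $w$: the two distinct continuation edges $ww_1'$ and $ww_2'$ take on the role at $w$ that the two shortcut edges $vv_1'$ and $vv_2'$ leaving $v$ play at the split, while a suitable incident edge at $w$ determined by the local continuation rule plays the role of $vv^*$ as the double-counted edge. Carrying this out yields $\ro(w) + \lo(w) = 6$ in the time-reversed sense and hence a contribution of $5$ at $w$. Summing the four contributions gives at least $5 + 5 + (2\ell_1 - 4) + (2\ell_2 - 4) = 2\ell + 2$, contradicting the value $2\ell - 5$ from Lemma~\ref{lem:number_edges_into_cycle}.

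The main obstacle is to make the dual counting at $w$ rigorous. At the split vertex the geometry is clean: a single outgoing stack edge is shortcut-replaced by the two boundary edges of the cycle. At $w$, by contrast, the edges $ww_1'$ and $ww_2'$ need not both be normal or both be stack edges of $\sG$, and need not lie in a common face of $G$. A small case analysis, driven by applying Definition~\ref{def:paths} at the last step into $w$ in each of the two original walks, is required to pin down the cyclic position of the continuation edges relative to the two entering shortcut edges and to verify that each of the five outgoing edges of $w$ really does fall into the interior wedge of $C$.
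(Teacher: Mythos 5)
Your overall strategy---close up the two shortcut paths into a cycle $C$, count edges pointing into its interior, and play this off against Lemma~\ref{lem:number_edges_into_cycle}---is exactly the paper's, but the accounting at the split vertex $v$ is reversed, and this breaks the argument. At $v$ the interior of $C$ lies in the wedge between the two shortcut edges $vv_1'$ and $vv_2'$, and since $v,v_1',v_2'$ bound the single face of $G$ containing the stack vertex $v^*$, the only edge of $\sG$ in that wedge is $vv^*$: exactly \emph{one} edge points from $v$ into the interior, not five. The identity $\ro_{Q_1'}(v)+\lo_{Q_2'}(v)=6$ of Claim~\ref{claim:count_at_v} is an \emph{overcount} of $5$ relative to that single interior edge (the other outgoing edges of $v$ are right of $Q_1'$ or left of $Q_2'$ but point \emph{outside} $C$). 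With the corrected contribution of $1$ at $v$, and with the $\pm2$ slack you take from Lemma~\ref{lem:general_shortcut_outgoing_edges}, your lower bound becomes $1+c_w+(2\ell_1-4)+(2\ell_2-4)=2\ell-7+c_w$, which contradicts $2\ell-5$ only if the contribution $c_w$ at $w$ is at least $3$ --- and that is precisely the unproven step you defer to a "small case analysis."

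That deferred step is where all the content of the claim lives, and it cannot be recovered as an inequality: you need the exact counts. The paper's proof extends each $Q_i'$ by one \emph{normal} edge at both ends (stacking normal vertices where necessary) so that the exact Lemma~\ref{lem:shortcut_outgoing_edges} applies with no slack, then writes (right edges of $Q_1'$) $+$ (left edges of $Q_2'$) $=$ (edges into $C$) $+$ (overcount at $v$) $+$ (overcount at $w$) and solves for the overcount at $w$, obtaining exactly $4$. Since the overcount at $w$ can reach $4$ only when every outgoing edge of $w$ except one is double-counted, this forces the two continuation edges $ww_1$ and $ww_2$ to coincide --- a direct deduction, not a contradiction with a hypothetical contribution of $5$ at $w$. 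So the proposal identifies the right cycle and the right counting lemma, but the sign error at $v$, the lossy lemma, and the missing analysis at $w$ each constitute a real gap.
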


\begin{claimproof}
For~$i=1,2$ let~$Q_i'$ be the subpath of~$P'_i$ starting
at~$v$ and ending at~$w$.  At the beginning these paths are extended
by a normal edge~$v'v$ (possibly after stacking normal vertices). At the end the paths are
extended by the successor~$w_i$ of~$w$ on~$P_i$.
If~$w_i$ is a stack vertex, we can pretend that it is a normal vertex by stacking a normal vertex at~$w_i$.

Let~$\ell_i$ be the length of~$Q_i'$.  From
Lemma~\ref{lem:shortcut_outgoing_edges} we know that there are
exactly~$2(\ell_1-1)$ edges pointing from~$Q_1'$ to the right and
exactly~$2(\ell_2-1)$ edges pointing from~$Q_2'$ to the left.  Let~$C$
be the cycle formed by the two shortcuts~$Q_1'$ and~$Q_2'$ between~$v$
and~$w$.  The length of~$C$ is~$\ell_1+\ell_2-4$ ($C$ consists of the edges of~$Q_1'$ and~$Q_2'$
except for their first and last edges) and therefore, by
Lemma~\ref{lem:number_edges_into_cycle}, there are
exactly~$2(\ell_1+\ell_2-4)-5$ edges pointing into the interior
of~$C$.

The sum of the right edges of~$Q_1'$ and the left edges of~$Q_2'$
correctly accounts for the edges pointing into the interior of~$C$ at
all vertices except at~$v$ and~$w$.  Claim~\ref{claim:count_at_v} implies that at~$v$ we overcount by exactly~$5$,
i.e., the number of edges pointing from~$Q_1'$ to the right
plus the number of edges pointing from~$Q_2'$ to the left is~$6$,
but at~$v$ only~$1$ edge is pointing into the interior of~$C$.
It follows that the overcount at~$w$ is
\[ \left( 2(\ell_1-1) + 2(\ell_2-1) \right) - \left( 2
(\ell_1+\ell_2-4) - 5 \right) - 5 = 4 \enspace ,
\] where~$2(\ell_1-1) + 2(\ell_2-1)$ is the number
of edges pointing from~$Q_1'$ to the right plus the number of edges pointing
from~$Q_2'$ to the left,~$2(\ell_1+\ell_2-4) - 5$ is the number of edges
pointing into the interior of~$C$ and~$5$ is the overcount at~$v$.
This means that each edge except one contributes to the overcount
at~$w$, see Fig.~\ref{fig:overcount} (right).  Hence,~$ww_1$ and~$ww_2$
have to be identical and~$w'=w_1=w_2$.
\end{claimproof}

Now suppose that~$P_1$ and~$P_2$ end at different outer vertices~$a_i$
and~$a_j$.  Let~$v^*$ be the last common vertex of the two
paths,~$v^*$ has to be a stack vertex.  Assume that~$P_1$ goes to the
left at~$v^*$ and~$P_2$ goes to the right.  Let~$v$ be the predecessor
of~$v^*$ in~$P_1$ and~$P_2$.  Further let~$Q_1$ and~$Q_2$ be the
subpaths of~$P_1$ and~$P_2$ starting at~$v$ and ending at~$a_i$
and~$a_j$, extended by a normal edge~$v'v$ (possibly after stacking normal vertices).
For~$i=1,2$ let~$Q_i'$ be the shortcut of~$Q_i$ and
let~$\ell_i+1$ be the length of~$Q_i'$.  Let~$\ell_3 \geq 2$ be the
length of the path~$P_3$ between~$a_i$ and~$a_j$ that alternates
between outer and inner normal vertices.  Note that~$P_3$
has~$\frac{\ell_3}{2}$ inner vertices and~$\frac{\ell_3}{2}+1$ outer
vertices.  Let~$C$ be the simple cycle in the union of~$Q_1'$,~$Q_2'$
and~$P_3$.  Note that~$P_3$ can have at most one common edge with
each~$Q'_i$.  Let~$\xi\in\{0,1,2\}$ be the number of common edges
of~$P_3$ with~$Q_1'\cup Q_2'$.  The length of~$C$
is~$\ell_1+\ell_2+\ell_3 -2\,\xi$.  If~$\xi=0$ the edges pointing
into~$C$ can be obtained by adding the right edges of~$Q'_1$ and the
left edges of~$Q'_2$ subtracting~$5$ for the overcount at~$v$ and
adding~$3$ for each normal vertex of~$P_3$.  When~$Q'_1$ shares the
last edge with~$P_3$, then we have to disregard the contribution of
the first normal vertex~$w$ of~$P_3$ and subtract one for the
edge~$wa_{i+1}$ which belongs to~$C$ and is counted as a right edge
of~$Q'_1$.  Hence, we obtain the following estimate for the number of
edges pointing into~$C$:
\[
  2 \ell_1 + 2 \ell_2 - 5 -\xi + 3(\frac{\ell_3}{2} -\xi)  = 
2 (\ell_1+\ell_2+\ell_3-2\;\xi) - 5 - \frac{\ell_3}{2} \enspace . 
\]
This is less than the number of edges pointing into~$C$ which is~$2
(\ell_1+\ell_2+\ell_3-2\xi) - 5$ by
Lemma~\ref{lem:number_edges_into_cycle}.
See Fig.~\ref{fig:paths_same_outer} for an illustration.

\begin{figure}
\centering
\includegraphics[]{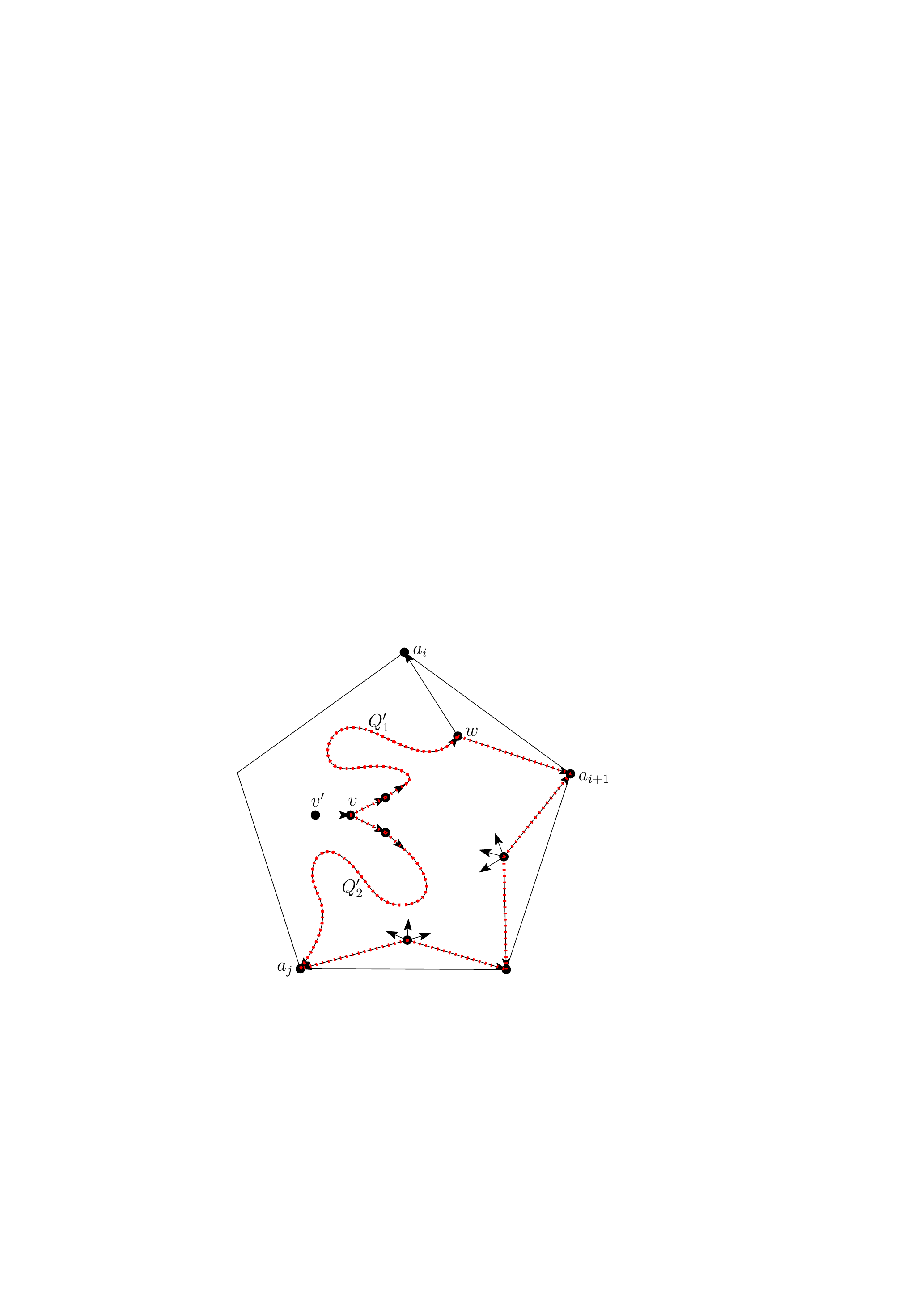}
\caption{Illustration of the proof of Lemma~\ref{lemma:path_properties}~(\ref{item:same_end_vertex}).
  The cycle~$C$ is drawn in red and dotted.
  The edge~$wa_i$ is contained in~$Q_1'$ and in~$P_3$.
  Therefore this is the case~$\xi = 1$.}
\label{fig:paths_same_outer}
\end{figure}

\medskip For~(\ref{item:no_intersection}) assume that~$P_1$ and~$P_2$
have a common vertex different from~$v$ and let~$w$ be the first
normal vertex of this kind (note that~$P_1$ and~$P_2$ might already
meet at a stack vertex immediately before~$w$).  Let~$Q_1$ and~$Q_2$
be the subpaths of~$P_1$ and~$P_2$ that begin with the
predecessors~$v_1'$ and~$v_2'$ of~$v$, respectively, and end with the
successors~$w_1$ and~$w_2$ of~$w$, respectively. After possibly stacking
some normal vertices, we can assume that~$v_1',v_2'$ exist and that~$v_1',v_2',w_1,w_2$
are normal vertices.
Let~$Q_1'$ and~$Q_2'$ be the corresponding shortcut paths and
let~$\ell_1$ and~$\ell_2$ be their lengths.  Note that the successor
edge of~$v$ in~$Q_i$ and~$Q'_i$ is the same.  Let~$C$ be the cycle we
get by gluing the parts of~$Q_1'$ and~$Q_2'$ between~$v$ and~$w$
together.  Let~$s$ be the number of outgoing edges of~$v$
between~$Q_1'$ and~$Q_2'$ inside~$C$.  We assume that looking
from~$v$ into the interior of~$C$, the left edge of~$C$ belongs
to~$Q_1'$ and the right one to~$Q_2'$.  The number of outgoing edges
between~$v_2'v$ and~$v_1'v$ is~$5 - (\ro_{Q_1}(v) +\lo_{Q_2}(v) - s) =
s+1$.  The length of~$C$ is~$\ell_1+\ell_2-4$ and therefore, due to
Lemma~\ref{lem:number_edges_into_cycle}, exactly~$2 (\ell_1+\ell_2-4) - 5$
edges are pointing into the interior of~$C$.  If we add the number of
edges pointing from~$Q_1'$ to the right and the number of edges
pointing from~$Q_2'$ to the left, we overcount by~$5-(s+1)$ at~$v$.
Hence, the overcount at~$w$ must be
\[
  \left( 2(\ell_1-1) + 2(\ell_2-1) \right) - 
(5-(s+1)) - \left( 2 (\ell_1+\ell_2-4) - 5 \right)= 5+s \enspace .
\] 
To get an overcount~$\geq 5$ at~$w$ we need to have every edge in the
union of the right edges of~$Q'_1$ and left edges of~$Q'_2$.  In
particular~$ww_1$ is to the left of~$Q'_2$.  This implies that at
least one of the edges of~$Q_1'$ and~$Q_2'$ ending in~$w$ must be a
shortcut edge.  It follows that~$w$ is not an outer vertex.  Further
there are exactly~$s$ outgoing edges of~$w$ between~$ww_1$ and~$ww_2$.
Therefore we can inductively repeat the argument for the subpaths
of~$P_1$ and~$P_2$ starting at~$w$.
See Fig.~\ref{fig:paths_no_crossing} for an illustration.
\end{proof}

\begin{figure}
\centering
\includegraphics[]{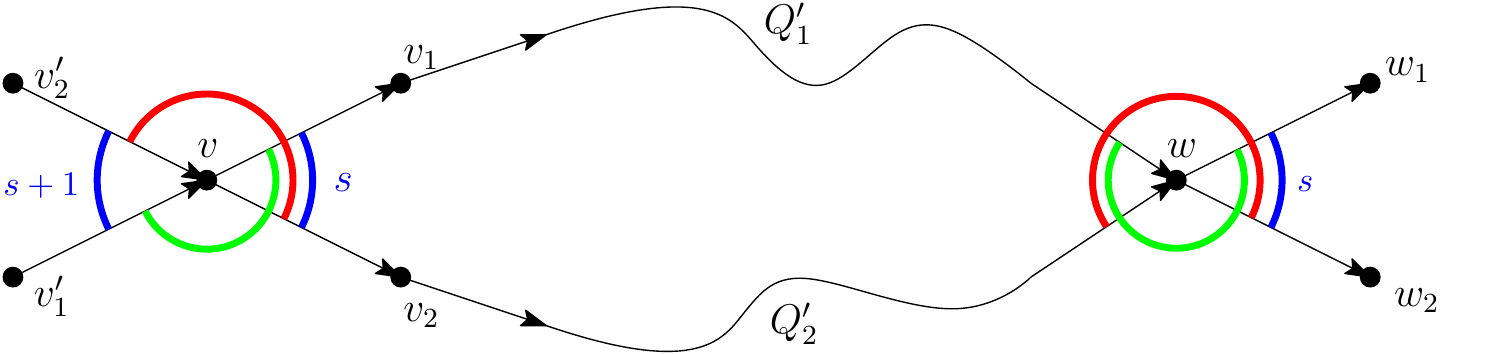}
\caption{Illustration of the proof of Lemma~\ref{lemma:path_properties}~(\ref{item:no_intersection}).
  The outgoing edges in the green (red) angles are the edges pointing from $Q_1'$ to the right (from $Q_2'$ to the left).
  In the blue angles there are~$s$ or~$s+1$ outgoing edges.}
\label{fig:paths_no_crossing}
\end{figure}

Now we are able to prove the main result of this subsection.

\begin{theorem} \label{thm:bij_fcf_alphaorien} 
  The canonical map from
  five color forests to $\alpha_5$-orientations is a bijection.
\end{theorem}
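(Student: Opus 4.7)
The plan is to construct an explicit inverse to the canonical map using the walks of Definition~\ref{def:paths}. Given an $\alpha_5$-orientation $X$ of $\sG$, for each inner edge $e = uv$ of $G$ oriented from a normal vertex $u$, I would define $\mathrm{color}(e) := i$ where $a_i$ is the unique outer vertex at which every walk in $\mathcal{P}(e)$ terminates; this is well-defined and finite by Lemma~\ref{lemma:path_properties}~(\ref{item:no_cycle}) and (\ref{item:same_end_vertex}). I would then verify that this coloring together with the $G$-restriction of $X$ satisfies properties (F\ref{item:outer_edges})--(F\ref{item:no_three_empty}), and that the resulting five color forest induces $X$ back under the canonical map.

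Property (F\ref{item:outer_edges}) is immediate from the first case of Definition~\ref{def:paths}. For (F\ref{item:inner_vertex_blocks}), at an inner normal vertex $v$ the five outgoing edges in $\sG$ yield, by Lemma~\ref{lemma:path_properties}~(\ref{item:no_intersection}), five pairwise non-crossing paths ending at five distinct outer vertices; since there are exactly five outer vertices this gives a bijection between outgoing edges at $v$ and colors $1,\dots,5$, and a planarity argument using the fixed clockwise cyclic order of $a_1,\dots,a_5$ on the outer face then forces the outgoing edges to appear in the clockwise cyclic order $1,2,3,4,5$ around $v$. For an incoming edge $e' = wv$ at $v$ lying in the sector between outgoing edges of colors $i+2$ and $i-2$, I would apply Lemma~\ref{lemma:extending_by_normal_edge} to extend $e'$ backwards, and then the same non-crossing argument sandwiches the terminating outer vertex of $\mathcal{P}(e')$ between $a_{i+2}$ and $a_{i-2}$, forcing it to be $a_i$; this yields the correct block structure. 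Property (F\ref{item:no_three_empty}) requires a brief case analysis: if both outgoing edges of colors $i\pm 2$ at $v$ are stack edges, the two faces of $G$ containing the corresponding stack vertices bound a nonempty sector whose $G$-edges at $v$ must be incoming and (by the block argument) of color $i$, so $B_i \neq \emptyset$.

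Finally, I would check that the two maps are mutual inverses. Starting from a five color forest, applying the canonical extension via Lemma~\ref{lem:missing_edge} and then the reconstruction: an induction on path length using the ``opposite outgoing edge'' rule in Definition~\ref{def:paths} and (F\ref{item:inner_vertex_blocks}) shows that any walk $\mathcal{P}(e)$ starting from an edge $e$ of color $i$ stays on color-$i$ edges at normal vertices and transitions correctly through stack vertices, so it terminates at $a_i$ by (F\ref{item:outer_edges}). In the other direction, the reconstructed forest induces $X$ because the unique missing edge at each face singled out by Lemma~\ref{lem:missing_edge} is forced to coincide with the unique incoming edge of the associated stack vertex in $X$. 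The main obstacle is the verification of (F\ref{item:inner_vertex_blocks}): turning the combinatorial non-crossing statement of Lemma~\ref{lemma:path_properties}~(\ref{item:no_intersection}) into the precise cyclic order $1,2,3,4,5$ of outgoing colors around $v$, and showing that incoming edges fall into contiguous monochromatic blocks at the correct positions. This demands a careful planarity argument tracking how the paths emanating from $v$ subdivide the disk bounded by the outer face into five regions, each containing exactly one outer vertex.
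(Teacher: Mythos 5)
Your proposal is correct and follows essentially the same route as the paper: define the inverse map by coloring each edge with the outer vertex at which the walks of Definition~\ref{def:paths} terminate, invoke Lemma~\ref{lemma:path_properties} for well-definedness and the non-crossing/block structure needed for (F\ref{item:inner_vertex_blocks}), and check (F\ref{item:no_three_empty}) by the same sector argument. The only cosmetic difference is that you verify both compositions are identities directly, whereas the paper proves injectivity of both maps and concludes via finiteness; the substance is the same.
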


\begin{proof}
  Let~$\mathcal{F}$ be the set of five color forests of~$G$
  and~$\mathcal{A}$ the set of $\alpha_5$-orientations of~$\sG$.
  Further let~$\chi:\mathcal{F}\rightarrow\mathcal{A}$ be the
  canonical map and~$\psi:\mathcal{A}\rightarrow\mathcal{F}$ the map
  that keeps the orientation of the edges as in the
  $\alpha_5$-orientation and colors every edge~$e$ in the color of the
  end vertex of the paths in~$\mathcal{P}(e)$.

  \begin{myclaim} \label{claim:psi_well_defined} The
    map~$\psi:\mathcal{A}\rightarrow\mathcal{F}$ is well-defined.
\end{myclaim}

\begin{claimproof}
  Lemma~\ref{lemma:path_properties} (\ref{item:no_cycle}) and
  (\ref{item:same_end_vertex}) show that~$\psi$ is a well-defined
  coloring of the edges of~$G$.  It remains to show that this coloring
  fulfills the properties of a five color forest.

  Property~(F\ref{item:outer_edges}) is clear from the construction.

  Now consider property~(F\ref{item:inner_vertex_blocks}).  Because of
  Lemma~\ref{lemma:path_properties} (\ref{item:no_intersection}) the
  circular order of the colors of the outgoing edges of an inner
  vertex has to coincide with the order of the colors of the outer
  vertices.  That the incoming edges of color~$i$ are opposite of the
  outgoing edge of the same color~$i$, follows from the construction
  of the paths.

  For showing property~(F\ref{item:no_three_empty}) assume that there
  is an inner vertex~$v$ with no outgoing edges of colors~$i-2$
  and~$i+2$ for some~$i$.  In the $\alpha_5$-orientation these missing
  outgoing edges correspond to edges ending in stack vertices.  In the
  interval between these two outgoing edges there has to be at least
  one edge~$e=vw$ with a normal vertex~$w$.  And because of
  property~(F\ref{item:inner_vertex_blocks}) this edge can only be an
  incoming edge and the color is~$i$.
\end{claimproof}

\begin{myclaim} \label{claim:psi_injective}
The function~$\psi:\mathcal{A}\rightarrow\mathcal{F}$ is injective.
\end{myclaim}

\begin{claimproof}
  We show that we can recover the edges of $\sG$ from the five
  color forest on $G$.

  The orientation at normal vertices can directly be read of from the
  five color forest. Lemma~\ref{lem:missing_edge} implies that the orientation at
  stack edges is also prescribed by  the five color forest.
\end{claimproof}

\begin{myclaim} \label{claim:chi_injective}
The function~$\chi:\mathcal{F}\rightarrow\mathcal{A}$ is injective.
\end{myclaim}

\begin{claimproof}
  We show that we can recover the coloring of the edges of a five
  color forest from the orientations of the edges, i.e., from the
  $\alpha_5$-orientation in the image of $\chi$.  

  Clearly, the colors of the edges incident to the outer vertices are
  known. Moreover, because of
  property~(F\ref{item:inner_vertex_blocks}) the knowledge of the
  color of a normal edge incident to an inner normal vertex~$v$
  implies the knowledge of the colors of all edges incident to~$v$.
  Since~$G$ is connected, this implies that the colors of all edges
  are unique and known.
\end{claimproof}

Since~$\mathcal{A}$ and~$\mathcal{F}$ are finite sets, and
$\chi\circ\psi$ is the identity map on $\alpha_5$-orientations, we
obtain from Claims~\ref{claim:psi_injective}~and~\ref{claim:chi_injective} that~$\psi$
and~$\chi$ are inverse bijections.
\end{proof}

%%%%%%%%%%%%%%%%%%%%%%%%%%%%%%%%%%%%%%%%%%%%%%%%%%%%%%%%%%%%%%%%%%%%%
\subsection{The distributive lattice of five color forests}

It has been shown in~\cite{felsner2004lattice} that the set of all
$\alpha$-orientations of a planar graph carries the structure of a
distributive lattice.  We need some definitions to be able to describe
the cover relation of this lattice.

\begin{definition}
  A \emph{chordal path} of a simple cycle $C$ is a directed path
  consisting of edges inside $C$ whose first and last vertex are
  vertices of $C$.  These two vertices are allowed to coincide.
\end{definition}

\begin{definition}
  A simple cycle $C$ is an \emph{essential cycle} if there is an
  $\alpha$-orientation $X$ such that $C$ is a directed cycle in $X$
  and has no chordal path in $X$.
\end{definition}

\begin{theorem}[\cite{felsner2004lattice}]
  The following relation on the set of all $\alpha$-orientations of a
  planar graph is the cover relation of a distributive lattice: An
  $\alpha$-orientation $X$ covers an $\alpha$-orientation~$Y$ if and
  only if $X$ can be obtained from $Y$ by the reorientation of a
  counterclockwise oriented essential cycle in $Y$.
\end{theorem}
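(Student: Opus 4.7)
The plan is to encode the $\alpha$-orientations by a potential function on the inner faces of the graph. First, fix a reference $\alpha$-orientation $X_0$. For any other $\alpha$-orientation $X$, the set of edges oriented differently in $X$ and $X_0$ satisfies a local balance at every vertex (both orientations share the same outdegree function $\alpha$), so by an Eulerian decomposition argument it splits into edge-disjoint directed cycles when read in the orientation of $X$. In particular, any two $\alpha$-orientations differ by a finite sequence of directed-cycle reversals.

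Next I would define a height function $h_X \colon F_{\text{inn}} \to \mathbb{Z}$ on the inner faces by letting $h_X(f)$ be the signed count, over a fixed cycle decomposition of $X \triangle X_0$, of counterclockwise cycles enclosing $f$ minus clockwise cycles enclosing $f$. The main technical step is showing that $h_X$ is independent of the chosen decomposition; this is a homological fact, since any two decompositions of the same Eulerian edge set differ by an element of the cycle space spanned by face boundaries, and a single face boundary contributes $0$ to every $h_X(f)$. From $h_X$ one can reconstruct $X$ (an edge is reoriented relative to $X_0$ exactly when its two incident faces have different heights, with the sign of the difference dictating the direction), so the assignment $X \mapsto h_X$ is injective.

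With the height in hand, partially order the $\alpha$-orientations by $X \preceq Y$ iff $h_X \leq h_Y$ pointwise. The image of $X \mapsto h_X$ is closed under coordinatewise $\min$ and $\max$, as can be checked by a local verification at every vertex showing that $\min(h_X, h_Y)$ and $\max(h_X, h_Y)$ still encode valid $\alpha$-orientations. This makes the poset a lattice, and distributivity is inherited from the coordinatewise lattice on $\mathbb{Z}^{F_{\text{inn}}}$.

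Finally, I would identify the cover relation. Reversing a counterclockwise directed cycle $C$ in $Y$ changes $h_Y$ by $+\mathbf{1}_{\operatorname{int}(C)}$. If $C$ is essential, no height $h_Z$ can strictly lie between $h_Y$ and $h_Y + \mathbf{1}_{\operatorname{int}(C)}$: any such $Z$ would force $h_Z - h_Y = \mathbf{1}_S$ for some $\emptyset \subsetneq S \subsetneq \operatorname{int}(C)$, whose topological boundary would then yield a directed cycle of $Y$ strictly inside $C$, i.e., a chordal path of $C$ in $Y$, contradicting essentiality. Conversely, if the flipped cycle does possess a chordal path in $Y$, that path splits the cycle into two smaller counterclockwise reversals each between valid $\alpha$-orientations, so the move is not a cover. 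I expect the most delicate point to be exactly this last equivalence, where the correspondence between chordal paths and admissible intermediate $\alpha$-orientations has to be pinned down carefully; the rest of the argument is essentially bookkeeping on top of the height function construction.
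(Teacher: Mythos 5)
The paper does not actually prove this statement; it is imported as a black box from \cite{felsner2004lattice}, so there is no in-paper argument to measure you against. Your potential-function architecture is the standard route to this result, and its first three steps are sound: the symmetric difference of two $\alpha$-orientations is balanced at every vertex and hence decomposes into directed cycles; the height $h_X(f)$ is well defined (most robustly, compute it edge by edge as the winding of the circulation $X\setminus X_0$ around $f$, which makes independence of the decomposition immediate); and the image is indeed closed under $\min$ and $\max$. One correction there: the verification is local at every \emph{edge}, not at every vertex --- the constraint is $h(f_\ell)-h(f_r)\in\{0,1\}$ for a fixed left/right convention per edge, which is preserved under coordinatewise $\min$ and $\max$, and the outdegree condition is then automatic because every face potential induces a circulation. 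A vertex-by-vertex check alone would not give you closure.

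The genuine gap is in the last step, exactly where you anticipated trouble, and it is not mere bookkeeping. First, your claim that an intermediate $Z$ forces a chordal path of $C$ in $Y$ is incomplete: from $h_Z-h_Y=\mathbf{1}_S$ with $\emptyset\subsetneq S\subsetneq \operatorname{int}(C)$ you obtain a counterclockwise directed cycle $C'$ of $Y$ whose interior is properly contained in that of $C$, but if $C'$ is vertex-disjoint from $C$ it is \emph{not} a chordal path under the definition used here (a chordal path must begin and end at vertices of $C$), so no contradiction is reached in that case; one must separately show that such a floating interior cycle already destroys essentiality. Second, ``essential'' is an existential property: $C$ need only be free of chordal paths in \emph{some} $\alpha$-orientation, not in $Y$ itself. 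So even when your argument does produce a chordal path of $C$ in $Y$, this does not contradict essentiality of $C$; you are missing the lemma that an essential cycle has no chordal path in \emph{any} $\alpha$-orientation in which it is directed. (Your converse direction is logically fine once the splitting of $C$ along a chordal path into two smaller directed cycles is carried out carefully, including the case where the chordal path is closed, since ``no chordal path in $Y$'' does imply essential by taking $X=Y$ as the witness.) Closing these two holes --- relating the existential definition of essential cycles to the specific orientation being flipped, and handling interior cycles disjoint from $C$ --- is where \cite{felsner2004lattice} invests most of its effort, via an intrinsic characterization of essential cycles.
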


The reorientation of a counterclockwise (clockwise) oriented essential
cycle is called a \emph{flip (flop)}.  The following theorem gives a
full characterization of the flip operation in the lattice of five
color forests.

\begin{theorem}
  The set of all $\alpha_5$-orientations on $\sG$ carries the structure of a
  distributive lattice.  The flip operation in this lattice is the
  reorientation of a counterclockwise oriented facial cycle.
\end{theorem}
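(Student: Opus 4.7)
The plan is to apply Felsner's theorem~\cite{felsner2004lattice} directly to the pair $(\sG, \alpha_5)$: this immediately yields the distributive lattice structure on the set of $\alpha_5$-orientations of $\sG$ and describes the cover relation as reorientations of counterclockwise oriented \emph{essential} cycles. It therefore remains to show that the essential cycles of $\sG$ with respect to $\alpha_5$ are precisely the inner facial cycles of $\sG$.

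Every inner face of $\sG$ is a triangle: the inner faces of $G$ that are incident to two outer vertices are kept intact, while every other inner face of $G$ is subdivided by its stack vertex into three triangles. Consequently, if $C$ is a directed facial cycle, then its interior is empty and $C$ has no chordal path, so $C$ is essential. For the converse direction, let $C$ be a directed cycle of $\sG$ that is not a face. If $C$ has a chord, then the chord itself is a chordal path of length one and we are done. Otherwise $C$ is chordless and bounds a non-trivial triangulated region; for every edge $v_iv_{i+1}$ of $C$ (directed as in $C$), the unique inner triangle on the interior side has a third vertex $w_i$ strictly inside $C$, and the two further edges $v_iw_i$ and $v_{i+1}w_i$ are oriented. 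A short case analysis on the four possible orientations directly produces a chordal path of length two through $w_i$ in the two configurations where the triangle contains the directed path $v_i\to w_i\to v_{i+1}$ or $v_{i+1}\to w_i\to v_i$.

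The main obstacle is the remaining subcase, in which every inner triangle adjacent to $C$ has its two edges $v_iw_i,v_{i+1}w_i$ oriented the same way at $w_i$ (both pointing into $w_i$ or both pointing out of $w_i$). To handle it, I plan to choose $C$ to be a chordless non-facial directed counterexample that minimizes the number of enclosed faces, and to combine a generalization of Lemma~\ref{lem:number_edges_into_cycle} that allows stack vertices on $C$ (contributing outdegree $2$ in place of $5$ in the Euler-type computation) with the positive outdegrees $\alpha_5$ assigns to every interior vertex. This should guarantee an edge pointing from $C$ into the interior; following outgoing edges from its head then builds a directed walk which must either reach $C$ (completing a chordal path of $C$ together with the initial edge) or close into a directed cycle $C^*\subsetneq C$. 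The latter case is where the subtlety lies: by minimality $C^*$ must itself be a facial triangle, and I expect that the orientation pattern of this face, read back against the assumed "bad" configuration of the inner triangles of $C$, leads to a contradiction, thereby completing the characterization.
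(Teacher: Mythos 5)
There is a genuine gap, and it sits exactly where you flagged it. In your ``remaining subcase'' you locate an edge pointing from $C$ into its interior and then follow outgoing edges, hoping the walk returns to $C$ and closes a chordal path. But the walk may instead close into a directed cycle $C^\star$ strictly inside $C$, and your minimality argument does not dispose of this: $C^\star$ is not a smaller \emph{counterexample} unless it is itself non-facial, chordless and without chordal paths, none of which you know. In particular $C^\star$ may simply be a directed facial triangle (these exist in abundance in $\alpha_5$-orientations --- they are precisely the flippable cycles), and you derive no contradiction from that; ``reading the orientation pattern of this face back against the bad configuration of the triangles adjacent to $C$'' is not an argument, since the walk may wander far from $C$ before closing. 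The paper resolves exactly this difficulty with heavier machinery: the directed paths $\mathcal{P}(e)$ of Definition~\ref{def:paths}, whose acyclicity and termination at an outer vertex is the content of Lemma~\ref{lemma:path_properties}(i) and rests on the whole shortcut apparatus. Given that, any edge pointing into the interior of $C$ extends to a path that must cross $C$, yielding a chordal path; so an essential cycle has \emph{no} edge pointing inside, and Lemma~\ref{lem:number_edges_into_cycle} finishes the all-normal case. Your ``generalization of Lemma~\ref{lem:number_edges_into_cycle} allowing stack vertices on $C$'' is also only asserted; it can be derived (the count of inward edges is $\ell-2k+t-3$ for a cycle of length $\ell$ with $k$ normal and $t$ stack vertices inside), but showing it is positive for every non-facial cycle is an additional argument you have not supplied.

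You also miss the observation that makes the stack-vertex case trivial and which is the paper's actual endgame: if $C$ passes through a stack vertex $v$ with predecessor $w_1$ and successor $w_2$, then $w_1w_2$ is an edge of $\sG$ (the three neighbours of $v$ span a face of $G$), and unless $C$ is the facial triangle $w_1vw_2$ this edge lies inside $C$ and is a chord, hence a chordal path in either orientation. Combined with the claim that no edge points into the interior of an essential cycle, this reduces everything to cycles on normal vertices only, where Lemma~\ref{lem:number_edges_into_cycle} gives $2\ell-5>0$ inward edges and a contradiction. I would recommend restructuring your proof along these lines rather than attempting the local case analysis on triangles adjacent to $C$.
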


\begin{proof}
Let~$C$ be an essential cycle in~$\sG$.
Then there exists an $\alpha_5$-orientation~$X$ such that~$C$ is a directed cycle in~$X$
and has no chordal path in~$X$.
Suppose that~$C$ is not facial.

\begin{myclaim} \label{claim:no_edge_pointing_into}
There is no edge pointing into the interior of $C$.
\end{myclaim}

\begin{claimproof}
  Assume that there is an edge~$e$ pointing from a normal vertex into
  the interior of~$C$.  Let~$P \in \mathcal{P}(e)$ be a directed path
  starting with the edge~$e$ and ending in an outer vertex of~$\sG$.
  Then~$P$ has to cross~$C$ at some point and the subpath of~$P$ that
  ends at the first crossing vertex with~$C$ is a chordal path of~$C$,
  contradicting that~$C$ is essential.

  If~$e=vw$ is a stack edge, then~$v$ is a stack vertex and~$w$ a
  normal vertex.
  If~$w$ is on~$C$, the edge~$e$ is a chord of~$C$.  
  Otherwise take any outgoing edge~$e'$ of~$w$, then a
  path~$P\in\mathcal{P}(e')$ has to cross~$C$. Together with~$e$ this
  yields a chordal path.
\end{claimproof}

\begin{myclaim} \label{claim:contains_stack_vertex} 
  The cycle $C$ contains at least one stack vertex.
\end{myclaim}

\begin{claimproof}
  Assume that $C$ contains only normal vertices.  Then according to
  Lemma~\ref{lem:number_edges_into_cycle} there are exactly $2\ell(C)-5
  \neq 0$ edges pointing into the interior of $C$, in contradiction to
  Claim~\ref{claim:no_edge_pointing_into}.
\end{claimproof}

Now let~$v$ be a stack vertex on~$C$.  Let~$w_1$ be the predecessor
and~$w_2$ be the successor of~$v$ on~$C$. We know that the other outgoing 
edge of~$v$ has to point to the outside of~$C$. Now, unless~$C$ is a facial cycle the edge~$w_1w_2$ is an inner chord of~$C$. 
In either orientation the edge forms a chordal path, hence,~$C$ is
not essential. 
\end{proof}

Figure~\ref{fig:flip} shows the effect of a flip in terms of 
contacts of pentagons, the effect on the five color forest can 
be read from the figure.

 %%%%%%%%%%%%%%%%%%%%%%%%%%%%%%%%%%%%%%%%%%%%%%%%%%%%%%%%%%%%%%%%%%
% in einem figure environment mit caption
\calc_figscale{30}%
\begin{figure}[t]
    \centerline{\input{\fpath/flip.pstex_t}}
    \caption{\label{fig:flip}}
    \end{figure}%
VC
{A flip of a facial cycle in an $\alpha_5$-orientation and its 
effect on contacts of pentagons. The red segment contributes to
another pentagon.}% 
%%%%%%%%%%%%%%%%%%%%%%%%%%%%%%%%%%%%%%%%%%%%%%%%%%%%%%%%%%%%%%%%%%

%%%%%%%%%%%%%%%%%%%%%%%%%%%%%%%%%%%%%%%%%%%%%%%%%%%%%%%%%%%%%%%%%%%%%
\section{The Algorithm}

In this section we will propose an algorithm to compute a regular
pentagon contact representation of a given graph~$G$.

We will propose a system of linear equations
related to a given five color forest~$F$ of~$G$. 
If the five color forest is induced by a regular pentagon
representation, the solution of the system allows to compute
coordinates for the corners of the pentagons in this representation.
Otherwise the solution of the system will have negative variables.

We start by describing how to obtain the skeleton graph~$\skel$
of the contact representation from the given five color forest~$F$.
We start with a crossing-free straight-line
drawing of~$G$. Add a subdivision vertex on each edge of~$G$.
Moreover, for each inner vertex~$v$ draw an edge ending at a
new vertex inside each face with a missing outgoing edge
of~$v$. Then connect all the new adjacent vertices of~$v$ in the
cyclic order given by the drawing. We call the resulting polygon
the \emph{abstract pentagon} of~$v$ (note that this polygon can have more
than five corners). Since, due to Lemma~\ref{lem:missing_edge}, in each face of~$G$
which is incident to at most one outer vertex there is exactly
one missing outgoing edge, these faces are represented by quadrilaterals in~$\skel$.
We call these quadrilaterals \emph{abstract facial quadrilaterals}.

We color the edges of~$\skel$ according to the following
rules: If the edge is part of the abstract pentagon of the inner vertex~$v$ and lies
in the interval between the outgoing edges of~$v$ of colors~$c$
and~$c+1$, it gets the color~$c-2$. The edges being part of the
abstract pentagon of the outer vertex~$a_i$ get color~$i$.
See Fig.~\ref{fig:skeleton} (left) for an example. The colors of the edges of~$\skel$
correspond to the required slopes of these edges in the following way:
We take a regular pentagon~$B$ with horizontal side at the top and
color its sides in the colors~$1,\dotsc,5$ in clockwise order, starting
with color~$1$ at the top side. Then a crossing-free straight line drawing
of~$\skel$ is a regular pentagon contact representation of~$G$ with
induced five color forest~$F$ if and only if each edge~$e$ has the same
slope as the side of~$B$ that has the same color and all abstract pentagons
are regular pentagons, i.e., have five equal side lengths. See Fig.~\ref{fig:skeleton} (right).

%%%%%%%%%%%%%%%%%%%%%%%%%%%%%%%%%%%%%%%%%%%%%%%%%%%%%%%%%%%%%%%%%%%%%
%%%%%%%%%%%%%%%%%%%%%%%%%%%%%%%%%%%%%%%%%%%%%%%%%%%%%%%%%%%%%%%%%%%%%
\begin{figure}[t]

\centering

\tikzstyle{vertex}=[circle,fill,scale=0.5]
\tikzstyle{out edge}=[-latex',thick]
\tikzstyle{in edge}=[latex'-]

\tikzstyle{normal vertex}=[circle,fill,scale=0.5,color=gray!30]
\tikzstyle{stack vertex}=[circle,draw,scale=0.5,color=gray!30]
\tikzstyle{subdivision vertex}=[rectangle,fill,scale=.3]
\tikzstyle{undirected edge}=[color=gray!30,thin]
\tikzstyle{colored undirected edge}=[thick]
\tikzstyle{edge}=[-{Latex[scale length=1.8,scale width=1]},thin]
\tikzstyle{colored edge}=[thick,-{Latex[scale length=1.8,scale width=1.2]}]
\tikzstyle{skeleton edge}=[thick]

\tikzstyle{pcr}=[very thick,color=gray!70]
\tikzstyle{pcr fill}=[fill=gray!30]

\tikzstyle{pos edge}=[thick,color=green!80]
\tikzstyle{neg edge}=[thick,color=red!80]
\tikzstyle{pos vertex}=[circle,fill,scale=0.4,color=green!80]
\tikzstyle{sign separating edge}=[-{Latex[scale length=1.8,scale width=1]}, color=blue]

\begin{tikzpicture}[scale=4]

\def\A{0.4546294640647149}
\def\B{0.2707279827787506}
\def\C{0.1507367088399636}
\def\D{0.2356059251438577}
\def\E{0.1775675733633562}
\def\aa{0.2809764610791428}
\def\ab{0.4546294640647149}
\def\ba{0.2809764610791428}
\def\bb{0.4546294640647149}
\def\ca{0.1097427956383953}
\def\cb{0.1775675733633562}
\def\da{0.09316040941539433}
\def\db{0.1507367088399636}
\def\ea{0.167319095062964}
\def\eb{0.2707279827787506}
\def\fa{0.1136573660161787}
\def\fb{0.2707279827787506}
\def\fc{0.4546294640647149}
\def\fd{0.5517044438578933}
\def\ga{0.04853748989658916}
\def\gb{0.1570706167625718}
\def\gc{0.2356059251438577}
\def\gd{0.3026830864523393}
\def\ha{0.1775675733633562}
\def\hb{0.171233665440748}
\def\hc{0.4585440344424989}
\def\hd{0.4546294640647149}
\def\ia{0.006333907922608149}
\def\ib{0.1416978993119834}
\def\ic{0.1519463776123757}
\def\id{0.2356059251438577}
\def\ja{0.07415868564756978}
\def\jb{0.1507367088399636}
\def\jc{0.2707279827787506}
\def\jd{0.3180558039029276}
\def\ka{0.09828464856559038}
\def\kb{0.07657802319239386}
\def\kc{0.2356059251438577}
\def\kd{0.2221904928821613}
\def\la{0.1507367088399636}
\def\lb{0.05245206027437295}
\def\lc{0.296349178529731}
\def\ld{0.2356059251438577}
\def\ma{0.03586967405137269}
\def\mb{0.1775675733633562}
\def\mc{0.2356059251438577}
\def\md{0.3231800430531234}

% \draw[pcr] (0,0) -- ++(0:   \aa+\fd+\ea)
%             -- ++(-72: \ea+\jd+\da)
%             -- ++(-144:\da+\lc+\md+\ca)
%             -- ++(-216:\ca+\hc+\ba)
%             -- ++(-288:\ba+\aa);
            
\coordinate (At) at (0: \aa);
            
% \draw[pcr,pcr fill] (At)
%         -- ++(-144:\A)
%         -- ++(-72:\A)
%         -- ++(0:\A)
%         -- ++(72:\A)
%         -- ++(144:\A);
        
\coordinate (Bt) at (0: \aa+\fd);       
        
% \draw[pcr,pcr fill] (Bt)
%         -- ++(-144:\B)
%         -- ++(-72:\B)
%         -- ++(0:\B)
%         -- ++(72:\B)
%         -- ++(144:\B);
        
\coordinate (Ctr) at ($(0: \aa+\fd+\ea)+(-72:\ea+\jd)$);
        
% \draw[pcr,pcr fill] (Ctr)
%         -- ++(-216:\C)
%         -- ++(-144:\C)
%         -- ++(-72:\C)
%         -- ++(0:\C)
%         -- ++(72:\C);
        
\coordinate (Dbr) at ($(0: \aa+\fd+\ea)+(-72:\ea+\jd+\da)+(-144:\da+\lc)$);
        
% \draw[pcr,pcr fill] (Dbr)
%         -- ++(-288:\D)
%         -- ++(-216:\D)
%         -- ++(-144:\D)
%         -- ++(-72:\D)
%         -- ++(0:\D);
        
\coordinate (Ebl) at ($(-108: \aa+\ba)+(-36:\ba+\hc)$);
        
% \draw[pcr,pcr fill] (Ebl)
%         -- ++(0:\E)
%         -- ++(-288:\E)
%         -- ++(-216:\E)
%         -- ++(-144:\E)
%         -- ++(-72:\E);
        
\node[normal vertex] (Ac) at ($(At)+(-90:0.8506*\A)$) {};
\node[normal vertex] (Bc) at ($(Bt)+(-90:0.8506*\B)$) {};
\node[normal vertex] (Cc) at ($(Ctr)+(-162:0.8506*\C)$) {};
\node[normal vertex] (Dc) at ($(Dbr)+(-234:0.8506*\D)$) {};
\node[normal vertex] (Ec) at ($(Ebl)+(-306:0.8506*\E)$) {};

\coordinate (c1) at (0,0);
\coordinate (c2) at (0:\aa+\fd+\ea);
\coordinate (c3) at ($(c2)+(-72:\ea+\jd+\da)$);
\coordinate (c4) at ($(c3)+(-144:\da+\lc+\md+\ca)$);
\coordinate (c5) at ($(c4)+(-216:\ca+\hc+\ba)$);

\node[normal vertex] (a1) at ($($(c1)!0.5!(c2)$)+(90:0.1)$) {};
\node[normal vertex] (a2) at ($($(c2)!0.5!(c3)$)+(18:0.1)$) {};
\node[normal vertex] (a3) at ($($(c3)!0.5!(c4)$)+(-54:0.1)$) {};
\node[normal vertex] (a4) at ($($(c4)!0.5!(c5)$)+(-126:0.1)$) {};
\node[normal vertex] (a5) at ($($(c5)!0.5!(c1)$)+(-198:0.1)$) {};

\node[stack vertex] (sf) at (.54,-.1) {};
\node[stack vertex] (sg) at (.65,-.41) {};
\node[stack vertex] (sh) at (.3,-.75) {};
\node[stack vertex] (si) at (.56,-.68) {};
\node[stack vertex] (sj) at (1.04,-.33) {};
\node[stack vertex] (sk) at (.87,-.46) {};
\node[stack vertex] (sl) at (.92,-.67) {};
\node[stack vertex] (sm) at (.71,-.8) {};

\node[subdivision vertex] (sd_c1) at  ($(c1)+(-234:.05)$) {};
\node[subdivision vertex] (sd_c2) at ($(c2)+(54:.05)$)  {};
\node[subdivision vertex] (sd_c3) at  ($(c3)+(-18:.05)$) {};
\node[subdivision vertex] (sd_c4) at  ($(c4)+(-90:.05)$) {};
\node[subdivision vertex] (sd_c5) at  ($(c5)+(-162:.05)$) {};

\draw[undirected edge] (a1) -- (sd_c2) -- (a2);
\draw[undirected edge] (a2) -- (sd_c3) -- (a3);
\draw[undirected edge] (a3) -- (sd_c4) -- (a4);
\draw[undirected edge] (a4) -- (sd_c5) -- (a5);
\draw[undirected edge] (a5) -- (sd_c1) -- (a1);

\draw[edge,color1verylight] (Ac) -- (a1);
\draw[edge,color5verylight] (Ac) -- (a5);
\draw[edge,color4verylight] (Ac) -- (a4);
\draw[edge,color3verylight] (Ac) -- (Ec);
\draw[edge,color2verylight] (Ac) -- (Bc);
\draw[edge,color1verylight] (Bc) -- (a1);
\draw[edge,color2verylight] (Bc) -- (a2);
\draw[edge,color3verylight] (Bc) -- (Cc);
\draw[edge,color2verylight] (Cc) -- (a2);
\draw[edge,color3verylight] (Cc) -- (a3);
\draw[edge,color3verylight] (Dc) -- (a3);
\draw[edge,color5verylight] (Dc) -- (Ac);
\draw[edge,color1verylight] (Dc) -- (Bc);
\draw[edge,color2verylight] (Dc) -- (Cc);
\draw[edge,color4verylight] (Dc) -- (Ec);
\draw[edge,color3verylight] (Ec) -- (a3);
\draw[edge,color4verylight] (Ec) -- (a4);

\foreach \p/\q in {Ac/a1, Ac/a5, Ac/a4, Ac/Ec, Ac/Bc,
    Bc/a1, Bc/a2, Bc/Cc,
    Cc/a2, Cc/a3,
    Dc/a3, Dc/Ac, Dc/Bc, Dc/Cc, Dc/Ec,
    Ec/a3, Ec/a4}
  \node[subdivision vertex] (sd_\p_\q) at  ($(\p)!.5!(\q)$) {};

\draw[edge,color5verylight] (Bc) -- (sf);
\draw[edge,color4verylight] (Bc) -- (sg);
\draw[edge,color5verylight] (Ec) -- (sh);
\draw[edge,color1verylight] (Ec) -- (si);
\draw[edge,color1verylight] (Cc) -- (sj);
\draw[edge,color5verylight] (Cc) -- (sk);
\draw[edge,color4verylight] (Cc) -- (sl);
\draw[edge,color2verylight] (Ec) -- (sm);

\foreach \p/\q in {Bc/sf, Bc/sg, Ec/sh, Ec/si, Cc/sj, Cc/sk, Cc/sl, Ec/sm}
  \node[subdivision vertex] (sd_\p_\q) at  ($(\p)!.5!(\q)$) {};

\foreach \p/\q/\r/\c in {Ac/a1/a5/3, Ac/a5/a4/2, Ac/a4/Ec/1, Ac/Bc/a1/4,
    Bc/a1/a2/4, Bc/a2/Cc/5, Bc/sf/a1/3,
    Cc/a2/a3/5, Cc/a3/sl/1, Cc/sj/a2/4,
    Dc/a3/Ec/1, Dc/Ec/Ac/2, Dc/Ac/Bc/3, Dc/Bc/Cc/4, Dc/Cc/a3/5,
    Ec/a3/a4/1, Ec/a4/sh/2, Ec/sm/a3/5}
  \draw[skeleton edge, color\c] (sd_\p_\q) -- (sd_\p_\r);

\foreach \p/\q/\r/\c in {Ac/Ec/Dc/5, Ac/Bc/Dc/5,
    Bc/Cc/Dc/1, Bc/sg/Dc/1, Bc/sg/Ac/2, Bc/sf/Ac/2,
    Cc/sl/Dc/2, Cc/sk/Dc/2, Cc/sk/Bc/3, Cc/sj/Bc/3,
    Ec/sh/Ac/3, Ec/si/Ac/3, Ec/si/Dc/4, Ec/sm/Dc/4}
  \draw[skeleton edge, color\c] (sd_\p_\q) -- (sd_\r_\p);

\draw[skeleton edge, color1] (sd_c1) -- (sd_Ac_a1);
\draw[skeleton edge, color1] (sd_Ac_a1) -- (sd_Bc_a1);
\draw[skeleton edge, color1] (sd_Bc_a1) -- (sd_c2);

\draw[skeleton edge, color2] (sd_c2) -- (sd_Bc_a2);
\draw[skeleton edge, color2] (sd_Bc_a2) -- (sd_Cc_a2);
\draw[skeleton edge, color2] (sd_Cc_a2) -- (sd_c3);

\draw[skeleton edge, color3] (sd_c3) -- (sd_Cc_a3);
\draw[skeleton edge, color3] (sd_Cc_a3) -- (sd_Dc_a3);
\draw[skeleton edge, color3] (sd_Dc_a3) -- (sd_Ec_a3);
\draw[skeleton edge, color3] (sd_Ec_a3) -- (sd_c4);

\draw[skeleton edge, color4] (sd_c4) -- (sd_Ec_a4);
\draw[skeleton edge, color4] (sd_Ec_a4) -- (sd_Ac_a4);
\draw[skeleton edge, color4] (sd_Ac_a4) -- (sd_c5);

\draw[skeleton edge, color5] (sd_c5) -- (sd_Ac_a5);
\draw[skeleton edge, color5] (sd_Ac_a5) -- (sd_c1);
        
\end{tikzpicture}
\qquad
\begin{tikzpicture}[scale=4]

\def\A{0.4546294640647149}
\def\B{0.2707279827787506}
\def\C{0.1507367088399636}
\def\D{0.2356059251438577}
\def\E{0.1775675733633562}
\def\aa{0.2809764610791428}
\def\ab{0.4546294640647149}
\def\ba{0.2809764610791428}
\def\bb{0.4546294640647149}
\def\ca{0.1097427956383953}
\def\cb{0.1775675733633562}
\def\da{0.09316040941539433}
\def\db{0.1507367088399636}
\def\ea{0.167319095062964}
\def\eb{0.2707279827787506}
\def\fa{0.1136573660161787}
\def\fb{0.2707279827787506}
\def\fc{0.4546294640647149}
\def\fd{0.5517044438578933}
\def\ga{0.04853748989658916}
\def\gb{0.1570706167625718}
\def\gc{0.2356059251438577}
\def\gd{0.3026830864523393}
\def\ha{0.1775675733633562}
\def\hb{0.171233665440748}
\def\hc{0.4585440344424989}
\def\hd{0.4546294640647149}
\def\ia{0.006333907922608149}
\def\ib{0.1416978993119834}
\def\ic{0.1519463776123757}
\def\id{0.2356059251438577}
\def\ja{0.07415868564756978}
\def\jb{0.1507367088399636}
\def\jc{0.2707279827787506}
\def\jd{0.3180558039029276}
\def\ka{0.09828464856559038}
\def\kb{0.07657802319239386}
\def\kc{0.2356059251438577}
\def\kd{0.2221904928821613}
\def\la{0.1507367088399636}
\def\lb{0.05245206027437295}
\def\lc{0.296349178529731}
\def\ld{0.2356059251438577}
\def\ma{0.03586967405137269}
\def\mb{0.1775675733633562}
\def\mc{0.2356059251438577}
\def\md{0.3231800430531234}

\coordinate (o1) at (0,0);
\coordinate (o2) at ($(o1)+(0: \aa+\fd+\ea)$);
\coordinate (o3) at ($(o2)+(-72: \ea+\jd+\da)$);
\coordinate (o4) at ($(o3)+(-144:\da+\lc+\md+\ca)$);
\coordinate (o5) at ($(o4)+(-216:\ca+\hc+\ba)$);

\draw[pcr,color1] (o1) -- (o2);
\draw[pcr,color2] (o2) -- (o3);
\draw[pcr,color3] (o3) -- (o4);
\draw[pcr,color4] (o4) -- (o5);
\draw[pcr,color5] (o5) -- (o1);
            
\coordinate (At) at (0: \aa);
\coordinate (A1) at ($(At)+(-144:\A)$);
\coordinate (A2) at ($(A1)+(-72:\A)$);
\coordinate (A3) at ($(A2)+(0:\A)$);
\coordinate (A4) at ($(A3)+(72:\A)$);
            
\draw[pcr fill] (At)
        -- ++(-144:\A)
        -- ++(-72:\A)
        -- ++(0:\A)
        -- ++(72:\A)
        -- ++(144:\A);
        
\draw[pcr, color3] (At) -- (A1);
\draw[pcr, color2] (A1) -- (A2);
\draw[pcr, color1] (A2) -- (A3);
\draw[pcr, color5] (A3) -- (A4);
\draw[pcr, color4] (A4) -- (At);
        
\coordinate (Bt) at (0: \aa+\fd);
\coordinate (B1) at ($(Bt)+(-144:\B)$);
\coordinate (B2) at ($(B1)+(-72:\B)$);
\coordinate (B3) at ($(B2)+(0:\B)$);
\coordinate (B4) at ($(B3)+(72:\B)$);       
        
\draw[pcr fill] (Bt)
        -- ++(-144:\B)
        -- ++(-72:\B)
        -- ++(0:\B)
        -- ++(72:\B)
        -- ++(144:\B);
        
\draw[pcr, color3] (Bt) -- (B1);
\draw[pcr, color2] (B1) -- (B2);
\draw[pcr, color1] (B2) -- (B3);
\draw[pcr, color5] (B3) -- (B4);
\draw[pcr, color4] (B4) -- (Bt);
        
\coordinate (Ctr) at ($(0: \aa+\fd+\ea)+(-72:\ea+\jd)$);
\coordinate (C1) at ($(Ctr)+(-216:\C)$);
\coordinate (C2) at ($(C1)+(-144:\C)$);
\coordinate (C3) at ($(C2)+(-72:\C)$);
\coordinate (C4) at ($(C3)+(0:\C)$);   
        
\draw[pcr fill] (Ctr)
        -- ++(-216:\C)
        -- ++(-144:\C)
        -- ++(-72:\C)
        -- ++(0:\C)
        -- ++(72:\C);
        
\draw[pcr, color4] (Ctr) -- (C1);
\draw[pcr, color3] (C1) -- (C2);
\draw[pcr, color2] (C2) -- (C3);
\draw[pcr, color1] (C3) -- (C4);
\draw[pcr, color5] (C4) -- (Ctr);
        
\coordinate (Dbr) at ($(0: \aa+\fd+\ea)+(-72:\ea+\jd+\da)+(-144:\da+\lc)$);
\coordinate (D1) at ($(Dbr)+(-288:\D)$);
\coordinate (D2) at ($(D1)+(-216:\D)$);
\coordinate (D3) at ($(D2)+(-144:\D)$);
\coordinate (D4) at ($(D3)+(-72:\D)$);  
        
\draw[pcr fill] (Dbr)
        -- ++(-288:\D)
        -- ++(-216:\D)
        -- ++(-144:\D)
        -- ++(-72:\D)
        -- ++(0:\D);
        
\draw[pcr, color5] (Dbr) -- (D1);
\draw[pcr, color4] (D1) -- (D2);
\draw[pcr, color3] (D2) -- (D3);
\draw[pcr, color2] (D3) -- (D4);
\draw[pcr, color1] (D4) -- (Dbr);
        
\coordinate (Ebl) at ($(-108: \aa+\ba)+(-36:\ba+\hc)$);
\coordinate (E1) at ($(Ebl)+(0:\E)$);
\coordinate (E2) at ($(E1)+(-288:\E)$);
\coordinate (E3) at ($(E2)+(-216:\E)$);
\coordinate (E4) at ($(E3)+(-144:\E)$);  
        
\draw[pcr fill] (Ebl)
        -- ++(0:\E)
        -- ++(-288:\E)
        -- ++(-216:\E)
        -- ++(-144:\E)
        -- ++(-72:\E);
        
\draw[pcr, color1] (Ebl) -- (E1);
\draw[pcr, color5] (E1) -- (E2);
\draw[pcr, color4] (E2) -- (E3);
\draw[pcr, color3] (E3) -- (E4);
\draw[pcr, color2] (E4) -- (Ebl);
        
\end{tikzpicture}

\caption{Left: The skeleton graph corresponding to the five color forest in the background.
         Right: A realization of this skeleton graph as a regular pentagon contact representation.}

\label{fig:skeleton}

\end{figure}
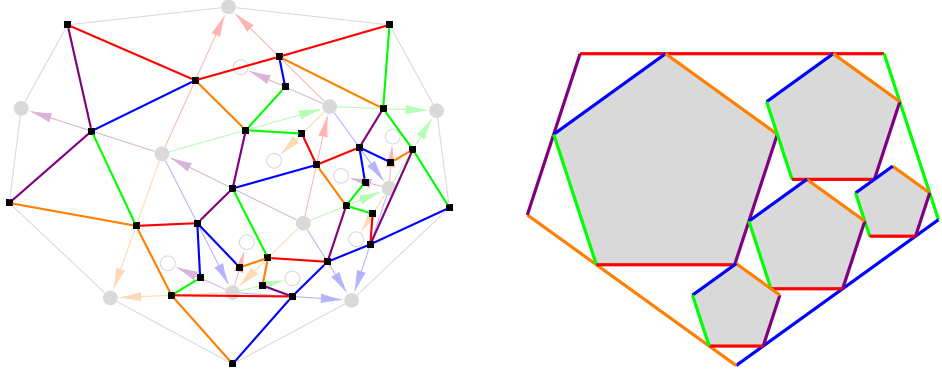
%%%%%%%%%%%%%%%%%%%%%%%%%%%%%%%%%%%%%%%%%%%%%%%%%%%%%%%%%%%%%%%%%%%%%
%%%%%%%%%%%%%%%%%%%%%%%%%%%%%%%%%%%%%%%%%%%%%%%%%%%%%%%%%%%%%%%%%%%%%

The purpose of the system of linear equations is to find edge lengths
for the edges of~$\skel$. Therefore we have a variable~$x_v$ for each
inner vertex~$v$ of~$G$ representing the side length of the corresponding
pentagon and a variable for edge of~$\skel$ representing its length.
The second type of variables can also be defined in the following way:
Every inner face~$f$ of~$G$
gets four variables~$x_f^{(1)},\dotsc,x_f^{(4)}$ representing the
segment lengths of the corresponding quadrilateral in clockwise order
where the concave corner is located between the edges corresponding
to~$x_f^{(1)}$ and~$x_f^{(2)}$ (see Fig.~\ref{fig:var_faces} (left)).  For the
five inner faces which are incident to two outer vertices of $G$ we
add the equation~$x_f^{(1)}=0$ since these faces are represented by
triangles, not by quadrilaterals.

With every inner vertex~$v$ we associate five equations, one for
each side. Each of these equations states that the side length $x_v$
is equal to the sum of the lengths of the boundary segments of faces
incident to the side. More formally, for~${i=1,\dotsc,5}$,
let~$\delta_i(v)$ denote the set of faces of~$G$ incident to~$v$ in
the interval between the outgoing edges of colors~$i+2$ and~$i-2$.
Then we can write these five equations as~${x_v=\sum_{f\in
    \delta_i(v)}x_f^{(j_{v,f,i})}}$ with
the~$j_{v,f,i}\in\{1,\dotsc,4\}$ appropriately chosen.
The following lemma gives two more equations for every inner face.

\begin{lemma} \label{lemma:face_equations}
Let~$f$ be an inner face of~$G$. If the variables~${ x_f^{(1)}, \dotsc, x_f^{(4)} }$
come from a regular pentagon contact representation of~$G$, they fulfill the
following equations:
\begin{equation*}
x_f^{(3)} = x_f^{(1)} + \phi x_f^{(2)} \enspace ,\qquad \enspace 
x_f^{(4)} = \phi x_f^{(1)} + x_f^{(2)} \enspace.
\end{equation*}
Here~$\phi =\frac{1 + \sqrt{5}}{2}$ denotes the golden ratio.
\end{lemma}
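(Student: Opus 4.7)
The plan is to show that the abstract facial quadrilateral has a fixed set of interior angles determined by the regular pentagon geometry, and then read off the two identities as the closing condition of the quadrilateral. First I would check that the quadrilateral has interior angles $252^\circ$ at the concave corner and $36^\circ$ at each of the three remaining corners. By Lemma~\ref{lem:missing_edge}, the concave corner is a corner of the pentagon of the unique vertex of $f$ whose outgoing edge into the face is missing, and since a regular pentagon has interior angle $108^\circ$, the interior angle of the quadrilateral on the opposite side is $360^\circ - 108^\circ = 252^\circ$. Each of the three remaining corners is a corner-side contact between two homothetic regular pentagons, where the corner necessarily touches the opposite side; by the pentagon's bilateral symmetry, the two sides of the first pentagon incident to the touching corner form equal angles with the touched side, which must be $(180^\circ - 108^\circ)/2 = 36^\circ$. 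Hence the interior angle of the quadrilateral at each such corner is $36^\circ$, and the total $252^\circ + 3\cdot 36^\circ = 360^\circ$ is consistent.

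Next I would traverse the quadrilateral clockwise starting from the concave corner, so that $x_f^{(2)}$ is the first side. Using the computed angles, the direction of traversal rotates by $-144^\circ$ at each convex corner and by $+72^\circ$ at the concave one; if $\theta$ is the direction of $x_f^{(2)}$, then $x_f^{(3)}, x_f^{(4)}, x_f^{(1)}$ are traversed in directions $\theta - 144^\circ$, $\theta + 72^\circ$ and $\theta - 72^\circ$, respectively. Since the polygon closes up, the four edge vectors sum to zero; after dividing by $e^{i\theta}$ this becomes
\begin{equation*}
  x_f^{(2)} + x_f^{(3)}\, e^{-i \cdot 144^\circ} + x_f^{(4)}\, e^{i \cdot 72^\circ} + x_f^{(1)}\, e^{-i \cdot 72^\circ} = 0.
\end{equation*}

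Finally I would separate real and imaginary parts. Using $\sin 144^\circ = \sin 36^\circ$, $\sin 72^\circ = 2 \sin 36^\circ \cos 36^\circ$ and $2\cos 36^\circ = \phi$, the imaginary part reduces to $x_f^{(3)} = \phi (x_f^{(4)} - x_f^{(1)})$; substituting this into the real part and simplifying with $\cos 36^\circ = \phi/2$, $\cos 72^\circ = 1/(2\phi)$ and $\phi^2 = \phi + 1$ yields both claimed identities $x_f^{(3)} = x_f^{(1)} + \phi x_f^{(2)}$ and $x_f^{(4)} = \phi x_f^{(1)} + x_f^{(2)}$. The main obstacle is the uniformity of the angle analysis in the first step: it has to hold for every orientation and coloring of the face edges as classified in Figure~\ref{fig:face_cases}, and it has to degenerate correctly for the five faces incident to two outer vertices, where the quadrilateral collapses to the golden triangle with angles $108^\circ$, $36^\circ$, $36^\circ$ and the identities (with $x_f^{(1)} = 0$) reduce to the law-of-sines ratios of that triangle.
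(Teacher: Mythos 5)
Your proof is correct, but it follows a genuinely different route from the paper's. Both arguments rest on the same geometric input, namely that every convex corner of the facial quadrilateral is a corner--opposite-side contact of two homothetic regular pentagons and therefore has interior angle exactly $\pi/5$ (with $2\pi - 3\pi/5$ at the concave corner); your symmetry justification of this fact is sound and is essentially the observation already used in the proof of Theorem~\ref{thm:regular_induce_fcf}. From there the paper avoids trigonometry almost entirely: it cuts the quadrilateral along an extension of the $x_f^{(1)}$-edge into two triangles with fixed angles, concludes that $x_f^{(3)}$ must be a fixed linear combination $\alpha x_f^{(1)} + (\alpha\gamma+\beta)x_f^{(2)}$, and then pins down the two coefficients by evaluating the degenerate configurations $x_f^{(1)}=0$ and $x_f^{(2)}=0$ (where only the single identity $2\cos(\pi/5)=\phi$ is needed). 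You instead write the closing condition $\sum_k x_f^{(k)}e^{i\theta_k}=0$ with the explicitly computed edge directions and extract both identities from the real and imaginary parts; I checked the turning angles ($-144^\circ$ at convex corners, $+72^\circ$ at the concave one, summing to $-360^\circ$) and the trigonometric simplifications, and they are all correct. Your version is more computational but entirely self-contained and handles both identities in one pass, whereas the paper's version trades the trigonometric bookkeeping for a similarity/interpolation argument plus two easy special cases. The ``obstacle'' you flag at the end is not a real one: the angle analysis does not depend on the case distinction of Figure~\ref{fig:face_cases} (only on the opposite-side contact property, which holds in every case), and the corner faces incident to two outer vertices are treated in the paper by the separate equation $x_f^{(1)}=0$, under which your identities degenerate exactly to the golden-triangle ratios as you note.
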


\begin{proof}
For geometric reasons the three convex corners of the facial quadrilateral
corresponding to~$f$ are exactly~$\frac{\pi}{5}$.
Now we cut the quadrilateral along an extension of the edge corresponding to~$x_f^{(1)}$
into two triangles. We denote the length of the cut by~$c$.
The edge corresponding to~$x_f^{(3)}$ is cut into two parts.
We denote the lengths of these parts by~$a$ and~$b$ in clockwise order (see Fig.~\ref{fig:var_faces} (middle)).
The two resulting triangles have constant inner angles.
Thus there are constants~$\alpha,\beta,\gamma \in \mathbb{R}$ such that
\[ c = \gamma x_f^{(2)} \enspace , \enspace b = \beta x_f^{(2)} \enspace , \enspace a = \alpha (x_f^{(1)} + c) \enspace . \]
Hence, we have~$x_f^{(3)} = a+b = \alpha x_f^{(1)} + (\alpha\gamma + \beta) x_f^{(2)}$.
To figure out the constants~$\alpha$ and~$\alpha\gamma + \beta$ let us consider the
special cases that~$x_f^{(1)} = 0$ or~$x_f^{(2)} = 0$ (see Fig.~\ref{fig:var_faces} (right)). In the first case we have~$x_f^{(3)} = 2 \cos (\pi/5) x_f^{(2)} = \phi x_f^{(2)}$,
in the second case~$x_f^{(3)} = x_f^{(1)}$ since the corresponding edges are the legs of an isosceles triangle.
Therefore we have~$x_f^{(3)} = x_f^{(1)} + \phi x_f^{(2)}$.
The second equation can be obtained symmetrically.
\end{proof}

%%%%%%%%%%%%%%%%%%%%%%%%%%%%%%%%%%%%%%%%%%%%%%%%%%%%%%%%%%%%%%%%%%%%%
%%%%%%%%%%%%%%%%%%%%%%%%%%%%%%%%%%%%%%%%%%%%%%%%%%%%%%%%%%%%%%%%%%%%%
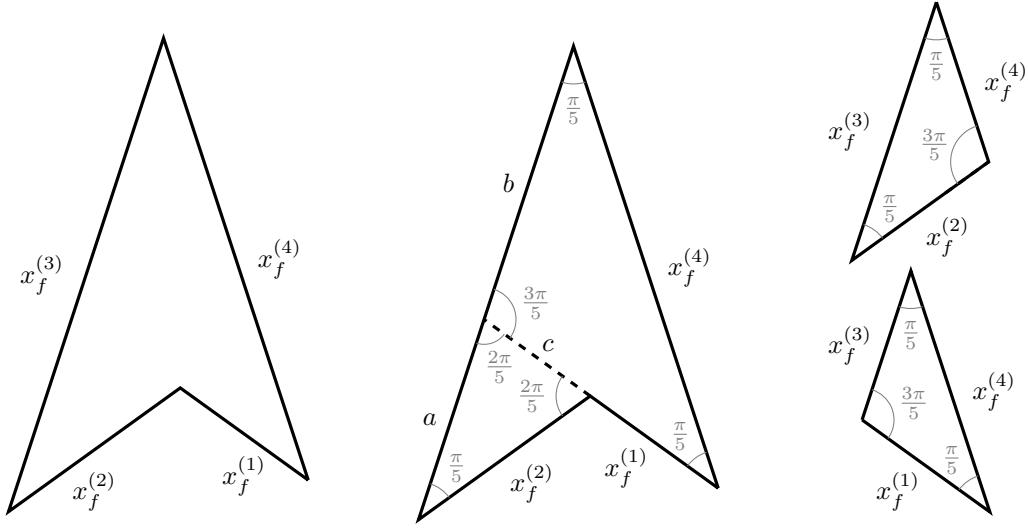
\begin{figure}

\centering

\tikzstyle{vertex}=[circle,fill,scale=0.5]
\tikzstyle{out edge}=[-latex',thick]
\tikzstyle{in edge}=[latex'-]
\tikzstyle{angle color}=[color=gray]

\hfill
\begin{minipage}[b]{.25\linewidth}
\begin{tikzpicture}[scale=2.2]

\coordinate (A) at (0,0);
\coordinate (B) at (-108:3);
\coordinate (C) at (-72:2.8);
\coordinate (hB) at ($(B)+(36:3)$);
\coordinate (hC) at ($(C)+(144:3)$);
\coordinate (D) at (intersection of B--hB and C--hC);

\draw[very thick] (C) -- (A) -- (B) -- (D) -- (C);

\node at ($(C)!0.5!(D)$) [label={below}:\footnotesize $x_f^{(1)}$] {};
\node at ($(B)!0.5!(D)$) [label={below}:\footnotesize $x_f^{(2)}$] {};
\node at ($(B)!0.5!(A)$) [label={left}:\footnotesize $x_f^{(3)}$] {};
\node at ($(C)!0.5!(A)$) [label={right}:\footnotesize $x_f^{(4)}$] {};

%\pic [draw, -, "\footnotesize $\pi/5$", angle eccentricity=2] {angle = B--A--C};
%\pic [draw, -, "\footnotesize $\pi/5$", angle eccentricity=2] {angle = D--B--A};
%\pic [draw, -, "\footnotesize $\pi/5$", angle eccentricity=2] {angle = A--C--D};

\end{tikzpicture}
\end{minipage}
\hfill
\begin{minipage}[b]{.25\linewidth}
\begin{tikzpicture}[scale=2.2]

\coordinate (A) at (0,0);
\coordinate (B) at (-108:3);
\coordinate (C) at (-72:2.8);
\coordinate (hB) at ($(B)+(36:3)$);
\coordinate (hC) at ($(C)+(144:3)$);
\coordinate (D) at (intersection of B--hB and C--hC);

\coordinate (E) at (intersection of C--hC and A--B);

\draw[very thick] (C) -- (A) -- (B) -- (D) -- (C);
\draw[very thick, dashed] (D) -- (E);

\node at ($(C)!0.5!(D)$) [label={below left, label distance=-.3cm}:\footnotesize $x_f^{(1)}$] {};
\node at ($(B)!0.5!(D)$) [label={below right, label distance=-.3cm}:\footnotesize $x_f^{(2)}$] {};
%\node at ($(B)!0.5!(A)$) [label={left}:\footnotesize $x_f^{(3)}$] {};
\node at ($(C)!0.5!(A)$) [label={right}:\footnotesize $x_f^{(4)}$] {};

\node at ($(B)!0.5!(E)$) [label={left, label distance=-.1cm}:\footnotesize $a$] {};
\node at ($(E)!0.5!(A)$) [label={left, label distance=-.1cm}:\footnotesize $b$] {};
\node at ($(D)!0.4!(E)$) [label={above, label distance=-.1cm}:\footnotesize $c$] {};

\pic [draw, angle color, "\footnotesize $\frac{\pi}{5}$", angle eccentricity=1.7, angle radius=.5cm] {angle = B--A--C};
\pic [draw, angle color, "\footnotesize $\frac{\pi}{5}$", angle eccentricity=1.7, angle radius=.5cm] {angle = D--B--A};
\pic [draw, angle color, "\footnotesize $\frac{\pi}{5}$", angle eccentricity=1.7, angle radius=.5cm] {angle = A--C--D};
\pic [draw, angle color, "\footnotesize $\frac{3\pi}{5}$", angle eccentricity=1.7, angle radius=.42cm] {angle = D--E--A};
\pic [draw, angle color, "\footnotesize $\frac{2\pi}{5}$", angle eccentricity=2, angle radius=.33cm] {angle = B--E--D};
\pic [draw, angle color, "\footnotesize $\frac{2\pi}{5}$", angle eccentricity=1.7, angle radius=.45cm] {angle = E--D--B};

\end{tikzpicture}
\end{minipage}
\hfill
\begin{minipage}[b]{.25\linewidth}
\begin{tikzpicture}[scale=1.2]

\coordinate (A) at (0,0);
\coordinate (B) at (-108:3);
\coordinate (C) at (-72:2.8);
\coordinate (hB) at ($(B)+(36:3)$);
\coordinate (hC) at ($(C)+(144:3)$);
\coordinate (D) at (intersection of B--hB and C--hC);

\coordinate (E) at (intersection of B--hB and A--C);

\draw[very thick] (A) -- (B) -- (E) -- (A);
\draw[very thick, dashed] (D) -- (E);

\node at ($(B)!0.5!(E)$) [label={below right, label distance=-.3cm}:\footnotesize $x_f^{(2)}$] {};
\node at ($(B)!0.5!(A)$) [label={left}:\footnotesize $x_f^{(3)}$] {};
\node at ($(E)!0.5!(A)$) [label={right}:\footnotesize $x_f^{(4)}$] {};

\pic [draw, angle color, "\footnotesize $\frac{\pi}{5}$", angle eccentricity=1.7, angle radius=.5cm] {angle = B--A--C};
\pic [draw, angle color, "\footnotesize $\frac{\pi}{5}$", angle eccentricity=1.7, angle radius=.5cm] {angle = D--B--A};
\pic [draw, angle color, "\footnotesize $\frac{3\pi}{5}$", angle eccentricity=1.5, angle radius=.5cm] {angle = A--E--B};

\end{tikzpicture}

\begin{tikzpicture}[scale=1.2]

\coordinate (A) at (0,0);
\coordinate (B) at (-108:3);
\coordinate (C) at (-72:2.8);
\coordinate (hB) at ($(B)+(36:3)$);
\coordinate (hC) at ($(C)+(144:3)$);
\coordinate (D) at (intersection of B--hB and C--hC);

\coordinate (E) at (intersection of C--hC and A--B);

\draw[very thick] (E) -- (C) -- (A) -- (E);

\node at ($(C)!0.5!(E)$) [label={below left, label distance=-.3cm}:\footnotesize $x_f^{(1)}$] {};
\node at ($(C)!0.5!(A)$) [label={right}:\footnotesize $x_f^{(4)}$] {};

\node at ($(E)!0.5!(A)$) [label={left, label distance=-.1cm}:\footnotesize $x_f^{(3)}$] {};

\pic [draw, angle color, "\footnotesize $\frac{\pi}{5}$", angle eccentricity=1.7, angle radius=.5cm] {angle = B--A--C};
\pic [draw, angle color, "\footnotesize $\frac{\pi}{5}$", angle eccentricity=1.7, angle radius=.5cm] {angle = A--C--D};
\pic [draw, angle color, "\footnotesize $\frac{3\pi}{5}$", angle eccentricity=1.7, angle radius=.42cm] {angle = D--E--A};

\end{tikzpicture}
\end{minipage}
\hfill

\caption{Left: The variables for an inner face~$f$.
  Middle: The cut as described in the proof of Lemma~\ref{lemma:face_equations}.
  Right: The two special cases with~$x_f^{(1)}=0$ and~$x_f^{(2)}=0$.}

\label{fig:var_faces}

\end{figure}
%%%%%%%%%%%%%%%%%%%%%%%%%%%%%%%%%%%%%%%%%%%%%%%%%%%%%%%%%%%%%%%%%%%%%
%%%%%%%%%%%%%%%%%%%%%%%%%%%%%%%%%%%%%%%%%%%%%%%%%%%%%%%%%%%%%%%%%%%%%

Finally, we add one more
equation to the system which implies that the sum of the
lengths of the face edges building the line segment corresponding to
the outer vertex~$a_1$ of~$G$ is exactly~$1$, i.e.,~$\sum_{f\in
  \delta_1(a_1)}x_f^{(j_{a_1,f,1})}=1$
with~${j_{a_1,f,1}\in\{1,\dotsc,4\}}$ appropriately chosen.
See Fig.~\ref{fig:equations} for an illustration of the different types of
equations.

%%%%%%%%%%%%%%%%%%%%%%%%%%%%%%%%%%%%%%%%%%%%%%%%%%%%%%%%%%%%%%%%%%%%%
%%%%%%%%%%%%%%%%%%%%%%%%%%%%%%%%%%%%%%%%%%%%%%%%%%%%%%%%%%%%%%%%%%%%%
\begin{figure}[t]

\centering

\tikzstyle{vertex}=[circle,fill,scale=0.5]
\tikzstyle{out edge}=[-latex',thick]
\tikzstyle{in edge}=[latex'-]

\tikzstyle{normal vertex}=[circle,fill,scale=0.5,color=gray!30]
\tikzstyle{stack vertex}=[circle,draw,scale=0.5,color=gray!30]
\tikzstyle{subdivision vertex}=[rectangle,fill,scale=.3]
\tikzstyle{undirected edge}=[color=gray!30,thin]
\tikzstyle{colored undirected edge}=[thick]
\tikzstyle{edge}=[-{Latex[scale length=1.8,scale width=1]},thin]
\tikzstyle{colored edge}=[thick,-{Latex[scale length=1.8,scale width=1.2]}]
\tikzstyle{skeleton edge}=[thick]

\tikzstyle{pcr}=[very thick,color=gray!70]
\tikzstyle{pcr fill}=[fill=gray!30]

\tikzstyle{pos edge}=[thick,color=green!80]
\tikzstyle{neg edge}=[thick,color=red!80]
\tikzstyle{pos vertex}=[circle,fill,scale=0.4,color=green!80]
\tikzstyle{sign separating edge}=[-{Latex[scale length=1.8,scale width=1]}, color=blue]

\begin{tikzpicture}[scale=6]

\def\A{0.4546294640647149}
\def\B{0.2707279827787506}
\def\C{0.1507367088399636}
\def\D{0.2356059251438577}
\def\E{0.1775675733633562}
\def\aa{0.2809764610791428}
\def\ab{0.4546294640647149}
\def\ba{0.2809764610791428}
\def\bb{0.4546294640647149}
\def\ca{0.1097427956383953}
\def\cb{0.1775675733633562}
\def\da{0.09316040941539433}
\def\db{0.1507367088399636}
\def\ea{0.167319095062964}
\def\eb{0.2707279827787506}
\def\fa{0.1136573660161787}
\def\fb{0.2707279827787506}
\def\fc{0.4546294640647149}
\def\fd{0.5517044438578933}
\def\ga{0.04853748989658916}
\def\gb{0.1570706167625718}
\def\gc{0.2356059251438577}
\def\gd{0.3026830864523393}
\def\ha{0.1775675733633562}
\def\hb{0.171233665440748}
\def\hc{0.4585440344424989}
\def\hd{0.4546294640647149}
\def\ia{0.006333907922608149}
\def\ib{0.1416978993119834}
\def\ic{0.1519463776123757}
\def\id{0.2356059251438577}
\def\ja{0.07415868564756978}
\def\jb{0.1507367088399636}
\def\jc{0.2707279827787506}
\def\jd{0.3180558039029276}
\def\ka{0.09828464856559038}
\def\kb{0.07657802319239386}
\def\kc{0.2356059251438577}
\def\kd{0.2221904928821613}
\def\la{0.1507367088399636}
\def\lb{0.05245206027437295}
\def\lc{0.296349178529731}
\def\ld{0.2356059251438577}
\def\ma{0.03586967405137269}
\def\mb{0.1775675733633562}
\def\mc{0.2356059251438577}
\def\md{0.3231800430531234}

\coordinate (o1) at (0,0);
\coordinate (o2) at ($(o1)+(0: \aa+\fd+\ea)$);
\coordinate (o3) at ($(o2)+(-72: \ea+\jd+\da)$);
\coordinate (o4) at ($(o3)+(-144:\da+\lc+\md+\ca)$);
\coordinate (o5) at ($(o4)+(-216:\ca+\hc+\ba)$);

\draw[pcr] (o1) -- (o2);
\draw[pcr] (o2) -- (o3);
\draw[pcr] (o3) -- (o4);
\draw[pcr] (o4) -- (o5);
\draw[pcr] (o5) -- (o1);
            
\coordinate (At) at (0: \aa);
\coordinate (A1) at ($(At)+(-144:\A)$);
\coordinate (A2) at ($(A1)+(-72:\A)$);
\coordinate (A3) at ($(A2)+(0:\A)$);
\coordinate (A4) at ($(A3)+(72:\A)$);
            
\draw[pcr fill] (At)
        -- ++(-144:\A)
        -- ++(-72:\A)
        -- ++(0:\A)
        -- ++(72:\A)
        -- ++(144:\A);
        
\draw[pcr] (At) -- (A1);
\draw[pcr] (A1) -- (A2);
\draw[pcr] (A2) -- (A3);
\draw[pcr] (A3) -- (A4);
\draw[pcr] (A4) -- (At);
        
\coordinate (Bt) at (0: \aa+\fd);
\coordinate (B1) at ($(Bt)+(-144:\B)$);
\coordinate (B2) at ($(B1)+(-72:\B)$);
\coordinate (B3) at ($(B2)+(0:\B)$);
\coordinate (B4) at ($(B3)+(72:\B)$);       
        
\draw[pcr fill] (Bt)
        -- ++(-144:\B)
        -- ++(-72:\B)
        -- ++(0:\B)
        -- ++(72:\B)
        -- ++(144:\B);
        
\draw[pcr] (Bt) -- (B1);
\draw[pcr] (B1) -- (B2);
\draw[pcr] (B2) -- (B3);
\draw[pcr] (B3) -- (B4);
\draw[pcr] (B4) -- (Bt);
        
\coordinate (Ctr) at ($(0: \aa+\fd+\ea)+(-72:\ea+\jd)$);
\coordinate (C1) at ($(Ctr)+(-216:\C)$);
\coordinate (C2) at ($(C1)+(-144:\C)$);
\coordinate (C3) at ($(C2)+(-72:\C)$);
\coordinate (C4) at ($(C3)+(0:\C)$);   
        
\draw[pcr fill] (Ctr)
        -- ++(-216:\C)
        -- ++(-144:\C)
        -- ++(-72:\C)
        -- ++(0:\C)
        -- ++(72:\C);
        
\draw[pcr] (Ctr) -- (C1);
\draw[pcr] (C1) -- (C2);
\draw[pcr] (C2) -- (C3);
\draw[pcr] (C3) -- (C4);
\draw[pcr] (C4) -- (Ctr);
        
\coordinate (Dbr) at ($(0: \aa+\fd+\ea)+(-72:\ea+\jd+\da)+(-144:\da+\lc)$);
\coordinate (D1) at ($(Dbr)+(-288:\D)$);
\coordinate (D2) at ($(D1)+(-216:\D)$);
\coordinate (D3) at ($(D2)+(-144:\D)$);
\coordinate (D4) at ($(D3)+(-72:\D)$);  
        
\draw[pcr fill] (Dbr)
        -- ++(-288:\D)
        -- ++(-216:\D)
        -- ++(-144:\D)
        -- ++(-72:\D)
        -- ++(0:\D);
        
\draw[pcr] (Dbr) -- (D1);
\draw[pcr] (D1) -- (D2);
\draw[pcr] (D2) -- (D3);
\draw[pcr] (D3) -- (D4);
\draw[pcr] (D4) -- (Dbr);
        
\coordinate (Ebl) at ($(-108: \aa+\ba)+(-36:\ba+\hc)$);
\coordinate (E1) at ($(Ebl)+(0:\E)$);
\coordinate (E2) at ($(E1)+(-288:\E)$);
\coordinate (E3) at ($(E2)+(-216:\E)$);
\coordinate (E4) at ($(E3)+(-144:\E)$);  
        
\draw[pcr fill] (Ebl)
        -- ++(0:\E)
        -- ++(-288:\E)
        -- ++(-216:\E)
        -- ++(-144:\E)
        -- ++(-72:\E);
        
\draw[pcr] (Ebl) -- (E1);
\draw[pcr] (E1) -- (E2);
\draw[pcr] (E2) -- (E3);
\draw[pcr] (E3) -- (E4);
\draw[pcr] (E4) -- (Ebl);

\node at ($1/3*(At)+1/3*(o1)+1/3*(A1)$) {\footnotesize $f_1$};
\node at ($1/3*(At)+1/3*(Bt)+1/3*(A4)$) {\footnotesize $f_2$};
\node at ($1/3*(Bt)+1/3*(o2)+1/3*(B4)$) {\footnotesize $f_3$};
\node at ($1/3*(A4)+1/3*(D2)+1/3*(D3)$) {\footnotesize $f_4$};

\draw[pcr,color=red] (o1) -- (At);
\draw[pcr,color=green] (At) -- (Bt);
\draw[pcr,color=blue] (Bt) -- (o2);
\draw[pcr,color=violet] (At) -- (A4);
\draw[pcr,color=orange] (A4) -- (B1);
\draw[pcr,color=magenta] (B1) -- (Bt);
\draw[pcr,color=olive] (A4) -- (B2);

\node at ($($(o1)!0.5!(o2)$)+(90:.1)$) {\footnotesize $x_{f_1}^{(\textcolor{red}{2})} + x_{f_2}^{(\textcolor{green}{4})} + x_{f_3}^{(\textcolor{blue}{4})} = 1$};

\node at ($1/5*(Bt)+1/5*(B1)+1/5*(B2)+1/5*(B3)+1/5*(B4)$) {\footnotesize $v$};

\node at ($(o2)+(0.15,0.05)$) {\footnotesize $x_{f_3}^{(1)} = 0$};

\node (face_eq1) at ($($(o2)!0.5!(o3)$)+(0.4,0)$) {\footnotesize $x_{f_2}^{(\textcolor{violet}{3})} = x_{f_2}^{(\textcolor{magenta}{1})} + \phi x_{f_2}^{(\textcolor{orange}{2})} $ };
\node (face_eq2) at ($(face_eq1)+(0,-0.13)$) {\footnotesize $x_{f_2}^{(\textcolor{green}{4})} = \phi x_{f_2}^{(\textcolor{magenta}{1})} + x_{f_2}^{(\textcolor{orange}{2})} $ };

\node (node_eq1) at ($(face_eq2)+(0,-0.2)$) {\footnotesize $x_{v} = x_{f_2}^{(\textcolor{magenta}{1})} $ };
\node (node_eq2) at ($(node_eq1)+(0,-0.13)$) {\footnotesize $x_{v} = x_{f_2}^{(\textcolor{orange}{2})} + x_{f_4}^{(\textcolor{olive}{1})} $ };
        
\end{tikzpicture}

\caption{Examples for the different types of equations before substituting the variables~$x_{f_i}^{(3)}$ and~$x_{f_i}^{(4)}$.}

\label{fig:equations}

\end{figure}
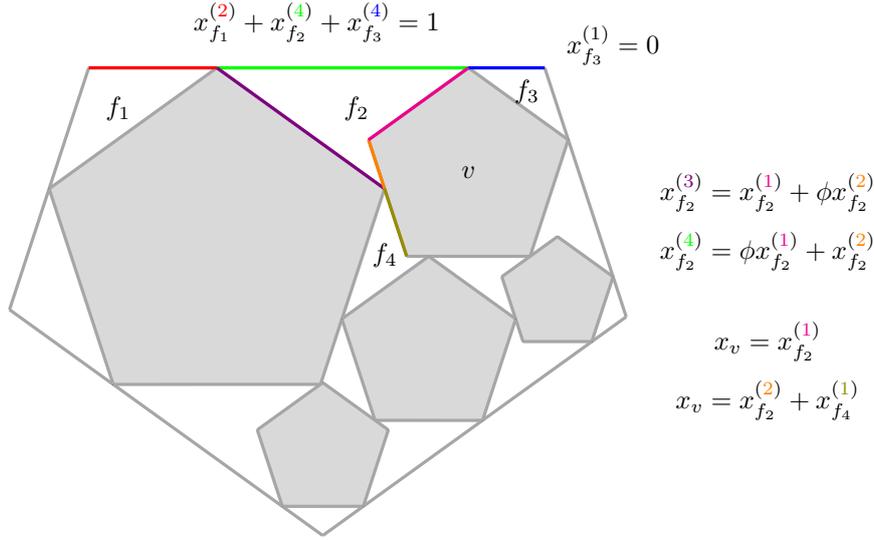
%%%%%%%%%%%%%%%%%%%%%%%%%%%%%%%%%%%%%%%%%%%%%%%%%%%%%%%%%%%%%%%%%%%%%
%%%%%%%%%%%%%%%%%%%%%%%%%%%%%%%%%%%%%%%%%%%%%%%%%%%%%%%%%%%%%%%%%%%%%

In the equations
\begin{equation}\label{eq:vertex_equations}
 \sum_{f\in \delta_1(a_1)}x_f^{(j_{a_1,f,1})}=1 \quad \text{and}\quad \sum_{f\in \delta_i(v)}x_f^{(j_{v,f,i})}-x_v=0
\end{equation} 
we eliminate the variables~$x_f^{(3)},x_f^{(4)}$ using substitutions
according to the equations of Lemma~\ref{lemma:face_equations}.  The resulting system
of linear equations is denoted~$A_F \mathbf{x} = \mathbf{e_1}$, here~$A_F$ is
the coefficient matrix depending on the five color forest~$F$
and~$\mathbf{e_1}$ is the first standard unit vector.

We next show that the system of linear equations is uniquely solvable.
For this purpose we need a lemma about perfect matchings in
plane bipartite graphs.  The lemma is well known from the context of
Pfaffian orientations, see e.g., \cite{Thom-06}, we include a proof
for completeness.  Let~$H$ be a bipartite graph with vertex classes~${
  \{ v_1,\dotsc,v_k \} }$ and~${ \{ w_1,\dotsc,w_k \} }$.  Then a
perfect matching of~$H$ induces a permutation~$\sigma \in
\mathcal{S}_k$ by~${ \sigma(i) = j :\Leftrightarrow \{ v_i,w_j \} \in
  M }$.  We define the \emph{sign} of a perfect matching~$M$, denoted by~$\sgn(M)$,
as the sign of the corresponding permutation.

\begin{lemma} \label{lemma:perfect_matchings_same_sign} Let~$H$ be a
  bipartite graph and let~$M,M'$ be two perfect matchings of~$H$.  If
  the symmetric difference of~$M$ and~$M'$ is the disjoint union of
  simple cycles~$C_1,\dotsc,C_m$ such that, for~$i=1,\dotsc,m$, the
  length~$\ell_i$ of~$C_i$ fulfills~$\ell_i \equiv 2 \mod 4$,
  then~$\sgn(M)=\sgn(M')$.

  If~$H$ is a plane graph such that each inner face~$f$ of~$H$ is
  bounded by a simple cycle of length~$\ell_f \equiv 2 \mod 4$, this
  property is fulfilled for any two perfect matchings of~$H$.
  Therefore we have~$\sgn(M)=\sgn(M')$ for any two perfect matchings~$M,M'$ of~$H$
  in this case.
\end{lemma}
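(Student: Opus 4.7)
The plan is to track how the induced permutation changes as $M$ is morphed into $M'$ along their symmetric difference. Since both are perfect matchings, every vertex has degree $0$ or $2$ in $M\triangle M'$, so this edge set decomposes into vertex-disjoint alternating cycles. The first step is to observe that reorienting a single alternating cycle of length $2k$ --- which meets exactly $k$ left-vertices and $k$ right-vertices of $H$ --- rematches those left-vertices by a cyclic shift, i.e.\ composes $\sigma_M$ with a $k$-cycle on the corresponding indices. Multiplying over all components $C_i$ of length $\ell_i = 2k_i$, one arrives at
\[\sgn(M') = \sgn(M) \cdot \prod_{i=1}^m (-1)^{k_i - 1}.\]
Under the hypothesis $\ell_i \equiv 2 \pmod 4$ every $k_i$ is odd, so every factor equals $+1$ and the first claim follows.

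For the plane statement, I would reduce it to showing that every simple cycle $C$ appearing in $M\triangle M'$ has length $\equiv 2 \pmod 4$. Let $\ell$ be the length of $C$ and let $V_{\text{in}}, E_{\text{in}}, k$ denote the numbers of vertices, edges, and inner faces strictly inside the region bounded by $C$. Euler's formula for the plane subgraph bounded by $C$ yields $V_{\text{in}} - E_{\text{in}} + k = 1$, and double-counting edge--face incidences gives $\sum_{f \text{ inside } C} \ell_f = 2E_{\text{in}} + \ell$. Combining these with the hypothesis $\ell_f \equiv 2 \pmod 4$, a short arithmetic modulo $4$ reduces the claim to
\[\ell \equiv 2V_{\text{in}} + 2 \pmod 4,\]
so only the parity of $V_{\text{in}}$ remains.

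The step I expect to be the main obstacle is precisely this parity, since it is the only place where planarity genuinely enters. My plan is to show that $M$ restricts to a perfect matching on the interior vertices of $C$. A strictly interior vertex $v$ cannot be $M$-matched to a vertex $w$ on $C$, since each such $w$ already has its unique $M$-partner among its two $C$-neighbors (the edges of $C$ alternate between $M$ and $M'$); and it cannot be $M$-matched to a vertex outside $C$, as the corresponding edge would have to cross the simple closed curve $C$ in the plane embedding. Hence the $V_{\text{in}}$ interior vertices are paired up by $M$, so $V_{\text{in}}$ is even and $\ell \equiv 2 \pmod 4$, completing both parts of the lemma.
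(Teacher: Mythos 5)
Your proof is correct and takes essentially the same route as the paper's: decompose $M\triangle M'$ into alternating cycles, observe that each cycle of length $\equiv 2 \pmod 4$ contributes an odd (hence even-signed) cyclic permutation, and in the planar case reduce the length congruence to the parity of the number of interior vertices. The only differences are cosmetic: you obtain $\ell \equiv 2V_{\text{in}}+2 \pmod 4$ directly from Euler's formula and edge--face double counting where the paper runs an induction on the number of enclosed faces, and you explicitly justify that $V_{\text{in}}$ is even via the interior restriction of $M$, a point the paper merely asserts.
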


\begin{proof}
  For~$i=1,\dotsc,m$, there is an~$n_i\in\mathbb{N}$ with~$\ell_i=4n_i
  + 2$.  Then on the vertices of~$C_i$ the permutation~$\sigma$
  corresponding to~$M$ and the permutation~$\sigma'$ corresponding
  to~$M'$ differ in a cyclic permutation~$\tau_i$ of length~$2n_i+1$.
  See Fig.~\ref{fig:permutations}.
  Hence, we have~$\sigma'=\sigma\circ\tau_1\circ\cdots\circ\tau_m$ and
  therefore
\begin{align*}
\sgn(\sigma') &= \sgn(\sigma) \cdot \sgn(\tau_1) \cdots \sgn(\tau_m) \\
              &= \sgn(\sigma) \cdot (-1)^{2n_1} \cdots (-1)^{2n_m} = \sgn(\sigma) \enspace .
\end{align*}

In the case that~$H$ is a plane graph such that each inner face~$f$
of~$H$ is bounded by a simple cycle of length~$\ell_f \equiv 2 \mod
4$, for each cycle of length~$\ell$ with~$k'$ vertices in its interior
the formula~$\ell + 2k' \equiv 2 \mod 4$ is valid.  This can be shown
by induction on the number of faces enclosed by the cycle.  Since each
of the cycles~$C_1,\dotsc,C_m$ contains an even number of vertices in
its interior, this implies~$\ell_i \equiv 2 \mod 4$
for~$i=1,\dotsc,m$.
\end{proof}

%%%%%%%%%%%%%%%%%%%%%%%%%%%%%%%%%%%%%%%%%%%%%%%%%%%%%%%%%%%%%%%%%%%%%
%%%%%%%%%%%%%%%%%%%%%%%%%%%%%%%%%%%%%%%%%%%%%%%%%%%%%%%%%%%%%%%%%%%%%
\begin{figure}

\centering

\begin{tikzpicture}[scale=2]

\foreach \j in {1,...,5} {
  \node[circle,draw,scale=0.5] (v\j) at (\j * 72:1) {};
  \node[circle,draw,fill,scale=0.5] (w\j) at (\j * 72 + 36:1) {};
}

\foreach \i/\j in {1/5,2/1,3/2,4/3,5/4} {
  \draw[color=red,very thick] (v\i) -- (w\i);
  \draw[color=red,dotted,-latex',thick] (v\i) to [in=310-72 + \i * 72, out=230-72 + \i * 72] (w\i);
  \draw[color=blue,very thick] (v\i) -- (w\j);
  \draw[color=blue,dotted,-latex',thick] (v\i) to [in=230-72-36 + \i * 72, out=310-72-36 + \i * 72] (w\j);
  \draw[color=green,dotted,latex'-,thick] (w\j) to [in=50+72-72-72 + \i * 72, out=130+36-72-72 + \i * 72] (w\i);
}

\node at (38:1.7) {\footnotesize $\textcolor{red}{M},\textcolor{blue}{M'}$};
\node at (90:.65) {\footnotesize $\textcolor{red}{\sigma}$};
\node at (90-36:.65) {\footnotesize $\textcolor{blue}{\sigma'}$};
\node at (90-18:1.3) {\footnotesize $\textcolor{green}{\tau_i}$};
        
\end{tikzpicture}

\caption{Illustration of~$\sigma'=\sigma\circ\tau_i$ in the proof of Lemma~\ref{lemma:perfect_matchings_same_sign}.}

\label{fig:permutations}

\end{figure}
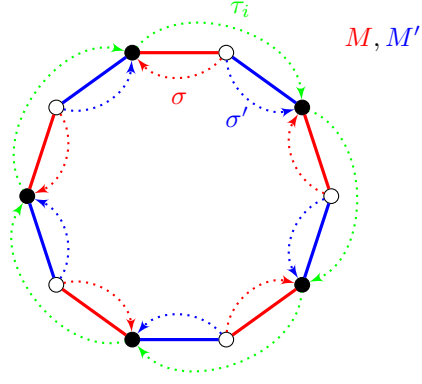
%%%%%%%%%%%%%%%%%%%%%%%%%%%%%%%%%%%%%%%%%%%%%%%%%%%%%%%%%%%%%%%%%%%%%
%%%%%%%%%%%%%%%%%%%%%%%%%%%%%%%%%%%%%%%%%%%%%%%%%%%%%%%%%%%%%%%%%%%%%

\begin{theorem} \label{thm:uniquely_solvable}
The system~$A_F \mathbf{x} = \mathbf{e_1}$ is uniquely solvable.
\end{theorem}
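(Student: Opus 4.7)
The goal is to prove $\det A_F \neq 0$ by bringing Lemma~\ref{lemma:perfect_matchings_same_sign} to bear on the matrix $A_F$. I would form the bipartite incidence graph $H_F$ whose two vertex classes are the equations and the variables of the system, and whose edges record the nonzero entries of $A_F$. The Leibniz formula then expresses $\det A_F$ as a signed sum over the perfect matchings of $H_F$, so it suffices to exhibit at least one such matching and to show that the contributions of all matchings have the same sign.

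First I would normalize signs. In each vertex-side equation the coefficient of $x_v$ is $+1$ while the face-variable coefficients lie in $\{-1,-\phi\}$; in a corner-face equation the unique nonzero coefficient is $+1$; in the $a_1$-equation all nonzero coefficients lie in $\{1,\phi\}$. Multiplying each of the $5n$ vertex-side rows by $-1$ and substituting $y_v=-x_v$ in each of the $n$ vertex-variable columns leaves $\det A_F$ unchanged (the two sign factors $(-1)^{5n}$ and $(-1)^n$ multiply to $1$) but turns $A_F$ into a matrix all of whose nonzero entries are strictly positive, lying in $\{1,\phi\}$. After this normalization, the contribution of a perfect matching $M$ to $\det A_F$ is $\varepsilon(M)\cdot\prod_{(E,x)\in M}A_F[E,x]$, where $\varepsilon(M)$ is the sign of the associated permutation and the product is a strictly positive monomial in $\phi$ and $1$.

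The heart of the argument is to exhibit a plane embedding of $H_F$ in which every bounded face has length $\equiv 2\pmod 4$; Lemma~\ref{lemma:perfect_matchings_same_sign} will then force $\varepsilon(M)$ to be constant over all perfect matchings of $H_F$. I would construct the embedding directly on top of the skeleton graph $\skel$ defined from the five color forest $F$: place each vertex-side equation in the corresponding sector of its abstract pentagon, each corner-face equation inside its triangular face-region, and the $a_1$-equation near $a_1$; place the variable $x_v$ inside the abstract pentagon of $v$, and the face-variables $x_f^{(1)},x_f^{(2)}$ on the two sides of the face-quadrilateral incident to its concave corner. Each incidence then becomes a short arc lying inside one pentagon or quadrilateral of $\skel$. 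A case analysis based on properties~(F\ref{item:outer_edges})--(F\ref{item:no_three_empty}) and Lemma~\ref{lem:missing_edge} should show that every bounded face of the resulting plane bipartite graph is a hexagon (of length $6$) or a decagon (of length $10$), both congruent to $2 \pmod 4$.

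Finally, I need one explicit perfect matching. Euler's formula applied to $G$ yields $m=2n+3$ for the number of inner faces $m$ (where $n$ is the number of inner vertices), so the total number of variables $n+2m=5n+6$ equals the number of equations $5n+1+5$. A natural candidate matches each vertex-side equation $(v,i)$ to a chosen face-variable $x_f^{(j_{v,f,i})}$ supported by it (say, to the last face of $\delta_i(v)$ in clockwise order), each corner-face equation to its own $x_f^{(1)}$, the $a_1$-equation to a remaining face-variable adjacent to $a_1$, and each $x_v$ to a remaining vertex-side equation of $v$. Verifying Hall's condition for this assignment uses (F\ref{item:outer_edges})--(F\ref{item:no_three_empty}) and the counting identity $m=2n+3$. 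Combined, Lemma~\ref{lemma:perfect_matchings_same_sign} ensures all matchings contribute with the same sign, the normalization makes each contribution a positive monomial, and the existence of a matching makes the sum nonempty, so $\det A_F\neq 0$. The main obstacle I expect is the face-length verification for the plane embedding of $H_F$; the sign normalization and the existence of a perfect matching are comparatively routine once the embedding is in place.
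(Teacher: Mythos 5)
Your overall strategy (Leibniz formula over perfect matchings of the bipartite equation--variable graph $H_F$, sign normalization, an explicit matching, and Lemma~\ref{lemma:perfect_matchings_same_sign} to force a constant sign) is the right skeleton, and your counting $m=2n+3$, $n+2m=5n+6$ is correct. But the step you flag as the ``main obstacle'' is in fact a genuine gap, and it cannot be closed in the form you propose: there is no crossing-free plane embedding of $H_F$ of the kind you describe. After the substitutions of Lemma~\ref{lemma:face_equations}, each of the two surviving face variables $x_f^{(1)},x_f^{(2)}$ occurs in the vertex-side equations at \emph{all three} corners of the face $f$ (its ``own'' equation plus the two equations into which $x_f^{(3)}$ and $x_f^{(4)}$ were substituted). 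In particular $x_f^{(1)}$ and $x_f^{(2)}$ have two common neighbours among the four equation-arcs bordering $f$, and with those arcs appearing in their forced cyclic order around the face, one pair of edges per inner face of $G$ must cross. So $H_F$ is drawn with crossings, and Lemma~\ref{lemma:perfect_matchings_same_sign} cannot be applied to it directly; your claimed hexagon/decagon face structure does not exist.

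The missing idea is how to handle these crossings, and this is exactly where the golden ratio becomes essential rather than incidental. One shows first that all matchings avoiding both edges of every crossing pair have the same sign (by a swap argument that removes crossings from a cycle of the symmetric difference without changing its length, reducing to a planar subgraph whose faces are hexagons). Then one groups \emph{all} perfect matchings into equivalence classes by exchanging a crossing pair with the non-crossing ``twin'' pair on the same four vertices; the two crossing edges carry coefficient $1$ and the two twin edges carry coefficient $\phi$, so each class sums to $\sgn(M_0)\,P_{M_0}\bigl(1-\tfrac{1}{\phi^2}\bigr)^{k}$, which is a positive multiple of $\sgn(M_0)$ precisely because $\phi>1$. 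Without this cancellation analysis the signed sum over all matchings could a priori vanish. Your sign normalization and the existence of a perfect matching are fine in outline (the latter still needs an actual verification, e.g.\ an explicit construction rather than an appeal to Hall's condition), but as written the proposal proves nothing about $\det A_F$ because it rests on a planarity assumption that fails.
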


\begin{proof}
  We show that~$\det(A_F)\neq 0$.  Let~$\hat{A}_F$ be the matrix
  obtained from~$A_F$ by multiplying all columns corresponding to inner
  vertices of~$G$ with~$-1$.  Since in~$A_F$ all entries in these
  columns are non-positive (all vertex-variables have negative coefficients
  in~(\ref{eq:vertex_equations})) and the entries in all other columns are
  non-negative, all entries of~$\hat{A}_F$ are non-negative.  Further
  we have~$\det(A_F)=(-1)^{n} \det(\hat{A}_F)$ where~$n$ is the number of inner vertices of~$G$.

  Now we want to interpret the Leibniz formula of~$\det(\hat{A}_F)$ as
  the sum over the perfect matchings of a plane auxiliary graph~$H_F$.
  Let~$H_F$ be the bipartite graph whose first vertex
  class~$v_1,\dotsc,v_k$ consists of the variables of the equation
  system and whose second vertex class~$w_1,\dotsc,w_k$ consists of
  the equations of the equation system.  There is an edge~$v_i w_j$
  in~$H_F$ if and only if~$(\hat{A}_F)_{ij}>0$.  Then we have
\[
  \det(\hat{A}_F) = \sum_{\sigma} \sgn(\sigma) \prod_{i} (\hat{A}_F)_{i \sigma(i)}
                  = \sum_{M} \sgn(M) P_M \enspace ,
\]
where the second sum goes over all perfect matchings of~$H_F$.
The idea of the second equality is to ignore all permutations~$\sigma$
with~$\prod_{i} (\hat{A}_F)_{i \sigma(i)} = 0$ and we have
for each perfect matching~${M=\{ v_1 w_{\sigma(1)}, \dotsc, v_k w_{\sigma(k)} \}}$
a product~$P_M = \prod_{i} (\hat{A}_F)_{i \sigma(i)} > 0$ in the final sum.

Next we will define an embedding of~$H_F$ into the plane.
See Fig.~\ref{fig:graph_embedding} for an illustration. We start
with a crossing-free straight-line drawing of~$G$.
Then we draw the missing outgoing edges (see Lemma~\ref{lem:missing_edge})
as segments starting at a vertex and ending inside a face of the drawing.
After that we put pairwise disjoint disks around the inner vertices and cut the
bordering circles at the intersections with the five outgoing edges
of the vertex (including the edges we added in the last step) into five arcs.  These five arcs are the drawings of
the five equation-vertices incident to the respective vertex and each
of these arcs is contained in exactly those faces of the embedding
of~$G$ which are involved in the corresponding equation.  Then every
inner face~$f$ of~$G$ is intersected by exactly four of these arcs,
two from the incident vertices with the missing outgoing edge and one from the other two
vertices.  We denote them by~$A_1,\dotsc,A_4$ in clockwise order
where~$A_1$ and~$A_2$ come from the same vertex.  We place the
vertices corresponding to~$x_f^{(1)}$ and~$x_f^{(2)}$ inside~$f$, but
outside of the disks of the three incident vertices of~$f$.  We
connect~$x_f^{(1)}$ to~$A_1$ and~$A_4$, and~$x_f^{(2)}$ to~$A_2$
and~$A_3$.  Up to this point the drawing is crossing-free.  Finally we
add the two intersecting edges~$x_f^{(1)} A_3$ and~$x_f^{(2)} A_4$
inside~$f$.
%%%%%%%%%%%%%%%%%%%%%%%%%%%%%%%%%%%%%%%%%%%%%%%%%%%%%%%%%%%%%%%%%%%%%
%%%%%%%%%%%%%%%%%%%%%%%%%%%%%%%%%%%%%%%%%%%%%%%%%%%%%%%%%%%%%%%%%%%%%
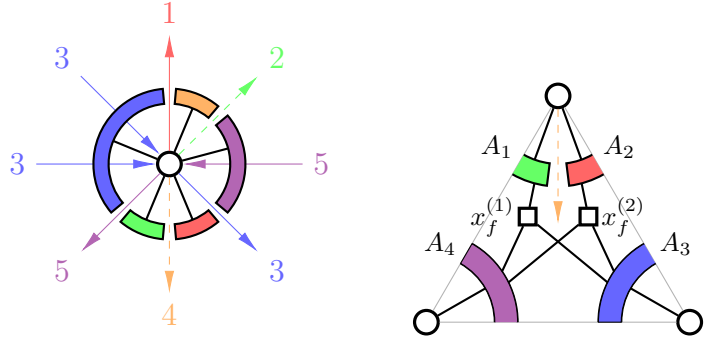
\begin{figure}

\centering

\tikzstyle{vertex}=[draw,very thick,circle,scale=0.8]
\tikzstyle{edge}=[-{Latex[scale length=1.8,scale width=1]}]
\tikzstyle{missing edge}=[edge,dashed]
\tikzstyle{eq vertex}=[thick]
\tikzstyle{H edge}=[thick]
\tikzstyle{face vertex}=[draw,very thick,rectangle,scale=0.8]

\begin{tikzpicture}

\pgfmathsetmacro{\eps}{5}
\pgfmathsetmacro{\rad}{.8}
\pgfmathsetmacro{\radout}{1}

\node[vertex] (v) at (0,0) {};

\node[color3light] (v1) at (135:2) {$3$};
\node[color3light] (v2) at (180:2) {$3$};
\node[color5light] (v3) at (225:2) {$5$};
\node[color4light] (v4) at (270:2) {$4$};
\node[color3light] (v5) at (315:2) {$3$};
\node[color5light] (v6) at (0:2) {$5$};
\node[color2light] (v7) at (45:2) {$2$};
\node[color1light] (v8) at (90:2) {$1$};

\draw[edge,color3light] (v1) -- (v);
\draw[edge,color3light] (v2) -- (v);
\draw[edge,color5light] (v) -- (v3);
\draw[missing edge,color4light] (v) -- (v4);
\draw[edge,color3light] (v) -- (v5);
\draw[edge,color5light] (v6) -- (v);
\draw[missing edge,color2light] (v) -- (v7);
\draw[edge,color1light] (v) -- (v8);

\foreach \alpha/\beta/\color in {45/90/fill4light, 90/225/fill3light, 225/270/fill2light, 270/315/fill1light, 315/405/fill5light}
    \draw[eq vertex, \color] (v) ++(\alpha+\eps:\rad) arc (\alpha+\eps:\beta-\eps:\rad) -- (\beta-\eps:\radout) arc (\beta-\eps:\alpha+\eps:\radout) -- cycle;

\draw[H edge] (v) -- (67.5:\rad);
\draw[H edge] (v) -- (157.5:\rad);
\draw[H edge] (v) -- (247.5:\rad);
\draw[H edge] (v) -- (292.5:\rad);
\draw[H edge] (v) -- (15:\rad);

\end{tikzpicture}
\qquad
\begin{tikzpicture}

\pgfmathsetmacro{\eps}{7}
\pgfmathsetmacro{\rad}{.9}
\pgfmathsetmacro{\radout}{1.2}

\node[vertex] (v1) at (90:2) {};
\node[vertex] (v2) at (-30:2) {};
\node[vertex] (v3) at (-150:2) {};

\draw[color=gray!60] (v1) -- (v2) -- (v3) -- (v1);
\draw[missing edge,color4light] (v1) -- (90:0.3);

\draw[eq vertex, fill2light] (v1) ++(-120:\rad) arc (-120:-90-\eps:\rad) -- ++(-90-\eps:\radout-\rad) arc (-90-\eps:-120:\radout);
\draw[eq vertex, fill1light] (v1) ++(-60:\rad) arc (-60:-90+\eps:\rad) -- ++(-90+\eps:\radout-\rad) arc (-90+\eps:-60:\radout);
\draw[eq vertex] (v2) ++(120:\rad) arc (120:180:\rad) +(180:\radout-\rad) arc (180:120:\radout);
\draw[ fill3light] (v2) ++(120:\rad) arc (120:180:\rad) -- +(180:\radout-\rad) arc (180:120:\radout);
\draw[eq vertex] (v3) ++(0:\rad) arc (0:60:\rad) +(60:\radout-\rad) arc (60:0:\radout);
\draw[fill5light] (v3) ++(0:\rad) arc (0:60:\rad) -- +(60:\radout-\rad) arc (60:0:\radout);

\node[face vertex] (f1) at (-0.4,0.4) [label={left,label distance=-.1cm}: \footnotesize $x_f^{(1)}$] {};
\node[face vertex] (f2) at (0.4,0.4) [label={right,label distance=-.1cm}: \footnotesize $x_f^{(2)}$] {};

\draw[H edge] (f1) -- ($(v1)+(-105:\radout)$);
\draw[H edge] (f2) -- ($(v1)+(-75:\radout)$);
\draw[H edge] (f1) -- ($(v2)+(154:\radout)$);
\draw[H edge] (f2) -- ($(v2)+(146:\radout)$);
\draw[H edge] (f1) -- ($(v3)+(34:\radout)$);
\draw[H edge] (f2) -- ($(v3)+(26:\radout)$);

\draw[H edge] (v2) -- ($(v2)+(150:\rad)$);
\draw[H edge] (v3) -- ($(v3)+(30:\rad)$);
\draw[H edge] (v1) -- ($(v1)+(-110:\rad)$);
\draw[H edge] (v1) -- ($(v1)+(-70:\rad)$);

\node at ($(v1)+(-140:.5*\rad+.5*\radout)$) {\footnotesize $A_1$};
\node at ($(v1)+(-40:.5*\rad+.5*\radout)$) {\footnotesize $A_2$};
\node at ($(v2)+(100:.5*\rad+.5*\radout)$) {\footnotesize $A_3$};
\node at ($(v3)+(80:.5*\rad+.5*\radout)$) {\footnotesize $A_4$};

\end{tikzpicture}

\caption{Embedding of~$H_F$ into the plane.}
\label{fig:graph_embedding}
\end{figure}
%%%%%%%%%%%%%%%%%%%%%%%%%%%%%%%%%%%%%%%%%%%%%%%%%%%%%%%%%%%%%%%%%%%%%
%%%%%%%%%%%%%%%%%%%%%%%%%%%%%%%%%%%%%%%%%%%%%%%%%%%%%%%%%%%%%%%%%%%%%
\medskip

\begin{myclaim}
The graph~$H_F$ has a perfect matching.
\end{myclaim}

\begin{claimproof}
  We describe an explicit construction of a perfect matching of~$H_F$.
  The five equation-vertices adjacent to a vertex~$v$ of~$G$ are
  corresponding to the five colors of the five color forest.  We
  always match the vertex~$v$ with the equation-vertex of color~$4$.
  The equation-vertices of colors~$2$ and~$3$ are matched with one of
  the two variable-vertices of the last incident face in clockwise
  order, and the equation-vertices of colors~$5$ and~$1$ are matched
  with one of the two variable-vertices of the last incident face in
  counterclockwise order (see Fig.~\ref{fig:perf-match} (left)).
%%%%%%%%%%%%%%%%%%%%%%%%%%%%%%%%%%%%%%%%%%%%%%%%%%%%%%%%%%%%%%%%%%
% in einem figure environment mit caption
\calc_figscale{55}%
\begin{figure}[t]
    \centerline{\input{\fpath/perf-match.pstex_t}}
    \caption{\label{fig:perf-match}}
    \end{figure}%
VC
{Left: Cases for the construction of the perfect matching.
Right: The complete perfect matching in a small instance.}% 
%%%%%%%%%%%%%%%%%%%%%%%%%%%%%%%%%%%%%%%%%%%%%%%%%%%%%%%%%%%%%%%%%%

Now we will show that each pair of face vertices (except for the five corner-faces,
i.e.\ the five faces incident to two outer vertices) is
matched exactly twice.  We call a segment of a facial quadrilateral~$B$ a
\emph{short segment} if it is incident to the concave corner, and we call
it a \emph{long segment} otherwise.  For two adjacent segments of~$B$ we
call the segment, whose containing pentagon side ends in the contact point of the two
segments, the \emph{cut segment}.  Above we mentioned that the five
equation-vertices of a vertex of~$G$ correspond to the five colors.
Since each segment of~$B$ is involved in exactly one of these equations,
we can also associate a color with each of the segments of~$B$.  We
distinguish three cases concerning segments of color~$4$
(Fig.~\ref{fig:perf-match} (left) shows two of the cases).  If~$B$ has a
short segment of color~$4$, the other short segment and the cut long segment
are matched.  If~$B$ has a long segment of color~$4$, the short segment,
which is neighboring the segment of color~$4$, and the cut segment of the
two remaining segments are matched.  If~$B$ has no segment of color~$4$, for
each pair of neighboring long and short segments the cut segment is matched.
Hence, in every case the face is matched exactly twice.

Since each face is matched exactly twice, its two variable-vertices
are matched exactly once.  It can easily be seen that each corner-face
is matched exactly once, except the corner-face of color~$4$ which is
not matched.  Finally we match the corner-face of color~$4$ with the
equation-vertex corresponding to the non-homogeneous equation, and
obtain a perfect matching. Figure~\ref{fig:perf-match} (right) shows an
example.
\end{claimproof}

Let~$\mathcal{M}_0$ be the set of perfect matchings of~$H_F$ that do
not contain both edges of any pair of crossing edges.

\begin{myclaim} \label{claim:min_matchings_same_sign}
Let~$M_1,M_2 \in \mathcal{M}_0$.
Then~$\sgn(M_1) = \sgn(M_2)$.
\end{myclaim}

\begin{claimproof}
  Since each vertex has degree~$1$ in~$M_1$ and in~$M_2$,
  each vertex has degree~$0$ or~$2$ in the symmetric difference
  of~$M_1$ and~$M_2$. Therefore
  the symmetric difference of~$M_1$ and~$M_2$ is a disjoint union of
  simple cycles.  Let~$C$ be one of these cycles.  Due to
  Lemma~\ref{lemma:perfect_matchings_same_sign} it suffices to show
  that~$\ell(C) \equiv 2 \mod 4$.

Now consider a pair~$v_a w_b , v_c w_d$ of crossing edges that are
both contained in~$C$, i.e., $v_a$ and~$v_c$ are the two variable vertices
of a face and~$w_b,w_d$ correspond to the equations for sidelength
of different pentagons, in particular $v_a w_b v_c w_d$ is a 4-cycle
in $H_F$. Since the edges~$v_a w_b , v_c w_d$ cannot be contained in the same
matching, we can assume that the edge~$v_a w_b$ is contained in~$M_1$ and the
edge~$v_c w_d$ in~$M_2$.

There are two paths~$P_1,P_2$ such that~$C=v_aw_bP_1w_d v_cP_2$.
If~$C$ looked like~$C=v_a w_bP_1v_c w_dP_2$,
the path~$P_1$ would start with an edge of~$M_2$ and end with an
edge of~$M_1$. Therefore~$P_1$ would have even length and the cycle $v_c
w_bP_1$ would have odd length, and that is not possible since~$H_F$ is
bipartite.

Let~$C'$ be the cycle defined by~$C' := v_c w_b P_1 w_d v_a\overline{P_2}$
where~$\overline{P_2}$ denotes the reversed path~$P_2$.  Note
that~$\ell(C') = \ell(C)$ and that~$C'$ is contained in the symmetric
difference of the perfect matchings~$M_1'$ and~$M_2'$ obtained
from~$M_1$ and~$M_2$ in the following way: In~$M_1$ we
replace~$v_a w_b$ by~$v_a w_d$ and in~$M_2$ we replace~$v_c w_d$
by~$v_c w_b$.  Further all edges of~$P_1$ switch the matching.
See Fig.~\ref{fig:matching_crossing}.

\begin{figure}
\centering
\includegraphics[]{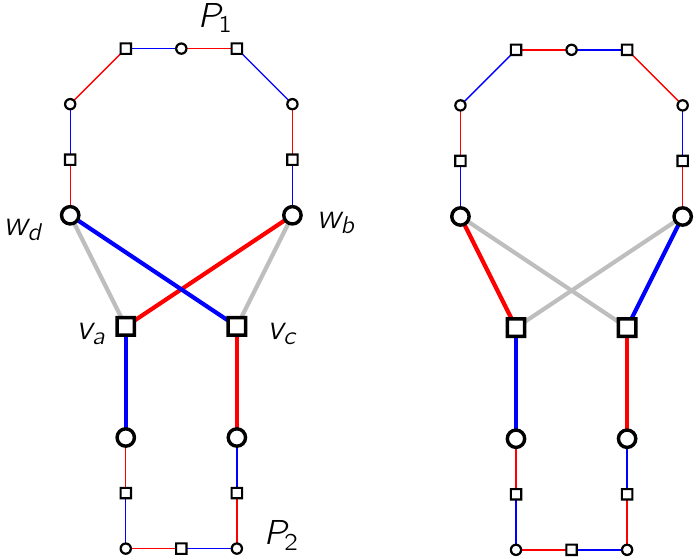}
\caption{Reducing the number of crossings in the symmetric difference
  of two perfect matchings of~$H_F$.}
\label{fig:matching_crossing}
\end{figure}

The swap does not create new crossings since the only new edges~$v_a w_d$
and~$v_c w_b$ have no crossing edges.
Therefore the cycle~$C'$ has one crossing less than~$C$.  By iterating this
process we obtain a cycle~$C''$ with~$\ell(C'')=\ell(C)$ which is
crossing-free and contained in the symmetric difference of two
perfect matchings of~$H_F$. Since~$C''$ is crossing-free, it is also
contained in a spanning subgraph~$H_F'$ of~$H_F$ that keeps all non-crossing
edges of~$H_F$ and exactly one of the two edges of every
crossing pair. Then~$H_F'$ inherits a crossing-free drawing from~$H_F$
where every inner face is a simple cycle of length~$6$ (note that
every face corresponds to a corner of a pentagon in the pentagon
contact representation).  Due to
Lemma~\ref{lemma:perfect_matchings_same_sign} this implies~$\ell(C'')
\equiv 2 \mod 4$.
\end{claimproof}

Now we will consider all perfect matchings of~$H_F$.  We divide them
into equivalence classes according to the following equivalence
relation: Two perfect matchings~$M,M'$ are equivalent if~$M'$ can be
obtained from~$M$ by exchanging pairs of crossing edges with the two
non-crossing edges on the same four vertices (we call them \emph{twin
  edges}).  Exchanges in both directions are allowed.  Note that each
equivalence class contains exactly one of the matchings
of~$\mathcal{M}_0$ we considered in
Claim~\ref{claim:min_matchings_same_sign}.  Let~$M_0\in\mathcal{M}_0$ be
one of these matchings and let~$k=k(M_0)$ be the number of pairs of
twin edges contained in~$M_0$. Because each pair of twin edges can be
exchanged independently with the corresponding pair of crossing edges,
the equivalence class~$\mathcal{A}_{M_0}$ of $M_0$ consists of $2^k$
matchings and for~$i=0,\dotsc,k$, the class contains~$\binom{k}{i}$
matchings with exactly~$i$ pairs of crossing edges.
Note that the entries of~$A_F$ corresponding to twin edges are~$\phi$
and that the entries corresponding to crossing edges are~$1$, see the equations in Lemma~\ref{lemma:face_equations}.
Thus in the product $P_{M_0}$ of entries of $A_F$ corresponding to 
the edges of~$M_0$ we see a contribution of $\phi^2$ for each 
pair of twin edges.
The contribution for a pair of crossing edges is $1$. Therefore
\[
  \sum_{M \in \mathcal{A}_{M_0}} \sgn(M) P_M
    = \sum_{i=0}^{k} \binom{k}{i} \sgn(M_0) (-1)^{i} P_{M_0} \left(\frac{1}{\phi^2}\right)^{i}
    = \sgn(M_0) P_{M_0} \left(1 - \frac{1}{\phi^2}\right)^k
\]
and
\[
  \det(\hat{A}_F)
    = \sum_{M_0 \in \mathcal{M}_0} \sum_{M \in \mathcal{A}_{M_0}} \sgn(M) P_M
    = \sum_{M_0 \in \mathcal{M}_0} \sgn(M_0) \underbrace{P_{M_0}}_{>0}  
    \underbrace{\left(1 - \frac{1}{\phi^2}\right)^{k(M_0)}}_{>0} \enspace .
\]
Because of Claim~\ref{claim:min_matchings_same_sign} we
have~$\sgn(M_0)=\sgn(M_0')$ for any two matchings~$M_0,M_0' \in
\mathcal{M}_0$.  Finally this implies~$\det(\hat{A}_F) \neq 0$.
\end{proof}

The following lemma will help us to prove that a non-negative solution
of the system~$A_F \mathbf{x} = \mathbf{e_1}$ leads to a regular pentagon
contact representation of~$G$.

\begin{lemma} \label{lemma:triangulation_given_face_triangles} 
  Let~$H$ be an inner triangulation of a polygon.  For every inner
  face~$f$ of~$H$ with vertices~$v_1,v_2,v_3$ in clockwise order
  let~$T_f$ be a triangle in the plane whose vertices have coordinates denoted by
  $p(f,v_1),p(f,v_2),p(f,v_3)$ in clockwise order such that the
  following conditions are satisfied:
\begin{enumerate}[(i)]
\item \label{item:vertex_angle_sum} 
  For each inner vertex $v$ of~$H$ with incident
  faces~$f_1,\dotsc,f_k$
    \[\sum_{i=1}^{k} \beta(f_i,v) = 2\pi\]
  where~$\beta(f,v)$ denotes the inner angle of~$T_f$ at~$p(f,v)$.
\item \label{item:outer_vertex_angle_sum}
  For each outer vertex~$v$ of~$H$ with incident faces~$f_1,\dotsc,f_k$
  \[\sum_{i=1}^{k} \beta(f_i,v) \leq \pi \enspace . \]
\item \label{item:edge_same_length} 
  For each inner edge $vw$ of~$H$ with incident faces~$f_1,f_2$ 
    \[p(f_1,v) - p(f_1,w) = p(f_2,v) - p(f_2,w) \enspace , \] 
  i.e., the vector between~$v$ and~$w$ is the same in~$T_{f_1}$
  and~$T_{f_2}$.
\end{enumerate}
Then there exists a crossing-free straight line drawing of~$H$ such
that the drawing of every inner face~$f$ can be obtained from~$T_f$ by
translation.
\end{lemma}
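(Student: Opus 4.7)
The plan is to construct the drawing by propagating triangle data from a base vertex, verify consistency using the three conditions, and deduce global crossing-freeness from local injectivity. First, I would define the vertex positions $p \colon V(H) \to \mathbb{R}^2$ as follows. For any edge $uw$ of $H$ and any face $f$ containing it, condition~(\ref{item:edge_same_length}) makes the edge vector $\mathbf{e}(u,w) := p(f,w) - p(f,u)$ independent of the choice of $f$, so $\mathbf{e}$ is an unambiguous function on the oriented edges of $H$. Fixing a base vertex $v_0$ with $p(v_0) := 0$, I would set $p(v)$ to be the sum of edge vectors along any walk from $v_0$ to $v$ in $H$. To see this is walk-independent, note that the edge-vector sum around any single face $f = v_1v_2v_3$ is
\[
  (p(f,v_2)-p(f,v_1)) + (p(f,v_3)-p(f,v_2)) + (p(f,v_1)-p(f,v_3)) = 0,
\]
and since the topological realization $|H|$ is a simply connected disk, every cycle in $H$ is a combination of facial cycles, so the edge-vector sum vanishes around every cycle.

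By construction, for every face $f$ with vertices $v_1,v_2,v_3$ we have $p(v_j) - p(v_i) = p(f,v_j) - p(f,v_i)$, so the triangle $p(v_1)p(v_2)p(v_3)$ is exactly a translate of $T_f$; this handles the second part of the statement. For crossing-freeness, I would first observe that whenever two faces $f, f'$ share an edge $vw$, they lie on opposite sides of $vw$ in $H$ with matching clockwise orientation, and condition~(\ref{item:edge_same_length}) then forces the translated triangles to sit on opposite sides of the segment $p(v)p(w)$ and hence not to overlap. At an inner vertex $v$, condition~(\ref{item:vertex_angle_sum}) makes the angles of the incident triangles at $p(v)$ sum to exactly $2\pi$, so they tile a disk neighborhood without overlap; at an outer vertex, condition~(\ref{item:outer_vertex_angle_sum}) confines the incident triangles to a wedge of angle at most $\pi$, again without overlap. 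Consequently the realization $\phi \colon |H| \to \mathbb{R}^2$ of the drawing is a local homeomorphism at every interior point of $|H|$ and locally injective at every boundary point.

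The main obstacle I expect is upgrading this local control to global crossing-freeness. The cleanest finish is topological: a local homeomorphism from the closed disk $|H|$ into $\mathbb{R}^2$ whose restriction to the boundary is locally injective must be a global embedding, since a Gauss--Bonnet-style angle count using (\ref{item:vertex_angle_sum}) and (\ref{item:outer_vertex_angle_sum}) forces the total exterior turning of the boundary image to be $2\pi$, making it a simple closed (in fact weakly convex) polygon; a monodromy argument then shows the interior folds injectively over the enclosed region. An elementary alternative is induction on the number of faces of $H$, repeatedly peeling off an ear to shrink $H$, but this requires extra bookkeeping to preserve conditions (\ref{item:vertex_angle_sum})--(\ref{item:edge_same_length}) on the reduced triangulation.
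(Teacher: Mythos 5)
Your proposal is correct and follows essentially the same route as the paper: you obtain globally consistent vertex positions from condition~(iii) together with simple connectivity (the paper glues the triangles along a spanning tree of the dual and then verifies the complementary forest of edges by peeling leaves using condition~(i); your integration of the edge-vector ``1-form'' over paths is the same argument in different clothing), and you then conclude crossing-freeness exactly as the paper does, by noting that conditions~(i) and~(ii) force the outer angles to sum to $(d-2)\pi$ and to be convex, so the boundary is a simple convex polygon. Both your write-up and the paper's leave the very last local-to-global step (local homeomorphism plus simple convex boundary implies global embedding) at the level of a standard topological assertion, so no gap relative to the published proof.
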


\begin{proof} Let~$H^\ast$ be the dual graph of~$H$ without the vertex
corresponding to the outer face of~$H$.  Further let~$S$ be a spanning
tree of~$H^\ast$.  Then by property~(\ref{item:edge_same_length}) we can glue the triangles~$T_f$ of all inner
faces~$f$ of~$H$ together along the edges of~$S$. 
This determines a unique position for every polygon,
up to a global motion.  We need to show
that the resulting shape has no holes or overlappings.  For the edges
of~$S$ we already know that the triangles of the two incident faces
are touching in the right way.  For the edges of the
complement~$\overline{S}$ of~$S$ we still need to show this.  We
consider~$\overline{S}$ as a subset of the edges of~$H$.  Note
that~$\overline{S}$ is a forest in~$H$.  Let~$e$ be an edge
of~$\overline{S}$ incident to a leaf~$v$ of this forest that is an
inner vertex of~$H$.  Then for all incident edges~$e'\neq e$ of~$v$ we
already know that the triangles of the two incident faces of~$e$ are
touching in the right way.  But then also the two triangles of the two
incident faces of~$e$ are touching in the right way because~$v$
fulfills property~(\ref{item:vertex_angle_sum}).  Since the set of
edges we still need to check is still a forest, we can iterate this
process until all inner edges of~$H$ are checked.

We have to exclude that the resulting polygon has overlappings.
Let~$F$ be the set of faces of~$H$, let~$F_{\operatorname{in}}$ be the set of inner
faces of~$G$, let~$V$ be the set of vertices of~$H$, and
let~$V_{\operatorname{out}}$ be the set of the outer vertices of~$H$. Let~$d=|V_o|$.

\begin{myclaim}\label{claim:outer_angle_sum}
$ \sum_{v\in V_o} \sum_i \beta(f_i,v) = (d-2) \pi $.
\end{myclaim}

\begin{claimproof}
The sum of the inner angles of each triangle~$T_f$ is~$\pi$.
Summing this over all triangles~$T_f$ we obtain
\[ \sum_{f\in F_{\operatorname{in}}} \pi = (|F| - 1) \pi \enspace . \]
Using property~(\ref{item:vertex_angle_sum}) we also have
\[ \sum_{f\in F_{\operatorname{in}}} \pi = (|V| - d)2\pi + \sum_{v\in V_o} \sum_i \beta(f_i,v) \enspace . \]
Using Euler's formula this yields the claim.
\end{claimproof}

Due to Claim~\ref{claim:outer_angle_sum} the sum of the angles at the outer vertices is
just the right value for a simple $d$-gon.
Because of property~(\ref{item:outer_vertex_angle_sum}) the angles
at the boundary are convex.
Thus the resulting shape is a simple convex polygon and therefore non-intersecting
\end{proof}

\begin{theorem}
  The unique solution of the system~$A_F \mathbf{x} = \mathbf{e_1}$ is
  non-negative if and only if the five color forest~$F$ is induced by
  a regular pentagon contact representation of~$G$.
\end{theorem}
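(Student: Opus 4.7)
My plan is to prove the two directions separately. The necessity direction is almost immediate. Given a regular pentagon contact representation $\mathcal{S}$ inducing $F$, I would rescale so that $s_1$ has length $1$, and set $x_v$ to be the side length of the pentagon for each inner vertex $v$ and $x_f^{(1)}, x_f^{(2)}$ to be the designated segment lengths for each inner face $f$. These values are non-negative by construction; the vertex equations hold because each pentagon side decomposes into the boundary segments of the adjacent face-quadrilaterals; the face equations hold by Lemma~\ref{lemma:face_equations}; and the normalizing equation $\sum_{f\in\delta_1(a_1)} x_f^{(j_{a_1,f,1})} = 1$ holds after rescaling. So this vector solves $A_F \mathbf{x} = \mathbf{e_1}$, and by Theorem~\ref{thm:uniquely_solvable} it is the unique solution.

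For the sufficiency direction, my plan is to use the non-negative solution $\mathbf{x}$ to define the shapes of all pentagons and face-quadrilaterals in the intended representation, and then apply Lemma~\ref{lemma:triangulation_given_face_triangles} to glue them into a valid planar tiling. I would form a triangulation $H$ of the intended outer pentagon frame by introducing a center vertex for each inner vertex $v$ and triangulating the abstract pentagon of $v$ into fan sectors with radii going to every boundary vertex (including subdivision vertices); triangulating each face-quadrilateral by a diagonal from its concave corner to the opposite convex corner; and leaving each corner face-triangle as a single triangle. For every triangle $t$ of $H$, the shape $T_t$ is then fully determined: sector triangles inherit their shape from the regular pentagon of side $x_v$, and face-quadrilateral triangles inherit theirs from the segment lengths $x_f^{(i)}$ (together with the relations $x_f^{(3)} = x_f^{(1)} + \phi x_f^{(2)}$ and $x_f^{(4)} = \phi x_f^{(1)} + x_f^{(2)}$) and the $\pi/5$ convex corner angles.

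The heart of the argument is verifying the hypotheses of Lemma~\ref{lemma:triangulation_given_face_triangles}. Condition (iii) is precisely what the linear system $A_F \mathbf{x} = \mathbf{e_1}$ enforces: the vertex equations guarantee that each pentagon side, when split by subdivision vertices, matches the adjacent face-quadrilateral segments edge by edge, while the face equations from Lemma~\ref{lemma:face_equations} ensure internal consistency of the quadrilateral (and in particular fix the diagonal length independently of which of the two triangles it is computed from). Condition (i) follows from a short case analysis over the types of inner vertices of $H$: pentagon centers see $2\pi$ as the sum of their fan apex angles; a contact vertex where a corner of one pentagon meets a side of another sees $\tfrac{3\pi}{5} + \pi + 2\cdot\tfrac{\pi}{5} = 2\pi$; and a concave corner of a face-quadrilateral sees $\tfrac{3\pi}{5} + \tfrac{7\pi}{5} = 2\pi$. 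Condition (ii) follows because each corner of the outer frame sees only the $\tfrac{3\pi}{5}$ angle of a single corner triangle and each contact vertex lying on a frame side sees $\tfrac{3\pi}{5} + 2\cdot\tfrac{\pi}{5} = \pi$.

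Applying Lemma~\ref{lemma:triangulation_given_face_triangles} then yields a crossing-free straight-line drawing of $H$; this drawing can be reinterpreted as a regular pentagon contact representation of $G$, and by tracking which corners touch which sides one verifies that its induced five color forest is exactly $F$. The main obstacle I anticipate is the bookkeeping near the outer boundary, where the outer vertices $a_i$ correspond to the frame sides $s_i$ rather than to single points of $H$, and where the angle accounting at subdivision vertices lying on the frame must be treated as condition (ii) rather than condition (i). A secondary subtlety is handling degeneracies when some $x_f^{(i)} = 0$ (as for the corner faces, whose face-quadrilaterals collapse to triangles with a zero-length side) in a way that still lets the surrounding pieces glue together consistently.
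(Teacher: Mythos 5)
Your proposal is correct and follows essentially the same route as the paper: the necessity direction is read off directly from the edge lengths of the representation (with uniqueness from Theorem~\ref{thm:uniquely_solvable}), and the sufficiency direction triangulates the skeleton graph (pentagons fanned from an interior apex, quadrilaterals split by the diagonal at the concave corner) and invokes Lemma~\ref{lemma:triangulation_given_face_triangles}. In fact your explicit angle checks for conditions (i) and (ii) supply details the paper leaves implicit.
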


\begin{proof}
  Assume there is a regular pentagon representation~$\mathcal{S}$
  of~$G$ that induces the five color forest~$F$.  Then the edge
  lengths given by~$\mathcal{S}$ define a non-negative solution
  of~$A_F \mathbf{x} = \mathbf{e_1}$.

  For the opposite direction, assume the solution of~$A_F \mathbf{x} =
  \mathbf{e_1}$ is non-negative.  To be able to apply
  Lemma~\ref{lemma:triangulation_given_face_triangles} we first construct
  an internally triangulated extension of the skeleton graph of a
  hypothetical regular pentagon contact representation with induced
  five color forest~$F$.  We start with a crossing-free straight-line
  drawing of~$G$. Add a subdivision vertex on each edge of $G$.
  Moreover, for each inner vertex~$v$ draw an edge ending at a
  new vertex inside each face with a missing outgoing edge
  of~$v$. Then connect all the new adjacent vertices of $v$ in the
  cyclic order given by the drawing.

  At this point inner faces of $G$ are subdivided into four triangles
  and a quadrangle (shown gray in Fig.~\ref{fig:extended}).  One of
  the vertices of the quadrangle is the new inner vertex $w$ of the
  face. Connect vertex~$w$ to the subdivision vertex diagonally in the
  quadrangle (shown dashed in Fig.~\ref{fig:extended}).
%%%%%%%%%%%%%%%%%%%%%%%%%%%%%%%%%%%%%%%%%%%%%%%%%%%%%%%%%%%%%%%%%%
% in einem figure environment mit caption
\calc_figscale{40}%
\begin{figure}[t]
    \centerline{\input{\fpath/extended.pstex_t}}
    \caption{\label{fig:extended}}
    \end{figure}%
VC
{Part of a triangulation with a five color forest, the
resulting skeleton graph, and a realization.}% 
%%%%%%%%%%%%%%%%%%%%%%%%%%%%%%%%%%%%%%%%%%%%%%%%%%%%%%%%%%%%%%%%%%

Using the edge lengths given by the solution of~$A_F \mathbf{x} = \mathbf{e_1}$
we can compute the side length of each triangle~$T_f$ corresponding to
an inner face~$f$ of this skeleton graph such that an application of
Lemma~\ref{lemma:triangulation_given_face_triangles} gives us a regular
pentagon contact representation of~$G$ with induced five color
forest~$F$.
\end{proof}

The algorithm first computes an arbitrary five color forest of~$G$
(this is possible in linear time by the construction from
Theorem~\ref{thm:fcf_existence} since Schnyder woods can be constructed in
linear time, see for example \cite{brehm20003,kobourov2016canonical}).  Based on the five
color forest the algorithm generates the corresponding system of
linear equations and solves it. Since the size of the system is linear
in the size of the input graph, this can be done in cubic time, for example with Gaussian elimination.  If
the solution is non-negative, we can construct the regular pentagon
contact representation from the edge lengths given by the solution and
we are done.  If the solution has negative variables, we would like to
change the five color forest and proceed with the new one.

We now show a way of changing the five color
forest in the case of a solution with negative variables.

\begin{theorem}
  The negative and non-negative variables of the solution
  of~$A_F \mathbf{x} = \mathbf{e_1}$ are separated by a disjoint union of
  directed simple cycles in the $\alpha_5$-orientation corresponding
  to~$F$.  If there are negative variables, this union is non-empty.
\end{theorem}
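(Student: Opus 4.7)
The overall strategy is to interpret the signs of the components of the solution vector $\mathbf{x}$ as a two-labeling of the vertices of $\sG$, and then to show, using local balance conditions derived from the linear system $A_F\mathbf{x}=\mathbf{e_1}$, that the sign boundary must be a disjoint union of directed cycles in the $\alpha_5$-orientation associated to $F$.

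\emph{Correspondence and signs.} Each inner normal vertex $v$ owns one variable $x_v$, namely the side length of the pentagon at $v$, and each inner face $f$ owns the pair $x_f^{(1)}, x_f^{(2)}$ of short quadrilateral sides adjacent to the concave corner. In $\sG$ the pair $x_f^{(1)},x_f^{(2)}$ is naturally attached to the stack vertex $s_f$, with one variable per outgoing stack edge; for the five corner faces (no stack vertex) the attachment degenerates but can be carried out in the obvious way. I declare a vertex of $\sG$ (or, more carefully, a half-edge incident to it) \emph{negative} if its associated variable is strictly negative, and \emph{non-negative} otherwise. A preliminary step is to use the face equations $x_f^{(3)}=x_f^{(1)}+\phi x_f^{(2)}$ and $x_f^{(4)}=\phi x_f^{(1)}+x_f^{(2)}$ of Lemma~\ref{lemma:face_equations} to see how the signs of $x_f^{(1)},x_f^{(2)}$ propagate to $x_f^{(3)},x_f^{(4)}$.

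\emph{Local balance.} Call an edge of $\sG$ \emph{sign-separating} if its two endpoint-labels differ. The main step is a local balance argument: at every inner vertex of $\sG$ the sign-separating outgoing edges (in the $\alpha_5$-orientation) balance the sign-separating incoming ones. At an inner normal vertex $v$ this follows from the five vertex equations $x_v=\sum_{f\in\delta_i(v)} x_f^{(j_{v,f,i})}$ after the face substitutions: if $x_v<0$ then in every sector between consecutive outgoing edges of $v$ at least one face contribution must be negative, and a sector-by-sector count shows that the number of sign changes around $v$ is even and, read together with the colors, is compatible with the orientations of the outgoing edges. At a stack vertex $s_f$ the balance uses the golden-ratio face equations together with the three sector equations at its neighbors. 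Putting these together, the sign-separating edges form a disjoint union of closed directed walks in the $\alpha_5$-orientation; planarity and the fact that the two labels partition the plane into complementary regions allow one to decompose these walks into simple cycles. Finally, the normalization $\sum_{f\in\delta_1(a_1)} x_f^{(j_{a_1,f,1})} = 1$ guarantees at least one strictly positive variable; so whenever some variable is negative both regions are non-empty, and the cycle collection is non-empty.

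\emph{Main obstacle.} The tricky part is the stack-vertex analysis, because the coefficients $1$ and $\phi$ in the face equations mean that $x_f^{(3)},x_f^{(4)}$ can change sign independently of $x_f^{(1)},x_f^{(2)}$, and the pair $(x_f^{(1)},x_f^{(2)})$ can itself realize mixed signs. Carrying out a clean case analysis in the style of Fig.~\ref{fig:face_cases}, and showing that the surviving sign patterns are exactly those that make the three edges at $s_f$ balance in the sense above, is the technical heart of the argument. Once this local statement is in place, the assembly into directed cycles is standard.
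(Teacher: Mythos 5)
There is a genuine gap, and it sits exactly where you flag it. The paper's proof does not two-color the vertices of $\sG$; it attaches signs to the \emph{segments} of the skeleton graph (the variables $x_f^{(1)},\dotsc,x_f^{(4)}$) and defines a ``sign-change'' at a point of the boundary of an abstract pentagon or quadrilateral where a negative segment meets a non-negative one. A sign-separating edge is then \emph{not} an edge of $\sG$ whose endpoint labels differ: for a normal edge $vw$ it is the condition that the pentagon of $v$ has a sign-change at its contact point with the pentagon of $w$ (with no sign-change in the incident quadrilaterals there), and for a stack edge it is a sign-change at the corner of $v$'s pentagon sitting at the concave corner of $w$'s quadrilateral. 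Your vertex labeling cannot encode this: a stack vertex owns two variables $x_f^{(1)},x_f^{(2)}$ which genuinely can have opposite signs (the paper's Claim on facial sign-changes shows precisely that a quadrilateral may have a sign-change at a convex and at the concave corner), and comparing $x_v$ with $x_w$ along a normal edge is not the relevant comparison at all --- the separation is between regions of the abstract representation, not between pentagon side lengths.

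The second problem is structural. Even granting some local balance, ``in-degree equals out-degree on the sign-separating edges'' would only give an Eulerian-type decomposition into edge-disjoint closed walks, not the disjoint union of directed \emph{simple} cycles claimed by the theorem; and the appeal to ``planarity and the fact that the two labels partition the plane into complementary regions'' assumes the conclusion (that the boundary is a union of cycles). The paper instead constructs, for each sign-separating edge $e=vw$, an explicit predecessor by walking along the side of $v$'s pentagon from the relevant corner to the \emph{first} point $q$ where a negative and a non-negative segment meet, and resolving a finite case analysis of the local picture at $q$; it then proves this predecessor assignment is a bijection of the edge set $E'$ onto itself, which is what yields the cycle decomposition. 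That case analysis at the corners and concave corners --- which you defer as ``the technical heart'' --- together with the uniqueness of successors, \emph{is} the proof; the parts you do carry out (the $\phi>1$ consequence for quadrilaterals, and the non-emptiness via the inhomogeneous equation) match the paper but are the easy steps.
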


\begin{proof}
  We say that an abstract pentagon or an abstract facial quadrilateral has a \emph{sign-change}
  at a point~$p$ on its boundary if one of the two variables corresponding to the two
  boundary edges with common end point~$p$ has a negative solution value and the other
  variable has a non-negative solution value.  
  We call the following two types of edges in the
  $\alpha_5$-orientation \emph{sign-separating edges} (see
  Fig.~\ref{fig:sign_changing_edges} for an illustration):
%%%%%%%%%%%%%%%%%%%%%%%%%%%%%%%%%%%%%%%%%%%%%%%%%%%%%%%%%%%%%%%%%%%%%
%%%%%%%%%%%%%%%%%%%%%%%%%%%%%%%%%%%%%%%%%%%%%%%%%%%%%%%%%%%%%%%%%%%%%
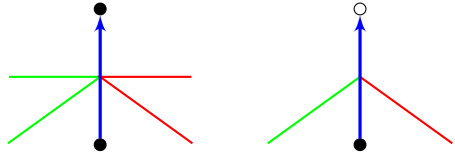
\begin{figure}

\centering

\tikzstyle{posneg}=[font=\footnotesize]
\tikzstyle{positive}=[color=green,posneg]
\tikzstyle{negative}=[color=red,posneg]

\tikzstyle{vertex}=[circle,scale=0.4]
\tikzstyle{primal vertex}=[vertex,draw,fill]
\tikzstyle{dual vertex}=[vertex,draw]
\tikzstyle{out edge}=[-latex',very thick,color=blue]
\tikzstyle{in edge}=[-latex',very thick,color=blue]

\tikzstyle{pos edge}=[green,thick]
\tikzstyle{neg edge}=[red,thick]
\tikzstyle{neutral edge}=[gray!40,thick]

\begin{tikzpicture}[scale=1.5]

\coordinate (O) at (0,0);
\coordinate (Lu) at (216:1);
\coordinate (Ru) at (324:1);
\coordinate (Lo) at (180:.8);
\coordinate (Ro) at (0:.8);

\draw[pos edge] (Lo) -- (O) -- (Lu);
\draw[neg edge] (Ro) -- (O) -- (Ru);

\node[primal vertex] (v) at (-90:.6) {};
\node[primal vertex] (w) at (90:.6) {};

\draw[out edge] (v) -- (w);

\end{tikzpicture}
\qquad
\begin{tikzpicture}[scale=1.5]

\coordinate (O) at (0,0);
\coordinate (Lu) at (216:1);
\coordinate (Ru) at (324:1);
\coordinate (Lo) at (180:.8);
\coordinate (Ro) at (0:.8);

\draw[pos edge] (O) -- (Lu);
\draw[neg edge] (O) -- (Ru);

\node[primal vertex] (v) at (-90:.6) {};
\node[dual vertex] (w) at (90:.6) {};

\draw[out edge] (v) -- (w);

\end{tikzpicture}

\caption{The two types of sign-separating edges.
  Red (green) edges are edges of the skeleton graph with negative (non-negative) solution values.}

\label{fig:sign_changing_edges}

\end{figure}
%%%%%%%%%%%%%%%%%%%%%%%%%%%%%%%%%%%%%%%%%%%%%%%%%%%%%%%%%%%%%%%%%%%%%
%%%%%%%%%%%%%%%%%%%%%%%%%%%%%%%%%%%%%%%%%%%%%%%%%%%%%%%%%%%%%%%%%%%%%
Edges of the first type are normal edges~$vw$ such that the abstract
pentagon of~$v$ has a sign-change at the contact point with the
abstract pentagon of~$w$, and both abstract facial quadrilaterals
incident to this touching point do not have a sign-change at this
point.  Edges of the second type are stack edges~$vw$ with normal
vertex~$v$ and stack vertex~$w$ such that the abstract pentagon of~$v$
has a sign-change at the corner which is a concave corner of the
abstract quadrilateral of~$w$. Note that both types of sign-separating
edges correspond to a corner~$p$ of the abstract pentagon~$A$ of~$v$ fulfilling
the following property: One of the two sides of~$A$ incident to~$p$ starts with
a non-negative segment at~$p$ and the other side of~$A$ incident to~$p$ starts
with a negative segment at~$p$.

\begin{myclaim} \label{claim:facial_sign_changes} 
  In an abstract facial quadrilateral there is either no sign-change,
  or there are two sign-changes (one at a convex corner and one at the
  concave corner).
\end{myclaim}

\begin{claimproof}
  Let~$x_f^{(1)},\dotsc,x_f^{(4)}$ be the four variables of the facial
  quadrilateral as in the beginning of this section (see
  Fig.~\ref{fig:var_faces} (left)).  Then the equation system contains the two
  equations
\[
x_f^{(3)} = x_f^{(1)} + \phi x_f^{(2)} \enspace , 
\enspace x_f^{(4)} = \phi x_f^{(1)} + x_f^{(2)} 
\]
where~$\phi$ is the golden ratio.
The fact that~$\phi > 1$ immediately implies the claim.
\end{claimproof}

\begin{claim} \label{claim:existence_sign_sep_edge}
If there are negative variables, there exists a sign-separating edge.
\end{claim}

\begin{claimproof}
  Because of the inhomogeneous equation the solution always contains
  positive variables.  Therefore, if there are negative variables, at
  some point of the abstract pentagon contact representation there has
  to be a sign-change.

  We distinguish two cases concerning the two possibilities of Claim~\ref{claim:facial_sign_changes}.
  In the first case there is no sign-change in any abstract facial quadrilateral. Then we can
  distinguish negative faces (the faces with all four variables negative) and non-negative faces
  (the faces with all four variables non-negative). The above observation implies
  that there is at least one face of each kind if there are negative variables. Since the abstract pentagon
  contact representation is connected, there has to be a point where the abstract facial quadrilaterals of a negative
  and a non-negative face touch. At such a point there is a sign-separating edge of the first type.
  
  In the second case there exists at least one abstract facial quadrilateral with a sign-change.
  Due to Claim~\ref{claim:facial_sign_changes} this abstract facial quadrilateral has a sign-change
  at its concave corner. Therefore there is a sign-separating edge of the second type.
\end{claimproof}

For a sign-separating edge~$e=vw$ we will now construct an oriented walk in
the $\alpha_5$-orientation ending in~$e$. Assume that~$x_v \geq 0$
(the other case is symmetric). As we noted above, the edge~$e$ corresponds to a corner~$p$
of the abstract pentagon~$A$ of~$v$ such that one of the two sides
of~$A$ incident to~$p$ starts with a negative segment at~$p$.
If all segments of this side would be negative, this would contradict~$x_v \geq 0$.
Therefore, when we walk along this side starting at~$p$, there has to be
a first point~$q$ which is incident to a negative and a non-negative segment.
In Fig.~\ref{fig:sign_changing_edges_predecessors} we distinguish all possible cases
concerning what the abstract pentagon contact representation locally looks like
at~$q$ (including the signs of the segments) and define the previous
one, two or three edges of the walk according to this case distinction. If several edges are added, only the last
one is a sign-separating edge itself. Since the last added edge
is always a sign-separating edge, the walk can be continued from there
by iterating this process.
%%%%%%%%%%%%%%%%%%%%%%%%%%%%%%%%%%%%%%%%%%%%%%%%%%%%%%%%%%%%%%%%%%%%%
%%%%%%%%%%%%%%%%%%%%%%%%%%%%%%%%%%%%%%%%%%%%%%%%%%%%%%%%%%%%%%%%%%%%%
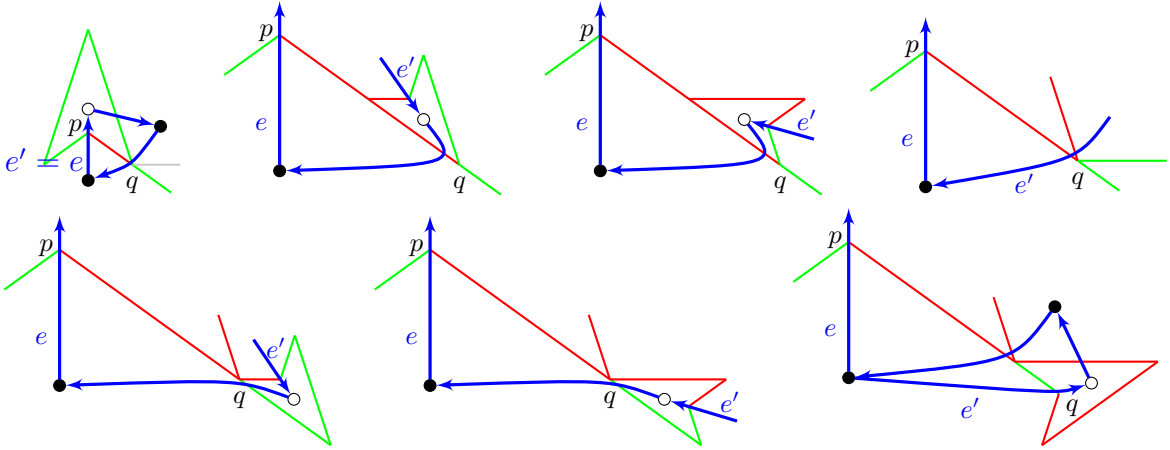
\begin{figure}

\centering

\tikzstyle{posneg}=[font=\footnotesize]
\tikzstyle{positive}=[color=green,posneg]
\tikzstyle{negative}=[color=red,posneg]

\tikzstyle{vertex}=[circle,scale=0.4]
\tikzstyle{primal vertex}=[vertex,draw,fill]
\tikzstyle{dual vertex}=[vertex,draw]
\tikzstyle{out edge}=[-latex',very thick,color=blue]
\tikzstyle{in edge}=[-latex',very thick,color=blue]

\tikzstyle{pos edge}=[green,thick]
\tikzstyle{neg edge}=[red,thick]
\tikzstyle{neutral edge}=[gray!40,thick]

% subfig 1-1
\begin{tikzpicture}[scale=.9]
\coordinate (O) at (0,0);
\coordinate (L) at (216:1);
\coordinate (R) at (324:1.5);

\coordinate (L1) at (216:.8);
\coordinate (R1) at (324:.8);
\coordinate (Lh) at ($(L1)+(72:4)$);
\coordinate (Rh) at($(R1)+(108:4)$);
\coordinate (T) at (intersection of L1--Lh and R1--Rh);
\coordinate (R') at ($(R1)+(0:.7)$);

\draw[pos edge] (O) -- (L1);
\draw[neg edge] (O) -- (R1);
\draw[pos edge] (R1) -- (R);
\draw[neutral edge] (R1) -- (R');
\draw[pos edge] (R1) -- (T);
\draw[pos edge] (L1) -- (T);

\node[dual vertex] (f) at ($(O)+(90:.35)$) {};
\node[primal vertex] (v) at ($(O)+(270:.7)$) {};
\node[primal vertex] (v') at ($(R1)+(54:.7)$) {};

\draw[out edge] (v) -- node[near start, left, xshift=.1cm] {${e'=e}$} (f);
\draw[in edge] (v') .. controls (R1) .. (v);
\draw[out edge] (f) -- (v');

\node at ($(R1)+(270:.3)$) {\footnotesize $q$};
\node at ($(O)+(160:.2)$) {\footnotesize $p$};
\end{tikzpicture}
\quad
% subfig 1-2
\begin{tikzpicture}[scale=.9]
\coordinate (O) at (0,0);
\coordinate (L) at (216:1);
\coordinate (R) at (324:5);

\coordinate (R1) at ($(O)!0.1!(R)$);
\coordinate (R2) at ($(O)!0.25!(R)$);
\coordinate (R3) at ($(O)!0.5!(R)$);
\coordinate (R4) at ($(O)!0.8!(R)$);
\coordinate (R') at ($(R3)!0.5!(R4)$);

\draw[pos edge] (L) -- (O);
\draw[neg edge] (O) -- (R');
\draw[pos edge] (R') -- (R4);

\coordinate (R'') at ($(O)!0.32!(R)$);
\coordinate (A) at ($(R')+(108:1.7)$);
\coordinate (Bh1) at ($(R'')+(0:4)$);
\coordinate (Bh2) at ($(A)+(252:4)$);
\coordinate (B) at (intersection of R''--Bh1 and A--Bh2);

\draw[pos edge] (R') -- (A);
\draw[pos edge] (A) -- (B);
\draw[neg edge] (B) -- (R'');

\node[primal vertex] (v) at ($(O)+(270:2)$) {};
\node[dual vertex] (f) at ($(R3)+(70:0.24)$) {};

\coordinate (O') at ($(O)+(90:0.5)$);
\coordinate (f') at ($(f)!3!(B)$);

\draw[out edge] (v) -- node[near start, left] {\footnotesize $e$} (O');
\draw[in edge] (f) .. controls (R') .. (v);
\draw[in edge] (f') -- (f);

\node at ($(R')+(270:.3)$) {\footnotesize $q$};
\node at ($(O)+(170:.2)$) {\footnotesize $p$};
\node[color=blue] at ($(f')+(-20:.4)$) {\footnotesize $e'$};
\end{tikzpicture}
\quad
% subfig 1-3
\begin{tikzpicture}[scale=.9]
\coordinate (O) at (0,0);
\coordinate (L) at (216:1);
\coordinate (R) at (324:5);

\coordinate (R1) at ($(O)!0.1!(R)$);
\coordinate (R2) at ($(O)!0.25!(R)$);
\coordinate (R3) at ($(O)!0.5!(R)$);
\coordinate (R4) at ($(O)!0.8!(R)$);
\coordinate (R') at ($(R3)!0.5!(R4)$);

\draw[pos edge] (L) -- (O);
\draw[neg edge] (O) -- (R');
\draw[pos edge] (R') -- (R4);

\coordinate (R'') at ($(O)!0.32!(R)$);
\coordinate (A) at ($(R'')+(0:1.7)$);
\coordinate (Bh1) at ($(R')+(108:4)$);
\coordinate (Bh2) at ($(A)+(216:4)$);
\coordinate (B) at (intersection of R'--Bh1 and A--Bh2);

\draw[pos edge] (R') -- (B);
\draw[neg edge] (B) -- (A);
\draw[neg edge] (A) -- (R'');

\node[primal vertex] (v) at ($(O)+(270:2)$) {};
\node[dual vertex] (f) at ($(R3)+(70:0.24)$) {};

\coordinate (O') at ($(O)+(90:0.5)$);
\coordinate (f') at ($(f)!3!(B)$);

\draw[out edge] (v) -- node[near start, left] {\footnotesize $e$} (O');
\draw[in edge] (f) .. controls (R') .. (v);
\draw[in edge] (f') -- (f);

\node at ($(R')+(270:.3)$) {\footnotesize $q$};
\node at ($(O)+(170:.2)$) {\footnotesize $p$};
\node[color=blue] at ($(f')+(110:.3)$) {\footnotesize $e'$};
\end{tikzpicture}
\quad
%subfig 1-4
\begin{tikzpicture}[scale=.9]
\coordinate (O) at (0,0);
\coordinate (L) at (216:1);
\coordinate (R) at (324:5);

\coordinate (R1) at ($(O)!0.1!(R)$);
\coordinate (R2) at ($(O)!0.25!(R)$);
\coordinate (R3) at ($(O)!0.4!(R)$);
\coordinate (R4) at ($(O)!0.7!(R)$);
\coordinate (R') at ($(R3)!0.5!(R4)$);

\draw[pos edge] (L) -- (O);
\draw[neg edge] (O) -- (R');
\draw[pos edge] (R') -- (R4);

\coordinate (A) at ($(R')+(108:1.3)$);
\coordinate (B) at ($(R')+(0:1.3)$);

\draw[neg edge] (A) -- (R');
\draw[pos edge] (R') -- (B);

\node[primal vertex] (v) at ($(O)+(270:2)$) {};

\coordinate (O') at ($(O)+(90:0.5)$);
\coordinate (v') at ($(R')+(414:.8)$);

\draw[out edge] (v) -- node[near start, left] {\footnotesize $e$} (O');
\draw[in edge] (v') .. controls (R') .. (v);

\node at ($(R')+(270:.3)$) {\footnotesize $q$};
\node at ($(O)+(170:.2)$) {\footnotesize $p$};
\node[color=blue] at ($($(R')!.35!(v)$)+(270:.2)$) {\footnotesize $e'$};
\end{tikzpicture}

% subfig 2-1
\begin{tikzpicture}[scale=.9]
\coordinate (O) at (0,0);
\coordinate (L) at (216:1);
\coordinate (R) at (324:5);

\coordinate (R1) at ($(O)!0.1!(R)$);
\coordinate (R2) at ($(O)!0.25!(R)$);
\coordinate (R3) at ($(O)!0.5!(R)$);
\coordinate (R4) at ($(O)!0.8!(R)$);
\coordinate (R') at ($(R3)!0.5!(R4)$);

\coordinate (R'') at ($(O)!0.98!(R)$);
\coordinate (A) at ($(R'')+(108:1.7)$);
\coordinate (Bh1) at ($(R')+(0:4)$);
\coordinate (Bh2) at ($(A)+(252:4)$);
\coordinate (B) at (intersection of R'--Bh1 and A--Bh2);
\coordinate (C) at ($(R')+(108:1)$);

\draw[pos edge] (L) -- (O);
\draw[neg edge] (O) -- (R');
\draw[pos edge] (R') -- (R'');

\draw[neg edge] (C) -- (R');
\draw[neg edge] (R') -- (B);
\draw[pos edge] (B) -- (A);
\draw[pos edge] (A) -- (R'');

\node[primal vertex] (v) at ($(O)+(270:2)$) {};
\node[dual vertex] (f) at ($(R4)+(38:0.24)$) {};

\coordinate (O') at ($(O)+(90:0.5)$);
\coordinate (f') at ($(f)!3!(B)$);

\draw[out edge] (v) -- node[near start, left] {\footnotesize $e$} (O');
\draw[in edge] (f) .. controls (R') .. (v);
\draw[in edge] (f') -- (f);

\node at ($(R')+(270:.3)$) {\footnotesize $q$};
\node at ($(O)+(170:.2)$) {\footnotesize $p$};
\node[color=blue] at ($(f')+(-20:.35)$) {\footnotesize $e'$};
\end{tikzpicture}
\quad
% subfig 2-2
\begin{tikzpicture}[scale=.9]
\coordinate (O) at (0,0);
\coordinate (L) at (216:1);
\coordinate (R) at (324:5);

\coordinate (R1) at ($(O)!0.1!(R)$);
\coordinate (R2) at ($(O)!0.25!(R)$);
\coordinate (R3) at ($(O)!0.5!(R)$);
\coordinate (R4) at ($(O)!0.8!(R)$);
\coordinate (R') at ($(R3)!0.5!(R4)$);

\coordinate (R'') at ($(O)!0.98!(R)$);
\coordinate (A) at ($(R')+(0:1.7)$);
\coordinate (Bh1) at ($(R'')+(108:4)$);
\coordinate (Bh2) at ($(A)+(216:4)$);
\coordinate (B) at (intersection of R''--Bh1 and A--Bh2);
\coordinate (C) at ($(R')+(108:1)$);

\draw[pos edge] (L) -- (O);
\draw[neg edge] (O) -- (R');
\draw[pos edge] (R') -- (R'');

\draw[neg edge] (C) -- (R');
\draw[neg edge] (R') -- (A);
\draw[neg edge] (A) -- (B);
\draw[pos edge] (B) -- (R'');

\node[primal vertex] (v) at ($(O)+(270:2)$) {};
\node[dual vertex] (f) at ($(R4)+(38:0.24)$) {};

\coordinate (O') at ($(O)+(90:0.5)$);
\coordinate (f') at ($(f)!3!(B)$);

\draw[out edge] (v) -- node[near start, left] {\footnotesize $e$} (O');
\draw[in edge] (f) .. controls (R') .. (v);
\draw[in edge] (f') -- (f);

\node at ($(R')+(270:.3)$) {\footnotesize $q$};
\node at ($(O)+(170:.2)$) {\footnotesize $p$};
\node[color=blue] at ($(f')+(110:.3)$) {\footnotesize $e'$};
\end{tikzpicture}
\quad
% subfig 2-3
\begin{tikzpicture}[scale=.9]
\coordinate (O) at (0,0);
\coordinate (L) at (216:1);
\coordinate (R) at (324:5);

\coordinate (R') at ($(O)+(324:3)$);

\coordinate (R'') at ($(R')+(324:.8)$);
\coordinate (C) at ($(R')+(108:1)$);

\coordinate (R''') at ($(R'')+(-108:.8)$);
\coordinate (h1) at ($(R')+(0:4)$);
\coordinate (h2) at ($(R''')+(36:4)$);
\coordinate (D) at (intersection of R'--h1 and R'''--h2);

\draw[pos edge] (L) -- (O);
\draw[neg edge] (O) -- (R');
\draw[pos edge] (R') -- (R'');

\draw[neg edge] (C) -- (R');
\draw[neg edge] (R') -- (D);
\draw[neg edge] (D) -- (R''');

\draw[neg edge] (R'') -- (R''');

\node[primal vertex] (v) at ($(O)+(270:2)$) {};
\node[primal vertex] (v') at ($(R')+(54:1)$) {};
\node[dual vertex] (f) at ($(R'')+(18:0.5)$) {};

\coordinate (O') at ($(O)+(90:0.5)$);

\draw[out edge] (v) -- node[near start, left] {\footnotesize $e$} (O');
\draw[in edge] (v') .. controls (R') .. (v);
\draw[in edge] (f) -- (v');
\draw[in edge] (v) .. controls (R'') .. (f);

\node at ($(R4)+(290:.1)$) {\footnotesize $q$};
\node at ($(O)+(170:.2)$) {\footnotesize $p$};
\node[color=blue] at ($($(R4)!.45!(v)$)+(270:.25)$) {\footnotesize $e'$};
\end{tikzpicture}

\caption{Construction of the predecessors of a sign-separating edge~$e=vw$.
  The last edge~$e'$ assigned as a predecessor is a sign-separating edge itself.
  Therefore the walk can be continued from there by iterating this construction.
  Red (green) edges are edges of the skeleton graph with negative (non-negative) solution values.}

\label{fig:sign_changing_edges_predecessors}

\end{figure}
%%%%%%%%%%%%%%%%%%%%%%%%%%%%%%%%%%%%%%%%%%%%%%%%%%%%%%%%%%%%%%%%%%%%%
%%%%%%%%%%%%%%%%%%%%%%%%%%%%%%%%%%%%%%%%%%%%%%%%%%%%%%%%%%%%%%%%%%%%%

Let~$E'$ be the set of all edges occurring in any predecessor path,
including the sign-separating edges.  Then we can interpret the
predecessor assignment as an assignment from $E'$ to the same
set~$E'$.

\begin{myclaim} \label{claim:predecessor_assignment_injective}
The predecessor assignment is a permutation of the set~$E'$.
\end{myclaim}

\begin{claimproof}
  We show that the assignment is injective by proving that each
  edge~$e$ of the $\alpha_5$-orientation has an unique successor if it
  has one.  Since~$E'$ is a finite set, this implies that the
  assignment is bijective.

  Let~$e=vw$ be an edge ending in a stack vertex~$w$.  Then~$e$
  corresponds to the concave corner of the abstract quadrilateral~$B$
  of~$w$.  Let~$p$ be the convex corner of~$B$ with a sign-change
  (this corner is unique due to Claim~\ref{claim:facial_sign_changes}),
  let~$A_1$ be the abstract pentagon touching~$p$ with the interior of
  a side, and let~$A_2$ be the abstract pentagon touching~$p$ with
  a corner.  Further, for~$i=1,2$, let~$u_i$ be the normal vertex
  corresponding to~$A_i$.  If~$u_1=v$, we are in the first or the last
  case of Fig.~\ref{fig:sign_changing_edges_predecessors} and the
  successor of~$e$ has to be the edge~$wu_2$.  Otherwise the successor
  of~$e$ has to be the edge~$wu_1$.

  Now let~$e=vw$ be an edge ending in a normal vertex~$w$.  If~$e \in
  E'$, it corresponds to a sign-change at a point~$p$ in the interior
  of a side of the abstract pentagon~$A$ of~$w$.  Let~$x_w \geq 0$
  (the other case is symmetric).  Then the successor of~$e$ has to be
  the edge corresponding to the first corner of~$A$ that we reach when
  we go from~$p$ in the direction of the negative segment.
\end{claimproof}

Due to Claim~\ref{claim:predecessor_assignment_injective} the edge set~$E'$
is a disjoint union of directed simple cycles in the
$\alpha_5$-orientation separating the negative and non-negative
variables.  Due to Claim~\ref{claim:existence_sign_sep_edge} this union is
non-empty if there are negative variables.
\end{proof}

With this theorem at hand we have a way of changing the five color
forest and restart the algorithm. We cannot prove that the iteration
will eventually stop with a non-negative solution. The following 
theorem, however, shows in a very special case that the change of
the five color forest can have the intended effect, i.e., change the
signs of negative variables to positive.

Let~$g$ be an oriented cycle in an $\alpha_5$-orientation.
Then exactly one vertex~$w$ of~$g$ is a stack vertex.
Let~$s_1$ and~$s_2$ be the two edges incident to the concave corner of the abstract facial
quadrilateral corresponding to~$w$. Exactly one edge~$s_i$ of~$s_1$ and~$s_2$ has an endpoint that
is the corner of an abstract pentagon corresponding to a vertex of~$g$.
We call~$s_i$ the segment \emph{surrounded by~$g$} (see Fig.~\ref{fig:flip}).

\begin{theorem}\label{thm:progress}
  Let~$F$ be a five color forest, let~$g$ be an oriented facial cycle in
  the corresponding $\alpha_5$-orientation and let~$F'$ be the five
  color forest obtained from~$F$ by flipping~$g$. Let $\xi$ and $\xi'$
  be the solutions of the equation systems corresponding to~$F$ and
  $F'$, respectively. Let~$s_g$ be the segment surrounded by~$g$ 
  and let $\xi_g$ and $\xi'_g$ be the component of $\xi$
  and~$\xi'$, respectively, which corresponds to $s_g$, i.e., records the
  `length' of $s_g$. Then~$\xi_g$ and~$\xi'_g$ have different signs
  or~$\xi_g=\xi'_g=0$.
\end{theorem}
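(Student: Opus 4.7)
The plan is to exhibit an explicit linear correspondence between the two solutions $\xi$ and $\xi'$ that is localized at the face $f$ of $G$ containing the stack vertex $w$ of $g$, and then read off the sign change of the $s_g$-component. The facial cycle $g$ is a triangle in $\sG$ consisting of $w$ and two normal vertices $p,q$; let $r$ be the third vertex of $f$. In the $\alpha_5$-orientation of $F$ the stack edge $pw$ is the unique incoming stack edge at $w$, so by Lemma~\ref{lem:missing_edge} $p$ is the vertex of $f$ carrying the \emph{missing outgoing edge}; after the flip, $q$ assumes this role. Correspondingly the concave corner of the abstract facial quadrilateral of $f$ moves from the skeleton vertex $u$ (attached to $p$ in $\skel$) to $u'$ (attached to $q$ in $\skel'$), and in each of the two five color forests the segment $s_g$ is the short side of the quadrilateral running from the concave corner to the subdivision vertex $s_{pq}$ on the edge $pq$---a side of the abstract pentagon of $p$ in $F$ and of the abstract pentagon of $q$ in $F'$.

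The two systems $A_F\mathbf{x}=\mathbf{e_1}$ and $A_{F'}\mathbf{x}=\mathbf{e_1}$ share the same variables and agree on every equation except (a) the two face equations at $f$---since the short/long labeling of the four sides of the quadrilateral depends on where the concave corner sits---and (b) the vertex equations at $p$ and $q$ that receive contributions from $f$. The vertex equations at $r$ and at every other inner vertex are unchanged, because the long side of the quadrilateral of $f$ belonging to the pentagon of $r$ is the same skeleton edge in both $\skel$ and $\skel'$. Using the identities $x_f^{(3)}=x_f^{(1)}+\phi x_f^{(2)}$ and $x_f^{(4)}=\phi x_f^{(1)}+x_f^{(2)}$ of Lemma~\ref{lemma:face_equations} together with $\phi^2=\phi+1$, I would construct an explicit linear bijection $\Phi$ on the variables local to $f,p,q$ that sends solutions of $A_F\mathbf{x}=\mathbf{e_1}$ to solutions of $A_{F'}\mathbf{x}=\mathbf{e_1}$; by the uniqueness of solutions (Theorem~\ref{thm:uniquely_solvable}) this forces $\xi'=\Phi(\xi)$.

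To read off the sign of $\xi'_g$, I view $\xi$ as a signed-length realization of $\skel$: the positions of all skeleton vertices are then determined up to translation, so in particular $u$ is the unique intersection of two lines of prescribed slopes through the subdivision vertices on the two edges of $G$ incident to $p$. Analogously $u'$ is the intersection of two lines through the subdivision vertices on the two edges incident to $q$, with slopes now dictated by the colors of the short sides of the pentagon of $q$ in $F'$. A direct computation using the interior angle $\pi/5$ of the abstract quadrilateral at $s_{pq}$ shows that along the common direction of the two short sides $u s_{pq}$ and $u' s_{pq}$ emanating from $s_{pq}$, the points $u$ and $u'$ lie on opposite sides of $s_{pq}$; hence the signed lengths $\xi_g$ and $\xi'_g$ have opposite signs, and the degenerate case $\xi_g=\xi'_g=0$ corresponds exactly to $u=u'=s_{pq}$. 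The main obstacle will be the bookkeeping for $\Phi$ on the vertex equations of $p$ and $q$: the flip causes the two virtual sides of the pentagon of $p$ that meet at $u$ in $F$ to merge into a single virtual side in $F'$ and, symmetrically, a virtual side of the pentagon of $q$ to split at $u'$; matching contributions across this rearrangement requires a short case analysis based on the cyclic position of the edges of $g$ within the five color forest.
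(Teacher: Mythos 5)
The central step of your plan --- a linear bijection $\Phi$, supported only on the variables local to $f$, $p$, $q$, with $\xi'=\Phi(\xi)$ --- does not exist. After the relabelling that identifies $x_f^{(j)}$ with $y_f^{(i)}$ (these two columns of $A_F$ and $A_{F'}$ really do coincide), the two coefficient matrices differ in exactly \emph{one} column, namely the column of the variable recording $s_g$. This is a rank-one perturbation of an invertible matrix, so by the Sherman--Morrison formula the unique solution changes in every component (proportionally to $A_F^{-1}(c'-c)$, where $c,c'$ are the two versions of that column), not just in the components near $f$. You can also see the obstruction directly: each vertex equation of $p$ involves face variables of faces other than $f$, which your $\Phi$ leaves fixed, so $\Phi$ would have to fix $x_p$; propagating this through the other equations forces $\Phi=\mathrm{id}$, which cannot carry the solution of one system to the solution of a genuinely different system. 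The same global dependence invalidates the geometric half of your argument: the positions of the subdivision vertices and of the corner $u'$ in the realization determined by $\xi'$ are not their positions in the realization determined by $\xi$, so locating $u$ and $u'$ on opposite sides of $s_{pq}$ inside a single signed realization built from $\xi$ does not compute $\xi'_g$.

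What is actually needed (and what the paper does) is a global determinantal argument. By Cramer's rule, $\xi_g=\det(A_F^{(g)})/\det(A_F)$ and $\xi'_g=\det(\tilde A_{F'}^{(g)})/\det(\tilde A_{F'})$, where $\tilde A_{F'}$ is $A_{F'}$ with the two columns of $y_f^{(1)},y_f^{(2)}$ switched. The numerators coincide because, after the switch, the two systems agree outside the single $s_g$-column, which is replaced by $\mathbf{e_1}$ in both. The sign change then comes entirely from the denominators: one shows $\det(A_F)$ and $\det(\tilde A_{F'})$ have opposite signs by comparing a crossing-free perfect matching of $H_F$ with one of $H_{F'}$ and running the cycle-length-mod-$4$ parity argument of Lemma~\ref{lemma:perfect_matchings_same_sign}, as in the proof of Theorem~\ref{thm:uniquely_solvable}. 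Neither of these two ingredients is present in your proposal, so as it stands the argument has a genuine gap rather than a bookkeeping issue.
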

\begin{proof}
  We denote the equation systems corresponding to~$F$ and $F'$ as~${A_F \mathbf{x}
  = \mathbf{e_1}}$ and ${A_{F'} \mathbf{y} = \mathbf{e_1}}$. Let~$f$ be the face
  of~$G$ containing~$g$. The variable corresponding to $s_g$ in the
  first system is~$x_f^{(i)}$ with $i=1$ or $i=2$ and in the second
  system it is $y_f^{(j)}$ with $j\neq i$ and $j\in\{1,2\}$, i.e.,
  $\xi_g$ is the value of $x_f^{(i)}$ in the solution $\xi$ and
  $\xi'_g$ is the value of $y_f^{(j)}$ in the solution $\xi'$.
  Let~$A_F^{(g)}$ be the matrix obtained from~$A_F$ by replacing the
  column corresponding to~$x_f^{(i)}$ with~$\mathbf{e_1}$, and
  let~$A_{F'}^{(g)}$ be the matrix obtained from~$A_{F'}$ by replacing
  the column corresponding to~$y_f^{(j)}$ with~$\mathbf{e_1}$.
  According to Cramer's rule we have
\[
\xi_g = \frac{\det(A_F^{(g)})}{\det(A_F)} \enspace , \enspace
\xi'_g = \frac{\det(A_{F'}^{(g)})}{\det(A_{F'})} \enspace .
\]

It can be verified that the column of~$A_F$ corresponding
to~$x_f^{(j)}$ and the column of~$A_{F'}$ corresponding to~$x_f^{(i)}$
are equal. We go through the details with the generic example shown
in Fig.~\ref{fig:variables}. In this case $i=2$ and $j=1$.  Consider the
variable $x_f^{(1)}$. It naturally belongs to the equation of color 4
of~$v$ with a coefficient of 1. Due to the substitutions, see the
equations in Lemma~\ref{lemma:face_equations}, it also contributes to the
equations of color 5 at $u$ and of color 2 at $w$, the respective
coefficients are 1 and $\phi$.  Now consider the variable $y_f^{(2)}$. It
naturally belongs to the equation of color 5 of~$u$ with a coefficient
of 1. The substitutions also make it contribute to the equations of
color 4 at $v$ and of color 2 at $w$, the respective coefficients are
1 and $\phi$. Hence, the columns corresponding to $x_f^{(1)}$ and
$y_f^{(2)}$ in their respective systems are equal.

%%%%%%%%%%%%%%%%%%%%%%%%%%%%%%%%%%%%%%%%%%%%%%%%%%%%%%%%%%%%%%%%%%
% in einem figure environment mit caption
\calc_figscale{40}%
\begin{figure}[t]
    \centerline{\input{\fpath/variables.pstex_t}}
    \caption{\label{fig:variables}}
    \end{figure}%
VC
{The face $f$ before and after the flip at the red segment together with
the variables of the face.}% 
%%%%%%%%%%%%%%%%%%%%%%%%%%%%%%%%%%%%%%%%%%%%%%%%%%%%%%%%%%%%%%%%%%

If we switch the columns  corresponding to~$y_f^{(i)}$ and~$y_f^{(j)}$
in~$A_{F'}$ to get~$\tilde{A}_{F'}$, then, by the above~$A_F$
and~$\tilde{A}_{F'}$ only differ in the column corresponding to the segment
$s_g$, whence~$A_F^{(g)} = \tilde{A}_{F'}^{(g)}$.

To prove the theorem it remains to show
that~$\det(A_F)$ and~$\det(\tilde{A}_{F'})$ have different signs.  Similar to
the proof of Theorem~\ref{thm:uniquely_solvable} we can do this by showing
that a perfect matching~$M$ of~$H_F$ and a perfect matching~$M'$
of~$H_{F'}$ that both do not contain a pair of crossing
edges, have different signs.

Let~$v_1$ and~$v_2$ be the vertices of~$H_F$ and~$H_{F'}$ corresponding
to face~$f$ such that~$v_1$ corresponds to~$s_g$.  Note that this
makes the local situations around~$f$ in~$H_F$ and~$H_{F'}$ asymmetric
(see Fig.~\ref{fig:HF-graphs}). The asymmetry corresponds to the switch
of columns from~$A_{F'}$ to~$\tilde{A}_{F'}$.

%%%%%%%%%%%%%%%%%%%%%%%%%%%%%%%%%%%%%%%%%%%%%%%%%%%%%%%%%%%%%%%%%%
% in einem figure environment mit caption
\calc_figscale{40}%
\begin{figure}[t]
    \centerline{\input{\fpath/HF-graphs.pstex_t}}
    \caption{\label{fig:HF-graphs}}
    \end{figure}%
VC
{The local situation around $f$ in $H_F$ and~$H_{F'}$.}% 
%%%%%%%%%%%%%%%%%%%%%%%%%%%%%%%%%%%%%%%%%%%%%%%%%%%%%%%%%%%%%%%%%%

Let~$w_1$ be the unique equation-vertex
that is adjacent to~$v_1$ and~$v_2$ in~$F$ and~$F'$ (in
Fig.~\ref{fig:variables} vertex $w_1$ would correspond to the equation of
color 2 at~$w$). 
Let~$w_2$ be the equation-vertex that is adjacent to
both of~$v_1$ and~$v_2$ only in~$F$, and in~$F'$ only to~$v_2$ (in
Fig.~\ref{fig:variables} vertex $w_2$ would correspond to the equation of
color 5 at~$u$).  
Let~$w_3$ be the equation-vertex
belonging to the same pentagon as~$w_2$ that is adjacent to~$v_1$
in~$F'$ (in Fig.~\ref{fig:variables} vertex $w_3$ also belongs to $u$ and
has color~1).  
Let~$w_4$ be the equation-vertex that is adjacent to~$v_1$ in~$F$ and
has no adjacency in $F'$ (in Fig.~\ref{fig:variables} vertex $w_4$
corresponds to the equation of color 3 at~$v$).
Finally, let~$w_5$ be the equation-vertex belonging to the same
pentagon as~$w_4$ that is adjacent to both of~$v_1$ and~$v_2$ in~$F'$,
and in~$F$ only to~$v_2$ (in Fig.~\ref{fig:variables} vertex $w_5$
corresponds to the equation of color 4 at~$v$).

The non-crossing condition implies that~$M$ does not contain both of the
edges~$v_1w_1$ and~$v_2w_2$, and that~$M'$ does not contain both of
the edges~$v_1w_1$ and~$v_2w_4$.

We distinguish two cases.  In the first case, both of~$M$ and~$M'$
contain the edge~$v_1w_1$.  Then~$M$ has to contain the edge~$v_2w_5$
and~$M'$ has to contain the edge~$v_2w_2$.  In this case~$M'$
is a matching of~$H_F$ which contains a single crossing while 
$M$ is a matching without crossing. Therefore,~${\sgn(M) \neq \sgn(M')}$.

In the second case, at least one of the matchings~$M$ and~$M'$ does
not contain the edge~$v_1w_1$.  Assume~$M'$ contains the edge~$v_1w_j$
with~$j\in\{ 3,4 \}$ and not the edge~$v_1w_1$.  The case where~$M$
does not contain the edge~$v_1w_1$ is symmetric. Note that we can add
the edge~$v_1w_j$ to $H_F$ without creating an additional crossing and
let $\hat{H}$ be the thus obtained graph. Then~$M$ and~$M'$ are
matchings of~$\hat{H}$.  If we delete one edge of each pair of
crossing edges from~$H_F$, all inner faces are bounded by simple
cycles of length~$6$. By doing the same with $\hat{H}$ one of the
$6$-cycles is divided into two cycles
of length~$4$ by the edge~$v_1w_j$. 

To argue that~${\sgn(M) \neq \sgn(M')}$ we define the
graph~$\tilde{H}$ obtained from~$\hat{H}$ by subdividing the
edge~$v_1w_j$ into three edges~$v_1u_1,u_1u_2,u_2w_j$.  Then, if we
delete one edge of each pair of crossing edges from~$\tilde{H}$, all
inner faces are bounded by simple cycles of length~$6$.
Let~$\tilde{M}$ be the perfect matching of~$\tilde{H}$ obtained
from~$M$ by adding the edge~$u_1u_2$, and let~$\tilde{M'}$ be the
perfect matching of~$\tilde{H}$ obtained from~$M'$ by replacing the
edge~$v_1w_j$ with the edges~$v_1u_1$ and~$u_2w_j$.  Then the
symmetric difference of~$\tilde{M}$ and~$\tilde{M'}$ is a disjoint
union~${C_1,\dotsc,C_k}$ of simple cycles of lengths~${\ell(C_i) \equiv 2
\mod 4}$.  Let~$C_1$ be the cycle containing the
edges~$v_1u_1,u_1u_2,u_2w_j$.  Then the symmetric difference of~$M$
and~$M'$ is the disjoint union of the cycles~$C_2,\dotsc,C_k$ and the
cycle~$C_1'$ obtained from~$C_1$ by replacing the
edges~$v_1u_1,u_1u_2,u_2,w_j$ with the edge~$v_1w_j$.
Therefore~${\ell(C_1') \equiv 0 \mod 4}$ and~${\sgn(M) \neq \sgn(M')}$.
\end{proof}

%%%%%%%%%%%%%%%%%%%%%%%%%%%%%%%%%%%%%%%%%%%%%%%%%%%%%%%
\section{Concluding remarks}

We cannot prove that the iterations of the algorithm lead to any kind
of progress. Therefore, it may be that the algorithm cycles and runs
forever. However, there are two independent implementations
\cite{raasch2018kontaktdarstellungen,website2018kcontact} of the algorithm
and experiments with these implementations have always been successful.

Similar algorithms for the computation of contact
representations with homothetic squares or triangles have been described
in \cite{felsner2009triangle} and \cite{felsner2013rectangle}.  These
algorithms have also been subject to extensive experiments
\cite{picchetti2011finding,rucker2011kontaktdarstellungen} that have
always been successful. We therefore have the following conjecture.

\begin{conjecture}
  The algorithm described above terminates with a non-negative
  solution for every graph~$G$ which is an inner triangulation of a
  $5$-gon, and for every initial five color forest~$F$ of~$G$.
\end{conjecture}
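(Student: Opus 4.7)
The plan is to build a potential-function argument on top of the distributive lattice of five color forests. By Schramm's Convex Packing Theorem and Theorem~\ref{thm:regular_induce_fcf}, the set $\mathcal{F}^\star \subseteq \mathcal{F}$ of ``realizable'' five color forests (those induced by a regular pentagon contact representation) is non-empty; a realizable forest is exactly one where the solution $\xi(F) := A_F^{-1}\mathbf{e_1}$ is non-negative. The goal is to show that from any starting forest the sequence of flips chosen by the algorithm reaches $\mathcal{F}^\star$.

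First I would verify that the moves the algorithm performs are indeed lattice moves. Given a disjoint union of sign-separating simple cycles in the $\alpha_5$-orientation, each such cycle can be decomposed into essential cycles (since any oriented cycle in a plane digraph is a sum of facial cycles), so reorienting it corresponds to a composition of flips and flops in the lattice. After identifying an appropriate orientation convention (counterclockwise along the boundary between negative and positive variables, say), this yields a well-defined successor forest $F'$ of $F$.

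Second I would introduce a potential function
\[
\Phi(F) \;=\; \sum_{e} \max\bigl(-\xi_e(F),\,0\bigr),
\]
where the sum runs over edges of $\skel$. Then $\Phi(F) = 0$ precisely when $F \in \mathcal{F}^\star$, and since $\mathcal{F}$ is finite, strict monotonicity $\Phi(F') < \Phi(F)$ along each iteration would force termination. The control over $\Phi$ is supplied by Theorem~\ref{thm:progress}: for each facial flip $g$ in the decomposition of the sign-separating cycle, the variable $\xi_g$ surrounded by $g$ either stays zero or strictly changes sign. By construction $\xi_g$ was negative in $F$, so its contribution to $\Phi$ drops by $|\xi_g|$ after the flip.

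The main obstacle, and the reason this cannot be promoted to a proof as written, is bounding the \emph{side effects} of a flip on the remaining variables. Theorem~\ref{thm:progress} only controls one entry of $\xi(F')$; it is conceivable that flipping $g$ flips the sign of several variables outside $s_g$ from positive to negative, increasing $\Phi$ overall. To close the argument one would need a monotonicity principle for the map $F\mapsto \xi(F)$ along lattice flips — essentially a discrete harmonicity or convex-energy statement for the system $A_F\mathbf{x}=\mathbf{e_1}$, in analogy with the Tutte-type energy minimization underlying homothetic triangle and square packings. Finding the right energy, which the authors leave open, is the crux: one plausible route is to interpret the linear system as the stationary condition of a strictly convex functional over the lattice of $\alpha_5$-orientations, in the spirit of the discrete Dirichlet energies used for circle packings; sequentially flipping essential cycles along the sign-separating boundary would then realize coordinate descent on this functional, forcing $\Phi$ to strictly decrease and the algorithm to terminate.
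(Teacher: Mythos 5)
The statement you are trying to prove is stated in the paper only as a conjecture: the authors explicitly write that they cannot prove that the iterations make any kind of progress, and they offer no proof for you to be measured against. Your proposal is honest about this --- you flag the crux yourself --- but it is worth being precise about why the sketch does not close. The potential function $\Phi(F)=\sum_e \max(-\xi_e(F),0)$ is a natural candidate, but Theorem~\ref{thm:progress} gives you far less than you need, in three respects. First, it applies only to the flip of a single \emph{facial} cycle, whereas the sign-separating set produced by the algorithm is a disjoint union of directed simple cycles that need not be facial; reorienting such a cycle is a composition of many facial flips, and after the first flip the matrix $A_F$, and hence the entire solution vector, has already changed, so the theorem cannot be applied to the subsequent flips with the original signs. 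Second, even for a single facial flip the theorem controls exactly one coordinate, and that coordinate is not literally the same variable before and after: $\xi_g$ is the value of $x_f^{(i)}$ in the old system while $\xi'_g$ is the value of $y_f^{(j)}$ with $j\neq i$ in the new one, so the claimed drop of $\Phi$ by $|\xi_g|$ does not follow without additional bookkeeping. Third, and most importantly, nothing bounds the side effects on the other coordinates; a flip could turn many non-negative variables negative and increase $\Phi$. This is exactly the missing monotonicity or convex-energy principle you identify, and it is the same obstruction that has left the analogous algorithms for homothetic triangles and squares unproven despite successful experiments.

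Your suggestion of interpreting $A_F\mathbf{x}=\mathbf{e_1}$ as the stationarity condition of a strictly convex functional over the lattice of $\alpha_5$-orientations is an appealing research direction, but as written it is a restatement of the open problem rather than a step toward its solution: no such functional is constructed, and the non-symmetric, sign-indefinite structure of $A_F$ (it is not a graph Laplacian) means the Dirichlet-energy analogy does not transfer automatically. So the proposal should be read as a plausible proof strategy with a genuine, clearly identified gap, not as a proof.
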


A proof of this conjecture would imply a new proof for the existence
of pentagon contact representations for these graphs.  Moreover, since
they only depend on the values of the solution of a linear system of
equations, the coordinates for the corners of the pentagons could be
computed exactly. If the proof would come with a polynomial bound on
the number of iterations before termination, then the algorithm would
run in strongly polynomial time when doing arithmetics in the
extension field $\mathbb{Q}[\sqrt{5}]$ of the rationals.

%%%%%%%%%%%%%%%%%%%%%%%%%%%%%%%%%%%%%%%%%%%%%%%%%%%%%%%

%\bibliographystyle{plainurl}
%\bibliography{ffw-bib}

\begin{thebibliography}{10}

\bibitem{bernardi-fusy2012}
Olivier Bernardi and {\'{E}}ric Fusy.
\newblock Schnyder decompositions for regular plane graphs and application to
  drawing.
\newblock {\em Algorithmica}, 62:1159--1197, 2012.

\bibitem{brehm20003}
Enno Brehm.
\newblock 3-orientations and {S}chnyder 3-tree-decompositions.
\newblock {D}iplomarbeit, Freie Universit{\"a}t Berlin, 2000.
\newblock URL:
  \url{page.math.tu-berlin.de/\~felsner/Diplomarbeiten/brehm.ps.gz}.

\bibitem{BSST-40}
Rowland~L. Brooks, Cedric A.~B. Smith, Arthur~H. Stone, and William~T. Tutte.
\newblock The dissection of rectangles into squares.
\newblock {\em Duke Mathematical J.}, 7:312--340, 1940.

\bibitem{de2001topological}
Hubert {de Fraysseix} and Patrice {Ossona de Mendez}.
\newblock On topological aspects of orientations.
\newblock {\em Discrete Mathematics}, 229(1):57--72, 2001.

\bibitem{de1994triangle}
Hubert {de Fraysseix}, Patrice {Ossona de Mendez}, and Pierre Rosenstiehl.
\newblock On triangle contact graphs.
\newblock {\em Combinatorics, Probability and Computing}, 3:233--246, 1994.

\bibitem{felsner2004lattice}
Stefan Felsner.
\newblock Lattice structures from planar graphs.
\newblock {\em Electronic J.\ of Combinatorics}, 11(1):R15, 2004.

\bibitem{felsner2009triangle}
Stefan Felsner.
\newblock Triangle contact representations.
\newblock In {\em Midsummer Combinatorial Workshop, Praha}, 2009.
\newblock URL: \url{page.math.tu-berlin.de/\~felsner/Paper/prag-report.pdf}.

\bibitem{felsner2013rectangle}
Stefan Felsner.
\newblock Rectangle and square representations of planar graphs.
\newblock In {\em Thirty Essays on Geometric Graph Theory}, pages 213--248.
  Springer, 2013.

\bibitem{gonccalves2011triangle}
Daniel Gon\c{c}alves, Benjamin L{\'e}v{\^e}que, and Alexandre Pinlou.
\newblock Triangle contact representations and duality.
\newblock {\em Discrete and Computational Geometry}, 48(1):239--254, 2012.

\bibitem{kobourov2016canonical}
Stephen~G. Kobourov.
\newblock Canonical orders and {S}chnyder realizers.
\newblock In {\em Encyclopedia of Algorithms}, pages 277--283. Springer, 2016.

\bibitem{picchetti2011finding}
Thomas Picchetti.
\newblock Finding a square dual of a graph.
\newblock 2011.
\newblock URL:
  \url{page.math.tu-berlin.de/\~felsner/Diplomarbeiten/rapport\_picchetti.pdf}.

\bibitem{raasch2018kontaktdarstellungen}
Nadine Raasch.
\newblock {K}ontaktdarstellungen planarer {G}raphen mit {F}{\"u}nfecken.
\newblock {M}asterarbeit, {T}echnische {U}niversit{\"a}t {B}erlin, 2018.
\newblock URL:
  \url{page.math.tu-berlin.de/\~felsner/Diplomarbeiten/Masterarbeit_Nadine-Raasch.pdf}.

\bibitem{rucker2011kontaktdarstellungen}
Julia Rucker.
\newblock {K}ontaktdarstellungen von planaren {G}raphen.
\newblock {D}iplomarbeit, {T}echnische {U}niversit{\"a}t {B}erlin, 2011.
\newblock URL:
  \url{page.math.tu-berlin.de/\~felsner/Diplomarbeiten/dipl-Rucker.pdf}.

\bibitem{website2018kcontact}
Manfred Scheucher and Hendrik Schrezenmaier.
\newblock k-{C}ontact {R}epresentations, 2018.
\newblock URL:
  \url{https://www3.math.tu-berlin.de/diskremath/research/kgon-representations/index.html}.

\bibitem{schnyder1990embedding}
Walter Schnyder.
\newblock Embedding planar graphs on the grid.
\newblock In {\em Proc. SODA}, pages 138--148, 1990.

\bibitem{schramm2007combinatorically}
Oded Schramm.
\newblock Combinatorically prescribed packings and applications to conformal
  and quasiconformal maps.
\newblock Modified version of PhD thesis from 1990.
\newblock \href {http://arxiv.org/abs/0709.0710v1} {\path{arXiv:0709.0710v1}}.

\bibitem{schramm1993square}
Oded Schramm.
\newblock Square tilings with prescribed combinatorics.
\newblock {\em Israel J.\ of Mathematics}, 84(1-2):97--118, 1993.

\bibitem{schrezenmaier2016zur}
Hendrik Schrezenmaier.
\newblock Zur {B}erechnung von {K}ontaktdarstellungen.
\newblock {M}asterarbeit, {T}echnische {U}niversit{\"a}t {B}erlin, 2016.
\newblock URL: \url{page.math.tu-berlin.de/\~schrezen/Papers/Masterarbeit.pdf}.

\bibitem{steiner2016existenz}
Raphael Steiner.
\newblock Existenz und {K}onstruktion von {D}reieckszerlegungen triangulierter
  {G}raphen und {S}chnyder woods.
\newblock {B}achelorarbeit, {F}ern{U}niversit{\"a}t in {H}agen, 2016.
\newblock URL:
  \url{www.fernuni-hagen.de/mathematik/DMO/pubs/Bachelorarbeit\_Raphael\_Steiner.pdf}.

\bibitem{Thom-06}
Robin Thomas.
\newblock A survey of {P}faffian orientations of graphs.
\newblock In {\em International {C}ongress of {M}athematicians. {V}ol. {III}},
  pages 963--984. Eur. Math. Soc., 2006.

\end{thebibliography}

\end{document}